\newtheorem{theorem}{Theorem}
\newtheorem{lemma}{Lemma}
\theoremstyle{plain}
\newtheorem{example}{Example}
\newcommand{\PreserveBackslash}[1]{\let\temp=\\#1\let\\=\temp}
\newcolumntype{C}[1]{>{\PreserveBackslash\centering}p{#1}}
\newcolumntype{L}[1]{>{\PreserveBackslash\raggedright}p{#1}}
\begin{document}

\title{\textbf{Marginal treatment effects in the absence of instrumental
variables}}
\author{Zhewen Pan$^{1,2}$ ~ Zhengxin Wang$^{1,2}$ ~ Junsen Zhang$^3$\thanks{%
Corresponding address: School of Economics, Zhejiang University, 866
Yuhangtang Rd, Hangzhou 310058, P.R. China. Email: jszhang@cuhk.edu.hk
(Junsen Zhang).} ~ Yahong Zhou$^4$ \\
$^1$\textit{\small Center for Quantitative Economic Research, Zhejiang
University of Finance \& Economics} \\
$^2$\textit{\small School of Economics, Zhejiang University of Finance \&
Economics} \\
$^3$\textit{\small School of Economics, Zhejiang University} \\
$^4$\textit{\small School of Economics, Shanghai University of Finance \&
Economics}}
\maketitle

\begin{abstract}
We propose a method for defining, identifying, and estimating the marginal
treatment effect (MTE) without imposing the instrumental variable (IV)
assumptions of independence, exclusion, and separability (or monotonicity).
Under a new definition of the MTE based on reduced-form treatment error that
is statistically independent of the covariates, we find that the
relationship between the MTE and standard treatment parameters holds in the
absence of IVs. We provide a set of sufficient conditions ensuring the
identification of the defined MTE in an environment of essential
heterogeneity. The key conditions include a linear restriction on potential
outcome regression functions, a nonlinear restriction on the propensity
score, and a conditional mean independence restriction which will lead to
additive separability. We prove this identification using the notion of
semiparametric identification based on functional forms. And we provide an
empirical application for the Head Start program to illustrate the
usefulness of the proposed method in analyzing heterogenous causal effects
when IVs are elusive.
\end{abstract}

{\small \textit{Keywords: }causal inference; treatment effect heterogeneity;
point identification; Head Start}

{\small \textit{JEL Codes: }C14, C31, C51}

\newpage

\section{Introduction}

\label{sec:introduction}The marginal treatment effect (MTE), which was
developed in a series of seminal papers by %
\citet{heckman1999local,heckman2001local,heckman2005structural}, has become
one of the most popular tools in social sciences for describing,
interpreting, and analyzing the heterogeneity of the effect of a nonrandom
treatment. The MTE is defined as the expected treatment effect conditional
on observed covariates and a normalized error term that consists of
unobservable determinants of the treatment status. Although the definition
of the MTE doesn't necessitate an instrumental variable (IV), nearly all the
existing identification strategies for the MTE rely heavily on valid
instrumental variation which can induce otherwise similar individuals into
different treatment choices. This is mainly because the MTE has been
regarded as an extension of and supplement to the standard IV method in the
causal inference literature. However, in practical applications, the
validity of IVs is usually difficult to justify. In the most common case of
just-identification in which only one IV is available, the exogeneity or
randomness of the IV is fundamentally untestable. The exclusion restriction,
which is another condition underlying the validity of IVs, has also been
challenged more often than not 
\citep[e.g.,][]{van2007economic,
jones2015economics}.

Motivated by the potential invalidity of the available instruments, and
inspired by the observation that instruments are unnecessary in the
definition of the MTE, we attempt to model, identify, and estimate the MTE
without imposing the standard IV assumptions of conditional independence,
exclusion, and separability in this paper. Specifically, we allow all the
observed covariates to be statistically correlated to the error term and
have a direct impact on the outcome in addition to an indirect impact
through the treatment variable. The cornerstone of our model is a normalized
treatment equation, which determines treatment participation by the
propensity score crossing a reduced-form error term that is statistically
independent of (although functionally dependent on) all the covariates. This
is comparable to the conventional IV-based model, in which normalization is
performed with respect to only non-IV covariates. The independence of the
reduced-form treatment error from all the covariates can partly justify a
conditional mean independence assumption on the potential outcome residuals,
which can ensure the additive separability of the MTE into observables and
unobservables. This separability then facilitates the semiparametric
identification of MTE, given a linear restriction on the potential outcome
regression functions and a nonlinear restriction on the propensity score
function. We prove the identification by representing the MTE as a known
function of conditional moments of observed variables. Intuitively, the
identifying power comes from the excluded nonlinear variation in the
propensity score, which plays the role of the IV in exogenously perturbing
the treatment status.

Compared with imposing the IV assumptions, the main cost of the proposed
IV-free method is the linear restriction imposed on the outcome equations
and the nonlinearity imposed on the treatment equation, which makes our
identification strategy for the MTE closely resemble the semiparametric
counterpart of identification based on functional forms 
\citep{dong2010endogenous,
escanciano2016identification, lewbel2019identification, pan2022semiparametric}%
. We build our result on the notion of identification based on functional
forms mainly because (i) it can lead to point identification and estimation,
(ii) the required assumptions are regular and partly testable, and (iii) the
resulting estimation procedures are the most compatible with the standard
procedures in IV-based MTE models. Although the primary contribution of this
work is on the identification of MTE in the absence of IVs, we also propose
a semiparametric MTE estimator tailored to the identification result by
implementing the kernel-weighted pairwise difference regression %
\citep{ahn1993semiparametric} for each treatment status, and establish its
consistency and asymptotic normality that take into account the
nonparametrically estimated propensity score in the first step.

The main value of our IV-free framework of the MTE is threefold. First, it
provides an approach for consistently estimating heterogeneous causal
effects when a valid IV is hard to find. Second, it suggests an
easy-to-implement means for testing the validity of a candidate IV, because
it nests the models that assume exclusion restrictions. For instance, in
studies on returns to education, the validity of most instruments for
educational attainment is suspect, such as parental education, distance to
the school, and local labor market conditions 
\citep{kedagni2020generalized,
mourifie2020sharp}. Our framework enables a reliable evaluation of returns
to education without imposing IV assumptions on the candidate instruments
and implies a simple test for exclusion restrictions by $t$-testing the
instruments' coefficients in potential outcome equations. Third, even though
the validity of the IV is verified, identification based on functional forms
can be invoked as a way to increase the efficiency of estimation or to check
the robustness of the results to alternative identifying assumptions.

Early explorations of the identification of endogenous regression models
when IVs are unavailable focused on linear systems of simultaneous
equations, in which the lack of IVs is associated with hardly justifiable
exclusion restrictions and insufficient moment conditions. The typical
strategy for addressing this underidentification problem is to impose
second- or higher-order moment restrictions to construct instruments by
using the available exogenous covariates 
\citep[e.g.,][]{fisher1966identification, lewbel1997constructing,
klein2010estimating}. In particular, imposing restrictions on the
correlation between the covariates and the variance matrix of the vector of
model errors leads to identification based on heteroscedasticity %
\citep{lewbel2012using}. \cite{d2021testing} considered a more general
nonseparable, nonparametric triangular model without the exclusion
restriction, and established identification for the control function
approach based on a local irrelevance condition. However, these results are
restricted to the case of continuous endogenous variable or treatment. \cite%
{lewbel2018identification} showed that, in linear regression models, the
moment conditions of identification based on heteroscedasticity can be
satisfied when the endogenous treatment is binary, at the expense of strong
restrictions on the model errors. Alternatively, \cite{conley2012plausibly}
and \cite{van2018beyond} presented a sensitivity analysis approach for
performing inference for linear IV models while relaxing the exclusion
restriction to a certain extent. \cite{kang2016instrumental} proposed an
identification and estimation method for linear treatment effect models when
the exclusion restriction of some IVs may not hold, and \cite%
{windmeijer2019use} further proposed a robust method to detect invalid IVs
that enter the outcome equation as explanatory variables. More recently, 
\cite{masten2021salvaging} suggested a set identification method for linear
IV models under violations of the exclusion restriction. A common limitation
of this body of literature is that linear models implicitly impose an
undesirable homogeneity assumption on treatment effects. \cite{carl2025tsci}
considered invalid IVs in a treatment effect model with nonlinearity in both
the treatment and outcome equations, and established the identification
based on different functional forms in the two equations, which is similar
to our identification strategy in terms of ideas. Moreover, they adopted
machine learning methods in estimating the treatment equation to cope with
the potentially high nonlinearity. However, their model implicitly assumes
homogenous treatment effects as well. As a complement, our model relaxes the
restrictive homogeneity assumption and allows for general heterogeneities in
treatment effects, which is evidently more realistic.

The literature on sample selection models without the exclusion restriction
is also closely related. A general solution to the problem of lack of
excluded variables is partial identification and set estimation 
\citep{honore2020selection,
honore2022sample, westphal2022marginal}. The limitation of this approach is
that the identified or estimated set may be too wide to be informative.
Another solution is the approach of identification at infinity that uses the
data only for large values of a special regressor 
\citep{chamberlain1986asymptotic,
lewbel2007endogenous} or of the outcome \citep{d2013another}. Although
identification at infinity can lead to point identification, it is typically
featured as irregular identification \citep{khan2010irregular}, and the
derived estimate will converge at a rate slower than $n^{-1/2}$, where $n$
is the sample size. \cite{heckman1979sample} exploited nonlinearity in the
selectivity correction function to achieve point identification and root-$n$
consistent estimation for the linear coefficients of a parametric sample
selection model, which is the original version of identification based on
functional forms. However, Heckman's model imposes a restrictive bivariate
normality assumption on the error terms and thus poses the risk of model
misspecification. \cite{escanciano2016identification} extended Heckman's
approach to a general semiparametric model and established identification of
the linear coefficients by exploiting nonlinearity elsewhere in the model.
At the expense of the generality of their model, \cite%
{escanciano2016identification}'s identification result is only up to scale
and thus requires a scale normalization assumption. This normalization would
be innocuous if we know the sign of the normalized coefficient and are
interested in only the sign, but not the magnitude, of the other
coefficients. However, the magnitude of the linear coefficients is
indispensable to the evaluation of a program or a policy. In addition, the
identifying assumption about nonlinearity developed by \cite%
{escanciano2016identification} implicitly rules out the case of the
existence of two continuous covariates. We adapt the result of \cite%
{escanciano2016identification} to the MTE model by amending the two defects.
First, we take advantage of the specific model structure to relax the scale
normalization and identify the magnitude of the linear coefficients.
Moreover, the identifying assumption about nonlinearity will be simplified
owing to the specific model structure. We give an intuitive interpretation
of the new nonlinearity assumption. Second, we combine the nonlinearity
assumption with a local irrelevance condition to allow for an arbitrary
number of continuous covariates.

The rest of this paper is organized as follows. In Section \ref{sec:model},
we introduce our model and definition of the MTE without IVs. In Section \ref%
{sec:identification}, we propose a possible set of sufficient conditions for
the identification of MTE, in place of the standard IV assumptions. The key
conditions include semiparametric functional form restrictions and the
conditional mean independence assumption, of which the latter implies the
additive separability of the MTE into observed and unobserved components.
Section \ref{sec:estimation} suggests consistent estimation procedures, and
Section \ref{sec:application} provides an empirical application to Head
Start. Section \ref{sec:conclusion} concludes. Online appendices comprise of
(A) additional results of the empirical application,
(B) a proof of the main identification result, (C) a discussion on variants
of the nonlinearity assumption, (D) an identification result for limited
valued outcomes which entails a slight modification of the assumptions, (E)
a proof of the asymptotic normality of the semiparametric IV-free MTE
estimator, (F) an implementation guidance, and (G) a simulation study.

\section{Model}

\label{sec:model}In the following, we denote random variables or random
vectors by capital letters, such as $U$, and their possible realizations by
the corresponding lowercase letters, such as $u$. We denote $F_{U}\left(
\cdot \right) $ as the cumulative distribution function (CDF) of $U$ and $%
F_{U\left\vert X\right. }\left( \cdot \left\vert x\right. \right) $ as the
conditional CDF of $U$ given $X=x$. Our analysis builds on the potential
outcomes framework developed by \cite{roy1951some} in econometrics and \cite%
{rubin1973matching,rubin1973use} in statistics. Specifically, we consider a
binary treatment, denoted by $D$, and let $Y_{1}$ and $Y_{0}$ denote the
potential outcomes if the individuals are treated ($D=1$) or not treated ($%
D=0$), respectively. The observed outcome is%
\begin{equation*}
Y=DY_{1}+\left( 1-D\right) Y_{0},
\end{equation*}%
and the quantity of interest is the counterfactual treatment effect $%
Y_{1}-Y_{0}$. We suppose that the treatment status is determined by the
following threshold crossing rule:%
\begin{equation}
D=1\left\{ \mu \left( X\right) \geq U\right\} ,  \label{treatment}
\end{equation}%
where $X$ is a vector of pretreatment covariates, $1\left\{ A\right\} $ is
the indicator function of event $A$, $\mu \left( \cdot \right) $ is an
unknown function, and $U$ is a structural error term containing unobserved
characteristics that may affect participation in the treatment, such as the
opportunity costs or intangible benefits of the treatment.

Compared with the IV-based MTE model, we relax the independence and
separability assumptions in the treatment equation (\ref{treatment}) to
account for the absence of IVs. First, we don't assume that $X$ is
stochastically independent of $U$; that is, no exogenous covariate is
needed. Second, we allow the treatment decision rule to be intrinsically
nonseparable in observed and unobserved characteristics, which is equivalent
to relaxing the monotonicity assumption in the IV model 
\citep{vytlacil2002independence,
vytlacil2006note}. Specifically, let $U_{x}$ be the counterfactual error
term denoting what $U$ would have been if $X$ had been externally set to $x$%
, then the nonseparability of (\ref{treatment}) means that at least two
vectors $x$ and $\tilde{x}$ exist in the support of $X$ such that $U_{x}\neq
U_{\tilde{x}}$ with positive probability. In particular, $U$ is allowed to
depend functionally on $X$, as in the following example.

\begin{example}
\label{example:1} We consider a latent index rule for treatment
participation:%
\begin{equation}
D=1\left\{ m\left( X,\varepsilon \right) \geq 0\right\} ,
\label{latent_index}
\end{equation}%
where the observed $X$ can be statistically correlated to the unobserved $%
\varepsilon $, and no restriction is imposed on the cross-partials of the
index function $m$. Without independence and additive separability, model (%
\ref{latent_index}) is completely vacuous, imposing no restrictions on the
observed or counterfactual outcomes \citep{heckman2001local}. This general
latent index rule fits into the treatment equation (\ref{treatment}) by
taking $\mu \left( X\right) =E\left[ m\left( X,\varepsilon \right)
\left\vert X\right. \right] $ and $U=\mu \left( X\right) -m\left(
X,\varepsilon \right) $.
\end{example}

Example \ref{example:1} illustrates that no generality is lost by modeling
the treatment variable as equation (\ref{treatment}) without imposing the
independence and separability assumptions. We define the propensity score
function as the conditional probability of receiving the treatment given the
covariates, 
\begin{equation*}
\pi \left( x\right) \equiv E\left[ D\left\vert X=x\right. \right]
=F_{U\left\vert X\right. }\left( \mu \left( x\right) \left\vert x\right.
\right) ,
\end{equation*}%
and define the propensity score variable as $P\equiv \pi \left( X\right) $.
Under the regularity condition that $F_{U\left\vert X\right. }\left( \cdot
\left\vert x\right. \right) $ is absolutely continuous with respect to the
Lebesgue measure for all $x$, the structural treatment equation (\ref%
{treatment}) can be innocuously normalized into a reduced form:%
\begin{equation}
D=1\left\{ P\geq V\right\} ,  \label{normalized}
\end{equation}%
where%
\begin{equation}
V=F_{U\left\vert X\right. }\left( U\left\vert X\right. \right)  \label{VV}
\end{equation}%
is a normalized error term, which by definition follows standard uniform
distribution conditional on $X$ and thus is stochastically independent of $X$
and $P$ \citep{heckman2005structural, chernozhukov2005iv}. This
independence, which seems counterintuitive due to the functional dependence
of $V$ on $X$, will be lost if we consider $V_{x}=F_{U\left\vert X\right.
}\left( U_{x}\left\vert x\right. \right) $, the counterfactual variable of $%
V $ when $X$ is set to $x$. In general, the conditional distribution of $%
V_{x}$ given $X=\tilde{x}$ for $\tilde{x}\neq x$ depends on $\tilde{x}$, and
the unconditional distribution of $V_{x}$ is not uniform.

\begin{example}
\label{example:Vx} We suppose that $X$ is a scalar, and%
\begin{equation*}
\left( 
\begin{array}{l}
U \\ 
X%
\end{array}%
\right) \sim N\left( \left( 
\begin{array}{l}
0 \\ 
0%
\end{array}%
\right) ,\left( 
\begin{array}{cc}
1 & \sigma _{UX} \\ 
\sigma _{UX} & \sigma _{X}^{2}%
\end{array}%
\right) \right) .
\end{equation*}%
By the property of bivariate normal distribution, we obtain $U\left\vert
\left( X=x\right) \right. \sim N\left( \mu _{\left. U\right\vert X}\left(
x\right) ,\sigma _{U\left\vert X\right. }^{2}\right) $ and $F_{U\left\vert
X\right. }\left( u\left\vert x\right. \right) =\Phi \left( \left. \left[
u-\mu _{U\left\vert X\right. }\left( x\right) \right] \right/ \sigma
_{U\left\vert X\right. }\right) $, where $\mu _{U\left\vert X\right. }\left(
x\right) =\left( \sigma _{UX}\left/ \sigma _{X}^{2}\right. \right) x$, $%
\sigma _{U\left\vert X\right. }^{2}=1-\left( \sigma _{UX}^{2}\left/ \sigma
_{X}^{2}\right. \right) $, and $\Phi \left( \cdot \right) $ denotes the
standard normal CDF. Hence,%
\begin{equation*}
V=F_{U\left\vert X\right. }\left( U\left\vert X\right. \right) =\Phi \left( 
\frac{U-\mu _{U\left\vert X\right. }\left( X\right) }{\sigma _{U\left\vert
X\right. }}\right) \text{, and }V_{x}=\Phi \left( \frac{U-\mu _{U\left\vert
X\right. }\left( x\right) }{\sigma _{U\left\vert X\right. }}\right) .
\end{equation*}%
We observe that $V\perp \!\!\!\!\perp X$ because $F_{V\left\vert X\right.
}\left( v\left\vert x\right. \right) =v$, but $V_{x}$ is not independent of $%
X$ because%
\begin{equation*}
F_{V_{x}\left\vert X\right. }\left( v\left\vert \tilde{x}\right. \right)
=\Phi \left( \frac{\left( \sigma _{UX}\left/ \sigma _{X}^{2}\right. \right)
\left( x-\tilde{x}\right) }{\sigma _{U\left\vert X\right. }}+\Phi
^{-1}\left( v\right) \right),
\end{equation*}%
and $V_{x}$ is not uniformly distributed because $F_{V_{x}}\left( v\right)
=\Phi \left( \mu _{U\left\vert X\right. }\left( x\right) +\sigma
_{U\left\vert X\right. }\Phi ^{-1}\left( v\right) \right) $.
\end{example}

Given this independence, we may consider the reduced-form treatment error $V$
as the orthogonalized unobservables with respect to the observables, or the
unobservables that are projected onto the subspace orthogonal to the one
spanned by the observables. Another interpretation of $V$ is the ranking of
the structural error $U$ conditional on $X$. For instance, $V=0.2$
represents a typical individual whose $U$ value ranks above 20\% individuals
with identical covariates. $V$ enters the normalized crossing rule on the
right, making an individual less likely to receive treatment; thus, it
refers to resistance or distaste regarding the treatment in the MTE
literature. If $V$ is large, then the propensity score $P$ should be large
to induce the individual to participate in the treatment. However, an
individual with a $V$ value close to zero will participate even though $P$\
is small.

In the above instrument-free model, we define the MTE as the expected
treatment effect conditional on the observed and unobserved characteristics:%
\begin{equation*}
\Delta ^{\text{MTE}}\left( x,v\right) \equiv E\left[ \left.
Y_{1}-Y_{0}\right\vert X=x,V=v\right] .
\end{equation*}%
$\Delta ^{\text{MTE}}\left( x,v\right) $ captures all the treatment effect
heterogeneity that is consequential for selection bias by conditioning on
the orthogonal coordinates of the observable and unobservable dimensions.
Given $X$ and $V$, the treatment status $D$ is fixed and thus independent of
the treatment effect $Y_{1}-Y_{0}$. Note that if the assumptions of
independence and separability hold for some covariates, the normalized error
term $V$ defined in (\ref{VV}) will be exactly the same as that in the
IV-based MTE model, and if further the independent covariates have no direct
effect on the potential outcomes, our defined MTE will reduce to the MTE
that is defined in the IV model. Similar to the MTE in the IV model, the
IV-free $\Delta ^{\text{MTE}}\left( x,v\right) $ can be used as a building
block for constructing the commonly used causal parameters, such as the
average treatment effect (ATE), the treatment effect on the treated (TT),
the treatment effect on the untreated (TUT), and the local average treatment
effect (LATE), which can be expressed as weighted averages of $\Delta ^{%
\text{MTE}}\left( x,v\right) $ as follows:%
\begin{eqnarray*}
\Delta ^{\text{ATE}}\left( x\right) &\equiv &E\left[ \left.
Y_{1}-Y_{0}\right\vert X=x\right] =\int_{0}^{1}\Delta ^{\text{MTE}}\left(
x,v\right) dv, \\
\Delta ^{\text{TT}}\left( x\right) &\equiv &E\left[ \left.
Y_{1}-Y_{0}\right\vert X=x,D=1\right] =\frac{1}{\pi \left( x\right) }%
\int_{0}^{\pi \left( x\right) }\Delta ^{\text{MTE}}\left( x,v\right) dv, \\
\Delta ^{\text{TUT}}\left( x\right) &\equiv &E\left[ \left.
Y_{1}-Y_{0}\right\vert X=x,D=0\right] =\frac{1}{1-\pi \left( x\right) }%
\int_{\pi \left( x\right) }^{1}\Delta ^{\text{MTE}}\left( x,v\right) dv, \\
\Delta ^{\text{LATE}}\left( x,v_{1},v_{2}\right) &\equiv &E\left[ \left.
Y_{1}-Y_{0}\right\vert X=x,v_{1}\leq V\leq v_{2}\right] =\frac{1}{v_{2}-v_{1}%
}\int_{v_{1}}^{v_{2}}\Delta ^{\text{MTE}}\left( x,v\right) dv.
\end{eqnarray*}%
Somewhat surprisingly, compared with the weights on the IV-based MTE %
\citep[e.g.,][Table IB]{heckman2005structural}, which are generally
difficult to estimate in practice, the weights on $\Delta ^{\text{MTE}%
}\left( x,v\right) $ are simpler, more intuitive, and easier to compute.

\cite{heckman2001local,heckman2005structural} considered defining the MTE in
a similar nonseparable setting by conditioning on the structural error, and
showed how to integrate to generate other causal parameters. However, such
an MTE cannot be identified even in the presence of IVs. Our definition,
based instead on the reduced-form error, can effectively exploit the
statistical independence between the observed and unobserved variables to
facilitate identification of the MTE. \cite{zhou2019marginal} proposed to
redefine the MTE by conditioning on $P$ rather than covariates, which is a
more parsimonious specification of all the relevant treatment effect
heterogeneity for selection bias. The extension of our identification and
estimation procedures to this alternative definition is straightforward.

\section{Identification}

\label{sec:identification}Our identification strategy relies on a linearity
restriction on the potential outcome equations and a nonlinearity
restriction on the propensity score function. The intuition is that the
propensity score minus the linear outcome index will provide the excluded
variation that perturbs treatment status, which plays the role of a
continuous IV. Furthermore, to ensure the exogeneity of the excluded
variation, it is necessary to impose a certain form of independence between
the potential outcome residuals and covariates. We denote $h_{d}\left(
x\right) =E\left[ \left. Y_{d}\right\vert X=x\right] $ and $%
U_{d}=Y_{d}-h_{d}\left( X\right) $, $d=0,1$, as the regression functions and
residuals of potential outcomes.

\medskip

\textbf{Assumption CMI} (Conditional Mean Independence). \textit{Assume that 
}$E\left[ U_{d}\left\vert V,X\right. \right] =E\left[ U_{d}\left\vert
V\right. \right] $, $d=0,1$,\textit{\ with probability one.}

\medskip

Assumption CMI is standard in the MTE literature and commonly referred to as
separability or additive separability assumption %
\citep[e.g.,][]{brinch2017beyond, mogstad2018using, zhou2019marginal},
because it's a necessary and sufficient condition for the MTE to be
additively separable in observables and unobservables %
\citep[p.3074]{zhou2019marginal}:%
\begin{eqnarray*}
\Delta ^{\text{MTE}}\left( x,v\right) &=&h_{1}\left( x\right) -h_{0}\left(
x\right) +E\left[ U_{1}-U_{0}\left\vert X=x,V=v\right. \right] \\
&=&h_{1}\left( x\right) -h_{0}\left( x\right) +E\left[ U_{1}-U_{0}\left\vert
V=v\right. \right] .
\end{eqnarray*}%
Namely, under Assumption CMI, the shape of the MTE curve with respect to $v$
will not vary with covariates. Assumption CMI is partly justified by $E\left[
U_{d}\left\vert X\right. \right] =0$ and $V\perp \!\!\!\!\perp X$, which
come directly from the definition. On this basis, it is sufficient for
Assumption CMI to hold if the conditional covariance of $U_{d}$ and $V$ is
independent of $X$, which is a key assumption in the
heteroscedasticity-based identification method as well %
\citep{lewbel2012using}. Assumption CMI is implied by and much weaker than
the full independence $\left( U_{d},U\right) \perp \!\!\!\!\perp X$
frequently imposed (often implicitly) in applied work, where $U$ is the
structural treatment error in (\ref{treatment}). In particular, Assumption
CMI doesn't rule out the marginal dependence of $U_{d}$ or $U$ on $X$, as
illustrated by Example \ref{example:depend}.

\begin{example}
\label{example:depend} Suppose that $X$ is a scalar and%
\begin{equation*}
\left( 
\begin{array}{c}
U_{d} \\ 
U \\ 
X%
\end{array}%
\right) \sim N\left( \left( 
\begin{array}{c}
0 \\ 
0 \\ 
0%
\end{array}%
\right) ,\left( 
\begin{array}{ccc}
\sigma _{d}^{2} & \sigma _{dU} & 0 \\ 
\sigma _{dU} & 1 & \sigma _{UX} \\ 
0 & \sigma _{UX} & \sigma _{X}^{2}%
\end{array}%
\right) \right) .
\end{equation*}%
In this setting, $U$ is correlated to $X$ with an unconstrained correlation
coefficient. Through a property of the multivariate normal distribution, we
obtain%
\begin{equation}
\left. \left( 
\begin{array}{c}
U_{d} \\ 
U%
\end{array}%
\right) \right\vert \left( X=x\right) \sim N\left( \left( 
\begin{array}{c}
0 \\ 
\mu _{\left. U\right\vert X}\left( x\right)%
\end{array}%
\right) ,\left( 
\begin{array}{cc}
\sigma _{d}^{2} & \sigma _{dU} \\ 
\sigma _{dU} & \sigma _{U\left\vert X\right. }^{2}%
\end{array}%
\right) \right) ,  \label{nonhetero}
\end{equation}%
where $\mu _{U\left\vert X\right. }\left( x\right) =\left( \sigma
_{UX}\left/ \sigma _{X}^{2}\right. \right) x$ and $\sigma _{U\left\vert
X\right. }^{2}=1-\left( \sigma _{UX}^{2}\left/ \sigma _{X}^{2}\right.
\right) $. Hence,%
\begin{equation*}
E\left[ \left. U_{d}\right\vert U=u,X=x\right] =\frac{\sigma _{dU}}{\sigma
_{U\left\vert X\right. }^{2}}\left( u-\mu _{\left. U\right\vert X}\left(
x\right) \right) .
\end{equation*}%
By Example \ref{example:Vx}, we have $V=\Phi \left( \left. \left[ U-\mu
_{U\left\vert X\right. }\left( X\right) \right] \right/ \sigma _{U\left\vert
X\right. }\right) $, so that $U=\sigma _{U\left\vert X\right. }\Phi
^{-1}\left( V\right) +\mu _{U\left\vert X\right. }\left( X\right) $.
Consequently,%
\begin{equation*}
E\left[ \left. U_{d}\right\vert V=v,X=x\right] =E\left[ \left.
U_{d}\right\vert U=\sigma _{U\left\vert X\right. }\Phi ^{-1}\left( v\right)
+\mu _{U\left\vert X\right. }\left( x\right) ,X=x\right] =\frac{\sigma _{dU}%
}{\sigma _{U\left\vert X\right. }}\Phi ^{-1}\left( v\right) ,
\end{equation*}%
and Assumption CMI holds. More generally, to allow the dependence of $U_{d}$
on $X$ as well, we can set%
\begin{equation*}
\left. \left( 
\begin{array}{c}
U_{d} \\ 
U%
\end{array}%
\right) \right\vert \left( X=x\right) \sim N\left( \left( 
\begin{array}{c}
0 \\ 
\mu _{\left. U\right\vert X}\left( x\right)%
\end{array}%
\right) ,\left( 
\begin{array}{cc}
\sigma _{d}^{2}\left( x\right) & \sigma _{dU} \\ 
\sigma _{dU} & \sigma _{U\left\vert X\right. }^{2}%
\end{array}%
\right) \right)
\end{equation*}%
in place of (\ref{nonhetero}), where $\sigma _{d}^{2}\left( x\right) $ is
the conditional variance of $U_{d}$ given $X=x$. Since $E\left[ \left.
U_{d}\right\vert V=v,X=x\right] $ is irrelevant to the variance of $U_{d}$
according to the above analysis, Assumption CMI still holds in the presence
of such heteroscedastic $U_{d}$.
\end{example}

For the purpose of identifying the MTE, we first establish the relationship
between $\Delta ^{\text{MTE}}\left( x,v\right) $ and observed regression
functions. Under Assumption CMI, the observed regression functions for
treated and untreated individuals are additively separable in a similar way:%
\begin{eqnarray}
E\left[ Y\left\vert X,D=d\right. \right] &=&E\left[ Y_{d}\left\vert
X,D=d\right. \right]  \notag \\
&=&h_{d}\left( X\right) +E\left[ U_{d}\left\vert X,D=d\right. \right]  \notag
\\
&=&h_{d}\left( X\right) +g_{d}\left( P\right) ,\text{ }d=0,1,  \label{md0}
\end{eqnarray}%
where%
\begin{eqnarray*}
g_{1}\left( p\right) &=&E\left[ \left. U_{1}\right\vert V\leq p\right] =%
\frac{1}{p}\int_{0}^{p}E\left[ \left. U_{1}\right\vert V=v\right] dv, \\
g_{0}\left( p\right) &=&E\left[ \left. U_{0}\right\vert V>p\right] =\frac{1}{%
1-p}\int_{p}^{1}E\left[ \left. U_{0}\right\vert V=v\right] dv.
\end{eqnarray*}%
The function $g_{d}\left( \cdot \right) $ is usually referred to as
selectivity bias correction term and plays a key part in the literature on
sample selection models. We introduce $g_{d}\left( \cdot \right) $ into the
identification of MTE because $g_{d}\left( \cdot \right) $ is closely
related to the MTE and it is generally easier to identify $g_{d}\left( \cdot
\right) $ than directly identify the MTE. From its definition, $g_{d}\left(
\cdot \right) $ can be viewed as an aggregating function of the unobservable
heterogeneity of the MTE. By multiplying $g_{d}\left( p\right) $ with $p$ or 
$1-p$ and then differentiating with respect to $p$, we can retrieve the
unobservable part of the MTE as%
\begin{eqnarray*}
E\left[ \left. U_{1}\right\vert V=p\right] &=&g_{1}\left( p\right)
+pg_{1}^{\left( 1\right) }\left( p\right) , \\
E\left[ \left. U_{0}\right\vert V=p\right] &=&g_{0}\left( p\right) -\left(
1-p\right) g_{0}^{\left( 1\right) }\left( p\right) ,
\end{eqnarray*}%
where $g_{d}^{\left( 1\right) }$ denotes the derivative function of $g_{d}$.
Therefore, the MTE can be represented as%
\begin{equation}
\Delta ^{\text{MTE}}\left( x,v\right) =h_{1}\left( x\right) -h_{0}\left(
x\right) +\left[ g_{1}\left( v\right) -g_{0}\left( v\right) \right]
+vg_{1}^{\left( 1\right) }\left( v\right) +\left( 1-v\right) g_{0}^{\left(
1\right) }\left( v\right) ,  \label{MTEiden0}
\end{equation}%
and the identification of it can be achieved by identifying $h_{d}$ and $%
g_{d}$ from the regression equation (\ref{md0}).

Recall that $P=\pi \left( X\right) $, where $\pi $ is a deterministic
function, albeit unknown. Without having an IV, especially without having
the exclusion restriction that assumes some element of $X$ to drop out of $%
h_{d}\left( X\right) $, the argument of $g_{d}$ in (\ref{md0}) exhibits no
extra variation than $h_{d}$. Therefore, it is impossible to distinguish
between $g_{d}$ and $h_{d}$. For example, one may choose to let $h_{d}\left(
\cdot \right) $ absorb $g_{d}\left( \pi \left( \cdot \right) \right) $ and
set $g_{d}=0$. Inspired by the identification result of \cite%
{escanciano2016identification} for a double index model, functional form
restrictions that differentiate $h_{d}\left( \cdot \right) $ and $%
g_{d}\left( \pi \left( \cdot \right) \right) $ from one another can address
this issue and facilitate the identification of both $h_{d}$ and $g_{d}$. To
this end, we impose linearity on $h_{d}$ and nonlinearity on $\pi $.

\medskip

\textbf{Assumption L} (Linearity). \textit{Assume that }$E\left[ \left.
Y_{d}\right\vert X\right] =\alpha _{d}+X^{\prime }\beta _{d}$\textit{\ with
probability one for some fixed }$\alpha _{d}$\textit{\ and }$\beta _{d}$%
\textit{, }$d=0,1$\textit{.}

\medskip

Assumption L imposes a linear restriction on the potential outcome
regression functions, which is a common practice in empirical studies. The
linear restriction may seem too strong compared with the existing
identification strategies for the MTE, which generally allow highly flexible
(especially nonparametric) specifications on the potential outcomes.
However, when the identification is accomplished and the estimation comes
into consideration, the linear specification will turn to be nearly
universally adopted due to its tractability and interpretability 
\citep[e.g.,][]{kirkeboen2016field,
kline2016evaluating, brinch2017beyond, heckman2018returns,mogstad2021causal,
aryal2022signaling, mountjoy2022community}. In addition to the empirical
convenience, the linear regression relationship may also be derived from a
multivariate normal distribution of $\left( Y_{d},X\right) $, or justified
by economic models such as the Cobb-Douglas production function which
implies that the logarithm of an output is linearly related to the logarithm
of inputs, the capital asset pricing model (CAPM) which implies that the
expected return of a portfolio is linearly related to the portfolio risk,
and the concavely quadratic utility function which implies a linear demand
system for differentiated products \citep{amir2017microeconomic}.

With a little abuse of notation, we redenote $U_{d}$ to absorb the intercept 
$\alpha _{d}$ so that%
\begin{equation*}
Y_{d}=X^{\prime }\beta _{d}+U_{d},
\end{equation*}%
and thus that the regression equation (\ref{md0}) and the MTE become%
\begin{eqnarray}
E\left[ Y\left\vert X,D=d\right. \right] &=&X^{\prime }\beta
_{d}+g_{d}\left( P\right) ,\text{ }d=0,1,  \label{md} \\
\Delta ^{\text{MTE}}\left( x,v\right) &=&x^{\prime }\left( \beta _{1}-\beta
_{0}\right) +\left[ g_{1}\left( v\right) -g_{0}\left( v\right) \right]
+vg_{1}^{\left( 1\right) }\left( v\right) +\left( 1-v\right) g_{0}^{\left(
1\right) }\left( v\right) .  \label{MTEiden}
\end{eqnarray}%
Noting that Assumption L implicitly requires the potential outcomes to be
continuously distributed and supported on the entire real line, we would
like to point out that our identification strategy can be adapted to the
case of limited valued outcomes as well by specifying a linear latent index.
A detailed discussion is left to Appendix \ref{appendix:LDV}.

To present the nonlinearity assumption, we let $X$ be partitioned as $\left(
X^{C},X^{D}\right) $, where $X^{C}$ and $X^{D}$ consist of covariates that
are continuously and discretely distributed, respectively. We denote $X_{k}$%
, $X_{k}^{C}$, and $X_{k}^{D}$ as the $k$-th coordinates of $X$, $X^{C}$,
and $X^{D}$, respectively. And we denote $x^{C}$ as a generic element in the
support of $X^{C}$; likewise for $x^{D}$, $x_{k}^{C}$, and $x_{k}^{D}$. The
nonlinearity assumption requires the propensity score function $\pi $ to be
nonlinear in $X^{C}$ given a benchmark value of $X^{D}$. Without loss of
generality, we suppose that the vector of zeros is in the support of $X^{D}$
and is the benchmark value. We denote $\pi _{0}\left( x^{C}\right) =\pi
\left( x^{C},0\right) $ for notational convenience, where the discrete
covariates are equal to zero.

\medskip

\textbf{Assumption NL} (Non-Linearity). \textit{Assume that} $E\left[
Y\left\vert X^{C}=x^{C},X^{D}=0,D=d\right. \right] $ and\textit{\ }$\pi
_{0}\left( x^{C}\right) $\textit{, }$d=0,1$\textit{, are differentiable with
respect to }$x^{C}$,\textit{\ and that the derivative of }$\pi _{0}$ \textit{%
satisfies the following NL2 when }$\dim \left( X^{C}\right) \geq 2$\textit{\
or NL1 when }$\dim \left( X^{C}\right) =1.$

\textit{-- NL2 (}$\dim \left( X^{C}\right) \geq 2$\textit{): there exist two
vectors }$x^{C}$\textit{, }$\tilde{x}^{C}$\textit{\ in the support of }$%
X^{C} $\textit{\ and two elements }$k$\textit{, }$j$\textit{\ in set }$%
\left\{ 1,2,\cdots ,\dim \left( X^{C}\right) \right\} $\textit{\ such that
(i) }$\partial _{k}\pi _{0}\left( x^{C}\right) \neq 0$\textit{, (ii) }$%
\partial _{j}\pi _{0}\left( x^{C}\right) \neq 0$\textit{, (iii) }$\partial
_{k}\pi _{0}\left( \tilde{x}^{C}\right) \neq 0$\textit{, (iv) }$\partial
_{j}\pi _{0}\left( \tilde{x}^{C}\right) \neq 0$\textit{, and (v) }$\left.
\partial _{k}\pi _{0}\left( x^{C}\right) \right/ \partial _{j}\pi _{0}\left(
x^{C}\right) \neq \left. \partial _{k}\pi _{0}\left( \tilde{x}^{C}\right)
\right/ \partial _{j}\pi _{0}\left( \tilde{x}^{C}\right) $\textit{, where }$%
\partial _{k}\pi _{0}\left( x^{C}\right) =\left. \partial \pi _{0}\left(
x^{C}\right) \right/ \partial x_{k}^{C}$\textit{\ is the partial derivative
of }$\pi _{0}$\textit{\ with respect to the }$k$\textit{-th argument.}

\textit{-- NL1 (}$\dim \left( X^{C}\right) =1$\textit{): there exists a
constant }$\tilde{x}^{C}$\textit{\ in the support of }$X^{C}$\textit{\ such
that }$\pi _{0}^{\left( 1\right) }\left( \tilde{x}^{C}\right) =0$,\textit{\
where }$\pi _{0}^{\left( 1\right) }\left( x^{C}\right) =\left. d\pi
_{0}\left( x^{C}\right) \right/ dx^{C}$\textit{\ is the univariate
derivative of }$\pi _{0}$\textit{.}

\medskip

Assumption NL requires the propensity score function to be nonlinear in
continuous covariates in a generalized sense. Note that Assumption NL is
sufficient but not necessary for nonlinearity of $\pi _{0}$. Appendix \ref%
{appendix:NL} provides other possible forms of sufficient nonlinearity
conditions. The combination of Assumptions L and NL will enable
identification based on functional forms in a semiparametric version, which
can realize the identification of linear coefficients by exploiting
nonlinearity elsewhere in the model. Actually, the nonlinear or
nonparametric setting of the treatment equation is typically adopted in the
applied literature when semiparametric approaches to estimating the MTE are
taken, as illustrated in \cite{carneiro2011estimating} and summarized in 
\cite{cornelissen2016late}.\footnote{%
We thank an anonymous referee for pointing out this observation.} The
following example provides an economic intuition that underlies the
nonlinearity assumption.

\begin{example}
Consider the generalized Roy model where individuals choose treatment if the
perceived benefit from treatment is greater than the subjective cost of
treatment \citep{eisenhauer2015generalized}. Specifically,%
\begin{equation}
D=1\left\{ Y_{1}-Y_{0}\geq c\left( X\right) +\epsilon \right\} ,
\label{benefit-cost}
\end{equation}%
where $c\left( X\right) $ and $\epsilon $ are the observed and unobserved
parts of the cost, respectively. Under Assumption L, the treatment equation (%
\ref{benefit-cost}) becomes $D=1\left\{ X^{\prime }\left( \beta _{1}-\beta
_{0}\right) -c\left( X\right) \geq U\right\} $, where $U=\epsilon
-U_{1}+U_{0}$, therefore $\pi \left( x\right) =F_{U\left\vert X\right.
}\left( \left. x^{\prime }\left( \beta _{1}-\beta _{0}\right) -c\left(
x\right) \right\vert x\right) $, which will be a nonlinear function of $x$
if $c\left( x\right) $ is nonlinear or the conditional distribution function
of $U$ given $X=x$ is nonlinear. In economic applications, the cost function 
$c\left( x\right) $ is typically nonlinear in individual and family
characteristics. For instance, the subjective cost of higher education would
be U-shaped with respect to the family income, because of the relatively
high tuition fee for students with disadvantaged backgrounds and high
opportunity cost for students with superior backgrounds. On the other hand,
the conditional distribution function $F_{U\left\vert X\right. }\left(
\left. u\right\vert x\right) $ will be nonlinear in $x$ in general if $U$ is
not independent of $X$, as shown by Example \ref{example:hetero}. More
generally, the treatment equation (\ref{benefit-cost}) is a special case of
a utility maximization model of treatment defined as $D=1\left\{ \mathfrak{u}%
_{1}\left( Y_{1},X,\epsilon _{1}\right) \geq \mathfrak{u}_{0}\left(
Y_{0},X,\epsilon _{0}\right) \right\} $, where the utility functions $%
\mathfrak{u}_{1}$ and $\mathfrak{u}_{0}$ are quasi-linear with respect to
the first argument. An early version of \cite{lee2023nonparametric} gave
further examples illustrating why the nonlinearity of the utility functions
makes sense in economic studies.
\end{example}

The nonlinearity assumption has two different forms, depending on the number
of continuous covariates. When two or more continuous covariates are
available, Assumption NL2 will require some variation in $\pi _{0}$ to
distinguish it from a single-index function. Concretely, NL2 will not hold
if $\pi _{0}\left( x^{C}\right) =f\left( x^{C\prime }\gamma \right) $ for a
smooth function $f$, because in this case, both sides of the inequality in
(v) are equal to $\left. \gamma _{k}\right/ \gamma _{j}$. Otherwise,
however, it is difficult to construct examples that violate (v). Therefore,
we can informally test NL2 by constructing a test for the single-index
specification of the propensity score against a general nonparametric
alternative 
\citep[e.g.,][]{fan1996consistent, stute2005nonparametric,
xia2009model, maistre2019nonparametric}. In practice, NL2 can be fulfilled
even when $\pi _{0}$ is single-index specified, if interaction or/and
quadratic terms are added, as shown by the following Example.

\begin{example}
\label{example:nonlinear} Consider the case of two continuous covariates.
Suppose that for a smooth function $f$, $\pi _{0}\left( x^{C}\right)
=f\left( \gamma _{1}x_{1}^{C}+\gamma _{2}x_{2}^{C}+\gamma
_{3}x_{1}^{C}x_{2}^{C}\right) $ or $\pi _{0}\left( x^{C}\right) =f\left(
\gamma _{1}x_{1}^{C}+\gamma _{2}x_{2}^{C}+\gamma _{3}\left( x_{1}^{C}\right)
^{2}\right) $. Then, we obtain $\left. \partial _{1}\pi _{0}\left(
x^{C}\right) \right/ \partial _{2}\pi _{0}\left( x^{C}\right) =\left. \left(
\gamma _{1}+\gamma _{3}x_{2}^{C}\right) \right/ \left( \gamma _{2}+\gamma
_{3}x_{1}^{C}\right) $ for the interaction case, or $\left. \partial _{1}\pi
_{0}\left( x^{C}\right) \right/ \partial _{2}\pi _{0}\left( x^{C}\right)
=\left. \left( \gamma _{1}+2\gamma _{3}x_{1}^{C}\right) \right/ \gamma _{2}$
for the quadratic case. In both cases, Assumption NL2.(v) will generally
hold for $x^{C}$ and $\tilde{x}^{C}$ satisfying $x_{1}^{C}\neq \tilde{x}%
_{1}^{C}$.
\end{example}

Assumption NL2 is a form of the nonlinearity assumption of \cite%
{escanciano2016identification} that is specific to our model, and thus
inherits a demerit that it requires the existence of at least two continuous
covariates. However, in empirical studies based on survey data, most of the
demographic characteristics are documented as discrete or categorical
variables, such as age, gender, race, marital status, educational
attainment, and so on. Therefore, we also impose Assumption NL1, as a
supplement to NL2, to take into account the situation in which only one
continuous covariate is available. NL1 requires $\pi _{0}$ to have at least
one stationary point. Namely, NL1 holds if the probability of receiving
treatment is unaffected by some local change in the continuous covariate.
It's a differential version of the local irrelevance assumption imposed in
nonseparable models for attaining point identification %
\citep[e.g.,][]{torgovitsky2015identification, d2021testing}. Note that NL1
will hold if $\pi _{0}$ is flat everywhere, i.e., if the continuous
covariate has no influence on the treatment choice conditional on the
benchmark value of $X^{D}$. However, this extreme case will be ruled out by
the subsequent Assumption S. Given that, NL1 implies that $\pi _{0}$ is
necessarily nonlinear. NL1 can be extended to the case of no continuous
covariate, as discussed in Appendix \ref{appendix:NL}. However, using only
discrete covariates provides not much identifying variation, which may lead
to poor performance in the subsequent model estimation %
\citep{garlick2022quasi}. Hence, we focus on the case of at least one
continuous covariate. NL1 can be visually tested by plotting the estimated
propensity score as a function of the continuous covariate given a benchmark
value of $X^{D}$, as illustrated in Figure \ref{Fig:AssumptionNL1} in the
subsequent empirical application.

We note that Assumption NL doesn't exclude the widely specified linear-index
treatment equation, as long as the structural error is not independent of
covariates.

\begin{example}
\label{example:hetero} Suppose the treatment status is determined by a
linear-index threshold crossing rule:%
\begin{equation*}
D=1\left\{ X^{\prime }\gamma \geq U\right\} ,
\end{equation*}%
where $U$ is not independent of $X$. Namely, suppose $\mu \left( X\right)
=X^{\prime }\gamma $ in the structural treatment equation (\ref{treatment}).
Consider a multiplicatively heteroscedastic $U$ such that $U=\sigma \left(
X\right) \tilde{U}$, where $\sigma \left( X\right) $ is a positive function
and $\tilde{U}$ is an idiosyncratic error independent of $X$. The
normalization derives%
\begin{equation*}
D=1\left\{ \frac{X^{\prime }\gamma }{\sigma \left( X\right) }\geq \tilde{U}%
\right\} =1\left\{ F_{\tilde{U}}\left(\frac{X^{\prime }\gamma }{\sigma
\left( X\right) }\right)\geq V\right\},
\end{equation*}%
that is, the reduced-form error is defined as $V=F_{\tilde{U}}\left(\tilde{U}%
\right)$ and the propensity score function is equal to $\pi \left(x\right) =
F_{\tilde{U}}\left(\left. \left( x^{\prime }\gamma \right) \right/ \sigma
\left( x\right) \right)$. In general, $\pi \left( x\right) $ is a nonlinear
function. Another specification we consider is a linear-index model with
endogeneity in a certain component $X^{e}$ of $X^{C}$, in which $\pi \left(
x\right) =F_{U\left\vert X^{e}\right. }\left( \left. x^{\prime }\gamma
\right\vert x^{e}\right) $. Given that $\pi \left( x\right) $ is generally
highly nonlinear in $x^{e}$, Assumption NL1 holds straightforwardly, and NL2
holds with $x_{k}^{C}=x^{e}$.
\end{example}

Unlike in the binary response model, the ubiquitous heteroscedasticity and
endogeneity benefit our results while inducing no trouble, because the
identification of MTE is irrelevant to the structural coefficients in $%
\gamma $. Our MTE is defined by the reduced-form treatment error; thus, all
we need from the treatment equation is the propensity score, which has a
reduced-form nature.

Assumption NL ensures identification of coefficients of continuous
covariates. In order to identify coefficients of discrete covariates, we
impose a mild support condition. For any $x_{k}^{D}\neq 0$, we denote $%
x^{Dk} $ as the $\dim \left( X^{D}\right) \times 1$ vector with the $k$-th
coordinate being equal to $x_{k}^{D}$ and all the other coordinates being
equal to zero.

\medskip

\textbf{Assumption S} (Support). \textit{For each }$k\in \left\{ 1,2,\cdots
,\dim \left( X^{D}\right) \right\} $\textit{, assume for some }$%
x_{k}^{D}\neq 0$\textit{\ in the support of }$X_{k}^{D}$\textit{\ that there
exists }$x^{C}\left( k\right) $\textit{\ in the support of }$X^{C}$\textit{\
such that }$\pi \left( x^{C}\left( k\right) ,x^{Dk}\right) $\textit{\ is in
the support of }$\pi _{0}\left( X^{C}\right) $\textit{.}

\medskip

A sufficient condition for Assumption S to hold is that $\pi _{0}\left(
X^{C}\right) $ has a full support on the unit interval, or, more generally,
that the support of $\pi _{0}\left( X^{C}\right) $ is overlapped with those
of $\pi \left( X^{C},x^{D}\right) $ for all $x^{D}\neq 0$. Otherwise, we
have to find an $x_{k}^{D}\neq 0$ for each $k$ such that the support of $\pi
\left( X^{C},x^{Dk}\right) $ is overlapped with that of $\pi _{0}\left(
X^{C}\right) $.\ The following theorem establishes our main identification
result.

\begin{theorem}
\label{theorem:main} If Assumptions CMI, L, NL, and S hold, then $\beta _{d}$
and $g_{d}\left( p\right) $ at all $p$ in the support of the propensity
score $P$ are identified for $d=0,1$.
\end{theorem}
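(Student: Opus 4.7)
My starting point is the control-function representation (\ref{md}), which reduces the identification problem for $\Delta^{\mathrm{MTE}}$ to recovering the finite-dimensional parameter $\beta_{d}$ and the unknown univariate function $g_{d}$ from the restriction $E[Y\mid X=x,D=d]=x'\beta_{d}+g_{d}(\pi(x))$. Once $\beta_{d}$ has been identified, $g_{d}$ is read off immediately at any $p$ in the support of $P$ by picking an $x$ with $\pi(x)=p$ and setting $g_{d}(p)=E[Y\mid X=x,D=d]-x'\beta_{d}$. So the substance of the argument is the constructive identification of $\beta_{d}$, which I plan to do by first pinning down the continuous-covariate coefficients using Assumption NL and then recovering the discrete-covariate coefficients one at a time using Assumption S.

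For the continuous coefficients I split on the two subcases of Assumption NL. Under NL1 (scalar $X^{C}$), I evaluate (\ref{md}) along $X^{D}=0$ at the two points $x^{C}$ and $\tilde{x}^{C}$ supplied by NL1, which satisfy $\pi_{0}(x^{C})=\pi_{0}(\tilde{x}^{C})$; the $g_{d}$ terms cancel on subtraction, and since $x^{C}\neq\tilde{x}^{C}$ the scalar $\beta_{d}^{C}$ is read off. Under NL2, I differentiate the relation $E[Y\mid X^{C}=x^{C},X^{D}=0,D=d]=x^{C\prime}\beta_{d}^{C}+g_{d}(\pi_{0}(x^{C}))$ with respect to $x_{k}^{C}$ and $x_{j}^{C}$ at the two base points from NL2. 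Writing $A_{k}=\partial_{k}E[Y\mid\cdot]$ and $\pi_{k}=\partial_{k}\pi_{0}(x^{C})$, each base point yields $A_{k}=\beta_{d,k}^{C}+g_{d}^{(1)}(\pi_{0}(x^{C}))\pi_{k}$ and $A_{j}=\beta_{d,j}^{C}+g_{d}^{(1)}(\pi_{0}(x^{C}))\pi_{j}$; the linear combination $\pi_{j}A_{k}-\pi_{k}A_{j}=\pi_{j}\beta_{d,k}^{C}-\pi_{k}\beta_{d,j}^{C}$ eliminates the unknown derivative $g_{d}^{(1)}$. Stacking the two base-point versions of this combination produces a $2\times 2$ linear system in $(\beta_{d,k}^{C},\beta_{d,j}^{C})$ whose determinant equals $\pi_{k}\tilde{\pi}_{j}-\pi_{j}\tilde{\pi}_{k}$; this is nonzero by NL2(v) together with NL2(i)--(iv), so $(\beta_{d,k}^{C},\beta_{d,j}^{C})$ is identified. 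Substituting either equation back recovers $g_{d}^{(1)}(\pi_{0}(x^{C}))$, after which every remaining coordinate $\beta_{d,\ell}^{C}$ is pinned down from $A_{\ell}=\beta_{d,\ell}^{C}+g_{d}^{(1)}(\pi_{0}(x^{C}))\pi_{\ell}$.

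With $\beta_{d}^{C}$ in hand, Assumption S delivers $\beta_{d}^{D}$ one coordinate at a time. For each $k$, S provides $x^{C}(k)$ and a nonzero $x_{k}^{D}$ together with some $\tilde{x}^{C}$ in the support of $X^{C}$ such that $\pi(x^{C}(k),x^{Dk})=\pi_{0}(\tilde{x}^{C})$; evaluating (\ref{md}) at $(x^{C}(k),x^{Dk})$ and at $(\tilde{x}^{C},0)$ and subtracting cancels the matched $g_{d}$ term, leaving $x_{k}^{D}\beta_{d,k}^{D}$ equal to a difference of observed conditional means minus $(x^{C}(k)-\tilde{x}^{C})'\beta_{d}^{C}$, which is already identified; dividing by the nonzero scalar $x_{k}^{D}$ yields $\beta_{d,k}^{D}$. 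Combining the two stages identifies the full vector $\beta_{d}$, and then the residual formula for $g_{d}(p)$ described above completes the proof.

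The main obstacle, I expect, will be the NL2 step: one must be careful to invoke differentiability of both $\pi_{0}$ and the observed conditional mean $E[Y\mid X^{C}=x^{C},X^{D}=0,D=d]$ at each of the two base points (which is exactly why Assumption NL2 builds this smoothness into its hypothesis), and to connect the geometric-looking condition NL2(v) to the nonvanishing of the determinant of the reduced $2\times 2$ system rather than to any intrinsic property of the unknown $g_{d}^{(1)}$. The NL1 step amounts to Heckman-style pairwise differencing and the Assumption S step is mechanical once $\beta_{d}^{C}$ is available, so the bookkeeping there is mainly checking that the propensity-score matching points lie in the appropriate supports so that every conditional expectation used is well defined.
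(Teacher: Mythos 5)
Your proposal is correct and follows essentially the same route as the paper's proof: pairwise differencing at propensity-score-matched points under NL1, cross-multiplying the partial-derivative equations to eliminate $g_{d}^{(1)}$ and inverting the $2\times 2$ system whose nonsingularity is exactly NL2(v) under NL2, then back-substituting for the remaining continuous coefficients, using Assumption S to peel off the discrete coefficients, and finally reading $g_{d}$ off the residual of (\ref{md}). The only cosmetic difference is that the paper first identifies $g_{d}$ on the support of $\pi_{0}(X^{C})$ and then evaluates it at $\pi(x^{C}(k),x^{Dk})$, whereas you cancel that term by differencing against a matched point $\tilde{x}^{C}$ with $\pi_{0}(\tilde{x}^{C})=\pi(x^{C}(k),x^{Dk})$; these are the same argument.
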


Theorem \ref{theorem:main} is an adaption of the identification result of 
\cite{escanciano2016identification}. Since \cite%
{escanciano2016identification} considered a general double index model with
an unknown link function, their identification result is only up to scale
and thus requires a scale normalization assumption. In comparison, Theorem %
\ref{theorem:main} can identify the magnitude of the linear coefficients
without any normalization. The proof of Theorem \ref{theorem:main} is
delegated to Appendix \ref{appendix:proof}. Briefly speaking, the proof is
grounded on the observed regression functions (\ref{md}), which summarize
the information from the data. As $g_{d}$ is unknown, we need to eliminate
it through some subtraction to realize the identification of $\beta _{d}$.
When only one continuous covariate exists, the subtraction can be carried
out locally around the point satisfying Assumption NL1. Otherwise, our
strategy is to perturb two continuous covariates, such as $x_{k}^{C}$ and $%
x_{j}^{C}$, in such a way that $\pi \left( x\right) $ remains unchanged.
Specifically, for each group $d$, we increase $x_{k}^{C}$ by a small $%
\epsilon $ and simultaneously change $x_{j}^{C}$ by $\epsilon $ multiplied
by $\left. -\partial _{k}\pi _{0}\left( x^{C}\right) \right/ \partial
_{j}\pi _{0}\left( x^{C}\right) $, resulting in a perturbed value of the
regression function. Subtracting the perturbed regression function from the
original (\ref{md}) will cancel out $g_{d}$ due to the equality of $\pi
\left( x\right) $, giving rise to an equation for $\beta _{d,k}^{C}$ and $%
\beta _{d,j}^{C}$. Note that the multiplier $\left. -\partial _{k}\pi
_{0}\left( x^{C}\right) \right/ \partial _{j}\pi _{0}\left( x^{C}\right) $
is the partial derivative of $x_{j}^{C}$ with respect to $x_{k}^{C}$ if we
view $x_{j}^{C}$ as an implicit function of the other continuous covariates
by equating $\pi _{0}\left( x^{C}\right) $ to a constant. Accordingly,
Assumption NL2.(v) implies two linearly independent equations, ensuring an
exact solution (i.e., identification) of $\beta _{d,k}^{C}$ and $\beta
_{d,j}^{C}$.

It is worth mentioning that our strategy naturally features
over-identification in the sense that if there is a pair of points
satisfying Assumption NL2, then there must be infinitely many pairs of
points for which NL2 holds, because $X^{C}$ is continuously distributed.
Moreover, there will generally be more than one value of $X^{D}$ that
satisfies Assumptions S, as illustrated by the application in Section \ref%
{sec:application}. Such values of $X^{D}$ can be taken as alternative
benchmarks, under which Assumption NL may hold for another set of (pairs of)
points of $X^{C}$. Consequently, the identification can be represented as
the average of solutions over all (pairs of) points of $X^{C}$ satisfying
Assumption NL and over all values of $X^{D}$ satisfying Assumptions S.

According to (\ref{MTEiden}), Theorem \ref{theorem:main} implies the
identification of MTE in the absence of IVs. Specifically, This result
allows practitioners to include all the relevant observed characteristics
into both treatment and outcome equations, without imposing any exclusion or
full independence assumptions. Under Theorem \ref{theorem:main}, the
conventional causal parameters can also be identified without instruments,
provided that the support of $P$ contains 0 and/or 1 (which implies
identifiability of $g_{d}\left( 0\right) $ and/or $g_{d}\left( 1\right) $): 
\begin{eqnarray}
\Delta ^{\text{ATE}}\left( x\right) &=&x^{\prime }\left( \beta _{1}-\beta
_{0}\right) +\left[ g_{1}\left( 1\right) -g_{0}\left( 0\right) \right] ,
\label{CPiden} \\
\Delta ^{\text{TT}}\left( x\right) &=&x^{\prime }\left( \beta _{1}-\beta
_{0}\right) +g_{1}\left( \pi \left( x\right) \right) +\frac{\left( 1-\pi
\left( x\right) \right) g_{0}\left( \pi \left( x\right) \right) -g_{0}\left(
0\right) }{\pi \left( x\right) },  \notag \\
\Delta ^{\text{TUT}}\left( x\right) &=&x^{\prime }\left( \beta _{1}-\beta
_{0}\right) +\frac{g_{1}\left( 1\right) -\pi \left( x\right) g_{1}\left( \pi
\left( x\right) \right) }{1-\pi \left( x\right) }-g_{0}\left( \pi \left(
x\right) \right) ,  \notag \\
\Delta ^{\text{LATE}}\left( x,v_{1},v_{2}\right) &=&x^{\prime }\left( \beta
_{1}-\beta _{0}\right) +\frac{v_{2}g_{1}\left( v_{2}\right)
-v_{1}g_{1}\left( v_{1}\right) +\left( 1-v_{2}\right) g_{0}\left(
v_{2}\right) -\left( 1-v_{1}\right) g_{0}\left( v_{1}\right) }{v_{2}-v_{1}}.
\notag
\end{eqnarray}

\section{Estimation}

\label{sec:estimation}Our identification strategy for the IV-free MTE
implies a separate estimation procedure that works with the partially linear
regression (\ref{md}) by each treatment status. In particular, we recommend
the kernel-weighted pairwise difference estimation method proposed by \cite%
{ahn1993semiparametric} because of its computational simplicity,
well-established asymptotic properties, and, most importantly, its capacity
to effectively leverage the overidentifying information that underlies the
data. Suppose that $\left\{ \left( Y_{i},D_{i},X_{i}\right) :i=1,2,\cdots
,n\right\} $ is a random sample of observations on $\left( Y,D,X\right) $.
In the first step, we estimate the nonparametrically specified propensity
score using the kernel method, that is, 
\begin{equation}
\hat{\pi}\left( x\right) =\frac{\sum_{i=1}^{n}D_{i}\left[ \prod_{l=1}^{\dim
\left( X^{C}\right) }k_{1}\left( \left. \left( X_{il}^{C}-x_{l}^{C}\right)
\right/ h_{1l}\right) \right] 1\left\{ X_{i}^{D}=x^{D}\right\} }{%
\sum_{i=1}^{n}\left[ \prod_{l=1}^{\dim \left( X^{C}\right) }k_{1}\left(
\left. \left( X_{il}^{C}-x_{l}^{C}\right) \right/ h_{1l}\right) \right]
1\left\{ X_{i}^{D}=x^{D}\right\} },  \label{PSfunction}
\end{equation}%
and construct $\hat{P}_{i}=\hat{\pi}\left( X_{i}\right) $ by leaving the $i$%
-th observation out in the estimation, where $h_{1l}$, $l=1,2,\cdots ,\dim
\left( X^{C}\right) $, are bandwidths and $k_{1}$ is a univariate kernel
function. If the dimension of $X^{D}$ is large, then a smoothed kernel for
discrete covariates \citep{racine2004nonparametric} can be applied as a
substitute for the indicator function, to alleviate the potential problem of
inadequate observations in each data cell divided by the support of $X^{D}$.
If the number of continuous covariates is not small either, then the
well-known curse of dimensionality will appear, and a linear-index
specification may thus be practically more relevant when modeling the
propensity score. The index should include a series of interaction terms and
quadratic or even higher-order terms of continuous covariates to supply
sufficient nonlinear variation. The linear-index propensity score can be
estimated by parametric probit/logit or semiparametric methods 
\citep[e.g.,][]{powell1989semiparametric,
ichimura1993semiparametric, klein1993efficient, lewbel2000semiparametric},
depending on the distributional assumption on the error term.

In the second step, we estimate $\beta _{d}$ for each $d$ through a weighted
pairwise difference least squares regression:%
\begin{eqnarray}
\hat{\beta}_{d} &=&\arg \min_{\beta }\sum_{i=1}^{n-1}\sum_{j=i+1}^{n}\hat{%
\omega}_{dij}\left[ \left( Y_{i}-Y_{j}\right) -\left( X_{i}-X_{j}\right)
^{\prime }\beta \right] ^{2}  \notag \\
&=&\left[ \sum_{i=1}^{n-1}\sum_{j=i+1}^{n}\hat{\omega}_{dij}\left(
X_{i}-X_{j}\right) \left( X_{i}-X_{j}\right) ^{\prime }\right] ^{-1}\left[
\sum_{i=1}^{n-1}\sum_{j=i+1}^{n}\hat{\omega}_{dij}\left( X_{i}-X_{j}\right)
\left( Y_{i}-Y_{j}\right) \right] ,  \label{PDLS}
\end{eqnarray}%
where the weights are given by%
\begin{equation*}
\hat{\omega}_{dij}=1\left\{ D_{i}=D_{j}=d\right\} \frac{1}{h_{2}}k_{2}\left( 
\frac{\hat{P}_{i}-\hat{P}_{j}}{h_{2}}\right) ,
\end{equation*}%
with $h_{2}$ and $k_{2}$ being the bandwidth and kernel function,
respectively, which can be different from those in the first step. Given $%
\hat{\beta}_{d}$, the nonlinear function $g_{d}\left( p\right) $ and its
derivative function $g_{d}^{\left( 1\right) }\left( p\right) $ at any $p$ in
the support of $P$ can be estimated by the local linear method, namely,%
\begin{equation*}
\left( 
\begin{array}{c}
\hat{g}_{d}\left( p\right) \\ 
\hat{g}_{d}^{\left( 1\right) }\left( p\right)%
\end{array}%
\right) =\left[ \sum_{i=1}^{n}\hat{w}_{di}\left( p\right) \left( 
\begin{array}{c}
1 \\ 
\hat{P}_{i}-p%
\end{array}%
\right) \left( 
\begin{array}{c}
1 \\ 
\hat{P}_{i}-p%
\end{array}%
\right) ^{\prime }\right] ^{-1}\left[ \sum_{i=1}^{n}\hat{w}_{di}\left(
p\right) \left( 
\begin{array}{c}
1 \\ 
\hat{P}_{i}-p%
\end{array}%
\right) \left( Y_{i}-X_{i}^{\prime }\hat{\beta}_{d}\right) \right] ,
\end{equation*}%
where%
\begin{equation*}
\hat{w}_{di}\left( p\right) =1\left\{ D_{i}=d\right\} \frac{1}{h_{3}}%
k_{3}\left( \frac{\hat{P}_{i}-p}{h_{3}}\right) .
\end{equation*}%
Finally, we plug $\hat{\beta}_{d}$, $\hat{g}_{d}$, $\hat{g}_{d}^{\left(
1\right) }$, and $\hat{\pi}$ into identification equations (\ref{MTEiden})
and (\ref{CPiden}) to estimate the IV-free MTE and other causal parameters,
as follows:%
\begin{eqnarray}
\hat{\Delta}^{\text{MTE}}\left( x,v\right) &=&x^{\prime }\left( \hat{\beta}%
_{1}-\hat{\beta}_{0}\right) +\left[ \hat{g}_{1}\left( v\right) -\hat{g}%
_{0}\left( v\right) \right] +v\hat{g}_{1}^{\left( 1\right) }\left( v\right)
+\left( 1-v\right) \hat{g}_{0}^{\left( 1\right) }\left( v\right) ,
\label{MTEhat} \\
\hat{\Delta}^{\text{ATE}}\left( x\right) &=&x^{\prime }\left( \hat{\beta}%
_{1}-\hat{\beta}_{0}\right) +\left[ \hat{g}_{1}\left( 1\right) -\hat{g}%
_{0}\left( 0\right) \right] ,  \notag \\
\hat{\Delta}^{\text{TT}}\left( x\right) &=&x^{\prime }\left( \hat{\beta}_{1}-%
\hat{\beta}_{0}\right) +\hat{g}_{1}\left( \hat{\pi}\left( x\right) \right) +%
\frac{\left( 1-\hat{\pi}\left( x\right) \right) \hat{g}_{0}\left( \hat{\pi}%
\left( x\right) \right) -\hat{g}_{0}\left( 0\right) }{\hat{\pi}\left(
x\right) },  \notag \\
\hat{\Delta}^{\text{TUT}}\left( x\right) &=&x^{\prime }\left( \hat{\beta}%
_{1}-\hat{\beta}_{0}\right) +\frac{\hat{g}_{1}\left( 1\right) -\hat{\pi}%
\left( x\right) \hat{g}_{1}\left( \hat{\pi}\left( x\right) \right) }{1-\hat{%
\pi}\left( x\right) }-\hat{g}_{0}\left( \hat{\pi}\left( x\right) \right) , 
\notag \\
\hat{\Delta}^{\text{LATE}}\left( x,v_{1},v_{2}\right) &=&x^{\prime }\left( 
\hat{\beta}_{1}-\hat{\beta}_{0}\right) +\frac{v_{2}\hat{g}_{1}\left(
v_{2}\right) -v_{1}\hat{g}_{1}\left( v_{1}\right) +\left( 1-v_{2}\right) 
\hat{g}_{0}\left( v_{2}\right) -\left( 1-v_{1}\right) \hat{g}_{0}\left(
v_{1}\right) }{v_{2}-v_{1}}.  \notag
\end{eqnarray}

\begin{theorem}
\label{theorem:AN1} Suppose the assumptions of Theorem \ref{theorem:main}
hold. Further suppose Assumptions E.1-E.4 given in Appendix \ref{appendix:AN}
hold. Then for any interior point $v$ of the support of $P$ and any $x$
satisfying $0<\pi \left( x\right) <1$, we have%
\begin{eqnarray*}
\sqrt{nh_{3}^{3}}\left[ \hat{\Delta}^{\text{MTE}}\left( x,v\right) -\Delta ^{%
\text{MTE}}\left( x,v\right) \right] &\rightarrow &N\left( 0,\sigma _{\text{%
MTE}}^{2}\left( v\right) \right) , \\
\sqrt{nh_{3}}\left[ \hat{\Delta}^{\text{ATE}}\left( x\right) -\Delta ^{\text{%
ATE}}\left( x\right) -b_{\text{ATE}}h_{3}^{2}\right] &\rightarrow &N\left(
0,\sigma _{\text{ATE}}^{2}\right) , \\
\sqrt{nh_{3}}\left[ \hat{\Delta}^{\text{TT}}\left( x\right) -\Delta ^{\text{%
TT}}\left( x\right) -b_{\text{TT}}\left( \pi \left( x\right) \right)
h_{3}^{2}\right] &\rightarrow &N\left( 0,\sigma _{\text{TT}}^{2}\left( \pi
\left( x\right) \right) \right) , \\
\sqrt{nh_{3}}\left[ \hat{\Delta}^{\text{TUT}}\left( x\right) -\Delta ^{\text{%
TUT}}\left( x\right) -b_{\text{TUT}}\left( \pi \left( x\right) \right)
h_{3}^{2}\right] &\rightarrow &N\left( 0,\sigma _{\text{TUT}}^{2}\left( \pi
\left( x\right) \right) \right) , \\
\sqrt{nh_{3}}\left[ \hat{\Delta}^{\text{LATE}}\left( x,v_{1},v_{2}\right)
-\Delta ^{\text{LATE}}\left( x,v_{1},v_{2}\right) -b_{\text{LATE}}\left(
v_{1},v_{2}\right) h_{3}^{2}\right] &\rightarrow &N\left( 0,\sigma _{\text{%
LATE}}^{2}\left( v_{1},v_{2}\right) \right) ,
\end{eqnarray*}%
where the asymptotic variances $\sigma _{\text{MTE}}^{2},\cdots ,\sigma _{%
\text{LATE}}^{2}$ are defined in (\ref{sigmaMTE})-(\ref{sigmaLATE}),
respectively, and the asymptotic bias terms are defined as 
\begin{eqnarray*}
b_{\text{ATE}} &=&\frac{\kappa _{2}}{2}\left( g_{1}^{\left( 2\right) }\left(
1\right) -g_{0}^{\left( 2\right) }\left( 0\right) \right) , \\
b_{\text{TT}}\left( p\right) &=&\frac{\kappa _{2}}{2}\left( g_{1}^{\left(
2\right) }\left( p\right) +\frac{\left( 1-p\right) g_{0}^{\left( 2\right)
}\left( p\right) -g_{0}^{\left( 2\right) }\left( 0\right) }{p}\right) , \\
b_{\text{TUT}}\left( p\right) &=&\frac{\kappa _{2}}{2}\left( \frac{%
g_{1}^{\left( 2\right) }\left( 1\right) -pg_{1}^{\left( 2\right) }\left(
p\right) }{1-p}-g_{0}^{\left( 2\right) }\left( p\right) \right) , \\
b_{\text{LATE}}\left( v_{1},v_{2}\right) &=&\frac{\kappa _{2}}{2}\left( 
\frac{v_{2}g_{1}^{\left( 2\right) }\left( v_{2}\right) -v_{1}g_{1}^{\left(
2\right) }\left( v_{1}\right) +\left( 1-v_{2}\right) g_{0}^{\left( 2\right)
}\left( v_{2}\right) -\left( 1-v_{1}\right) g_{0}^{\left( 2\right) }\left(
v_{1}\right) }{v_{2}-v_{1}}\right) ,
\end{eqnarray*}%
with $\kappa _{2}=\int k_{3}\left( u\right) u^{2}du$.
\end{theorem}

The semiparametric MTE estimator $\hat{\Delta}^{\text{MTE}}\left( x,v\right) 
$ converges at the same rate $O_{p}\left( \left( nh_{3}^{3}\right)
^{-1/2}\right) $ as the local linear estimator for the derivative of a
regression function \citep[e.g.,][Theorem 2.7]{li2007nonparametric}, while
the estimators of the aggregated causal parameters converge at the same rate 
$O_{p}\left( \left( nh_{3}\right) ^{-1/2}\right) $ as the kernel estimator
for a regression function itself. This is because the estimation of MTE
involves plugging in $\hat{g}_{d}^{\left( 1\right) }$, the nonparametric
derivative estimate, which converges slower than $\hat{g}_{d}$, the
nonparametric regression function estimate. Taking the usually applied
rule-of-thumb bandwidth $h_{3}\sim n^{-1/5}$ for instance, the MTE estimator
will converge at the rate $O_{p}\left( n^{-1/5}\right) $ and the aggregated
causal parameters will converge at the rate $O_{p}\left( n^{-2/5}\right) $.
Since the slope estimate $\hat{\beta}_{d}$ converges at the parametric rate $%
O_{p}\left( n^{-1/2}\right) $ that is always faster than $\hat{g}%
_{d}^{\left( 1\right) }$ and $\hat{g}_{d}$, the observed part $x^{\prime
}\left( \hat{\beta}_{1}-\hat{\beta}_{0}\right) $ is asymptotically
negligible and has no impact on the asymptotic distributions of the
estimators. Therefore, for the causal parameters that $x$ does not enter the
unobserved part such as the MTE, ATE, and LATE, the asymptotic bias and
variance of their estimators will not depend on the observed characteristics.

The kernel-weighted pairwise difference estimator has the advantage of
having a closed-form expression, so we need not solve any formidable
optimization problems. However, it faces the challenging problem of
bandwidth selection as most semiparametric estimation methods.
Alternatively, we can consider imposing a parametric specification on the
unobservable heterogeneity of the MTE such that $E\left[ \left.
U_{d}\right\vert V=v\right] =E\left[ \left. U_{d}\right\vert V=v;\theta _{d}%
\right] $ for finite dimensional $\theta _{d}$ \citep{heckman2005structural}%
, e.g., the polynomial specification such that $E\left[ \left.
U_{d}\right\vert V=v\right] =\sum_{j=0}^{J}\theta _{dj}v^{j}$ or the normal
polynomial specification such that $E\left[ \left. U_{d}\right\vert V=v%
\right] =\sum_{j=0}^{J}\theta _{dj}\Phi ^{-j}\left( v\right) $, for a fixed $%
J$. In the latter, setting $J=1$ will match Heckman's normal sample
selection model. Under the parametric restriction, the second step becomes a
global regression $E\left[ Y\left\vert X,D=d\right. \right] =X^{\prime
}\beta _{d}+g_{d}\left( P;\theta _{d}\right) $ with parameterized
selectivity bias correction term $g_{d}\left( p;\theta _{d}\right) $ for
each $d$; therefore, the tuning of $h_{2}$ and $h_{3}$ is circumvented.
Another advantage of a parametrically specified second step is the lower
computational burden relative to the pairwise difference estimator that is
defined by double summation which requires a squared amount of calculation %
\citep{pan2023penalized}. The simulation in Appendix \ref{appendix:simulate}
shows that MTE estimators with parametric second-step perform well when the
specification is correct or nearly correct, while the semiparametric MTE
estimator performs robustly across different designs.

Notably, the local IV (LIV) estimation procedure can be adapted to our model
as well, though it needs no IV. Unlike the separate estimation procedure,
the adapted LIV approach is based on a whole-sample regression:%
\begin{eqnarray}
E\left[ Y\left\vert X\right. \right] &=&E\left[ Y_{0}+D\left(
Y_{1}-Y_{0}\right) \left\vert X\right. \right]  \notag \\
&=&E\left[ Y_{0}\left\vert X\right. \right] +E\left[ \left.
Y_{1}-Y_{0}\right\vert X,D=1\right] \Pr \left( D=1\left\vert X\right. \right)
\notag \\
&=&\alpha _{0}+X^{\prime }\beta _{0}+PX^{\prime }\left( \beta _{1}-\beta
_{0}\right) +q\left( P\right) ,  \label{aLIV}
\end{eqnarray}%
where%
\begin{equation*}
q\left( p\right) =pE\left[ U_{1}-U_{0}\left\vert V\leq p\right. \right]
=\int_{0}^{p}E\left[ U_{1}-U_{0}\left\vert V=v\right. \right] dv.
\end{equation*}%
Note that since%
\begin{equation*}
q^{\left( 1\right) }\left( p\right) \equiv \frac{dq\left( p\right) }{dp}=E%
\left[ U_{1}-U_{0}\left\vert V=p\right. \right] ,
\end{equation*}%
the MTE is equal to the derivative of the regression function (\ref{aLIV})
with respect to $P$. As a consequence, it would be sufficient to estimate $%
\beta _{d}$ for $d=0,1$ and functions $q$ and $q^{\left( 1\right) }$. Given
the estimated propensity score, we can likewise use the pairwise difference
principle to obtain%
\begin{eqnarray*}
\left( 
\begin{array}{c}
\check{\beta}_{0} \\ 
\check{\delta}%
\end{array}%
\right) &=&\left[ \sum_{i=1}^{n-1}\sum_{j=i+1}^{n}k_{2}\left( \frac{\hat{P}%
_{i}-\hat{P}_{j}}{h_{2}}\right) \left( 
\begin{array}{c}
X_{i}-X_{j} \\ 
\hat{P}_{i}X_{i}-\hat{P}_{j}X_{j}%
\end{array}%
\right) \left( 
\begin{array}{c}
X_{i}-X_{j} \\ 
\hat{P}_{i}X_{i}-\hat{P}_{j}X_{j}%
\end{array}%
\right) ^{\prime }\right] ^{-1} \\
&&\cdot \left[ \sum_{i=1}^{n-1}\sum_{j=i+1}^{n}k_{2}\left( \frac{\hat{P}_{i}-%
\hat{P}_{j}}{h_{2}}\right) \left( 
\begin{array}{c}
X_{i}-X_{j} \\ 
\hat{P}_{i}X_{i}-\hat{P}_{j}X_{j}%
\end{array}%
\right) \left( Y_{i}-Y_{j}\right) \right] ,
\end{eqnarray*}%
where $\check{\delta}$ is an estimator for $\beta _{1}-\beta _{0}$ according
to (\ref{aLIV}). Given $\check{\beta}_{0}$ and $\check{\delta}$, we apply
the local linear method as well, yielding%
\begin{eqnarray*}
\left( 
\begin{array}{c}
\check{r}\left( p\right) \\ 
\check{q}^{\left( 1\right) }\left( p\right)%
\end{array}%
\right) &=&\left[ \sum_{i=1}^{n}k_{3}\left( \frac{\hat{P}_{i}-p}{h_{3}}%
\right) \left( 
\begin{array}{c}
1 \\ 
\hat{P}_{i}-p%
\end{array}%
\right) \left( 
\begin{array}{c}
1 \\ 
\hat{P}_{i}-p%
\end{array}%
\right) ^{\prime }\right] ^{-1} \\
&&\cdot \left[ \sum_{i=1}^{n}k_{3}\left( \frac{\hat{P}_{i}-p}{h_{3}}\right)
\left( 
\begin{array}{c}
1 \\ 
\hat{P}_{i}-p%
\end{array}%
\right) \left( Y_{i}-X_{i}^{\prime }\check{\beta}_{0}-\hat{P}%
_{i}X_{i}^{\prime }\check{\delta}\right) \right] ,
\end{eqnarray*}%
where $\check{r}\left( p\right) $ is an estimator for $\alpha _{0}+q\left(
p\right) $. Then, we construct the instrument-free LIV estimators for the
MTE and other treatment parameters as 
\begin{eqnarray*}
\check{\Delta}^{\text{MTE}}\left( x,v\right) &=&x^{\prime }\check{\delta}+%
\check{q}^{\left( 1\right) }\left( v\right) , \\
\check{\Delta}^{\text{ATE}}\left( x\right) &=&x^{\prime }\check{\delta}+%
\check{r}\left( 1\right) -\check{r}\left( 0\right) , \\
\check{\Delta}^{\text{TT}}\left( x\right) &=&x^{\prime }\check{\delta}+\frac{%
\check{r}\left( \hat{\pi}\left( x\right) \right) -\check{r}\left( 0\right) }{%
\hat{\pi}\left( x\right) }, \\
\check{\Delta}^{\text{TUT}}\left( x\right) &=&x^{\prime }\check{\delta}+%
\frac{\check{r}\left( 1\right) -\check{r}\left( \hat{\pi}\left( x\right)
\right) }{1-\hat{\pi}\left( x\right) }, \\
\check{\Delta}^{\text{LATE}}\left( x,v_{1},v_{2}\right) &=&x^{\prime }\check{%
\delta}+\frac{\check{r}\left( v_{2}\right) -\check{r}\left( v_{1}\right) }{%
v_{2}-v_{1}}.
\end{eqnarray*}%
The asymptotic normality of these adapted LIV estimators can be established
in a way analogous to Theorem \ref{theorem:AN1}. Furthermore, if we accept a
parametric specification on $E\left[ U_{1}-U_{0}\left\vert V=v\right. \right]
$ and thus on $q\left( p\right) $, the adapted LIV approach can be
implemented through a parametric least squares regression.

In summary, the identification based on functional forms can accommodate
most of the frequently-used estimation procedures in IV-based MTE models.
However, since it is infeasible to manipulate the nonlinear variation, our
identification result does not allow for the counterfactual analysis via
estimating the policy-relevant treatment effect (PRTE) which is also an
important application of IV-based MTE models \citep{heckman2005structural}.

\section{Empirical application}

\label{sec:application}In this section, we revisit the heterogeneous
long-term effects of the Head Start program on educational attainment and
labor market outcomes by using the proposed IV-free MTE method. Head Start,
which began in 1965, is one of the largest early child care programs in the
United States. The program is targeted at children from low-income families
and can provide such children with preschool, health, and nutritional
services. Currently, Head Start serves more than a million children, at an
annual cost of 10 billion dollars. As a federally funded large-scale
program, Head Start has encountered concerns about its effectiveness and
thus spawned numerous studies to evaluate its educational and economic
effects on the participants. Early studies focused mainly on short-term
benefits and found that participation in Head Start is associated with
improved test scores and reduced grade repetition at the beginning of
primary schooling. However, such benefits seem to fade out during the upper
primary grades \citep{currie2001early}. \cite{garces2002longer} provided the
first empirical evidence for the longer-term effects of Head Start on high
school completion, college attendance, earnings, and crime. Since then, the
literature has shifted its focus to the medium- and long-term or
intergenerational \citep[e.g.,][]{barr2022breaking} gains of Head Start
enrollment.

However, despite the enormous policy interest, evidence for the long-term
effectiveness of Head Start is not unified, as summarized in Figure 1 in 
\cite{de2020head}. The lack of consistency between these results may be due
to differences in the population or problems related to the empirical
approach \citep{elango2016early}. For example, the LATE obtained by the
family fixed-effects approach 
\citep[e.g.,][]{garces2002longer, deming2009early,
bauer2016long} relies on families that differ from other Head Start families
in size and in other observable dimensions. Moreover, the sibling comparison
design underlying that approach is limited by endogeneity concerns. To
reconcile the divergent evidence, \cite{de2020head} evaluated the
heterogeneous long-term effects of Head Start by using a distributional
treatment effect approach that relies on two weak stochastic dominance
assumptions, instead of restrictive IV assumptions. The authors found
substantial heterogeneity in the returns to Head Start. Specifically, they
found that the program has positive and statistically significant effects on
education and wage income for the lower end of the distribution of
participants.

To produce a complete picture, we assess the causal heterogeneity of Head
Start from another perspective; namely, we examine the effects across
different levels of unobserved resistance to participation in the program,
rather than across the distribution of long-term outcomes. By relating the
treatment effects to participation decision, the MTE is informative about
the nature of selection into treatment and allows the computation of various
causal parameters, such as the ATE, TT, and TUT. Another feature of the MTE
is that its description of the effect heterogeneity is irrelevant to the
specific outcome variable. Instead, the MTE curve depicts the treatment
effects on the unobserved determinants of the treatment. This is another
advantage of the MTE in the case of multiple outcomes of interest, as in
this application, where the interpretation of the heterogeneity is kept
consistent across different outcomes. Finally, in contrast to the
distributional treatment effects partially identified by \cite{de2020head},
our MTE method can achieve point identification and estimation.

We use the data provided by \cite{de2020head}, which are from Round 16 (1994
survey year) of the National Longitudinal Study of Youth 1979 (NLSY79). The
sample is restricted to the 1960--64 cohorts, because the first cohort
eligible for Head Start was born in 1960. In addition, the sample excludes
individuals who participated in any preschool programs other than Head
Start, implying that we estimate the returns to Head Start relative to
informal care. The treatment variable is whether the respondents attended
the Head Start program as a child, and the outcome variables are the
respondents' highest years of education and logarithmic yearly wage incomes
in their early 30s (they were between 30 and 34 years old in 1994). The
covariates are age, gender, race, parental education, and family income in
1978. We refer the reader to the original paper for additional details on
the data, sample, and variables. For our analysis, since we apply
nonparametric estimation in the first step, we recode parental education
into two categories to reduce the number of the data cells or subsamples
split by different values of discrete covariates. Specifically, we redefine
parental education as a binary variable equal to one if at least one parent
went to college or zero if both parents are high school graduates or lower.

We first verify the credibility of Assumptions S and NL for the Head Start
data. Table \ref{table:support} in Appendix \ref{appendix:implement}
lists the support of the nonparametrically
estimated propensity score for all data cells that are split by different
values of discrete covariates. We find that the 34th data cell has a full
support on the unit interval. Therefore, Assumption S must hold if we choose
the corresponding values (32 years old, female, black race, parental
education being college or higher) as the benchmark. We then verify
Assumption NL for the 34th data cell. Note that only one continuous
covariate exists in the data, that is, family income in 1978. Hence, we
invoke Assumption NL1, which requires the propensity score function to have
at least one stationary point. The left panel of Figure \ref%
{Fig:AssumptionNL1} plots the estimated propensity score as a univariate
function of the continuous covariate for the 34th data cell. We find that
the propensity score is highly nonlinear and nonmonotonic in family income
in 1978, possibly owing to the parents' various self-selection on
participation in the program. So Assumption NL1 is fulfilled.

According to our identification result, it is enough to find one benchmark
value of $X^{D}$ for which the support of the propensity score satisfies
Assumption S and the propensity score function satisfies Assumption NL, as
fulfilled above. In another application, however, the chances are
that the propensity score function corresponding to the fully supported data
cell violates Assumption NL or even there is no fully supported data cell.
If this is the case, we need to try values of $X^{D}$ for which the
verification of Assumption S may be not so trivial. As an example, we
consider an additional data cell and retest Assumptions S and NL. We choose
the largest data cell (i.e., the 29th data cell with 175 observations) and
reset the benchmark to be 32 years old, male, white race, and high school or
lower parental education. Now closer examination of Assumption S is
necessary, as the support for this data cell is limited to $\left[ 0,0.194%
\right] $, which is totally separate from the supports for some other data
cells. To this end, we need to find a distinct value of each discrete
variable such that the propensity score has overlapping support with $\left[
0,0.194\right] $. For example, if the value of age is altered from 32 to 30
(or 31, 33, 34) while the other discrete variables remain unaltered, we will
go to the 5th (or 17th, 41st, 53rd) data cell with support $\left[ 0,0.177%
\right] $ (or $\left[ 0,0.242\right] ,\left[ 0,0.154\right] ,\left[ 0,0.204%
\right] $), which is overlapped with $\left[ 0,0.194\right] $ as required.
Altering the values of gender, race, and parental education leads to the
35th, 25th (or 27th), and 30th data cells with supports $\left[ 0,0.526%
\right] $, $\left[ 0,0.771\right] $ (or $\left[ 0,0.623\right] $), and $%
\left[ 0,0.061\right] $, respectively, which all meet the overlapping
condition as well. Therefore, Assumption S holds for the largest data cell.
The validation of Assumption NL1 for this data cell can be demonstrated as
well by the existence of stationary points of the estimated propensity score
function, as illustrated in the right panel of Figure \ref{Fig:AssumptionNL1}%
.

We next turn to the estimation. In the first step, we nonparametrically
regress the treatment variable (Head Start) on all of the covariates to
generate propensity scores for the sample, by employing the kernel
estimation method in (\ref{PSfunction}) with the Gaussian kernel and the
rule-of-thumb bandwidth as given in Table \ref{table:implement}. Figure \ref%
{Fig:CommomSupport} plots the frequency distribution of the estimated
propensity score by treatment status. The figure shows that the propensity
score in our sample follows a bimodal distribution, with the main peak being
at approximately 0.5 for the participants and approximately 0.1 for the
nonparticipants. To reduce the potential impact of outliers, we trim the
observations of the 1\% smallest and 1\% largest propensity scores in the
following steps. This trimming leads to a common support ranging from 0 to
0.6, as indicated by the two dashed vertical lines. Given the estimated
propensity score, we then estimate in sequence the linear coefficients $%
\beta _{1}$ and $\beta _{0}$, MTE, and summary treatment effect measures
including ATE, TT, and TUT. Since the separate estimation procedure exploits
more identifying information behind the data than the LIV procedure in
general \citep{brinch2017beyond}, we focus on the results of the separate
estimation.

Figure \ref{Fig:MTEnormal} shows the estimated MTE under Heckman's normal
specification in which $E\left[ \left. U_{d}\right\vert V=v\right] =\rho
_{d0}+\rho _{dV}\Phi ^{-1}\left( v\right) $ for $d=0,1$, where $\Phi \left(
\cdot \right) $ is the CDF of standard normal distribution, which can be
derived by assuming that $\left( U_{d},U\right) $ follows a bivariate normal
distribution as in Example \ref{example:depend}. The MTE curves, evaluated
at the mean values of $X$, relate the unobserved component $U_{1}-U_{0}$ of
the treatment effect to the unobserved component $V$ of the treatment
choice. A high value of $V$ implies a low probability of treatment; thus, we
interpret $V$ as resistance to participation in Head Start. The MTE curve
for wage income plotted in panel B decreases with resistance, revealing a
pattern of selection on gains, as expected. In other words, based on the
unobserved characteristics, the children who were most likely to enroll in
Head Start benefitted the most from the program in terms of their labor
market outcome. However, when the outcome is educational attainment in panel
A, the positive slope of the MTE curve points to a pattern of reverse
selection on gains. In consequence, the TUT exceeds the ATE, which in turn
exceeds the TT. The same pattern was observed by \cite%
{cornelissen2018benefits} when estimating school-readiness return to a
preschool program in Germany. This phenomenon may be attributed partly to
the fact that parents have their own objectives in deciding childcare
arrangements. Nevertheless, the disagreement between selection patterns for
the two outcomes raises concern about the possible functional form
misspecification of the normal MTE, which is strictly restricted to be
monotonic in the resistance to treatment. Therefore, we consider a
nonparametric specification for $E\left[ \left. U_{d}\right\vert V=v\right] $
and implement a semiparametric separate estimation procedure.

Figure \ref{Fig:MTEsemi} plots the semiparametric MTE curves for education
and labor market outcomes in panels A and B, respectively. Under the
flexible specification, the MTE curves are no longer monotonic, and the
clear pattern of selection disappears. In the case of education outcome, the
curve is initially flat, then becomes an inverted U shape, with a
statistically significant positive effect appearing in the region of the
peak, corresponding to the children with resistance to treatment ranging
from 0.32 to 0.42. The complex feature of this curve is hardly captured by
any parsimonious parametric function. Similar observations are seen in the
case of wage income, in which the MTE curve is nonmonotonic, with a complex
shape, and significantly greater than zero for less than 10\% of the
children who were most likely to attend childcare early. A comparison of the
summary treatment effect measures indicates weak selection on gains for both
outcomes. Table \ref{table1} reports the semiparametric estimates for the
effects of the covariates on potential outcomes and their difference based
on (\ref{PDLS}). Columns 3 and 6 show that girls gain significantly more
returns to Head Start attendance than boys. However, other than gender, no
substantial observable heterogeneity exists in the treatment effects of the
program, though parental education and family income in 1978 have a
significantly positive effect on the respondents' potential education and
potential labor market outcomes in both the treated and untreated states.

\section{Conclusion}

\label{sec:conclusion}We propose a novel method for defining, identifying,
and estimating the MTE in the absence of IVs. Our MTE model allows all the
covariates to be correlated to the structural treatment error. In this
model, we define the MTE based on a reduced-form treatment error that (i) is
uniformly distributed on the unit interval, (ii) is statistically
independent of covariates, and (iii) has several economic meanings. The
independence property facilitates the identification of our defined MTE. We
provide sufficient conditions under which the proposed IV-free MTE can be
point identified based on functional forms. The conditions are standard in a
certain sense. The conditional mean independence assumption is equivalent to
the separability assumption commonly imposed in the MTE literature, and is
implied by and much weaker than the full independence assumption. The
linearity and nonlinearity assumptions are the foundation of identification
based on functional forms, and can make sense in most empirical studies. We
prove the identification by using a construction method. Our identification
strategy allows the adaptation of most of the existing estimation procedures
for the IV-based MTE, such as separate estimation, LIV estimation,
parametric estimation, and semiparametric estimation. For the empirical
application, we evaluate the MTE of the Head Start program on long-term
education and labor market outcomes, in which an IV for Head Start
participation is difficult to acquire. We find significant positive effects
for individuals with\ medium-level or low resistance to treatment, and
substantial heterogeneity exists.

\section*{Acknowledgements}

\addcontentsline{toc}{section}{Acknowledgements}

We wish to thank the Editor and two anonymous reviewers for their valuable
comments and suggestions. Financial supports for this research are provided
through the National Social Science Foundation of China (Grant 24FTJB010) and the National Natural Science Foundation of China (Grants 72173142, 72034006, 72342034, 72173083).

\bibliographystyle{Chicago}
\bibliography{0RefMTE}

@article{ahn1993semiparametric,
  title={Semiparametric estimation of censored selection models with a nonparametric selection mechanism},
  author={Ahn, Hyungtaik and Powell, James L},
  journal={Journal of Econometrics},
  volume={58},
  number={1-2},
  pages={3--29},
  year={1993},
  publisher={Elsevier}
}

@article{amir2017microeconomic,
  title = {On the microeconomic foundations of linear demand for differentiated products},
  author = {Amir, Rabah and Erickson, Philip and Jin, Jim},
  journal = {Journal of Economic Theory},
  volume = {169},
  pages = {641-665},
  year = {2017}
}

@article{aryal2022signaling,
  title={Signaling and employer learning with instruments},
  author={Aryal, Gaurab and Bhuller, Manudeep and Lange, Fabian},
  journal={American Economic Review},
  volume={112},
  number={5},
  pages={1669--1702},
  year={2022}
}

@article{barr2022breaking,
  title={Breaking the cycle? intergenerational effects of an antipoverty program in early childhood},
  author={Barr, Andrew and Gibbs, Chloe R},
  journal={Journal of Political Economy},
  volume={130},
  number={12},
  pages={3253--3285},
  year={2022},
  publisher={The University of Chicago Press Chicago, IL}
}

@book{bauer2016long,
  title={The Long-Term Impact of the Head Start Program},
  author={Bauer, Lauren and Schanzenbach, Diane Whitmore},
  year={2016},
  publisher={Washington DC: Hamilton Project}
}

@article{brinch2017beyond,
  title={Beyond {LATE} with a discrete instrument},
  author={Brinch, Christian N and Mogstad, Magne and Wiswall, Matthew},
  journal={Journal of Political Economy},
  volume={125},
  number={4},
  pages={985--1039},
  year={2017},
  publisher={University of Chicago Press Chicago, IL}
}

@article{carl2025tsci,
  title={{TSCI}: Two Stage Curvature Identification for causal inference with invalid instruments in {R}},
  author={Carl, David and Emmenegger, Corinne and B{\"u}hlmann, Peter and Guo, Zijian},
  journal={Journal of Statistical Software},
  volume={114},
  number={7},
  pages={1--21},
  year={2025},
  publisher={Foundation for Open Access Statistics}
}

@article{carneiro2011estimating,
  title={Estimating marginal returns to education},
  author={Carneiro, Pedro and Heckman, James J and Vytlacil, Edward J},
  journal={American Economic Review},
  volume={101},
  number={6},
  pages={2754--2781},
  year={2011},
  publisher={American Economic Association}
}

@article{chamberlain1986asymptotic,
  title={Asymptotic efficiency in semi-parametric models with censoring},
  author={Chamberlain, Gary},
  journal={Journal of Econometrics},
  volume={32},
  number={2},
  pages={189--218},
  year={1986},
  publisher={Elsevier}
}

@article{chen2010semiparametric,
  title={Semiparametric and nonparametric estimation of sample selection models under symmetry},
  author={Chen, Songnian and Zhou, Yahong},
  journal={Journal of Econometrics},
  volume={157},
  number={1},
  pages={143--150},
  year={2010},
  publisher={Elsevier}
}

@article{chernozhukov2005iv,
  author={Chernozhukov, Victor and Hansen, Christian},
  title={An {IV} model of quantile treatment effects},
  journal={Econometrica},
  volume={73},
  number={1},
  pages={245-261},
  year={2005},
}

@article{conley2012plausibly,
  title={Plausibly exogenous},
  author={Conley, Timothy G and Hansen, Christian B and Rossi, Peter E},
  journal={Review of Economics and Statistics},
  volume={94},
  number={1},
  pages={260--272},
  year={2012},
  publisher={The MIT Press}
}

@article{cornelissen2016late,
  title={From {LATE} to {MTE}: Alternative methods for the evaluation of policy interventions},
  author={Cornelissen, Thomas and Dustmann, Christian and Raute, Anna and Sch{\"o}nberg, Uta},
  journal={Labour Economics},
  volume={41},
  pages={47--60},
  year={2016},
  publisher={Elsevier}
}

@article{cornelissen2018benefits,
  title={Who benefits from universal child care? Estimating marginal returns to early child care attendance},
  author={Cornelissen, Thomas and Dustmann, Christian and Raute, Anna and Sch{\"o}nberg, Uta},
  journal={Journal of Political Economy},
  volume={126},
  number={6},
  pages={2356--2409},
  year={2018},
  publisher={University of Chicago Press Chicago, IL}
}

@article{currie2001early,
  title={Early childhood education programs},
  author={Currie, Janet},
  journal={Journal of Economic Perspectives},
  volume={15},
  number={2},
  pages={213--238},
  year={2001},
  publisher={American Economic Association}
}

@article{d2013another,
  title={Another look at the identification at infinity of sample selection models},
  author={D'Haultf{\oe}uille, Xavier and Maurel, Arnaud},
  journal={Econometric Theory},
  volume={29},
  number={1},
  pages={213--224},
  year={2013},
  publisher={Cambridge University Press}
}

@article{d2021testing,
  title={Testing and relaxing the exclusion restriction in the control function approach},
  author={D'Haultf{\oe}uille, Xavier and Hoderlein, Stefan and Sasaki, Yuya},
  journal={Journal of Econometrics},
  volume={240},
  number={2},
  pages={105075},
  year={2024},
  publisher={Elsevier}
}

@article{de2020head,
  title={Head start and the distribution of long-term education and labor market outcomes},
  author={De Haan, Monique and Leuven, Edwin},
  journal={Journal of Labor Economics},
  volume={38},
  number={3},
  pages={727--765},
  year={2020},
  publisher={The University of Chicago Press Chicago, IL}
}

@article{deming2009early,
  title={Early childhood intervention and life-cycle skill development: Evidence from Head Start},
  author={Deming, David},
  journal={American Economic Journal: Applied Economics},
  volume={1},
  number={3},
  pages={111--134},
  year={2009},
  publisher={American Economic Association}
}

@article{dong2010endogenous,
  title={Endogenous regressor binary choice models without instruments, with an application to migration},
  author={Dong, Yingying},
  journal={Economics Letters},
  volume={107},
  number={1},
  pages={33--35},
  year={2010},
  publisher={Elsevier}
}

@article{eisenhauer2015generalized,
  title={The generalized Roy model and the cost-benefit analysis of social programs},
  author={Eisenhauer, Philipp and Heckman, James J and Vytlacil, Edward},
  journal={Journal of Political Economy},
  volume={123},
  number={2},
  pages={413--443},
  year={2015},
  publisher={University of Chicago Press Chicago, IL}
}

@incollection{elango2016early,
  title={Early childhood education},
  author={Elango, Sneha and Garc{\'\i}a, Jorge Luis and Heckman, James J and Hojman, Andr{\'e}s},
  booktitle={Economics of Means-Tested Transfer Programs in the United States},
  editor={Moffitt, Robert A},
  volume={2},
  year={2016},
  publisher={Chicago: University of Chicago Press}
}

@article{escanciano2016identification,
  title={Identification and estimation of semiparametric two-step models},
  author={Escanciano, Juan Carlos and Jacho-Ch{\'a}vez, David and Lewbel, Arthur},
  journal={Quantitative Economics},
  volume={7},
  number={2},
  pages={561--589},
  year={2016},
  publisher={Wiley Online Library}
}

@article{fan1996consistent,
  title={Consistent model specification tests: omitted variables and semiparametric functional forms},
  author={Fan, Yanqin and Li, Qi},
  journal={Econometrica},
  volume={64},
  number={4},
  pages={865--890},
  year={1996},
  publisher={Blackwell Publishing Inc.}
}

@book{fisher1966identification,
  title={The Identification Problem in Econometrics},
  author={Fisher, Franklin M},
  year={1966},
  publisher={McGraw-Hill}
}

@article{garlick2022quasi,
  title={Quasi-experimental evaluation of alternative sample selection corrections},
  author={Garlick, Robert and Hyman, Joshua},
  journal={Journal of Business \& Economic Statistics},
  volume={40},
  number={3},
  pages={950--964},
  year={2022},
  publisher={Taylor \& Francis}
}

@article{garces2002longer,
  title={Longer-term effects of Head Start},
  author={Garces, Eliana and Thomas, Duncan and Currie, Janet},
  journal={American economic review},
  volume={92},
  number={4},
  pages={999--1012},
  year={2002},
  publisher={American Economic Association}
}

@article{heckman1979sample,
  title={Sample selection bias as a specification error},
  author={Heckman, James J},
  journal={Econometrica},
  volume={47},
  number={1},
  pages={153--61},
  year={1979},
  publisher={Econometric Society}
}

@article{heckman1999local,
  title={Local instrumental variables and latent variable models for identifying and bounding treatment effects},
  author={Heckman, James J and Vytlacil, Edward},
  journal={Proceedings of the national Academy of Sciences},
  volume={96},
  number={8},
  pages={4730--4734},
  year={1999},
  publisher={National Acad Sciences}
}

@incollection{heckman2001local,
  title={Local instrumental variables},
  author={Heckman, James J and Vytlacil, Edward},
  booktitle={Nonlinear Statistical Modeling: Proceedings of the Thirteenth International Symposium in Economic Theory and Econometrics: Essays in Honor of Takeshi Amemiya},
  editor={Hsiao, C and Morimune, K and Powell, J L},
  chapter={1},
  year={2001},
  publisher={Cambridge, U.K.: Cambridge University Press},
  pages={1--46}
}

@article{heckman2005structural,
  title={Structural equations, treatment effects, and econometric policy evaluation},
  author={Heckman, James J and Vytlacil, Edward},
  journal={Econometrica},
  volume={73},
  number={3},
  pages={669--738},
  year={2005},
  publisher={Wiley Online Library}
}

@article{heckman2018returns,
  title={Returns to education: The causal effects of education on earnings, health, and smoking},
  author={Heckman, James J and Humphries, John Eric and Veramendi, Gregory},
  journal={Journal of Political Economy},
  volume={126},
  number={S1},
  pages={S197--S246},
  year={2018},
  publisher={University of Chicago Press Chicago, IL}
}

@article{honore2020selection,
  title={Selection without exclusion},
  author={Honor{\'e}, Bo E and Hu, Luojia},
  journal={Econometrica},
  volume={88},
  number={3},
  pages={1007--1029},
  year={2020},
  publisher={Wiley Online Library}
}

@article{honore2022sample,
  title={Sample selection models without exclusion restrictions: Parameter heterogeneity and partial identification},
  author={Honor{\'e}, Bo E and Hu, Luojia},
  journal={Journal of Econometrics},
  volume={243},
  number={1},
  pages={105360},
  year={2024},
  publisher={Elsevier}
}

@article{ichimura1993semiparametric,
  title={Semiparametric least squares ({SLS}) and weighted {SLS} estimation of single-index models},
  author={Ichimura, Hidehiko},
  journal={Journal of Econometrics},
  volume={58},
  pages={71--120},
  year={1993},
  publisher={JSTOR}
}

@article{jones2015economics,
  title={The economics of exclusion restrictions in {IV} models},
  author={Jones, Damon},
  journal={NBER Working Paper},
  number={No.21391},
  year={2015},
  note={Available at \emph{https://doi.org/10.3386/w21391}

}
}

@article{kang2016instrumental,
  title={Instrumental variables estimation with some invalid instruments and its application to {Mendelian} randomization},
  author={Kang, Hyunseung and Zhang, Anru and Cai, T Tony and Small, Dylan S},
  journal={Journal of the American Statistical Association},
  volume={111},
  number={513},
  pages={132--144},
  year={2016},
  publisher={Taylor \& Francis}
}

@article{kedagni2020generalized,
  title={Generalized instrumental inequalities: testing the instrumental variable independence assumption},
  author={K{\'e}dagni, D{\'e}sir{\'e} and Mourifi{\'e}, Isma{\"e}l},
  journal={Biometrika},
  volume={107},
  number={3},
  pages={661--675},
  year={2020},
  publisher={Oxford University Press}
}

@article{khan2010irregular,
  title={Irregular identification, support conditions, and inverse weight estimation},
  author={Khan, Shakeeb and Tamer, Elie},
  journal={Econometrica},
  volume={78},
  number={6},
  pages={2021--2042},
  year={2010},
  publisher={Wiley Online Library}
}

@article{kirkeboen2016field,
  title={Field of study, earnings, and self-selection},
  author={Kirkeboen, Lars J and Leuven, Edwin and Mogstad, Magne},
  journal={The Quarterly Journal of Economics},
  volume={131},
  number={3},
  pages={1057--1111},
  year={2016},
  publisher={Oxford University Press}
}

@article{klein1993efficient,
  title={An efficient semiparametric estimator for binary response models},
  author={Klein, Roger W and Spady, Richard H},
  journal={Econometrica},
  volume={61},
  pages={387--421},
  year={1993},
  publisher={JSTOR}
}

@article{klein2010estimating,
  title={Estimating a class of triangular simultaneous equations models without exclusion restrictions},
  author={Klein, Roger and Vella, Francis},
  journal={Journal of Econometrics},
  volume={154},
  number={2},
  pages={154--164},
  year={2010},
  publisher={Elsevier}
}

@article{kline2016evaluating,
  title={Evaluating public programs with close substitutes: The case of {H}ead {S}tart},
  author={Kline, Patrick and Walters, Christopher R},
  journal={The Quarterly Journal of Economics},
  volume={131},
  number={4},
  pages={1795--1848},
  year={2016},
  publisher={MIT Press}
}

@article{lee2023nonparametric,
  title={Nonparametric identification and estimation of the extended Roy model},
  author={Lee, Ji Hyung and Park, Byoung G},
  journal={Journal of Econometrics},
  volume={235},
  number={2},
  pages={1087--1113},
  year={2023},
  publisher={Elsevier}
}

@article{lewbel1997constructing,
  title={Constructing instruments for regressions with measurement error when no additional data are available, with an ppplication to patents and R\&D},
  author={Lewbel, Arthur},
  journal={Econometrica},
  volume={65},
  number={5},
  pages={1201--1214},
  year={1997},
  publisher={Econometric Society}
}

@article{lewbel2000semiparametric,
  title={Semiparametric qualitative response model estimation with unknown heteroscedasticity or instrumental variables},
  author={Lewbel, Arthur},
  journal={Journal of Econometrics},
  volume={97},
  number={1},
  pages={145--177},
  year={2000},
  publisher={Elsevier}
}

@article{lewbel2007endogenous,
  title={Endogenous selection or treatment model estimation},
  author={Lewbel, Arthur},
  journal={Journal of Econometrics},
  volume={141},
  number={2},
  pages={777--806},
  year={2007},
  publisher={Elsevier}
}

@article{lewbel2012using,
  title={Using heteroscedasticity to identify and estimate mismeasured and endogenous regressor models},
  author={Lewbel, Arthur},
  journal={Journal of Business \& Economic Statistics},
  volume={30},
  number={1},
  pages={67--80},
  year={2012},
  publisher={Taylor \& Francis}
}

@article{lewbel2018identification,
  title={Identification and estimation using heteroscedasticity without instruments: The binary endogenous regressor case},
  author={Lewbel, Arthur},
  journal={Economics Letters},
  volume={165},
  pages={10--12},
  year={2018},
  publisher={Elsevier}
}

@article{lewbel2019identification,
  title={The identification zoo: Meanings of identification in econometrics},
  author={Lewbel, Arthur},
  journal={Journal of Economic Literature},
  volume={57},
  number={4},
  pages={835--903},
  year={2019}
}

@book{li2007nonparametric,
  title={Nonparametric Econometrics: Theory and Practice},
  author={Li, Qi and Racine, Jeffrey Scott},
  year={2007},
  publisher={Princeton University Press}
}

@article{maistre2019nonparametric,
  title={Nonparametric model checks of single-index assumptions},
  author={Maistre, Samuel and Patilea, Valentin},
  journal={Statistica Sinica},
  volume={29},
  number={1},
  pages={113--138},
  year={2019},
  publisher={JSTOR}
}

@article{masten2021salvaging,
  title={Salvaging falsified instrumental variable models},
  author={Masten, Matthew A and Poirier, Alexandre},
  journal={Econometrica},
  volume={89},
  number={3},
  pages={1449--1469},
  year={2021},
  publisher={Wiley Online Library}
}

@article{mogstad2018using,
  title={Using instrumental variables for inference about policy relevant treatment parameters},
  author={Mogstad, Magne and Santos, Andres and Torgovitsky, Alexander},
  journal={Econometrica},
  volume={86},
  number={5},
  pages={1589--1619},
  year={2018},
  publisher={Wiley Online Library}
}

@article{mogstad2021causal,
  title={The causal interpretation of two-stage least squares with multiple instrumental variables},
  author={Mogstad, Magne and Torgovitsky, Alexander and Walters, Christopher R},
  journal={American Economic Review},
  volume={111},
  number={11},
  pages={3663--98},
  year={2021}
}

@article{mountjoy2022community,
  title={Community colleges and upward mobility},
  author={Mountjoy, Jack},
  journal={American Economic Review},
  volume={112},
  number={8},
  pages={2580--2630},
  year={2022}
}

@article{mourifie2020sharp,
  title={Sharp bounds and testability of a {R}oy model of {STEM} major choices},
  author={Mourifi{\'e}, Isma{\"e}l and Henry, Marc and M{\'e}ango, Romuald},
  journal={Journal of Political Economy},
  volume={128},
  number={8},
  pages={3220--3283},
  year={2020},
  publisher={The University of Chicago Press Chicago, IL}
}

@book{pagan1999nonparametric,
  title={Nonparametric Econometrics},
  author={Pagan, Adrian and Ullah, Aman},
  year={1999},
  publisher={Cambridge University Press}
}

@article{pan2022semiparametric,
  title={Semiparametric estimation of a censored regression model subject to nonparametric sample selection},
  author={Pan, Zhewen and Zhou, Xianbo and Zhou, Yahong},
  journal={Journal of Business \& Economic Statistics},
  volume={40},
  number={1},
  pages={141--151},
  year={2022},
  publisher={Taylor \& Francis}
}

@article{pan2023penalized,
  title={$\ell_1$-penalized pairwise difference estimation for a high-dimensional censored regression model},
  author={Pan, Zhewen and Xie, Jianhui},
  journal={Journal of Business \& Economic Statistics},
  volume={41},
  number={2},
  pages={283--297},
  year={2023},
  publisher={Taylor \& Francis}
}

@article{powell1989semiparametric,
  title={Semiparametric estimation of index coefficients},
  author={Powell, James L. and Stock, James H. and Stoker, Thomas M.},
  journal={Econometrica},
  volume={57},
  number={6},
  pages={1403-1430},
  year={1989}
}

@article{racine2004nonparametric,
  title={Nonparametric estimation of regression functions with both categorical and continuous data},
  author={Racine, Jeff and Li, Qi},
  journal={Journal of Econometrics},
  volume={119},
  number={1},
  pages={99--130},
  year={2004},
  publisher={Elsevier}
}

@article{roy1951some,
  title={Some thoughts on the distribution of earnings},
  author={Roy, Andrew Donald},
  journal={Oxford Economic Papers},
  volume={3},
  number={2},
  pages={135--146},
  year={1951},
  publisher={Oxford University Press}
}

@article{rubin1973matching,
  title={Matching to remove bias in observational studies},
  author={Rubin, Donald B},
  journal={Biometrics},
  volume={29},
  number={1},
  pages={159--183},
  year={1973},
  publisher={JSTOR}
}

@article{rubin1973use,
  title={The use of matched sampling and regression adjustment to remove bias in observational studies},
  author={Rubin, Donald B},
  journal={Biometrics},
  volume={29},
  number={1},
  pages={184--203},
  year={1973},
  publisher={JSTOR}
}

@book{shao2003,
  title={Mathematical Statistics},
  author={Shao, Jun},
  year={2003},
  publisher={Springer-Verlag, New York}
}

@article{stute2005nonparametric,
  title={Nonparametric checks for single-index models},
  author={Stute, Winfried and Zhu, Li-Xing},
  journal={Annals of Statistics},
  volume={33},
  number={3},
  pages={1048--1083},
  year={2005}
}

@article{torgovitsky2015identification,
  title={Identification of nonseparable models using instruments with small support},
  author={Torgovitsky, Alexander},
  journal={Econometrica},
  volume={83},
  number={3},
  pages={1185--1197},
  year={2015},
  publisher={Wiley Online Library}
}

@article{van2007economic,
  title={An economic analysis of exclusion restrictions for instrumental variable estimation},
  author={van den Berg, Gerard J},
  journal={IZA Discussion Paper},
  number={No. 2585},
  year={2007},
  note={Available at SSRN: \emph{https://ssrn.com/abstract=964965} or \emph{https://doi.org/10.2139/ssrn.964965}

}
}

@article{van2018beyond,
  title={Beyond plausibly exogenous},
  author={Van Kippersluis, Hans and Rietveld, Cornelius A},
  journal={The Econometrics Journal},
  volume={21},
  number={3},
  pages={316--331},
  year={2018},
  publisher={Wiley Online Library}
}

@article{vytlacil2002independence,
  title={Independence, monotonicity, and latent index models: An equivalence result},
  author={Vytlacil, Edward},
  journal={Econometrica},
  volume={70},
  number={1},
  pages={331--341},
  year={2002}
}

@article{vytlacil2006note,
  title={A note on additive separability and latent index models of binary choice: Representation results},
  author={Vytlacil, Edward},
  journal={Oxford Bulletin of Economics and Statistics},
  volume={68},
  number={4},
  pages={515--518},
  year={2006},
  publisher={Wiley Online Library}
}

@article{westphal2022marginal,
  title={Marginal college wage premiums under selection into employment},
  author={Westphal, Matthias and Kamh{\"o}fer, Daniel A and Schmitz, Hendrik},
  journal={The Economic Journal},
  volume={132},
  number={646},
  pages={2231--2272},
  year={2022},
  publisher={Oxford University Press}
}

@article{windmeijer2019use,
  title={On the use of the lasso for instrumental variables estimation with some invalid instruments},
  author={Windmeijer, Frank and Farbmacher, Helmut and Davies, Neil and Davey Smith, George},
  journal={Journal of the American Statistical Association},
  volume={114},
  number={527},
  pages={1339--1350},
  year={2019},
  publisher={Taylor \& Francis}
}

@article{xia2009model,
  title={Model checking in regression via dimension reduction},
  author={Xia, Yingcun},
  journal={Biometrika},
  volume={96},
  number={1},
  pages={133--148},
  year={2009},
  publisher={Oxford University Press}
}

@article{zhou2019marginal,
  title={Marginal treatment effects from a propensity score perspective},
  author={Zhou, Xiang and Xie, Yu},
  journal={Journal of Political Economy},
  volume={127},
  number={6},
  pages={3070--3084},
  year={2019},
  publisher={The University of Chicago Press Chicago, IL}
}
\addcontentsline{toc}{section}{References}

\clearpage

\begin{figure}[tbh]
\caption{An informal test for Assumption NL1}
\label{Fig:AssumptionNL1}%
\begin{minipage}[t]{0.49\linewidth}
  \centerline{\includegraphics[width=7.626cm,height=5.55cm]{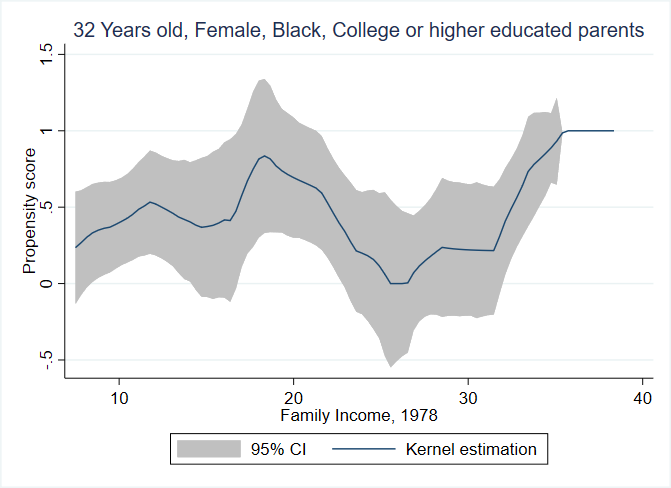}}
\end{minipage}
\begin{minipage}[t]{0.49\linewidth}
  \centerline{\includegraphics[width=7.626cm,height=5.55cm]{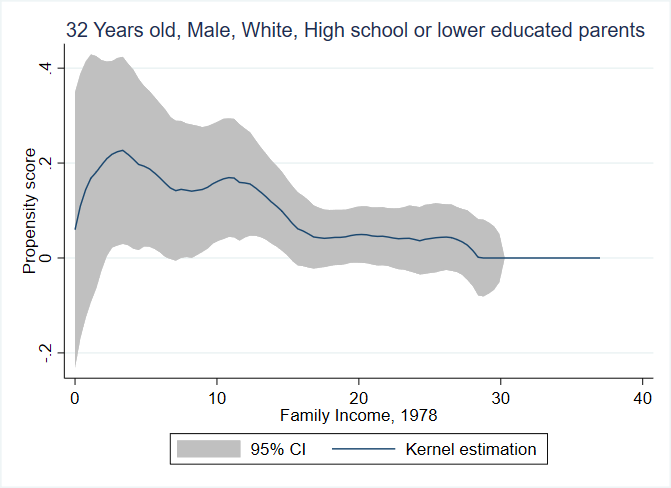}}
\end{minipage}
{\small Notes: The figures plot the nonparametrically estimated propensity
score as a function of the continuous covariate, using in the left panel the
data cell where the estimated propensity score has a full support (i.e., the
34th data cell in Table \ref{table:support}), and using in the right panel
the largest data cell that has 175 observations (i.e., the 29th data cell in
Table \ref{table:support}).}
\end{figure}

\begin{figure}[!htb]
\caption{Common support}
\label{Fig:CommomSupport}%
\centerline{\includegraphics[width=12.76cm,height=9.28cm]{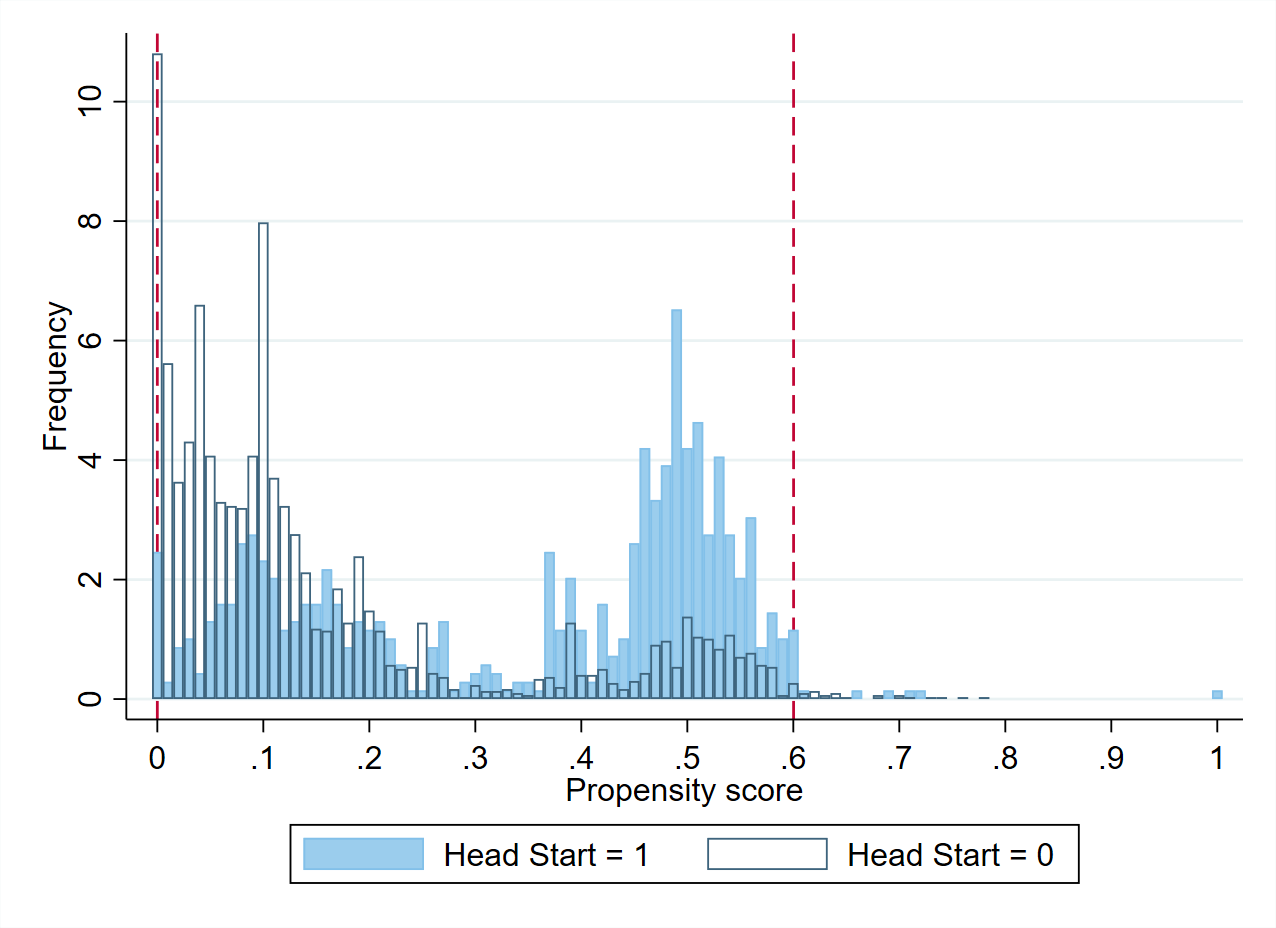}} {\small Notes: The figure plots the frequency distribution of the nonparametrically
estimated propensity score by treatment status. The dashed reference lines
indicate the lower limit (0) and upper limit (0.6) of the propensity score
with common support (based on 1\% trimming on both sides).}
\end{figure}

\begin{figure}[htb]
\caption{MTE curves under normal specification}
\label{Fig:MTEnormal}
\bigskip \setlength{\abovecaptionskip}{-0.2cm} 
\begin{minipage}[t]{0.48\linewidth}
  \centerline{\includegraphics[width=7.626cm,height=5.55cm]{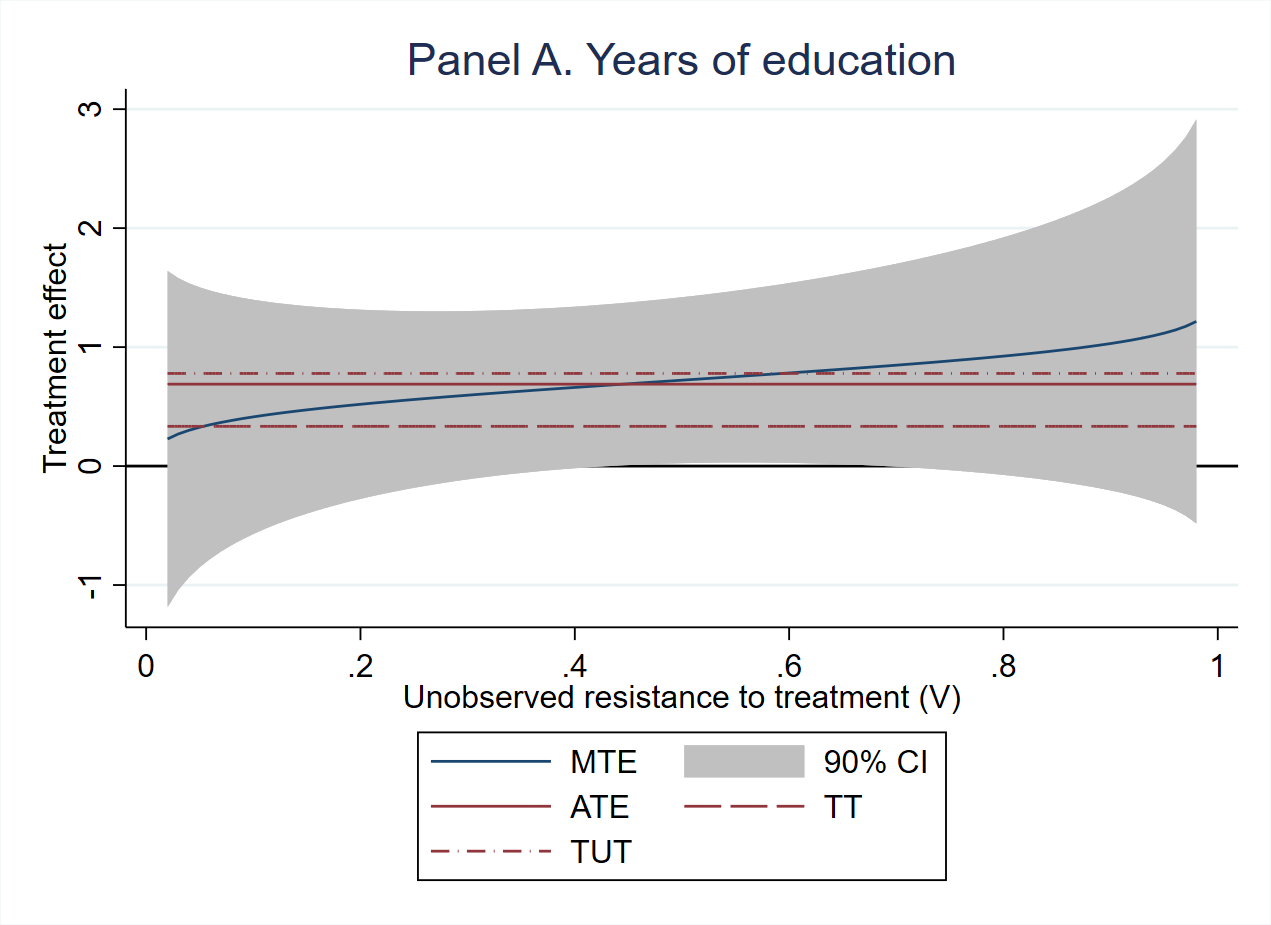}}
\end{minipage}
\begin{minipage}[t]{0.48\linewidth}
  \centerline{\includegraphics[width=7.626cm,height=5.55cm]{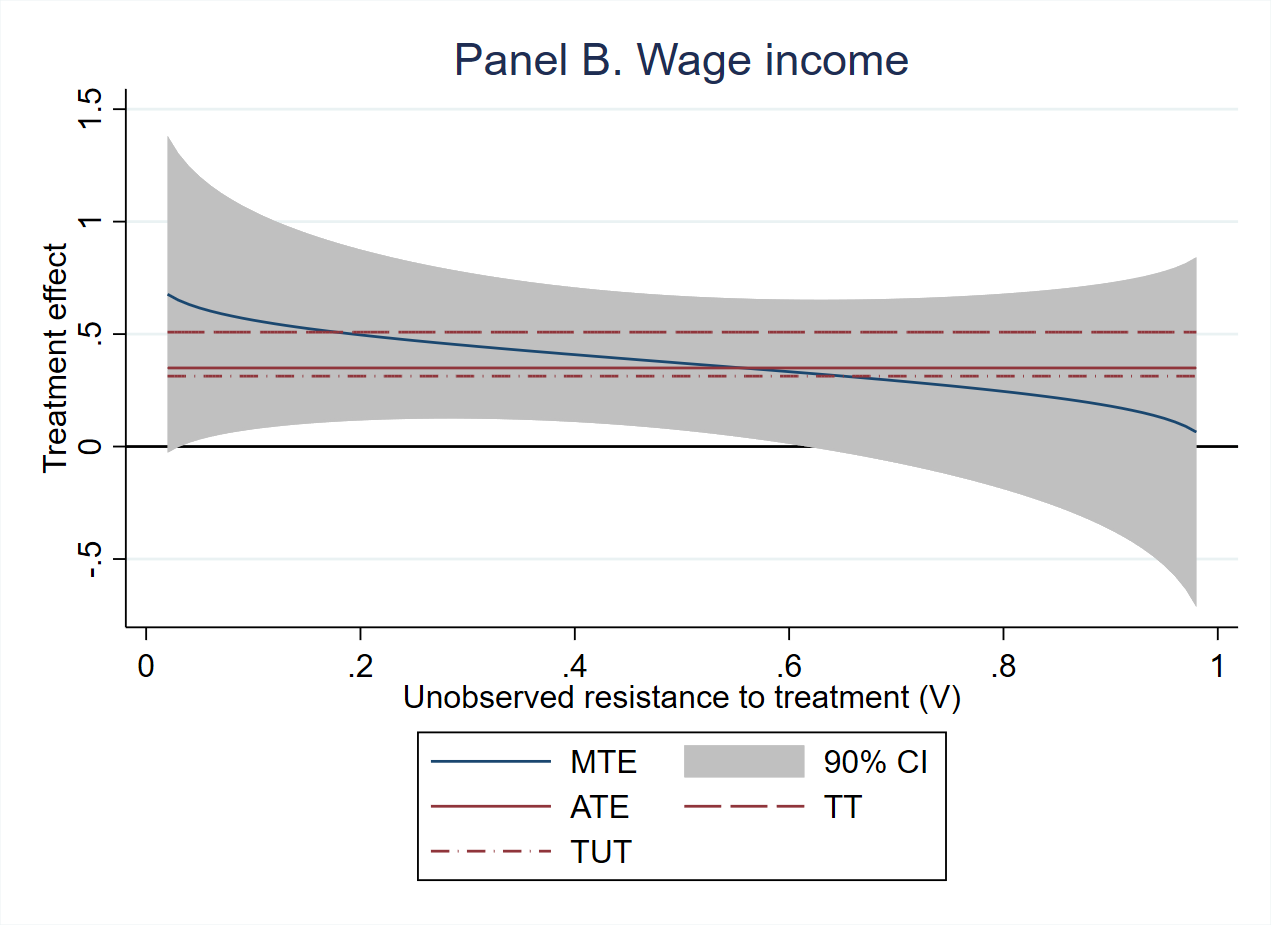}}
\end{minipage}
{\small Notes: The MTE is estimated by the separate procedure based on
parametric normal specification and evaluated at mean values of the
covariates.\ Panels A and B depict the estimated MTE curves for education
outcome and labor market outcome, respectively. The 90\% confidence interval
is based on bootstrapping with 1,000 replications.}
\end{figure}

\begin{figure}[!htb]
\caption{MTE curves under semiparametric specification}
\label{Fig:MTEsemi}
\bigskip \setlength{\abovecaptionskip}{-0.2cm} 
\begin{minipage}[t]{0.48\linewidth}
  \centerline{\includegraphics[width=7.626cm,height=5.55cm]{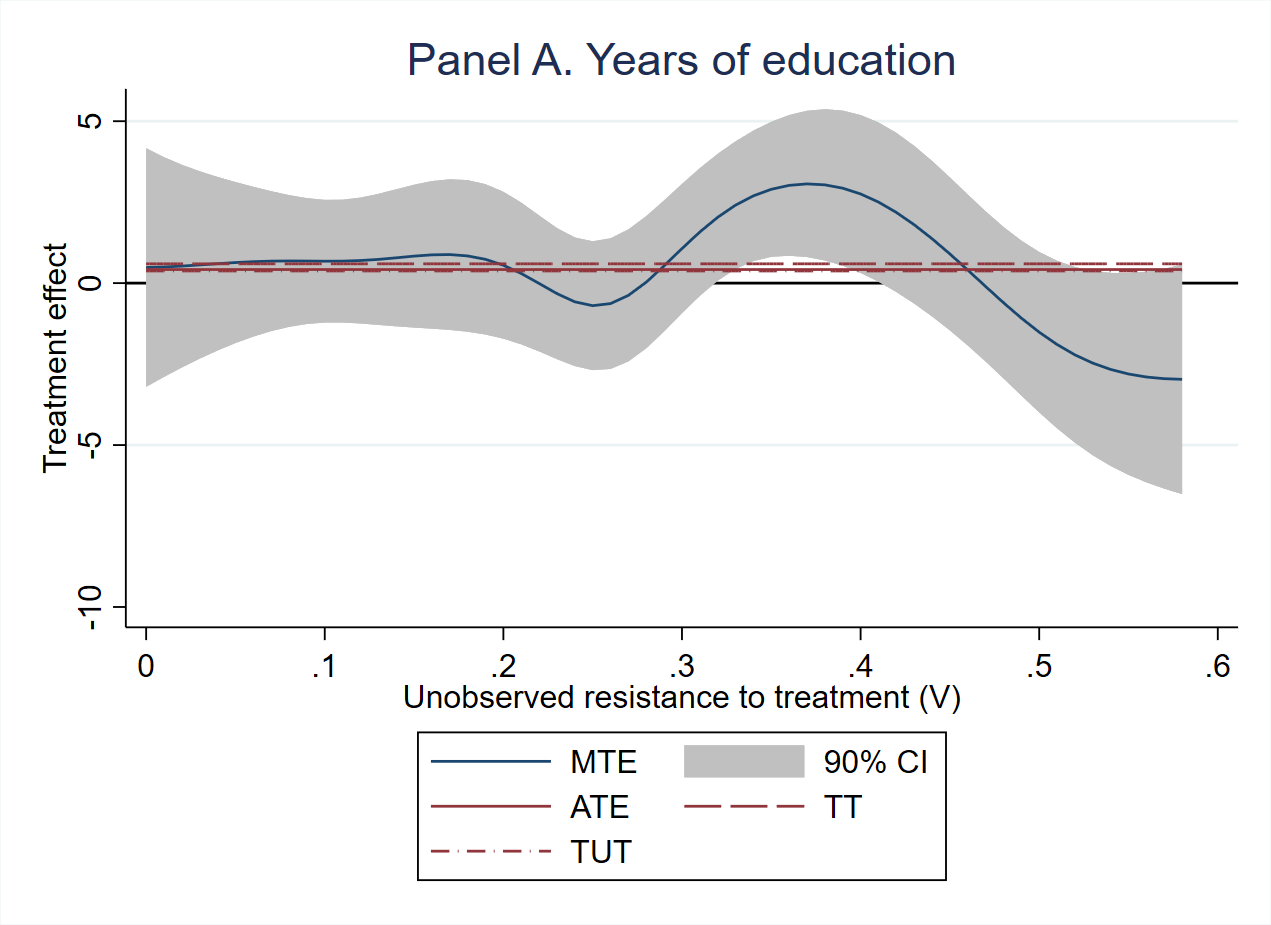}}
\end{minipage}
\begin{minipage}[t]{0.48\linewidth}
  \centerline{\includegraphics[width=7.626cm,height=5.55cm]{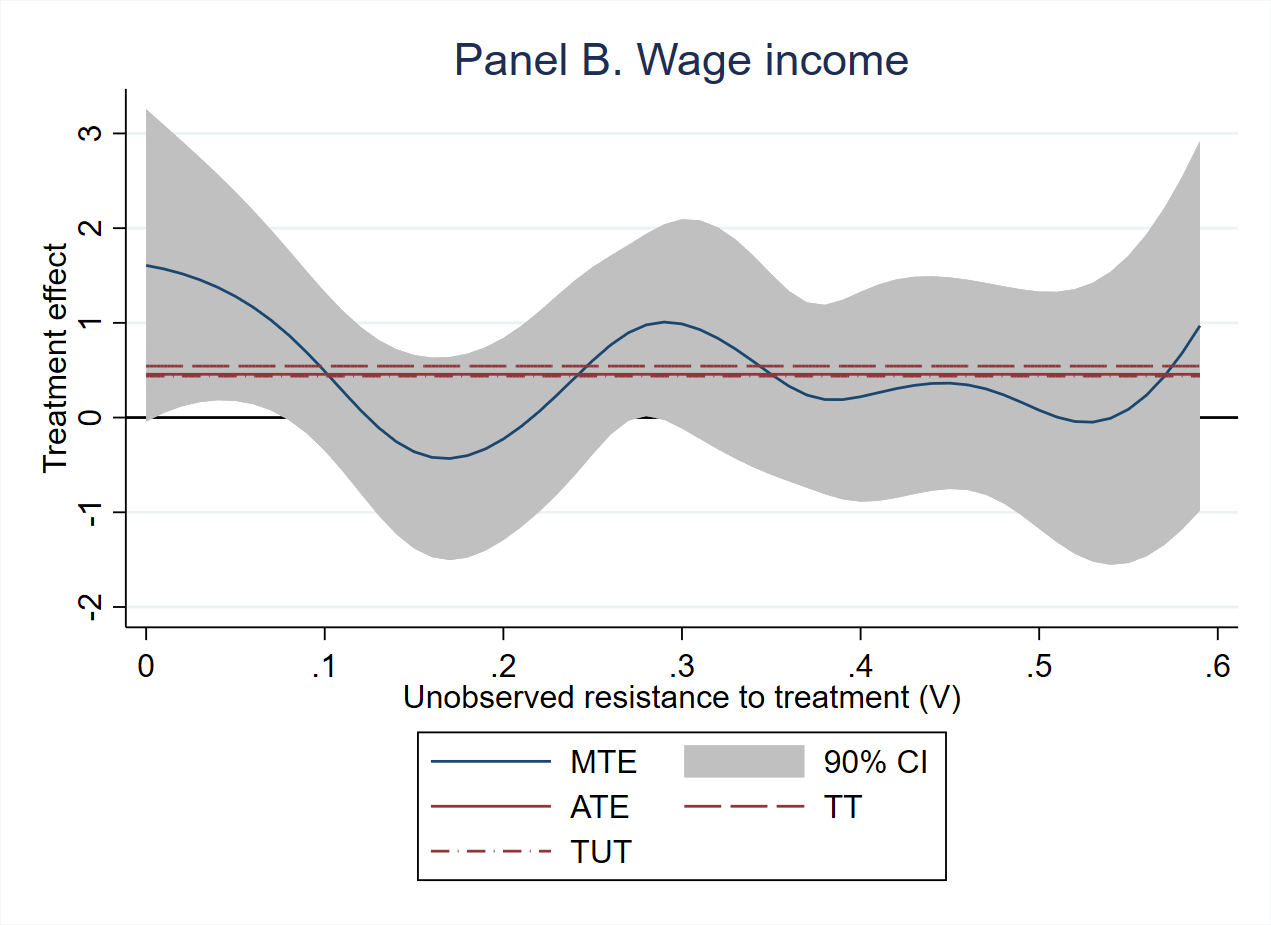}}
\end{minipage}
{\small Notes: The MTE is estimated by the separate procedure based on the
semiparametric specification, with the rule-of-thumb bandwidth, and
evaluated at the mean values of the covariates.\ Panels A and B depict the
estimated MTE curves for education outcome and labor market outcome,
respectively. The 90\% confidence interval is based on bootstrapping with
1,000 replications.}
\end{figure}

\begin{table}[htb]
\caption{Semiparametric estimates of outcome equation coefficients}
\label{table1}
\medskip \renewcommand\arraystretch{1.1} {\small 
\begin{tabular}{lccccccc}
\hline\hline
& \multicolumn{3}{c}{Years of education} &  & \multicolumn{3}{c}{Wage income}
\\ 
& Treated & Untreated & Difference &  & Treated & Untreated & Difference \\ 
& (1) & (2) & (3) &  & (4) & (5) & (6) \\ \hline
Age & 0.000 & 0.000 & 0.000 &  & 0.005 & 0.007 & -0.002 \\ 
& (0.056) & (0.027) & (0.062) &  & (0.030) & (0.014) & (0.033) \\ 
Female & 0.558$^{\ast \ast \ast }$ & 0.254$^{\ast \ast \ast }$ & 0.304$%
^{\ast }$ &  & -0.321$^{\ast \ast \ast }$ & -0.503$^{\ast \ast \ast }$ & 
0.182$^{\ast \ast }$ \\ 
& (0.159) & (0.078) & (0.174) &  & (0.083) & (0.039) & (0.091) \\ 
Black & -0.378 & -0.226 & -0.152 &  & -0.015 & -0.404$^{\ast \ast \ast }$ & 
0.389 \\ 
& (0.518) & (0.227) & (0.559) &  & (0.216) & (0.145) & (0.263) \\ 
Hispanic & -0.642 & -0.403$^{\ast \ast \ast }$ & -0.239 &  & -0.097 & -0.000
& -0.097 \\ 
& (0.408) & (0.145) & (0.430) &  & (0.169) & (0.068) & (0.183) \\ 
Parental education & 1.764$^{\ast \ast \ast }$ & 1.796$^{\ast \ast \ast }$ & 
-0.032 &  & 0.247$^{\ast \ast }$ & 0.197$^{\ast \ast \ast }$ & 0.050 \\ 
& (0.254) & (0.108) & (0.271) &  & (0.096) & (0.050) & (0.107) \\ 
Family income 1978 \ \ \  & 0.053$^{\ast \ast \ast }$ & 0.047$^{\ast \ast
\ast }$ & 0.006 &  & 0.021$^{\ast \ast \ast }$ & 0.014$^{\ast \ast \ast }$ & 
0.007 \\ 
& (0.011) & (0.004) & (0.012) &  & (0.005) & (0.002) & (0.005) \\ 
ATE &  &  & 0.421 &  &  &  & 0.457$^{\ast \ast }$ \\ 
&  &  & (0.492) &  &  &  & (0.228) \\ 
TT &  &  & 0.597 &  &  &  & 0.543 \\ 
&  &  & (1.047) &  &  &  & (0.484) \\ 
TUT &  &  & 0.376 &  &  &  & 0.438$^{\ast }$ \\ 
&  &  & (0.579) &  &  &  & (0.266) \\ \hline
Sample size & \multicolumn{3}{c}{4,554} &  & \multicolumn{3}{c}{3,589} \\ 
\hline\hline
\end{tabular}%
\medskip \newline
{Notes: Columns 1 and 4 display the estimates of coefficients in the treated
state ($\beta _{1}$ in Equation [\ref{MTEiden}]), and columns 2 and 5
display the estimates of coefficients in the untreated state ($\beta _{0}$).
Columns 3 and 6 display the difference in the estimates between the treated
and untreated states ($\beta _{1}-\beta _{0}$), as well as the summary
causal parameters (i.e., ATE, TT, and TUT). Bootstrapped standard errors
from 1,000 replications are reported in parentheses. \newline
$^{\ast }$ Significant at the 10\% level \newline
$^{\ast \ast }$ Significant at the 5\% level \newline
$^{\ast \ast \ast }$ Significant at the 1\% level}}
\end{table}

\clearpage
\setcounter{page}{1}

\begin{appendix}

\begin{center}
{\Large \textbf{Online Appendices} to ``Marginal treatment effects in the
absence of instrumental variables'' }
\end{center}

\renewcommand{\thetable}{A.\arabic{table}} \setcounter{table}{0}
\renewcommand*{\theHtable}{\thetable}
\renewcommand{\thefigure}{A.\arabic{figure}} \setcounter{figure}{0}
\renewcommand*{\theHfigure}{\thefigure}

\section{Additional results of the empirical application}

Based on $\hat{\beta}_{1}$ and $\hat{\beta}_{0}$ reported in Table \ref%
{table1}, and the separate estimates of $E\left[ \left. U_{d}\right\vert V=v%
\right] $ for $d=0,1$ under the semiparametric specification, we can
estimate the marginal structural functions%
\begin{equation*}
E\left[ \left. Y_{d}\right\vert V=v\right] =E\left[ X\right] ^{\prime }\beta
_{d}+E\left[ \left. U_{d}\right\vert V=v\right]
\end{equation*}%
for potential outcomes $Y_{1}$ and $Y_{0}$, which we plot in Figure \ref%
{Fig:MSFsemi}. Panel A sheds light on the significantly positive effect of
Head Start on education for the respondents with medium-level resistance to
treatment, revealing that their gains from the program are driven mainly by
the remarkably low educational attainment when untreated. Panel B leads to
similar results for wage income, where significant gain emerges for low
values of $V$. Moreover, the relatively flat curve of potential wage income
in the treated state implies that early childcare attendance serves as an
equalizer that diminishes the intergroup difference in the labor market
outcome.

\begin{figure}[htb]
\caption{Semiparametric estimates of marginal structural functions}
\label{Fig:MSFsemi}
\bigskip \setlength{\abovecaptionskip}{-0.2cm} 
\begin{minipage}[t]{0.48\linewidth}
  \centerline{\includegraphics[width=7.626cm,height=5.55cm]{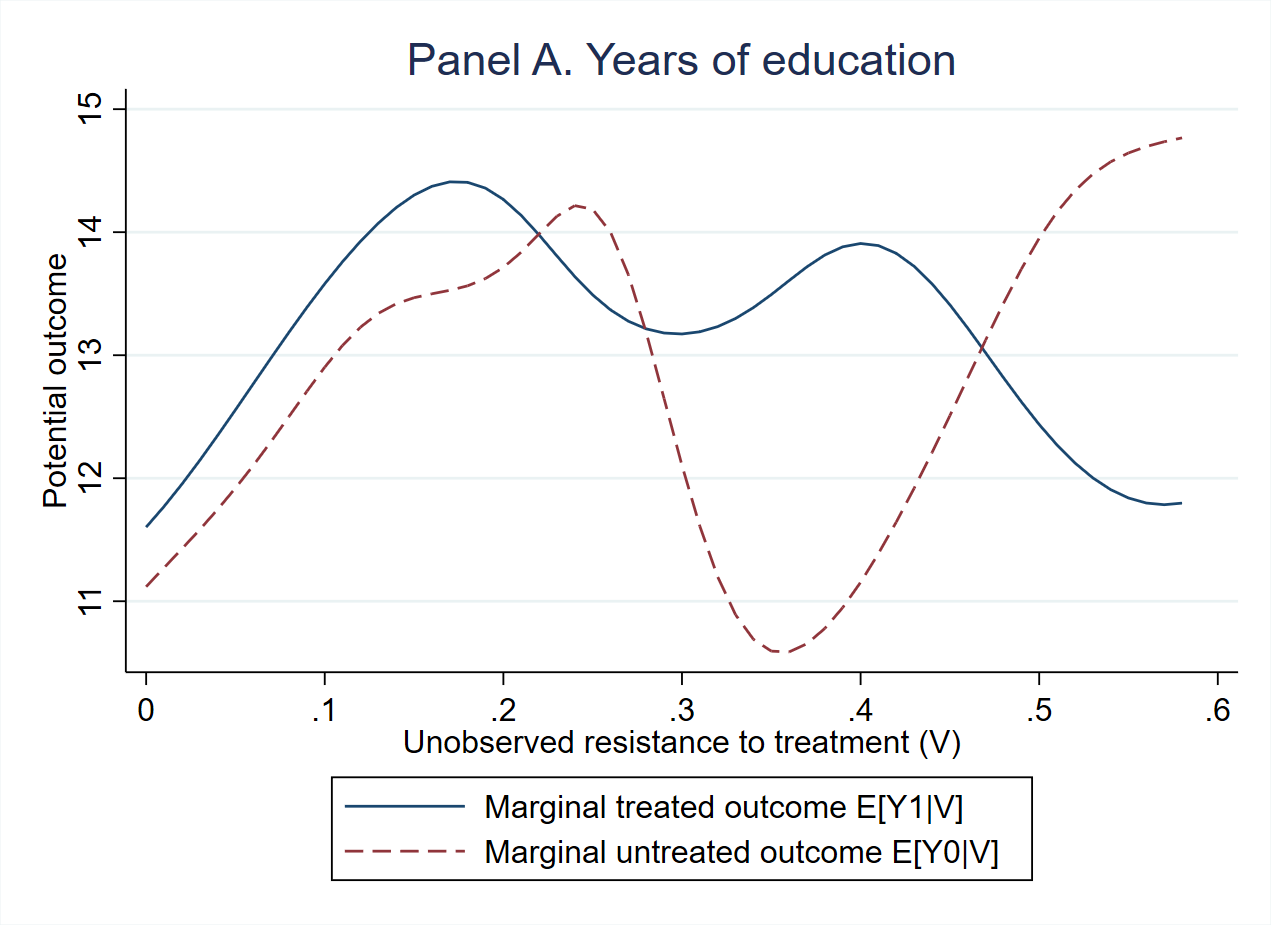}}
\end{minipage}
\begin{minipage}[t]{0.48\linewidth}
  \centerline{\includegraphics[width=7.626cm,height=5.55cm]{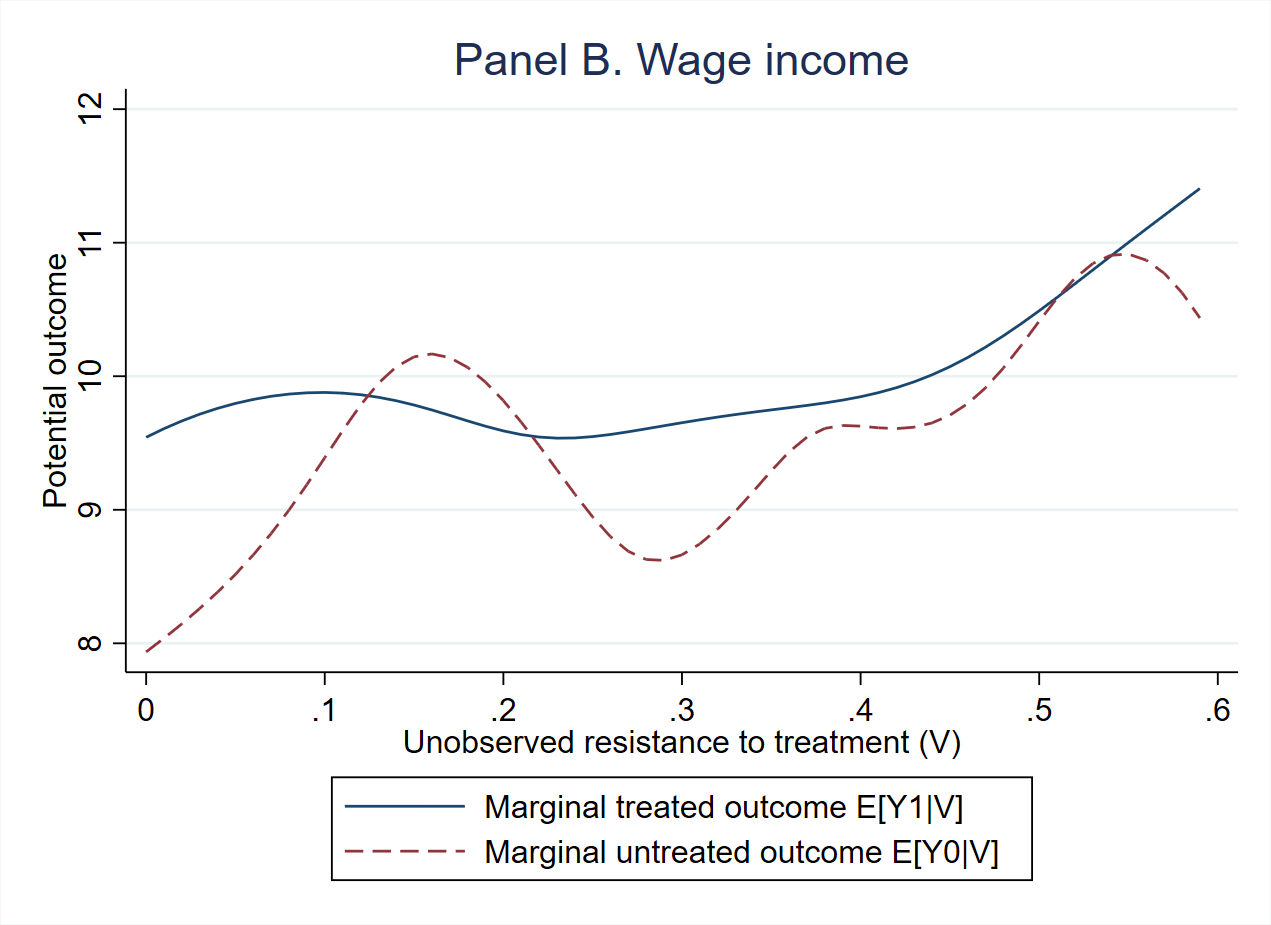}}
\end{minipage}
{\small Notes: The marginal structural function refers to the expected
potential outcome conditional on the unobserved resistance to treatment.
Panels A and B depict the estimated marginal structural function curves for
education outcome and labor market outcome, respectively. The 90\%
confidence interval is based on bootstrapping with 1,000 replications.}
\end{figure}

In addition, we can semiparametrically estimate conditional expectations of
the observed responses given the unobserved determinants of treatment
choice, such as the marginal probability of participation%
\begin{equation*}
E\left[ \left. D\right\vert V=v\right] =\Pr \left( \left. P\geq v\right\vert
V=v\right) =\Pr \left( P\geq v\right) ,
\end{equation*}%
which mirrors the distribution of the propensity score, and the marginal
observed outcome%
\begin{eqnarray*}
E\left[ \left. Y\right\vert V=v\right] &=&E\left[ \left. Y_{0}\right\vert V=v%
\right] +E\left[ \left. D\left( Y_{1}-Y_{0}\right) \right\vert V=v\right] \\
&=&E\left[ \left. Y_{0}\right\vert V=v\right] +E\left[ \left. DX\right\vert
V=v\right] ^{\prime }\left( \beta _{1}-\beta _{0}\right) +E\left[ \left.
D\left( U_{1}-U_{0}\right) \right\vert V=v\right] \\
&=&E\left[ \left. Y_{0}\right\vert V=v\right] +E\left[ 1\left\{ P\geq
v\right\} X\right] ^{\prime }\left( \beta _{1}-\beta _{0}\right) +\Pr \left(
P\geq v\right) \cdot E\left[ \left. U_{1}-U_{0}\right\vert V=v\right] ,
\end{eqnarray*}%
where the last equality follows from the independence of $X$ and $V$ and
from the conditional mean independence of $X$ and $U_{d}$ given $V$.
Plugging in proper estimates of each component in the above equations, we
obtain the estimated marginal response functions and plot them in Figure \ref%
{Fig:MRFsemi}. In particular, the marginal observed outcome curves relate
the individuals with significant positive returns (with medium $V$ in Panel
A and small $V$ in Panel B) to those with low education and low wage income,
thereby bridging our results on the MTE and those of \cite{de2020head} on
the distributional treatment effect.

\begin{figure}[!htb]
\caption{Semiparametric estimates of marginal response functions}
\label{Fig:MRFsemi}
\bigskip \setlength{\abovecaptionskip}{-0.2cm} 
\begin{minipage}[t]{0.48\linewidth}
  \centerline{\includegraphics[width=7.626cm,height=5.55cm]{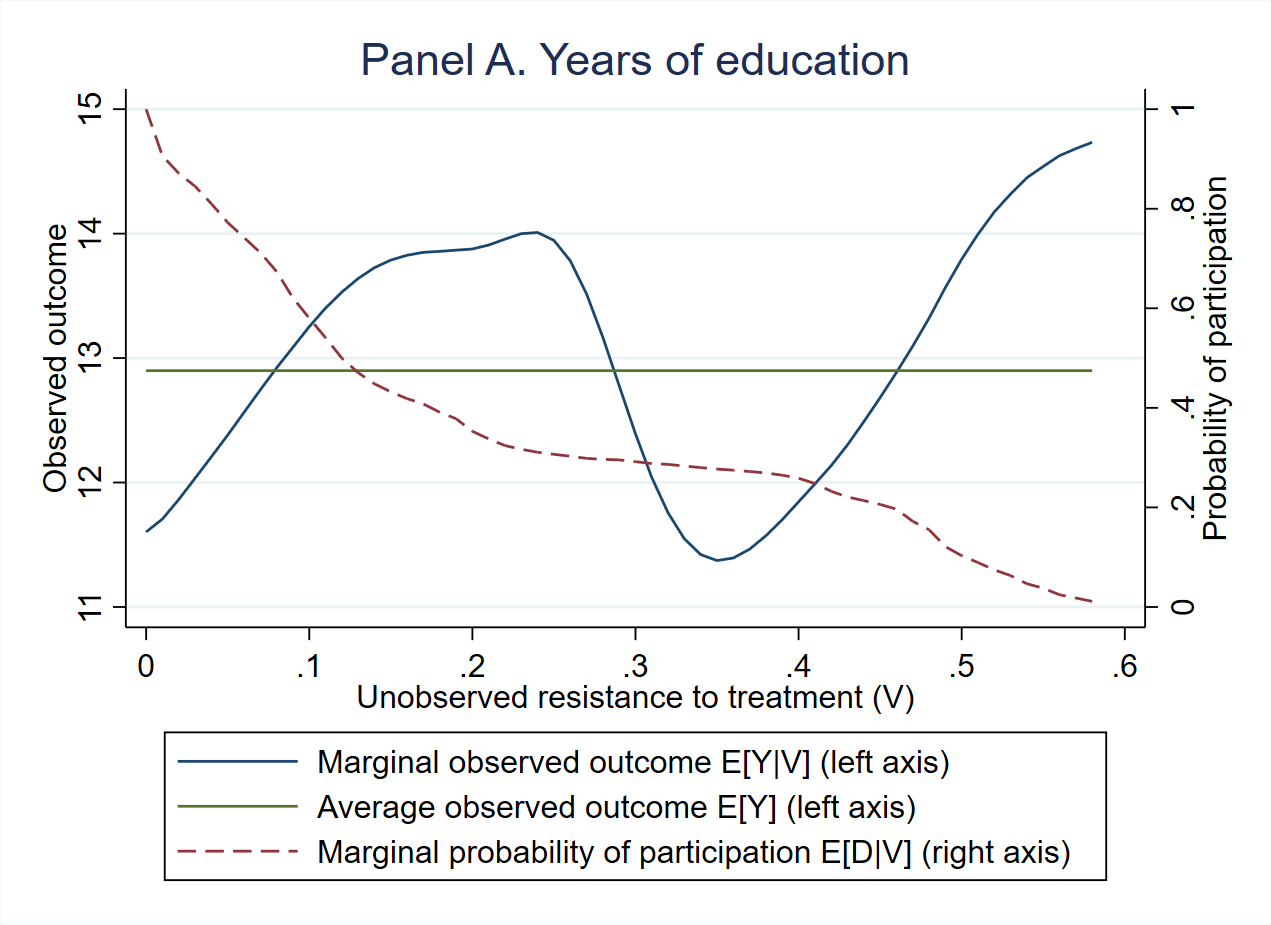}}
\end{minipage}
\begin{minipage}[t]{0.48\linewidth}
  \centerline{\includegraphics[width=7.626cm,height=5.55cm]{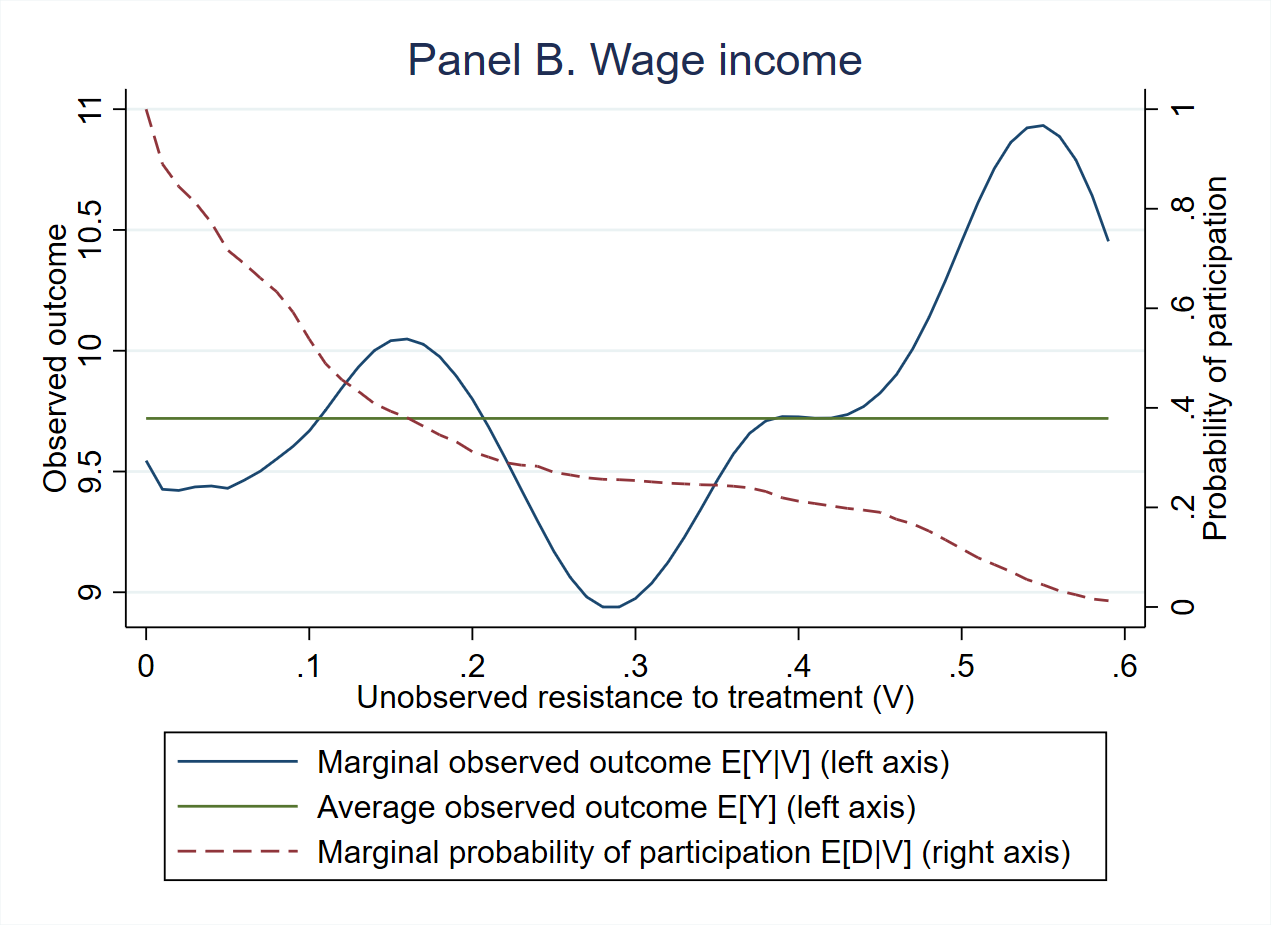}}
\end{minipage}
{\small Notes: This figure plots the (conditional) expectation of the
observed outcome (solid line on the left y-axis), and the conditional
probability of participation (dashed line on the right y-axis). Panels A and
B illustrate the estimated marginal response function curves for education
outcome and labor market outcome, respectively. The 90\% confidence interval
is based on bootstrapping with 1,000 replications.}
\end{figure}

\renewcommand{\theequation}{B.\arabic{equation}} \setcounter{equation}{0} %
\renewcommand*{\theHequation}{\theequation}

\section{Proof of Theorem \protect\ref{theorem:main}}

\label{appendix:proof}Denote $m_{d}\left( x\right) =E\left[ Y\left\vert
X=x,D=d\right. \right] $ and $m_{d0}\left( x^{C}\right) =m_{d}\left(
x^{C},0\right) $, $d=0,1$. Note that $\pi \left( x\right) $ and $m_{d}\left(
x\right) $, and thus $\pi _{0}\left( x^{C}\right) $ and $m_{d0}\left(
x^{C}\right) $, are identified functions because they are conditional
expectations of observed variables. We first consider identifying $\beta
_{d}^{C}$, the coefficients of continuous covariates, from $\pi _{0}\left(
x^{C}\right) $ and $m_{d0}\left( x^{C}\right) $. By equation (\ref{md}), we
have%
\begin{equation}
m_{d0}\left( x^{C}\right) =x^{C\prime }\beta _{d}^{C}+g_{d}\left( \pi
_{0}\left( x^{C}\right) \right) .  \label{IDreg}
\end{equation}%
When $\dim \left( X^{C}\right) =1$, by differentiating both sides of (\ref%
{IDreg}) at $\tilde{x}^{C}$ and invoking Assumption NL1, we obtain%
\begin{equation*}
m_{d0}^{\left( 1\right) }\left( \tilde{x}^{C}\right) =\beta
_{d}^{C}+g_{d}^{\left( 1\right) }\left( \pi _{0}\left( \tilde{x}^{C}\right)
\right) \pi _{0}^{\left( 1\right) }\left( \tilde{x}^{C}\right) =\beta
_{d}^{C},
\end{equation*}%
which identifies $\beta _{d}^{C}$ for $d=0,1$. When $\dim \left(
X^{C}\right) \geq 2$, for $k,j\in \left\{ 1,2,\cdots ,\dim \left(
X^{C}\right) \right\} $ satisfying Assumption NL2, taking the partial
derivatives of $m_{d0}\left( x^{C}\right) $ with respect to $x_{k}^{C}$ and $%
x_{j}^{C}$ yields that%
\begin{eqnarray*}
\partial _{k}m_{d0}\left( x^{C}\right) &=&\beta _{d,k}^{C}+g_{d}^{\left(
1\right) }\left( \pi _{0}\left( x^{C}\right) \right) \partial _{k}\pi
_{0}\left( x^{C}\right) , \\
\partial _{j}m_{d0}\left( x^{C}\right) &=&\beta _{d,j}^{C}+g_{d}^{\left(
1\right) }\left( \pi _{0}\left( x^{C}\right) \right) \partial _{j}\pi
_{0}\left( x^{C}\right) .
\end{eqnarray*}%
It follows from Assumption NL2.(i)-(ii) that%
\begin{equation*}
\frac{\partial _{k}m_{d0}\left( x^{C}\right) -\beta _{d,k}^{C}}{\partial
_{k}\pi _{0}\left( x^{C}\right) }=g_{d}^{\left( 1\right) }\left( \pi
_{0}\left( x^{C}\right) \right) =\frac{\partial _{j}m_{d0}\left(
x^{C}\right) -\beta _{d,j}^{C}}{\partial _{j}\pi _{0}\left( x^{C}\right) },
\end{equation*}%
so that%
\begin{equation}
\partial _{k}m_{d0}\left( x^{C}\right) \partial _{j}\pi _{0}\left(
x^{C}\right) -\partial _{j}m_{d0}\left( x^{C}\right) \partial _{k}\pi
_{0}\left( x^{C}\right) =\partial _{j}\pi _{0}\left( x^{C}\right) \beta
_{d,k}^{C}-\partial _{k}\pi _{0}\left( x^{C}\right) \beta _{d,j}^{C},
\label{11}
\end{equation}%
which is linear in $\beta _{d,k}^{C}$ and $\beta _{d,j}^{C}$. The same
equation is obtained if we evaluate the expression at another point $\tilde{x%
}^{C}$ that satisfies Assumption NL2.(iii)-(iv), which gives%
\begin{equation}
\left( 
\begin{array}{c}
\partial _{k}m_{d0}\left( x^{C}\right) \partial _{j}\pi _{0}\left(
x^{C}\right) -\partial _{j}m_{d0}\left( x^{C}\right) \partial _{k}\pi
_{0}\left( x^{C}\right) \\ 
\partial _{k}m_{d0}\left( \tilde{x}^{C}\right) \partial _{j}\pi _{0}\left( 
\tilde{x}^{C}\right) -\partial _{j}m_{d0}\left( \tilde{x}^{C}\right)
\partial _{k}\pi _{0}\left( \tilde{x}^{C}\right)%
\end{array}%
\right) =\Psi \left( 
\begin{array}{c}
\beta _{d,k}^{C} \\ 
\beta _{d,j}^{C}%
\end{array}%
\right) ,  \label{betaC}
\end{equation}%
where%
\begin{equation*}
\Psi =\left( 
\begin{array}{cc}
\partial _{j}\pi _{0}\left( x^{C}\right) , & -\partial _{k}\pi _{0}\left(
x^{C}\right) \\ 
\partial _{j}\pi _{0}\left( \tilde{x}^{C}\right) , & -\partial _{k}\pi
_{0}\left( \tilde{x}^{C}\right)%
\end{array}%
\right) .
\end{equation*}%
Assumption NL2.(v) ensures that the determinant of $\Psi $ is nonzero, which
shows that $\Psi $ is nonsingular. Therefore, equation (\ref{betaC}) can be
solved for $\beta _{d,k}^{C}$ and $\beta _{d,j}^{C}$ by inverting $\Psi $,
thereby identifying $\beta _{d,k}^{C}$ and $\beta _{d,j}^{C}$. Given
identification of $\beta _{d,k}^{C}$, we can then identify all other
coefficient $\beta _{d,l}^{C}$ in $\beta _{d}^{C}$ by solving (\ref{11})
with the subscript $j$ replaced by $l$, which gives%
\begin{equation*}
\beta _{d,l}^{C}=\frac{\partial _{l}m_{d0}\left( x^{C}\right) \partial
_{k}\pi _{0}\left( x^{C}\right) -\partial _{k}m_{d0}\left( x^{C}\right)
\partial _{l}\pi _{0}\left( x^{C}\right) +\partial _{l}\pi _{0}\left(
x^{C}\right) \beta _{d,k}^{C}}{\partial _{k}\pi _{0}\left( x^{C}\right) }.
\end{equation*}%
Given the identification of $\beta _{d}^{C}$, the function $g_{d}$ is
identified on the support of $\pi _{0}\left( X^{C}\right) $ by%
\begin{equation*}
g_{d}\left( p\right) =E\left[ \left. m_{d0}\left( X^{C}\right) -X^{C\prime
}\beta _{d}^{C}\right\vert \pi _{0}\left( X^{C}\right) =p\right] .
\end{equation*}

Next, we consider identifying $\beta _{d}^{D}$, the coefficients of discrete
covariates. For each $k\in \left\{ 1,2,\cdots ,\dim \left( X^{D}\right)
\right\} $, we have%
\begin{equation*}
m_{d}\left( x^{C},x^{Dk}\right) =x^{C\prime }\beta _{d}^{C}+x_{k}^{D}\beta
_{d,k}^{D}+g_{d}\left( \pi \left( x^{C},x^{Dk}\right) \right)
\end{equation*}%
for any $x^{C}$ in the support of $X^{C}$. By Assumption S, there exists $%
x^{C}\left( k\right) $ in the support of $X^{C}$ such that $\pi \left(
x^{C}\left( k\right) ,x^{Dk}\right) $ is in the support of $\pi _{0}\left(
X^{C}\right) $. It follows from the above identification result that $%
g_{d}\left( \pi \left( x^{C}\left( k\right) ,x^{Dk}\right) \right) $ is
identified. Consequently, $\beta _{d,k}^{D}$ is identified by%
\begin{equation*}
\beta _{d,k}^{D}=\frac{m_{d}\left( x^{C}\left( k\right) ,x^{Dk}\right)
-x^{C}\left( k\right) ^{\prime }\beta _{d}^{C}-g_{d}\left( \pi \left(
x^{C}\left( k\right) ,x^{Dk}\right) \right) }{x_{k}^{D}}.
\end{equation*}%
This argument holds for each $k\in \left\{ 1,2,\cdots ,\dim \left(
X^{D}\right) \right\} $, thereby identifying $\beta _{d}^{D}$. Finally,
given the identification of $\beta _{d}=\left( \beta _{d}^{C},\beta
_{d}^{D}\right) $, it follows from (\ref{md}) that the function $g_{d}$ is
identified on the support of $P=\pi \left( X\right) $ by%
\begin{equation*}
g_{d}\left( p\right) =E\left[ \left. m_{d}\left( X\right) -X^{\prime }\beta
_{d}\right\vert \pi \left( X\right) =p\right] ,
\end{equation*}%
for $d=0,1$.

\renewcommand{\thetheorem}{C.\arabic{theorem}} \setcounter{theorem}{0} %
\renewcommand*{\theHtheorem}{\thetheorem} \renewcommand{\theequation}{C.%
\arabic{equation}} \setcounter{equation}{0} \renewcommand*{\theHequation}{%
\theequation}

\section{Further discussion on the nonlinearity assumption}

\label{appendix:NL}In light of the proof of Theorem \ref{theorem:main}, our
identification strategy may work even when no continuous covariate is
available in the data. In that case, we put a discretely distributed
covariate in $X^{C}$.

\begin{theorem}
\label{theorem:A1} Theorem \ref{theorem:main} holds if Assumption NL1 is
replaced by that there exist two different constants $x^{C}$, $\tilde{x}^{C}$%
\ in the support of $X^{C}$\ such that $\pi _{0}\left( x^{C}\right) =\pi
_{0}\left( \tilde{x}^{C}\right) $.
\end{theorem}

\begin{proof}
When $\dim \left( X^{C}\right) =1$, it follows from (\ref{IDreg})
and the assumption of Theorem \ref{theorem:A1} that%
\begin{equation*}
m_{d0}\left( x^{C}\right) -x^{C}\beta _{d}^{C}=g_{d}\left( \pi _{0}\left(
x^{C}\right) \right) =g_{d}\left( \pi _{0}\left( \tilde{x}^{C}\right)
\right) =m_{d0}\left( \tilde{x}^{C}\right) -\tilde{x}^{C}\beta _{d}^{C}.
\end{equation*}%
Hence, $\beta _{d}^{C}$ is identified as%
\begin{equation*}
\beta _{d}^{C}=\frac{m_{d0}\left( x^{C}\right) -m_{d0}\left( \tilde{x}%
^{C}\right) }{x^{C}-\tilde{x}^{C}}.
\end{equation*}%
The remaining part of the proof is the same as the proof of Theorem \ref%
{theorem:main}.
\end{proof}

The assumption of Theorem \ref{theorem:A1} in place of NL1 requires the
univariate function $\pi _{0}$ to be not one-to-one, but imposes no other
smoothness or continuity restriction on $\pi _{0}$. Therefore, it doesn't
require $X^{C}$ to be continuously valued. In the extreme case of $X^{C}$
containing only one binary covariate, this assumption will be equivalent to
the full irrelevance of $X^{C}$ to the treatment probability, which is a
condition suggested by \cite{chamberlain1986asymptotic} for identification
of the sample selection model. The following two theorems show that
Assumption NL1 and the assumption of Theorem \ref{theorem:A1} can be
extended to the case of multiple continuous covariates.

\begin{theorem}
\label{theorem:A2} Theorem \ref{theorem:main} holds if Assumption NL is
replaced by that (i) there are two points $x^{C}$ and $\tilde{x}^{C}$ in the
support of $X^{C}$ and an index $k$ such that $x_{k}^{C}\neq \tilde{x}%
_{k}^{C}$, $x_{-k}^{C}=\tilde{x}_{-k}^{C}$, and $\pi _{0}\left( x^{C}\right)
=\pi _{0}\left( \tilde{x}^{C}\right) $, and (ii) there exists $\breve{x}^{C}$
in the support of $X^{C} $ such that functions $\pi _{0}$ and $m_{d0}$ are
differentiable at $\breve{x}^{C}$, with $\partial _{k}\pi _{0}\left( \breve{x%
}^{C}\right) \neq 0 $.
\end{theorem}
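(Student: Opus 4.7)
The plan is to combine the finite-difference argument that identifies the coefficient of a single continuous covariate (as in Assumption NL1) with the derivative argument of Theorem A.1 (to recover $g_d^{(1)}$ at one point, and thence the remaining continuous coefficients), and then conclude exactly as in the proof of Theorem \ref{theorem:main}. As in that proof, I start from the partially linear relation $m_{d0}(x^{C}) = x^{C\prime}\beta_{d}^{C} + g_{d}(\pi_{0}(x^{C}))$.

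First, I would exploit condition (i). Because $x^{C}$ and $\tilde{x}^{C}$ differ only in the $k$-th coordinate and satisfy $\pi_{0}(x^{C}) = \pi_{0}(\tilde{x}^{C})$, the $g_{d}$ term cancels in the difference, giving
\[
m_{d0}(x^{C}) - m_{d0}(\tilde{x}^{C}) = (x_{k}^{C} - \tilde{x}_{k}^{C})\,\beta_{d,k}^{C},
\]
so $\beta_{d,k}^{C}$ is identified as the ratio of observed quantities for $d=0,1$. Note this step uses only condition (i) and does not require any differentiability of $\pi_{0}$ or $m_{d0}$.

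Second, I would use condition (ii) to bootstrap from $\beta_{d,k}^{C}$ to the remaining continuous coefficients. Differentiability at $\breve{x}^{C}$ and the chain rule yield
\[
\partial_{k} m_{d0}(\breve{x}^{C}) = \beta_{d,k}^{C} + g_{d}^{(1)}(\pi_{0}(\breve{x}^{C}))\,\partial_{k}\pi_{0}(\breve{x}^{C}).
\]
Since $\partial_{k}\pi_{0}(\breve{x}^{C}) \neq 0$ and $\beta_{d,k}^{C}$ is already identified from step one, I can solve this equation to identify $g_{d}^{(1)}(\pi_{0}(\breve{x}^{C}))$. Then for every other continuous index $l \neq k$, the analogous first-order condition
\[
\partial_{l} m_{d0}(\breve{x}^{C}) = \beta_{d,l}^{C} + g_{d}^{(1)}(\pi_{0}(\breve{x}^{C}))\,\partial_{l}\pi_{0}(\breve{x}^{C})
\]
pins down $\beta_{d,l}^{C}$ directly, without any nonvanishing requirement on $\partial_{l}\pi_{0}(\breve{x}^{C})$. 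This delivers the full vector $\beta_{d}^{C}$.

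Once $\beta_{d}^{C}$ is identified, the remaining steps duplicate the closing argument in the proof of Theorem \ref{theorem:main}: identify $g_{d}$ on the support of $\pi_{0}(X^{C})$ via a conditional expectation, then use Assumption S to recover each coordinate of $\beta_{d}^{D}$, and finally extend $g_{d}$ to the full support of $P=\pi(X)$. I do not expect any genuine obstacle here; conditions (i) and (ii) decouple cleanly because condition (i) is purely a finite-difference identity (needing only equal-$\pi_{0}$ values along one axis) while condition (ii) supplies the derivative information needed for the cross-coefficients. The only care required is to keep these two ingredients logically separate, and in particular not to insist that the pair $(x^{C},\tilde{x}^{C})$ be close to $\breve{x}^{C}$ or that $\pi_{0}$ be differentiable at $x^{C}$ or $\tilde{x}^{C}$.
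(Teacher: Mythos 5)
Your proposal is correct and follows essentially the same route as the paper's proof: condition (i) cancels $g_{d}$ along the $k$-th axis to identify $\beta_{d,k}^{C}$ by a finite difference, and condition (ii) recovers $g_{d}^{(1)}(\pi_{0}(\breve{x}^{C}))$ from the $k$-th first-order condition at $\breve{x}^{C}$ and substitutes it into the $j$-th to identify the remaining continuous coefficients, with the rest as in Theorem \ref{theorem:main}. Your remarks that no differentiability is needed at $x^{C}$, $\tilde{x}^{C}$ and no nonvanishing condition is needed on $\partial_{l}\pi_{0}(\breve{x}^{C})$ are consistent with the paper's argument.
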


\begin{proof}
It follows from equation (\ref{IDreg}) and $\pi _{0}\left( x^{C}\right) =\pi
_{0}\left( \tilde{x}^{C}\right) $ that%
\begin{equation*}
m_{d0}\left( x^{C}\right) -x_{k}^{C}\beta _{d,k}^{C}-x_{-k}^{C\prime }\beta
_{d,-k}^{C}=m_{d0}\left( \tilde{x}^{C}\right) -\tilde{x}_{k}^{C}\beta
_{d,k}^{C}-\tilde{x}_{-k}^{C^{\prime }}\beta _{d,-k}^{C}.
\end{equation*}%
Hence, $\beta _{d,k}^{C}$ is identified by condition (i) as%
\begin{equation*}
\beta _{d,k}^{C}=\frac{m_{d0}\left( x^{C}\right) -m_{d0}\left( \tilde{x}%
^{C}\right) }{x_{k}^{C}-\tilde{x}_{k}^{C}}.
\end{equation*}%
For any index $j\neq k$, taking the partial derivatives of $m_{d0}\left(
x^{C}\right) $ with respect to $x_{k}^{C}$ and $x_{j}^{C}$ at $\breve{x}^{C}$
yields that%
\begin{eqnarray}
\partial _{k}m_{d0}\left( \breve{x}^{C}\right) &=&\beta
_{d,k}^{C}+g_{d}^{\left( 1\right) }\left( \pi _{0}\left( \breve{x}%
^{C}\right) \right) \partial _{k}\pi _{0}\left( \breve{x}^{C}\right) ,
\label{partial_k} \\
\partial _{j}m_{d0}\left( \breve{x}^{C}\right) &=&\beta
_{d,j}^{C}+g_{d}^{\left( 1\right) }\left( \pi _{0}\left( \breve{x}%
^{C}\right) \right) \partial _{j}\pi _{0}\left( \breve{x}^{C}\right) .
\label{partial_j}
\end{eqnarray}%
Since $\partial _{k}\pi _{0}\left( \breve{x}^{C}\right) \neq 0$ by condition
(ii), we can find $g_{d}^{\left( 1\right) }\left( \pi _{0}\left( \breve{x}%
^{C}\right) \right) $ from (\ref{partial_k}) to be%
\begin{equation*}
g_{d}^{\left( 1\right) }\left( \pi _{0}\left( \breve{x}^{C}\right) \right) =%
\frac{\partial _{k}m_{d0}\left( \breve{x}^{C}\right) -\beta _{d,k}^{C}}{%
\partial _{k}\pi _{0}\left( \breve{x}^{C}\right) },
\end{equation*}%
and substitute it into (\ref{partial_j}) to identify $\beta _{d,j}^{C}$ as%
\begin{equation*}
\beta _{d,j}^{C}=\partial _{j}m_{d0}\left( \breve{x}^{C}\right) -\frac{%
\partial _{j}\pi _{0}\left( \breve{x}^{C}\right) }{\partial _{k}\pi
_{0}\left( \breve{x}^{C}\right) }\left[ \partial _{k}m_{d0}\left( \breve{x}%
^{C}\right) -\beta _{d,k}^{C}\right] .
\end{equation*}%
Therefore, $\beta _{d}^{C}$ is identified. The remaining part of the proof
is the same as the proof of Theorem \ref{theorem:main}.
\end{proof}

\begin{theorem}
\label{theorem:A3} Theorem \ref{theorem:main} holds if Assumption NL is
replaced by that there exist two points $x^{C}$ and $\tilde{x}^{C}$ in the
support of $X^{C}$ and an index $k$ such that functions $\pi _{0}$ and $%
m_{d0}$ are differentiable at $x^{C}$ and $\tilde{x}^{C}$, with $\partial
_{k}\pi _{0}\left( x^{C}\right) =0 $ and $\partial _{k}\pi _{0}\left( \tilde{%
x}^{C}\right) \neq 0$.
\end{theorem}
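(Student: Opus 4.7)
The strategy is to adapt the proof of Theorem \ref{theorem:A2} by using the stationary point of $\pi_0$ in coordinate $k$ (rather than an equality of $\pi_0$ values at two points) to identify $\beta_{d,k}^C$, and then to exploit the nonzero partial derivative at $\tilde{x}^C$ to pin down the remaining coordinates. As in the preceding proofs, everything rests on equation (\ref{IDreg}), $m_{d0}(x^C) = x^{C\prime}\beta_d^C + g_d(\pi_0(x^C))$; once $\beta_d^C$ is identified, the identification of $\beta_d^D$ via Assumption S and of $g_d$ on the support of $P$ carries over verbatim from the proof of Theorem \ref{theorem:main}.

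First I would differentiate (\ref{IDreg}) with respect to $x_k^C$ and evaluate at $x^C$. Since $\partial_k \pi_0(x^C) = 0$, the chain-rule contribution of the unknown $g_d^{(1)}$ drops out, leaving
\[
\partial_k m_{d0}(x^C) = \beta_{d,k}^C,
\]
which identifies $\beta_{d,k}^C$ for $d = 0, 1$. The stationary-point condition here plays the role that the equality $\pi_0(x^C) = \pi_0(\tilde{x}^C)$ played in Theorem \ref{theorem:A2}: it annihilates the nonparametric nuisance, though at a single evaluation point rather than between two. Next, differentiating (\ref{IDreg}) with respect to $x_k^C$ at $\tilde{x}^C$, where $\partial_k \pi_0(\tilde{x}^C) \neq 0$, I can solve for
\[
g_d^{(1)}\bigl(\pi_0(\tilde{x}^C)\bigr) = \frac{\partial_k m_{d0}(\tilde{x}^C) - \beta_{d,k}^C}{\partial_k \pi_0(\tilde{x}^C)},
\]
which is identified because $\beta_{d,k}^C$ is already known. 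Then, for each remaining index $j \neq k$, differentiating (\ref{IDreg}) with respect to $x_j^C$ at $\tilde{x}^C$ and rearranging gives
\[
\beta_{d,j}^C = \partial_j m_{d0}(\tilde{x}^C) - g_d^{(1)}\bigl(\pi_0(\tilde{x}^C)\bigr)\,\partial_j \pi_0(\tilde{x}^C),
\]
which identifies the remaining coordinates of $\beta_d^C$; if $\dim(X^C) = 1$, this step is vacuous and the first display already delivers all of $\beta_d^C$.

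There is no serious obstacle here; the argument is essentially a one-point local version of the proof of Theorem \ref{theorem:A2}. The only subtle point worth flagging is that, in contrast to Theorems \ref{theorem:A1} and \ref{theorem:A2}, one does not need two distinct points sharing a common value of $\pi_0$ or of $x_{-k}^C$: the stationary-point condition at $x^C$ alone suffices to peel off $g_d$ there, and the non-stationarity at $\tilde{x}^C$ then provides the excluded variation needed to recover $g_d^{(1)}$ and the remaining linear coefficients. The rest of the proof of Theorem \ref{theorem:main}, handling $\beta_d^D$ and the reconstruction of $g_d$ on the support of $P$, applies without modification.
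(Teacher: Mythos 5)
Your proof is correct and follows exactly the route the paper takes: it identifies $\beta_{d,k}^{C}$ from the stationary point $x^{C}$ as in Theorem \ref{theorem:A1}, then recovers $g_{d}^{\left(1\right)}\left(\pi_{0}\left(\tilde{x}^{C}\right)\right)$ and the remaining coefficients $\beta_{d,j}^{C}$ from the non-stationary point $\tilde{x}^{C}$ as in Theorem \ref{theorem:A2}, with the discrete coefficients and $g_{d}$ handled as in Theorem \ref{theorem:main}. The paper's own proof is just a one-line reference to those earlier arguments; yours spells out the same steps explicitly.
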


\begin{proof}
The identification of $\beta _{d,k}^{C}$ is analogous to Theorem %
\ref{theorem:main}. Then, the identification of all the other coefficients $%
\beta _{d,j}^{C}$ in $\beta _{d}^{C}$ and the remaining components of the
model is the same as in Theorem \ref{theorem:A2} and Theorem \ref%
{theorem:main}, respectively.
\end{proof}

Alternatively, the identification of coefficients of continuous covariates
can be attained by exploiting the local irrelevance of the function $%
g_{d}\left( p\right) $.

\begin{theorem}
\label{theorem:A4} Theorem \ref{theorem:main} holds if Assumption NL is
replaced by the existence of a vector $x^{C}$ in the support of $X^{C}$ such
that $g_{d}^{\left( 1\right) }\left( \pi _{0}\left( x^{C}\right) \right) =0$
for $d=0,1$, and that functions $\pi _{0}$ and $m_{d0}$ are differentiable
at $x^{C}$.
\end{theorem}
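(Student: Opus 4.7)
The plan is to leverage equation (\ref{IDreg}) together with the vanishing-derivative condition on $g_d$ to directly read off the coefficients of the continuous covariates, after which the remainder of the argument inherits verbatim from Theorem \ref{theorem:main}.

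First, I would differentiate both sides of the identifying relation
\[
m_{d0}(x^C) = x^{C\prime}\beta_d^C + g_d\bigl(\pi_0(x^C)\bigr)
\]
with respect to each coordinate $x_k^C$ and evaluate at the special point guaranteed by the hypothesis, obtaining
\[
\partial_k m_{d0}(x^C) = \beta_{d,k}^C + g_d^{(1)}\bigl(\pi_0(x^C)\bigr)\,\partial_k \pi_0(x^C).
\]
Because $g_d^{(1)}(\pi_0(x^C)) = 0$ by assumption, the second term vanishes for every $k$ irrespective of $\partial_k \pi_0(x^C)$, so $\beta_{d,k}^C = \partial_k m_{d0}(x^C)$ for all $k \in \{1,\ldots,\dim(X^C)\}$ and both $d=0,1$. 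This pins down the whole vector $\beta_d^C$. A subtle but essential point is that the hypothesis provides the same $x^C$ for both treatment states; this is necessary, since we apply the argument separately for $d=0$ and $d=1$.

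Second, once $\beta_d^C$ is identified, I would recover $g_d$ on the support of $\pi_0(X^C)$ via
\[
g_d(p) = E\bigl[\,m_{d0}(X^C) - X^{C\prime}\beta_d^C \,\big|\, \pi_0(X^C) = p\,\bigr],
\]
then identify the discrete-covariate coefficients $\beta_d^D$ coordinate-by-coordinate using Assumption S, and finally extend $g_d$ to the full support of $P = \pi(X)$ by conditioning on $\pi(X)$. These three steps are copied verbatim from the end of the proof of Theorem \ref{theorem:main} and require no further argument.

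There is no real obstacle. The only conceptual observation worth flagging is the contrast with Theorem \ref{theorem:A3}: there, the vanishing of $\partial_k \pi_0$ at a selected point identified a single coefficient, forcing a bootstrap through gradient ratios to obtain the others; here, the vanishing of $g_d^{(1)}$ at $\pi_0(x^C)$ annihilates every product $g_d^{(1)}(\pi_0(x^C))\,\partial_k \pi_0(x^C)$ in one shot, so all continuous-coefficient components are identified simultaneously without any nontrivial requirement on $\pi_0$. In effect, Theorem \ref{theorem:A4} trades a nonlinearity restriction on the propensity score for a local flatness restriction on the selection-correction function.
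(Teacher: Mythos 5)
Your proposal is correct and follows essentially the same route as the paper's own proof: differentiate equation (\ref{IDreg}) coordinate-by-coordinate, use $g_{d}^{(1)}\left( \pi _{0}\left( x^{C}\right) \right) =0$ to annihilate the second term and read off $\beta _{d,k}^{C}=\partial _{k}m_{d0}\left( x^{C}\right) $ for every $k$, then defer to the remainder of the proof of Theorem \ref{theorem:main}. The additional remarks contrasting this with Theorem \ref{theorem:A3} are accurate but not needed for the argument.
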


\begin{proof}
For each $k\in \left\{ 1,2,\cdots ,\dim \left( X^{C}\right) \right\} $,
taking the partial derivative of both sides of (\ref{IDreg}) with respect to 
$x_{k}^{C}$ gives%
\begin{equation*}
\partial _{k}m_{d0}\left( x^{C}\right) =\beta _{d,k}^{C}+g_{d}^{\left(
1\right) }\left( \pi _{0}\left( x^{C}\right) \right) \partial _{k}\pi
_{0}\left( x^{C}\right) =\beta _{d,k}^{C},
\end{equation*}%
which immediately identifies all $\beta _{d,k}^{C}$ in $\beta _{d}^{C}$. The
remaining part of the proof is the same as that of Theorem \ref{theorem:main}.
\end{proof}

The above discussion further illustrates the over-identification feature of
our strategy. In fact, the strength of identification of our method is
determined by the number of nonlinearity points in the data satisfying
Assumption NL or Theorems \ref{theorem:A1}-\ref{theorem:A4}, given different
benchmark values of $X^{D}$.

\renewcommand{\thetheorem}{D.\arabic{theorem}} \setcounter{theorem}{0} %
\renewcommand*{\theHtheorem}{\thetheorem} \renewcommand{\theequation}{D.%
\arabic{equation}} \setcounter{equation}{0} \renewcommand*{\theHequation}{%
\theequation}

\section{Limited valued outcomes}

\label{appendix:LDV}To accommodate limited valued outcomes, we first
generalize the linear additive model to a linear latent index model.

\medskip

\textbf{Assumption D.L} (Linear index). \textit{Assume that }$Y_{d}=l\left(
X^{\prime }\gamma _{d},V_{d}\right) $\textit{, }$d=0,1$\textit{, holds for
some known link function }$l$\textit{\ and unknown coefficients }$\gamma
_{d} $\textit{, where }$V_{d}$\textit{\ is unobserved error term of }$Y_{d} $%
\textit{.}

\medskip

The generalized linear model applies to a variety of frequently encountered
limited dependent variables, such as $l\left( t,v\right) =1\left\{ t\geq
v\right\} $ for binary $Y\in \left\{ 0,1\right\} $, $l\left( t,v\right)
=\max \left( t-v,0\right) $ for censored (at zero) $Y\geq 0$, $l\left(
t,v\right) =\exp \left( t+v\right) \left/ \left[ 1+\exp \left( t+v\right) %
\right] \right. $ for proportion-valued $Y\in \left( 0,1\right) $, $l\left(
t,v\right) =\sum_{j=1}^{s}1\left\{ t+v\geq c_{j}\right\} $ for ordered $Y\in
\left\{ 0,1,\cdots ,s\right\} $ with known thresholds $c_{1}<c_{2}<\cdots
<c_{s}$, and so forth. Although setting $l\left( t,v\right) =t+v$ reduces to
the linear additive model, the following discussion doesn't include Theorem %
\ref{theorem:main} as a special case because $\gamma _{d}$ here can be only
identified up to scale for a general link function $l$. Any changes in the
scaling of $\gamma _{d}$ can be freely absorbed into $l$, and a scale
normalization is needed for identification of $\gamma _{d}$ and $l$.

\medskip

\textbf{Assumption D.N} (Normalization). \textit{Decompose }$X^{\prime
}\gamma _{d}=X^{C\prime }\gamma _{d}^{C}+X^{D\prime }\gamma _{d}^{D}$\textit{%
, where }$X^{C}$\textit{\ and }$X^{D}$\textit{\ consist of covariates that
are continuously and discretely distributed, respectively. Assume that }$%
\gamma _{d,1}^{C}$\textit{, the first element of }$\gamma _{d}^{C}$\textit{,
equal to 1.}

\medskip

Assumption D.N imposes the convenient normalization that the first
continuous covariate has a unit coefficient. This scaling of $\gamma _{d}$
is arbitrary and is innocuous because our focus is on the identification of
MTE, which is a difference in the conditional expectations of $l\left(
X^{\prime }\gamma _{d},V_{d}\right) $, rather than on separate
identification of $\gamma _{d}$ and $l$. Since in the generalized linear
model the potential outcomes are not necessarily additive in the error term,
the mean independence assumption needs to be strengthened to a stricter
distributional independence assumption.

\medskip

\textbf{Assumption D.CI} (Conditional Independence). \textit{Assume that }$%
\left. V_{d}\perp \!\!\!\!\perp X\right\vert V$\textit{\ for }$d=0,1$\textit{%
, namely, }$V_{d}$\textit{\ is independent of }$X$\textit{\ conditional on }$%
V$\textit{, where }$V$\textit{\ is the reduced-form treatment error in
equation (\ref{normalized}).}

\medskip

Recalling that $V\perp \!\!\!\!\perp X$ by definition, Assumption D.CI is
equivalent to the full independence $\left( V_{d},V\right) \perp
\!\!\!\!\perp X$ for $d=0,1$, which implies that both the marginal
distribution of $V_{d}$ and the copula of $V_{d}$ and $V$ are independent of 
$X$. Under Assumptions D.L and D.CI, we have 
\begin{equation*}
\Delta ^{\text{MTE}}\left( x,v\right) =E\left[ \left. l\left( x^{\prime
}\gamma _{1},V_{1}\right) -l\left( x^{\prime }\gamma _{0},V_{0}\right)
\right\vert V=v\right] .
\end{equation*}%
The additive nonseparability of the observables and unobservables
constitutes the primary difficulty in identifying MTE in the limited outcome
case. Meanwhile, Assumptions D.L and D.CI lead to a double index form of the
observable outcome regression functions for each treatment status, that is, 
\begin{equation}
E\left[ \left. Y\cdot 1\left\{ D=d\right\} \right\vert X=x\right]
=r_{d}\left( x^{\prime }\gamma _{d},\pi \left( x\right) \right)  \label{A1}
\end{equation}%
for $d=0,1$, where \label{A2} 
\begin{eqnarray}
r_{0}\left( t,p\right) &=&\int_{p}^{1}E\left[ \left. l\left( t,V_{0}\right)
\right\vert V=v\right] dv, \\
r_{1}\left( t,p\right) &=&\int_{0}^{p}E\left[ \left. l\left( t,V_{1}\right)
\right\vert V=v\right] dv.
\end{eqnarray}%
Provided that $\pi \left( x\right) $ is a nonlinear index, $\gamma _{d}$ and
thus $r_{d}$ can be identified based on functional forms, which ensures
identification of MTE since 
\begin{equation}
\Delta ^{\text{MTE}}\left( x,v\right) =\partial _{2}r_{1}\left( x^{\prime
}\gamma _{1},v\right) +\partial _{2}r_{0}\left( x^{\prime }\gamma
_{0},v\right) ,  \label{AMTE}
\end{equation}%
where $\partial _{2}r_{d}$ is the partial derivative of $r_{d}$ with respect
to its second argument. Like in the case of unlimited outcomes, the key
powers of this identification strategy are supplied by the nonlinearity of $%
\pi \left( x\right) $, as specified by the following assumption.

\medskip

\textbf{Assumption D.NL} (Non-Linearity). \textit{Assume that the functions }%
$\pi _{0}$\textit{\ and }$r_{d}$\textit{\ are differentiable and denote
their partial derivatives with respect to the }$k$\textit{-th argument as }$%
\partial _{k}\pi _{0}$\textit{\ and }$\partial _{k}r_{d}$\textit{, where }$%
\pi _{0}\left( x^{C}\right) =\pi \left( x^{C},0\right) $\textit{\ and }$%
r_{d} $\textit{, }$d=0,1$\textit{, are defined in (\ref{A1}) and (\ref{A2}).
Assume that there exist two vectors }$x^{C}$\textit{\ and }$\tilde{x}^{C}$%
\textit{\ on the support of }$X^{C}$\textit{\ and two elements }$k$\textit{\
and }$j$\textit{\ of the set }$\left\{ 1,2,\cdots ,\dim \left( X^{C}\right)
\right\} $\textit{\ such that }(i)\textit{\ }$\partial _{1}r_{d}\left(
x^{C\prime }\gamma _{d}^{C},\pi _{0}\left( x^{C}\right) \right) \neq 0$%
\textit{, }(ii)\textit{\ }$\partial _{1}r_{d}\left( \tilde{x}^{C\prime
}\gamma _{d}^{C},\pi _{0}\left( \tilde{x}^{C}\right) \right) \neq 0$\textit{%
, }(iii)\textit{\ }$\partial _{k}\pi _{0}\left( x^{C}\right) \neq \partial
_{1}\pi _{0}\left( x^{C}\right) \gamma _{d,k}^{C}$\textit{, }(iv)\textit{\ }$%
\partial _{j}\pi _{0}\left( x^{C}\right) \neq \partial _{1}\pi _{0}\left(
x^{C}\right) \gamma _{d,j}^{C}$\textit{, }(v)\textit{\ }$\partial _{l}\pi
_{0}\left( \tilde{x}^{C}\right) \neq \partial _{1}\pi _{0}\left( \tilde{x}%
^{C}\right) \gamma _{d,l}^{C}$\textit{\ for }$l=2,\cdots ,\dim \left(
X^{C}\right) $\textit{, and}%
\begin{eqnarray*}
&\text{(vi) }&\partial _{k}\pi _{0}\left( x^{C}\right) \partial _{j}\pi
_{0}\left( \tilde{x}^{C}\right) -\partial _{k}\pi _{0}\left( \tilde{x}%
^{C}\right) \partial _{j}\pi _{0}\left( x^{C}\right) \\
&&\neq \left[ \partial _{1}\pi _{0}\left( x^{C}\right) \partial _{j}\pi
_{0}\left( \tilde{x}^{C}\right) -\partial _{1}\pi _{0}\left( \tilde{x}%
^{C}\right) \partial _{j}\pi _{0}\left( x^{C}\right) \right] \gamma
_{d,k}^{C} \\
&&-\left[ \partial _{1}\pi _{0}\left( x^{C}\right) \partial _{k}\pi
_{0}\left( \tilde{x}^{C}\right) -\partial _{1}\pi _{0}\left( \tilde{x}%
^{C}\right) \partial _{k}\pi _{0}\left( x^{C}\right) \right] \gamma
_{d,j}^{C}.
\end{eqnarray*}

\medskip

Assumption D.NL.(i)-(ii) require $r_{d}$ to depend on the linear index, and
(iii)-(vi) essentially require some nonlinear variation in $\pi _{0}$ under
the scale normalization. As in Assumption NL, it is difficult to construct
examples other than $\pi _{0}\left( x^{C}\right) =f\left( x^{C\prime }\gamma
\right) $ that violates Assumption D.NL. Finally, a support assumption and
an invertibility assumption are imposed for technical reasons.

\medskip

\textbf{Assumption D.S} (Support). \textit{For each }$k\in \left\{
1,2,\cdots ,\dim \left( X^{D}\right) \right\} $\textit{, assume for some }$%
x_{k}^{D}\neq 0$\textit{\ in the support of }$X_{k}^{D}$\textit{\ that there
exists }$x^{C}\left( k\right) $\textit{\ in the support of }$X^{C}$\textit{\
such that }$\pi \left( x^{C}\left( k\right) ,x^{Dk}\right) $\textit{\ is in
the support of }$\pi _{0}\left( X^{C}\right) $\textit{\ and that }$%
x^{C}\left( k\right) ^{\prime }\gamma _{d}^{C}+x_{k}^{D}\gamma _{d,k}^{D}$%
\textit{\ is in the support of }$X^{C\prime }\gamma _{d}^{C}$\textit{\ for }$%
d=0,1$\textit{.}

\medskip

\textbf{Assumption D.I} (Invertibility). \textit{Assume that the function }$%
r_{d}$\textit{\ is invertible on its first argument for }$d=0,1$\textit{.}

\medskip

For the case of a binary outcome, we have $r_{0}\left( t,p\right)
=\int_{p}^{1}F_{\left. V_{0}\right\vert V}\left( \left. t\right\vert
v\right) dv$ and $r_{1}\left( t,p\right) =\int_{0}^{p}F_{\left.
V_{1}\right\vert V}\left( \left. t\right\vert v\right) dv$, hence a
sufficient condition for Assumption D.I to hold is that $V_{d}$ is
continuously distributed with support $%
\mathbb{R}
$, conditional on $V$, for $d=0,1$. Interestingly, the same continuous
distribution condition also suffices for Assumption D.I in the censored case
with $l\left( t,v\right) =\max \left( t-v,0\right) $, where $\partial
_{1}r_{0}\left( t,p\right) =\int_{p}^{1}F_{\left. V_{0}\right\vert V}\left(
\left. t\right\vert v\right) dv$ and $\partial _{1}r_{1}\left( t,p\right)
=\int_{0}^{p}F_{\left. V_{1}\right\vert V}\left( \left. t\right\vert
v\right) dv$. Under the imposed assumptions, the following identification
theorem for the generalized linear model follows immediately from 
\citet[Theorems
3.1-3.2]{escanciano2016identification}.

\begin{theorem}
\label{theorem:A} If Assumptions D.L, D.NL, D.CI, D.S, D.N, and D.I hold,
then $\gamma _{d}$ and $r_{d}\left( t,p\right) $ at all points in the
support of $X^{\prime }\gamma _{d}$\ and $P$ are identified for $d=0,1$.
\end{theorem}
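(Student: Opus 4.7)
The plan is to extend the construction-based argument of Theorem~\ref{theorem:main} to the generalized linear setting; as the authors remark, the identification of $\gamma_d$ then reduces to the double-index semiparametric identification results of \cite{escanciano2016identification}. Under Assumptions A.L and A.CI, a direct calculation yields the double-index representation
\[
E\left[Y\cdot 1\{D=d\}\,\big|\,X=x\right]=r_{d}\bigl(x^{\prime }\gamma _{d},\pi(x)\bigr),
\]
so that, restricting to $X^{D}=0$, the observable regression $m_{d0}\left(x^{C}\right)\equiv E\left[Y\cdot 1\{D=d\}\,\big|\,X^C=x^C,X^D=0\right]$ satisfies $m_{d0}\left(x^{C}\right)=r_{d}\bigl(x^{C\prime }\gamma _{d}^{C},\pi _{0}\left(x^{C}\right)\bigr)$. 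The central task is to disentangle the linear index from the nonlinear index inside the unknown $r_d$, which is precisely what Assumption A.NL is engineered to enable.

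First I would differentiate this identity with respect to $x_{k}^{C}$ to obtain
\[
\partial _{k}m_{d0}\left(x^{C}\right)=\partial _{1}r_{d}\cdot \gamma _{d,k}^{C}+\partial _{2}r_{d}\cdot \partial _{k}\pi _{0}\left(x^{C}\right),
\]
with $\partial _{1}r_{d},\partial _{2}r_{d}$ evaluated at $\bigl(x^{C\prime }\gamma _{d}^{C},\pi _{0}\left(x^{C}\right)\bigr)$. Setting $k=1$ and invoking the normalization $\gamma _{d,1}^{C}=1$ from A.N eliminates $\partial _{1}r_{d}$, and substituting back into the general-$k$ equation gives
\[
\partial _{k}m_{d0}\left(x^{C}\right)-\gamma _{d,k}^{C}\,\partial _{1}m_{d0}\left(x^{C}\right)=\partial _{2}r_{d}\,\bigl[\partial _{k}\pi _{0}\left(x^{C}\right)-\gamma _{d,k}^{C}\partial _{1}\pi _{0}\left(x^{C}\right)\bigr].
\]
A.NL (i)--(ii) ensure $\partial _{1}r_{d}\neq 0$ so that the linear index genuinely enters $r_d$, while (iii)--(iv) keep the bracketed denominator nonzero at $x^{C}$ for the chosen pair $(k,j)$. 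Isolating $\partial _{2}r_{d}$ from the $k$- and $j$-equations at $x^{C}$ and equating the two expressions cancels $\partial _{2}r_{d}$, and after expansion the bilinear cross-terms vanish, leaving a \emph{linear} equation in $(\gamma _{d,k}^{C},\gamma _{d,j}^{C})$. Repeating the operation at $\tilde{x}^{C}$, with denominators controlled by (v), yields a second linear equation; condition (vi) is precisely the statement that the determinant of this $2\times 2$ system is nonzero, so $\gamma _{d,k}^{C}$ and $\gamma _{d,j}^{C}$ are uniquely solvable. Once two coefficients are pinned down, the remaining coefficients $\gamma _{d,l}^{C}$ are recovered one at a time from the same single-point equation, which is now linear in a single unknown.

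Given $\gamma _{d}^{C}$, the function $r_{d}$ is identified on the range of $\bigl(X^{C\prime }\gamma _{d}^{C},\pi _{0}\left(X^{C}\right)\bigr)$ via $r_{d}\bigl(x^{C\prime }\gamma _{d}^{C},\pi _{0}\left(x^{C}\right)\bigr)=m_{d0}\left(x^{C}\right)$. To recover $\gamma _{d}^{D}$ coordinate by coordinate, I would invoke A.S: for each $k$ there exist $x^{C}(k)$ and a nonzero $x_{k}^{D}$ in the supports such that $\pi \bigl(x^{C}(k),x^{Dk}\bigr)$ lies in the range of $\pi _{0}\left(X^{C}\right)$ and $x^{C}(k)^{\prime }\gamma _{d}^{C}+x_{k}^{D}\gamma _{d,k}^{D}$ lies in the range of $X^{C\prime }\gamma _{d}^{C}$, so that the value of $r_{d}$ at the corresponding argument is already identified. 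Assumption A.I then lets me invert $r_{d}$ in its first argument and read off $\gamma _{d,k}^{D}$ from the observable $E\left[Y\cdot 1\{D=d\}\,\big|\,X=(x^{C}(k),x^{Dk})\right]$. With $\gamma _{d}$ fully identified, $r_{d}(t,p)$ is then identified on the full support of $(X^{\prime }\gamma _{d},P)$ by (\ref{A1}), and the MTE follows from (\ref{AMTE}).

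The main obstacle is the cross-point cancellation of $\partial _{2}r_{d}$: unlike in Theorem~\ref{theorem:main}, where the univariate nuisance $g_{d}^{\left(1\right) }\bigl(\pi _{0}\left(\cdot \right)\bigr)$ disappears as soon as one matches values of $\pi _{0}$ at two points, here $\partial _{2}r_{d}$ depends on two arguments and the first argument itself involves the unknown $\gamma _{d}^{C}$, so direct subtraction across $x^{C}$ and $\tilde{x}^{C}$ is unavailable. The within-point ratio device circumvents this but superficially produces a quadratic equation in $(\gamma _{d,k}^{C},\gamma _{d,j}^{C})$; the technical heart of the argument is verifying that the quadratic cross-terms cancel algebraically and that condition A.NL (vi) is exactly the determinantal non-degeneracy needed to pin down the two coefficients simultaneously. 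This cancellation is also what accounts for the elaborate form of A.NL relative to NL2.
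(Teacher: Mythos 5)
Your argument is correct, but it is not the route the paper takes: the paper offers no self-contained proof of Theorem \ref{theorem:A} at all, simply asserting that it ``follows immediately from'' Theorems 3.1--3.2 of \cite{escanciano2016identification}. What you have written is essentially the constructive proof that the citation hides, adapted from the paper's own proof of Theorem \ref{theorem:main} to the double-index setting. Your key computation checks out: differentiating $m_{d0}(x^{C})=r_{d}(x^{C\prime}\gamma_{d}^{C},\pi_{0}(x^{C}))$, eliminating $\partial_{1}r_{d}$ via the normalization $\gamma_{d,1}^{C}=1$, and cross-multiplying the two expressions for $\partial_{2}r_{d}$ does cancel the bilinear term $\gamma_{d,k}^{C}\gamma_{d,j}^{C}\,\partial_{1}m_{d0}\,\partial_{1}\pi_{0}$, leaving a genuinely linear equation in $(\gamma_{d,k}^{C},\gamma_{d,j}^{C})$; and since $\partial_{l}m_{d0}=\partial_{1}r_{d}\,\gamma_{d,l}^{C}+\partial_{2}r_{d}\,\partial_{l}\pi_{0}$, the $2\times 2$ determinant factors as $\partial_{1}r_{d}(x^{C})\,\partial_{1}r_{d}(\tilde{x}^{C})$ times exactly the difference of the two sides of A.NL (vi), so (i)--(ii) and (vi) together give nonsingularity, as you claim. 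Two small points you gloss over: recovering the remaining $\gamma_{d,l}^{C}$ from a single-point equation requires the coefficient multiplying $\gamma_{d,l}^{C}$ to be nonzero, which is what A.NL (v) (evaluated at $\tilde{x}^{C}$, together with (ii)) is there to guarantee; and the inversion step for $\gamma_{d,k}^{D}$ implicitly needs $r_{d}(\cdot,\pi(x^{C}(k),x^{Dk}))$ to be known on a first-argument range containing $x^{C}(k)^{\prime}\gamma_{d}^{C}+x_{k}^{D}\gamma_{d,k}^{D}$, a joint-support subtlety that Assumption A.S states only marginally---but that imprecision is inherited from the paper, not introduced by you. Relative to the paper's one-line citation, your version buys transparency (it shows precisely which sub-condition of A.NL does what, and exhibits the parallel with Theorem \ref{theorem:main}) at the cost of length; the citation buys brevity at the cost of requiring the reader to map Assumptions A.L--A.I onto the hypotheses of the external theorems.
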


Having established identification of $\gamma _{d}$ and $r_{d}$, the MTE for
limited valued outcomes can be identified by (\ref{AMTE}) and estimated by
the semiparametric least squares method of \cite%
{escanciano2016identification}. Alternatively, the LIV estimation procedure
may be adapted according to%
\begin{equation*}
E\left[ \left. Y\right\vert X=x\right] =s\left( x^{\prime }\gamma
_{0},x^{\prime }\gamma _{1},\pi \left( x\right) \right)
\end{equation*}%
and%
\begin{eqnarray*}
\Delta ^{\text{MTE}}\left( x,v\right) &=&\frac{\partial s\left( x^{\prime
}\gamma _{0},x^{\prime }\gamma _{1},v\right) }{\partial v}, \\
\Delta ^{\text{ATE}}\left( x\right) &=&s\left( x^{\prime }\gamma
_{0},x^{\prime }\gamma _{1},1\right) -s\left( x^{\prime }\gamma
_{0},x^{\prime }\gamma _{1},0\right) , \\
\Delta ^{\text{TT}}\left( x\right) &=&\frac{s\left( x^{\prime }\gamma
_{0},x^{\prime }\gamma _{1},\pi \left( x\right) \right) -s\left( x^{\prime
}\gamma _{0},x^{\prime }\gamma _{1},0\right) }{\pi \left( x\right) }, \\
\Delta ^{\text{TUT}}\left( x\right) &=&\frac{s\left( x^{\prime }\gamma
_{0},x^{\prime }\gamma _{1},1\right) -s\left( x^{\prime }\gamma
_{0},x^{\prime }\gamma _{1},\pi \left( x\right) \right) }{1-\pi \left(
x\right) }, \\
\Delta ^{\text{LATE}}\left( x,v_{1},v_{2}\right) &=&\frac{s\left( x^{\prime
}\gamma _{0},x^{\prime }\gamma _{1},v_{2}\right) -s\left( x^{\prime }\gamma
_{0},x^{\prime }\gamma _{1},v_{1}\right) }{v_{2}-v_{1}},
\end{eqnarray*}%
where%
\begin{equation*}
s\left( t_{0},t_{1},p\right) =r_{0}\left( t_{0},p\right) +r_{1}\left(
t_{1},p\right) .
\end{equation*}

\renewcommand{\theequation}{E.\arabic{equation}} \setcounter{equation}{0} %
\renewcommand*{\theHequation}{\theequation}

\renewcommand{\thelemma}{E.\arabic{lemma}} \setcounter{theorem}{0} %
\renewcommand*{\theHlemma}{\thelemma}

\section{Sufficient conditions and proofs for Theorem \protect\ref%
{theorem:AN1}}

\label{appendix:AN}

\subsection{Sufficient conditions}

We first give the assumptions that have been mentioned when presenting
Theorem \ref{theorem:AN1}. We denote $f_{X}\left( \cdot \right) $ and $%
f_{P}\left( \cdot \right) $ as the probability density function (PDF) of $%
X=\left( X^{C},X^{D}\right) $ and $P=\pi \left( X\right) $, respectively,
and $f_{P}^{\left( 1\right) }\left( p\right) =\left. \partial f_{P}\left(
p\right) \right/ \partial p$ as the derivative function of $f_{P}\left(
\cdot \right) $ if it is differentiable. For $d=0,1$, we define $%
V_{d}=Y-X^{\prime }\beta _{d}-g_{d}\left( P\right) $, which satisfies $E%
\left[ \left. V_{d}\right\vert X,D=d\right] =0$ because by definition $%
g_{d}\left( p\right) =E\left[ \left. Y-X^{\prime }\beta _{d}\right\vert \pi
\left( X\right) =p,D=d\right] $. And we denote $\sigma _{V_{d}}^{2}\left(
p\right) =Var\left( \left. V_{d}\right\vert P=p,D=d\right) =E\left[ \left.
V_{d}^{2}\right\vert P=p,D=d\right] $. Note that%
\begin{eqnarray*}
\sigma _{V_{d}}^{2}\left( p\right) &=&E\left[ \left. \left(
U_{d}-g_{d}\left( P\right) \right) ^{2}\right\vert P=p,D=d\right] \\
&=&E\left[ \left. U_{d}^{2}\right\vert P=p,D=d\right] -g_{d}\left( p\right)
\\
&=&\sigma _{U_{d}}^{2}\left( p\right) -g_{d}\left( p\right) .
\end{eqnarray*}%
In addition, we denote%
\begin{equation*}
\delta _{d}\left( p\right) =E\left[ \left. 1\left\{ D=d\right\} \right\vert
P=p\right] =\left\{ 
\begin{array}{cc}
p, & \text{if }d=1, \\ 
1-p, & \text{if }d=0,%
\end{array}%
\right.
\end{equation*}%
and%
\begin{equation*}
\delta _{d}^{\left( 1\right) }\left( p\right) =\frac{\partial \delta
_{d}\left( p\right) }{\partial p}=\left\{ 
\begin{array}{cc}
1, & \text{if }d=1, \\ 
-1, & \text{if }d=0.%
\end{array}%
\right.
\end{equation*}

\medskip

\textbf{Assumption E.1} (First step). For an integer $s\geq 2$, we assume
that:

(i) $\pi \left( x^{C},x^{D}\right) $ and $f_{X}\left( x^{C},x^{D}\right) $
are $s$-times differentiable with respect to $x^{C}$ for all $x^{D}$, and
the derivative functions are all Lipschitz continuous and bounded;

(ii) $\inf_{x^{C}}f_{X}\left( x^{C},x^{D}\right) \geq c$ for a positive
constant $c$ for all $x^{D}$;

(iii) $k_{1}\left( \cdot \right) $ is a bounded $s$-th order kernel function
satisfying that $\int k_{1}\left( u\right) du=1$, $\int u^{t}k_{1}\left(
u\right) du=0$ for $t=1,\cdots ,s-1$, and $\int u^{s}k_{1}\left( u\right)
du\neq 0$, and the functions $K_{1l}\left( u\right) =\left\vert u\right\vert
^{l}k_{1}\left( u\right) $ are Lipschitz continuous for all $0\leq l\leq s+1$%
.

(iv) $h_{1l}\rightarrow 0$ for $l=1,2,\cdots ,\dim \left( X^{C}\right) $,
and $\left. nh_{11}h_{12}\cdots h_{1,\dim \left( X^{C}\right) }\right/ \ln
n\rightarrow \infty $, as $n\rightarrow \infty $.

\medskip

\textbf{Assumption E.2} (Second step). For $d=0,1$, we assume that:

(i) $\left\Vert \hat{\beta}_{d}-\beta _{d}\right\Vert =O_{p}\left(
n^{-1/2}\right) $;

(ii) $g_{d}\left( \cdot \right) $, $f_{P}\left( \cdot \right) $, and $\sigma
_{V_{d}}^{2}\left( \cdot \right) $ are twice continuously differentiable
over the support of $P$, and $\inf_{p\in \left[ 0,1\right] }f_{P}\left(
p\right) \geq c$ for a positive constant $c$.

(iii) $k_{3}\left( \cdot \right) $ is a bounded, symmetric, compactly
supported, and three times continuously differentiable kernel function;

(iv) $h_{3}\rightarrow 0$, $nh_{3}^{3}\rightarrow \infty $, and $%
nh_{3}^{7}\rightarrow 0$, as $n\rightarrow \infty $.

\medskip

\textbf{Assumption E.3} (Rate). We assume that as $n\rightarrow \infty $,
(i) $nh_{3}\eta _{1}^{2}\rightarrow 0$, (ii) $\left. n\eta _{1}^{6}\right/
h_{3}^{7}\rightarrow 0$, (iii) $nh_{3}^{3}\left\vert h_{1}\right\vert
\rightarrow \infty $, and (iv) $n^{3/2}h_{3}^{5}\left\vert h_{1}\right\vert
^{2}\rightarrow \infty $, where%
\begin{eqnarray*}
\eta _{1} &=&tr\left( h_{1}^{s}\right) +\sqrt{\frac{\ln n}{n\left\vert
h_{1}\right\vert }}, \\
tr\left( h_{1}^{s}\right) &=&\sum_{l=1}^{\dim \left( X^{C}\right)
}h_{1l}^{s}, \\
\left\vert h_{1}\right\vert &=&h_{11}h_{12}\cdots h_{1,\dim \left(
X^{C}\right) }.
\end{eqnarray*}

\medskip

\textbf{Assumption E.4} (Moment). We assume that $E\left[ \left\Vert
X\right\Vert ^{2}\right] <\infty $ and $E\left[ V_{d}^{2}\right] <\infty $
for $d=0,1$.

\medskip

Assumption E.1.(iii) employs the higher-order kernel device to reduce the
bias and improve the convergence rate of the first-step estimation (\ref%
{PSfunction}). Since the first step is a multivariate kernel regression, the
usual symmetric (second-order) kernel will necessarily lead to a rate of
convergence that dominates the subsequent local linear and MTE estimation,
which is apparently undesirable. By using the higher-order kernel, Lemma \ref%
{lemma:firststep} below implies that%
\begin{equation*}
\sup_{x}\left\vert \hat{\pi}\left( x\right) -\pi \left( x\right) \right\vert
=O_{p}\left( \eta _{1}\right) .
\end{equation*}%
Therefore, Assumption E.3.(i) ensures the asymptotic negligibility of the
first-step estimation error relative to the subsequent local linear
estimation. If $h_{1l}$ follows a rate $O\left( n^{-1\left/ \left( 2s+\dim
\left( X^{C}\right) \right) \right. }\right) $ that optimally balances the
squared bias and the variance, the first-step rate of convergence will
become $O_{p}\left( \left( \ln n\right) ^{1/2}n^{-s\left/ \left( 2s+\dim
\left( X^{C}\right) \right) \right. }\right) $. Further suppose that $h_{3}$
follows a rule-of-thumb rate $O\left( n^{-1/5}\right) $, then Assumption
E.3.(i) will reduce to $s>2\dim \left( X^{C}\right) $. Note that under the
supposed rates of $h_{1l}$ and $h_{3}$, Assumption E.3.(ii), (iii), and (iv)
will reduce to $s>2\dim \left( X^{C}\right) $, $s>\left( 3/4\right) \dim
\left( X^{C}\right) $, and $s>\left( 3/2\right) \dim \left( X^{C}\right) $,
respectively, which are all implied by Assumption E.3.(i).

Assumption E.2.(i) imposes a weak rate condition on the slope estimation in
the second step. The $\sqrt{n}$-consistency of the weighted pairwise
difference estimation (\ref{PDLS}) has been established by 
\citet[Corollary
3.1]{ahn1993semiparametric}. Actually, most semiparametric estimators for
sample selection models satisfy Assumption E.2.(i) as well, such as the
Robinson-type partialling-out estimator \cite[p.307]{pagan1999nonparametric}%
, the symmetry-based pairwise difference estimator %
\citep{chen2010semiparametric}, and the integrated least squares estimator %
\citep{pan2022semiparametric}, among others.

\subsection{Proof of Theorem \protect\ref{theorem:AN1}}

We first develop the asymptotic normality of the local linear estimator $%
\left( \hat{g}_{d}\left( p\right) ,\hat{g}_{d}^{\left( 1\right) }\left(
p\right) \right) $ for $d=0,1$. To this end, we define the infeasible
estimator as%
\begin{equation*}
\left( 
\begin{array}{c}
\tilde{g}_{d}\left( p\right) \\ 
\tilde{g}_{d}^{\left( 1\right) }\left( p\right)%
\end{array}%
\right) =\left[ \sum_{i=1}^{n}w_{di}\left( p\right) \left( 
\begin{array}{c}
1 \\ 
P_{i}-p%
\end{array}%
\right) \left( 
\begin{array}{c}
1 \\ 
P_{i}-p%
\end{array}%
\right) ^{\prime }\right] ^{-1}\left[ \sum_{i=1}^{n}w_{di}\left( p\right)
\left( 
\begin{array}{c}
1 \\ 
P_{i}-p%
\end{array}%
\right) \left( Y_{i}-X_{i}^{\prime }\beta _{d}\right) \right] ,
\end{equation*}%
where%
\begin{equation*}
w_{di}\left( p\right) =1\left\{ D_{i}=d\right\} \frac{1}{h_{3}}k_{3}\left( 
\frac{P_{i}-p}{h_{3}}\right) .
\end{equation*}%
We will analyze $\left( \tilde{g}_{d}\left( p\right) ,\tilde{g}_{d}^{\left(
1\right) }\left( p\right) \right) $ and $\left( \hat{g}_{d}\left( p\right) ,%
\hat{g}_{d}^{\left( 1\right) }\left( p\right) \right) -\left( \tilde{g}%
_{d}\left( p\right) ,\tilde{g}_{d}^{\left( 1\right) }\left( p\right) \right) 
$ in order.

The asymptotic normality of $\left( \tilde{g}_{d}\left( p\right) ,\tilde{g}%
_{d}^{\left( 1\right) }\left( p\right) \right) $ can be established by
standard arguments for kernel regression 
\citep[e.g.,][Subsection
2.7.2]{li2007nonparametric}, except for the presence of the treatment status
indicator $1\left\{ D_{i}=d\right\} $ in our case. For completeness, we
provide a proof for the asymptotic normality of $\left( \tilde{g}_{d}\left(
p\right) ,\tilde{g}_{d}^{\left( 1\right) }\left( p\right) \right) $ in Lemma %
\ref{lemma:secondstep}. In the proof, we obtain an asymptotically linear
representation of $\left( \tilde{g}_{d}\left( p\right) ,\tilde{g}%
_{d}^{\left( 1\right) }\left( p\right) \right) $ in (\ref{AsymptoticLinear}):%
\begin{equation}
J_{n}\left[ \left( 
\begin{array}{c}
\tilde{g}_{d}\left( p\right) \\ 
\tilde{g}_{d}^{\left( 1\right) }\left( p\right)%
\end{array}%
\right) -\left( 
\begin{array}{c}
g_{d}\left( p\right) \\ 
g_{d}^{\left( 1\right) }\left( p\right)%
\end{array}%
\right) -\left( 
\begin{array}{c}
\frac{\kappa _{2}}{2}g_{d}^{\left( 2\right) }\left( p\right) h_{3}^{2} \\ 
0%
\end{array}%
\right) \right] =\Gamma _{d}J_{n}\frac{1}{n}\sum_{i=1}^{n}w_{di}\left(
p\right) \left( 
\begin{array}{c}
1 \\ 
\frac{P_{i}-p}{h_{3}^{2}}%
\end{array}%
\right) V_{di}+o_{p}\left( 1\right) ,  \label{gtilde}
\end{equation}%
where%
\begin{eqnarray*}
\Gamma _{d} &=&\left( 
\begin{array}{cc}
1\left/ \left[ \delta _{d}\left( p\right) f_{P}\left( p\right) \right]
\right. , & 0 \\ 
0, & 1\left/ \left[ \kappa _{2}\delta _{d}\left( p\right) f_{P}\left(
p\right) \right] \right.%
\end{array}%
\right) , \\
J_{n} &=&\left( 
\begin{array}{cc}
\sqrt{nh_{3}} & 0 \\ 
0 & \sqrt{nh_{3}^{3}}%
\end{array}%
\right) .
\end{eqnarray*}%
We next consider $J_{n}\left[ \left( \hat{g}_{d}\left( p\right) ,\hat{g}%
_{d}^{\left( 1\right) }\left( p\right) \right) -\left( \tilde{g}_{d}\left(
p\right) ,\tilde{g}_{d}^{\left( 1\right) }\left( p\right) \right) \right] $.

For $r=0,1,2$, we denote%
\begin{eqnarray*}
\hat{B}_{r} &=&\frac{1}{n}\sum_{i}^{n}\hat{w}_{di}\left( p\right) \left( 
\hat{P}_{i}-p\right) ^{r}, \\
B_{r} &=&\frac{1}{n}\sum_{i=1}^{n}w_{di}\left( p\right) \left(
P_{i}-p\right) ^{r},
\end{eqnarray*}%
and%
\begin{equation*}
\hat{Q}=\left( 
\begin{array}{cc}
\hat{B}_{0}, & \hat{B}_{1} \\ 
\hat{B}_{1}, & \hat{B}_{2}%
\end{array}%
\right) ,\text{ }Q=\left( 
\begin{array}{cc}
B_{0}, & B_{1} \\ 
B_{1}, & B_{2}%
\end{array}%
\right) ,
\end{equation*}%
then%
\begin{eqnarray*}
\left( 
\begin{array}{c}
\hat{g}_{d}\left( p\right) \\ 
\hat{g}_{d}^{\left( 1\right) }\left( p\right)%
\end{array}%
\right) &=&\hat{Q}^{-1}\frac{1}{n}\sum_{i}\hat{w}_{di}\left( p\right) \left( 
\begin{array}{c}
1 \\ 
\hat{P}_{i}-p%
\end{array}%
\right) \left( Y_{i}-X_{i}^{\prime }\hat{\beta}_{d}\right) \\
&=&\frac{1}{\left\vert \hat{Q}\right\vert }\frac{1}{n}\sum_{i}\hat{w}%
_{di}\left( p\right) \left( 
\begin{array}{c}
\hat{B}_{2}-\left( \hat{P}_{i}-p\right) \hat{B}_{1} \\ 
-\hat{B}_{1}+\left( \hat{P}_{i}-p\right) \hat{B}_{0}%
\end{array}%
\right) \left( Y_{i}-X_{i}^{\prime }\hat{\beta}_{d}\right) , \\
\left( 
\begin{array}{c}
\tilde{g}_{d}\left( p\right) \\ 
\tilde{g}_{d}^{\left( 1\right) }\left( p\right)%
\end{array}
\right) &=&\frac{1}{\left\vert Q\right\vert }\frac{1}{n}\sum_{i}w_{di}\left(
p\right) \left( 
\begin{array}{c}
B_{2}-\left( P_{i}-p\right) B_{1} \\ 
-B_{1}+\left( P_{i}-p\right) B_{0}%
\end{array}%
\right) \left( Y_{i}-X_{i}^{\prime }\beta _{d}\right) ,
\end{eqnarray*}%
where%
\begin{equation*}
\left\vert \hat{Q}\right\vert =\hat{B}_{0}\hat{B}_{2}-\hat{B}_{1}^{2},\text{ 
}\left\vert Q\right\vert =B_{0}B_{2}-B_{1}^{2}.
\end{equation*}%
For $r=0,1$, Further denote%
\begin{eqnarray*}
\hat{L}_{r} &=&\frac{1}{n}\sum_{i}\hat{w}_{di}\left( p\right) \left[ \hat{B}%
_{r+1}-\left( \hat{P}_{i}-p\right) \hat{B}_{r}\right] \left(
Y_{i}-X_{i}^{\prime }\hat{\beta}_{d}\right) , \\
L_{r} &=&\frac{1}{n}\sum_{i}w_{di}\left( p\right) \left[ B_{r+1}-\left(
P_{i}-p\right) B_{r}\right] \left( Y_{i}-X_{i}^{\prime }\beta _{d}\right) ,
\end{eqnarray*}%
then we have%
\begin{equation*}
\left( 
\begin{array}{c}
\hat{g}_{d}\left( p\right) \\ 
\hat{g}_{d}^{\left( 1\right) }\left( p\right)%
\end{array}%
\right) =\frac{1}{\left\vert \hat{Q}\right\vert }\left( 
\begin{array}{c}
\hat{L}_{1} \\ 
-\hat{L}_{0}%
\end{array}
\right) ,\text{ }\left( 
\begin{array}{c}
\tilde{g}_{d}\left( p\right) \\ 
\tilde{g}_{d}^{\left( 1\right) }\left( p\right)%
\end{array}%
\right) =\frac{1}{\left\vert Q\right\vert }\left( 
\begin{array}{c}
L_{1} \\ 
-L_{0}%
\end{array}
\right) .
\end{equation*}

It follows from Lemma \ref{lemma:secondstep} that%
\begin{eqnarray*}
\frac{L_{1}}{\left\vert Q\right\vert } &=&g_{d}\left( p\right) +O_{p}\left(
h_{3}^{2}+\frac{1}{\sqrt{nh_{3}}}\right) =g_{d}\left( p\right) +o_{p}\left(
h_{3}\right) , \\
\frac{L_{0}}{\left\vert Q\right\vert } &=&-g_{d}^{\left( 1\right) }\left(
p\right) +O_{p}\left( \frac{1}{\sqrt{nh_{3}^{3}}}\right) =-g_{d}^{\left(
1\right) }\left( p\right) +o_{p}\left( 1\right) .
\end{eqnarray*}%
Therefore,%
\begin{eqnarray}
\hat{g}_{d}\left( p\right) -\tilde{g}_{d}\left( p\right) &=&\frac{\hat{L}_{1}%
}{\left\vert \hat{Q}\right\vert }-\frac{L_{1}}{\left\vert Q\right\vert }=%
\left[ \left( \hat{L}_{1}-L_{1}\right) -\left( \left\vert \hat{Q}\right\vert
-\left\vert Q\right\vert \right) \frac{L_{1}}{\left\vert Q\right\vert }%
\right] \frac{1}{\left\vert \hat{Q}\right\vert }  \notag \\
&=&\frac{\left( \hat{L}_{1}-L_{1}\right) -\left( \left\vert \hat{Q}%
\right\vert -\left\vert Q\right\vert \right) g_{d}\left( p\right)
+o_{p}\left( h_{3}\left( \left\vert \hat{Q}\right\vert -\left\vert
Q\right\vert \right) \right) }{\left\vert \hat{Q}\right\vert },
\label{gdhat-tilde} \\
\hat{g}_{d}^{\left( 1\right) }\left( p\right) -\tilde{g}_{d}^{\left(
1\right) }\left( p\right) &=&-\frac{\left( \hat{L}_{0}-L_{0}\right) +\left(
\left\vert \hat{Q}\right\vert -\left\vert Q\right\vert \right) g_{d}^{\left(
1\right) }\left( p\right) +o_{p}\left( \left\vert \hat{Q}\right\vert
-\left\vert Q\right\vert \right) }{\left\vert \hat{Q}\right\vert }.
\label{gd1hat-tilde}
\end{eqnarray}%
Denote%
\begin{equation}
V_{di}\left( p\right) =Y_{i}-X_{i}^{\prime }\beta _{d}-g_{d}\left( p\right)
=V_{di}+g_{d}\left( P_{i}\right) -g_{d}\left( p\right) ,  \label{Vdip}
\end{equation}%
which satisfies $E\left[ \left. V_{di}\left( p\right) \right\vert
X_{i},D_{i}=d\right] =g_{d}\left( P_{i}\right) -g_{d}\left( p\right) $ by
definition, and denote%
\begin{eqnarray*}
\hat{C}_{r} &=&\frac{1}{n}\sum_{i}^{n}\hat{w}_{di}\left( p\right) \left( 
\hat{P}_{i}-p\right) ^{r}V_{di}\left( p\right) , \\
C_{r} &=&\frac{1}{n}\sum_{i=1}^{n}w_{di}\left( p\right) \left(
P_{i}-p\right) ^{r}V_{di}\left( p\right) , \\
\hat{G}_{r} &=&\frac{1}{n}\sum_{i}^{n}\hat{w}_{di}\left( p\right) \left( 
\hat{P}_{i}-p\right) ^{r}X_{i},
\end{eqnarray*}%
for $r=0,1$. By several calculations,%
\begin{eqnarray*}
\left( \hat{L}_{1}-L_{1}\right) -\left( \left\vert \hat{Q}\right\vert
-\left\vert Q\right\vert \right) g_{d}\left( p\right) &=&\left( \hat{C}_{0}%
\hat{B}_{2}-C_{0}B_{2}\right) -\left( \hat{C}_{1}\hat{B}_{1}-C_{1}B_{1}%
\right) \\
&&-\left( \hat{\beta}_{d}-\beta _{d}\right) ^{\prime }\left( \hat{B}_{2}\hat{%
G}_{0}-\hat{B}_{1}\hat{G}_{1}\right) , \\
\hat{L}_{0}-L_{0} &=&\left( \hat{C}_{0}\hat{B}_{1}-C_{0}B_{1}\right) -\left( 
\hat{C}_{1}\hat{B}_{0}-C_{1}B_{0}\right) \\
&&-\left( \hat{\beta}_{d}-\beta _{d}\right) ^{\prime }\left( \hat{B}_{1}\hat{%
G}_{0}-\hat{B}_{0}\hat{G}_{1}\right) , \\
\left\vert \hat{Q}\right\vert -\left\vert Q\right\vert &=&\left( \hat{B}_{0}%
\hat{B}_{2}-B_{0}B_{2}\right) -\left( \hat{B}_{1}^{2}-B_{1}^{2}\right) .
\end{eqnarray*}

It follows from Lemmas \ref{lemma:kernel}-\ref{lemma:Grhat} that%
\begin{eqnarray*}
\hat{C}_{0}\hat{B}_{2}-C_{0}B_{2} &=&\left( \hat{C}_{0}-C_{0}\right) \left( 
\hat{B}_{2}-B_{2}\right) +C_{0}\left( \hat{B}_{2}-B_{2}\right) +\left( \hat{C%
}_{0}-C_{0}\right) B_{2} \\
&=&O_{p}\left( \frac{1}{\sqrt{nh_{3}}}\right) o_{p}\left( \sqrt{\frac{h_{3}}{%
n}}\right) +o_{p}\left( h_{3}\right) o_{p}\left( \sqrt{\frac{h_{3}}{n}}%
\right) \\
&&+\left( \hat{C}_{0}-C_{0}\right) \left[ \kappa _{2}\delta _{d}\left(
p\right) f_{P}\left( p\right) h_{3}^{2}+o_{p}\left( h_{3}^{3}\right) \right]
\\
&=&\frac{\kappa _{2}\delta _{d}\left( p\right) f_{P}\left( p\right)
g_{d}^{\left( 1\right) }\left( p\right) }{n}\sum_{i=1}^{n}\delta _{d}\left(
P_{i}\right) k_{3}^{\left( 1\right) }\left( \frac{P_{i}-p}{h_{3}}\right)
\left( P_{i}-p\right) \left( D_{i}-P_{i}\right) \\
&&+o_{p}\left( \sqrt{\frac{h_{3}^{3}}{n}}\right) ,
\end{eqnarray*}%
\begin{eqnarray*}
\hat{C}_{1}\hat{B}_{1}-C_{1}B_{1} &=&\left( \hat{C}_{1}-C_{1}\right) \left( 
\hat{B}_{1}-B_{1}\right) +\left( \hat{C}_{1}-C_{1}\right) B_{1}+C_{1}\left( 
\hat{B}_{1}-B_{1}\right) \\
&=&O_{p}\left( \sqrt{\frac{h_{3}}{n}}\right) o_{p}\left( \frac{1}{\sqrt{%
nh_{3}}}\right) +O_{p}\left( \sqrt{\frac{h_{3}}{n}}\right) O_{p}\left(
h_{3}^{2}\right) +O_{p}\left( h_{3}^{2}\right) o_{p}\left( \frac{1}{\sqrt{%
nh_{3}}}\right) \\
&=&o_{p}\left( \sqrt{\frac{h_{3}^{3}}{n}}\right) ,
\end{eqnarray*}%
\begin{eqnarray*}
\left( \hat{\beta}_{d}-\beta _{d}\right) ^{\prime }\left( \hat{B}_{2}\hat{G}%
_{0}-\hat{B}_{1}\hat{G}_{1}\right) &\leq &\left\Vert \hat{\beta}_{d}-\beta
_{d}\right\Vert \left[ 
\begin{array}{c}
\left( \left\vert \hat{B}_{2}-B_{2}\right\vert +\left\vert B_{2}\right\vert
\right) \left\Vert \hat{G}_{0}\right\Vert \\ 
+\left( \left\vert \hat{B}_{1}-B_{1}\right\vert +\left\vert B_{1}\right\vert
\right) \left\Vert \hat{G}_{1}\right\Vert%
\end{array}%
\right] \\
&=&O_{p}\left( \frac{1}{\sqrt{n}}\right) \left[ 
\begin{array}{c}
\left( o_{p}\left( \sqrt{\frac{h_{3}}{n}}\right) +O_{p}\left(
h_{3}^{2}\right) \right) O_{p}\left( 1\right) \\ 
+\left( o_{p}\left( \frac{1}{\sqrt{nh_{3}}}\right) +O_{p}\left(
h_{3}^{2}\right) \right) O_{p}\left( h_{3}\right)%
\end{array}%
\right] \\
&=&o_{p}\left( \sqrt{\frac{h_{3}^{3}}{n}}\right) ,
\end{eqnarray*}%
\begin{eqnarray*}
\hat{C}_{0}\hat{B}_{1}-C_{0}B_{1} &=&\left( \hat{C}_{0}-C_{0}\right) \left( 
\hat{B}_{1}-B_{1}\right) +C_{0}\left( \hat{B}_{1}-B_{1}\right) +\left( \hat{C%
}_{0}-C_{0}\right) B_{1} \\
&=&O_{p}\left( \frac{1}{\sqrt{nh_{3}}}\right) o_{p}\left( \frac{1}{\sqrt{%
nh_{3}}}\right) +o_{p}\left( h_{3}\right) o_{p}\left( \frac{1}{\sqrt{nh_{3}}}%
\right) +O_{p}\left( \frac{1}{\sqrt{nh_{3}}}\right) O_{p}\left(
h_{3}^{2}\right) \\
&=&o_{p}\left( \sqrt{\frac{h_{3}}{n}}\right) ,
\end{eqnarray*}%
\begin{eqnarray*}
\hat{C}_{1}\hat{B}_{0}-C_{1}B_{0} &=&\left( \hat{C}_{1}-C_{1}\right) \left( 
\hat{B}_{0}-B_{0}\right) +C_{1}\left( \hat{B}_{0}-B_{0}\right) +\left( \hat{C%
}_{1}-C_{1}\right) B_{0} \\
&=&O_{p}\left( \sqrt{\frac{h_{3}}{n}}\right) o_{p}\left( \frac{1}{\sqrt{%
nh_{3}^{3}}}\right) +O_{p}\left( h_{3}^{2}\right) o_{p}\left( \frac{1}{\sqrt{%
nh_{3}^{3}}}\right) \\
&&+\left( \hat{C}_{1}-C_{1}\right) \left[ \delta _{d}\left( p\right)
f_{P}\left( p\right) +o_{p}\left( h_{3}\right) \right] \\
&=&\frac{\delta _{d}\left( p\right) f_{P}\left( p\right) g_{d}^{\left(
1\right) }\left( p\right) }{n}\sum_{i=1}^{n}\delta _{d}\left( P_{i}\right) %
\left[ k_{3}\left( \frac{P_{i}-p}{h_{3}}\right) +k_{3}^{\left( 1\right)
}\left( \frac{P_{i}-p}{h_{3}}\right) \frac{P_{i}-p}{h_{3}}\right] \\
&&\cdot \frac{P_{i}-p}{h_{3}}\left( D_{i}-P_{i}\right) +o_{p}\left( \sqrt{%
\frac{h_{3}}{n}}\right) ,
\end{eqnarray*}%
\begin{eqnarray*}
\left( \hat{\beta}_{d}-\beta _{d}\right) ^{\prime }\left( \hat{B}_{1}\hat{G}%
_{0}-\hat{B}_{0}\hat{G}_{1}\right) &\leq &\left\Vert \hat{\beta}_{d}-\beta
_{d}\right\Vert \left[ 
\begin{array}{c}
\left( \left\vert \hat{B}_{1}-B_{1}\right\vert +\left\vert B_{1}\right\vert
\right) \left\Vert \hat{G}_{0}\right\Vert \\ 
+\left( \left\vert \hat{B}_{0}-B_{0}\right\vert +\left\vert B_{0}\right\vert
\right) \left\Vert \hat{G}_{1}\right\Vert%
\end{array}%
\right] \\
&=&O_{p}\left( \frac{1}{\sqrt{n}}\right) \left[ 
\begin{array}{c}
\left( o_{p}\left( \frac{1}{\sqrt{nh_{3}}}\right) +O_{p}\left(
h_{3}^{2}\right) \right) O_{p}\left( 1\right) \\ 
+\left( o_{p}\left( \frac{1}{\sqrt{nh_{3}^{3}}}\right) +O_{p}\left( 1\right)
\right) O_{p}\left( h_{3}\right)%
\end{array}%
\right] \\
&=&o_{p}\left( \sqrt{\frac{h_{3}}{n}}\right) ,
\end{eqnarray*}%
\begin{eqnarray*}
\hat{B}_{0}\hat{B}_{2}-B_{0}B_{2} &=&\left( \hat{B}_{0}-B_{0}\right) \left( 
\hat{B}_{2}-B_{2}\right) +B_{0}\left( \hat{B}_{2}-B_{2}\right) +\left( \hat{B%
}_{0}-B_{0}\right) B_{2} \\
&=&o_{p}\left( \frac{1}{\sqrt{nh_{3}^{3}}}\right) o_{p}\left( \sqrt{\frac{%
h_{3}}{n}}\right) +O_{p}\left( 1\right) o_{p}\left( \sqrt{\frac{h_{3}}{n}}%
\right) +o_{p}\left( \frac{1}{\sqrt{nh_{3}^{3}}}\right) O_{p}\left(
h_{3}^{2}\right) \\
&=&o_{p}\left( \sqrt{\frac{h_{3}}{n}}\right) ,
\end{eqnarray*}%
and%
\begin{eqnarray*}
\hat{B}_{1}^{2}-B_{1}^{2} &=&\left( \hat{B}_{1}-B_{1}\right) \left[ \left( 
\hat{B}_{1}-B_{1}\right) +2B_{1}\right] \\
&=&o_{p}\left( \frac{1}{\sqrt{nh_{3}}}\right) \left[ o_{p}\left( \frac{1}{%
\sqrt{nh_{3}}}\right) +O_{p}\left( h_{3}^{2}\right) \right] \\
&=&o_{p}\left( \sqrt{\frac{h_{3}}{n}}\right) .
\end{eqnarray*}%
Therefore,%
\begin{eqnarray}
\left( \hat{L}_{1}-L_{1}\right) -\left( \left\vert \hat{Q}\right\vert
-\left\vert Q\right\vert \right) g_{d}\left( p\right) &=&\frac{\kappa
_{2}\delta _{d}\left( p\right) f_{P}\left( p\right) g_{d}^{\left( 1\right)
}\left( p\right) }{n}\sum_{i=1}^{n}\delta _{d}\left( P_{i}\right)
k_{3}^{\left( 1\right) }\left( \frac{P_{i}-p}{h_{3}}\right)  \notag \\
&&\cdot \left( P_{i}-p\right) \left( D_{i}-P_{i}\right) +o_{p}\left( \sqrt{%
\frac{h_{3}^{3}}{n}}\right) ,  \label{L1hat}
\end{eqnarray}%
\begin{eqnarray}
\hat{L}_{0}-L_{0} &=&-\frac{\delta _{d}\left( p\right) f_{P}\left( p\right)
g_{d}^{\left( 1\right) }\left( p\right) }{n}\sum_{i=1}^{n}\delta _{d}\left(
P_{i}\right) \left[ k_{3}\left( \frac{P_{i}-p}{h_{3}}\right) +k_{3}^{\left(
1\right) }\left( \frac{P_{i}-p}{h_{3}}\right) \frac{P_{i}-p}{h_{3}}\right] 
\notag \\
&&\cdot \frac{P_{i}-p}{h_{3}}\left( D_{i}-P_{i}\right) +o_{p}\left( \sqrt{%
\frac{h_{3}}{n}}\right) ,  \label{L0hat}
\end{eqnarray}%
and%
\begin{equation}
\left\vert \hat{Q}\right\vert -\left\vert Q\right\vert =o_{p}\left( \sqrt{%
\frac{h_{3}}{n}}\right) .  \label{detQhat}
\end{equation}%
In addition, it follows from Lemma \ref{lemma:kernel} that%
\begin{equation*}
\left\vert Q\right\vert =\kappa _{2}\delta _{d}^{2}\left( p\right)
f_{P}^{2}\left( p\right) h_{3}^{2}+o_{p}\left( h_{3}^{2}\right) .
\end{equation*}%
Hence,%
\begin{equation*}
\frac{\left\vert \hat{Q}\right\vert }{\left\vert Q\right\vert }-1=\frac{%
\left\vert \hat{Q}\right\vert -\left\vert Q\right\vert }{\left\vert
Q\right\vert }=\frac{o_{p}\left( 1\left/ \sqrt{nh_{3}^{3}}\right. \right) }{%
\kappa _{2}\delta _{d}^{2}\left( p\right) f_{P}^{2}\left( p\right)
+o_{p}\left( 1\right) }=o_{p}\left( \frac{1}{\sqrt{nh_{3}^{3}}}\right)
=o_{p}\left( 1\right) ,
\end{equation*}%
and%
\begin{equation}
\frac{1}{\left\vert \hat{Q}\right\vert }=\frac{1}{\left\vert Q\right\vert }%
\left( 1+o_{p}\left( 1\right) \right) =\frac{1}{\kappa _{2}\delta
_{d}^{2}\left( p\right) f_{P}^{2}\left( p\right) h_{3}^{2}}\left(
1+o_{p}\left( 1\right) \right) .  \label{1/detQhat}
\end{equation}

Substituting (\ref{L1hat})-(\ref{1/detQhat}) into (\ref{gdhat-tilde})-(\ref%
{gd1hat-tilde}) yields that%
\begin{eqnarray*}
\hat{g}_{d}\left( p\right) -\tilde{g}_{d}\left( p\right) &=&\left( \frac{%
g_{d}^{\left( 1\right) }\left( p\right) }{\delta _{d}\left( p\right)
f_{P}\left( p\right) }\right) \frac{1}{n}\sum_{i=1}^{n}\delta _{d}\left(
P_{i}\right) \frac{P_{i}-p}{h_{3}^{2}}k_{3}^{\left( 1\right) }\left( \frac{%
P_{i}-p}{h_{3}}\right) \left( D_{i}-P_{i}\right) \\
&&+o_{p}\left( \sqrt{\frac{1}{nh_{3}}}\right) , \\
\hat{g}_{d}^{\left( 1\right) }\left( p\right) -\tilde{g}_{d}^{\left(
1\right) }\left( p\right) &=&\left( \frac{g_{d}^{\left( 1\right) }\left(
p\right) }{\kappa _{2}\delta _{d}\left( p\right) f_{P}\left( p\right) }%
\right) \frac{1}{n}\sum_{i=1}^{n}\delta _{d}\left( P_{i}\right) \left[ \frac{%
1}{h_{3}}k_{3}\left( \frac{P_{i}-p}{h_{3}}\right) \right. \\
&&\left. +\frac{P_{i}-p}{h_{3}^{2}}k_{3}^{\left( 1\right) }\left( \frac{%
P_{i}-p}{h_{3}}\right) \right] \frac{P_{i}-p}{h_{3}^{2}}\left(
D_{i}-P_{i}\right) +o_{p}\left( \sqrt{\frac{1}{nh_{3}^{3}}}\right) .
\end{eqnarray*}%
Further combining with (\ref{gtilde}), we obtain%
\begin{equation}
J_{n}\left[ \left( 
\begin{array}{c}
\hat{g}_{d}\left( p\right) \\ 
\hat{g}_{d}^{\left( 1\right) }\left( p\right)%
\end{array}
\right) -\left( 
\begin{array}{c}
g_{d}\left( p\right) \\ 
g_{d}^{\left( 1\right) }\left( p\right)%
\end{array}
\right)-\left( 
\begin{array}{c}
\frac{\kappa _{2}}{2}g_{d}^{\left( 2\right) }\left( p\right) h_{3}^{2} \\ 
0%
\end{array}
\right) \right] =\Gamma _{d}A_{dn}\left( p\right) +o_{p}\left( 1\right) ,
\label{asymplinear}
\end{equation}%
where%
\begin{eqnarray*}
A_{dn}\left( p\right) &=&J_{n}\frac{1}{n}\sum_{i=1}^{n}\left( 
\begin{array}{c}
1 \\ 
\frac{P_{i}-p}{h_{3}^{2}}%
\end{array}%
\right) \left[ w_{di}\left( p\right) V_{di}+\left( 
\begin{array}{c}
\Lambda _{di}\left( p\right) \\ 
\Lambda _{di}\left( p\right) +\Upsilon _{di}\left( p\right)%
\end{array}%
\right) \left( D_{i}-P_{i}\right) \right] , \\
\Lambda _{di}\left( p\right) &=&g_{d}^{\left( 1\right) }\left( p\right)
\delta _{d}\left( P_{i}\right) \frac{P_{i}-p}{h_{3}^{2}}k_{3}^{\left(
1\right) }\left( \frac{P_{i}-p}{h_{3}}\right) , \\
\Upsilon _{di}\left( p\right) &=&g_{d}^{\left( 1\right) }\left( p\right)
\delta _{d}\left( P_{i}\right) \frac{1}{h_{3}}k_{3}\left( \frac{P_{i}-p}{%
h_{3}}\right) .
\end{eqnarray*}

Since%
\begin{eqnarray*}
E\left[ \left. w_{di}\left( p\right) V_{di}\right\vert P_{i}\right] &=&\frac{%
1}{h_{3}}k_{3}\left( \frac{P_{i}-p}{h_{3}}\right) \delta _{d}\left(
P_{i}\right) E\left[ \left. V_{di}\right\vert P_{i},D_{i}=d\right] =0, \\
E\left[ \left. D_{i}-P_{i}\right\vert P_{i}\right] &=&0, \\
E\left[ \left. w_{di}\left( p\right) V_{di}\left( D_{i}-P_{i}\right)
\right\vert P_{i}\right] &=&\left( d-P_{i}\right) E\left[ \left.
w_{di}\left( p\right) V_{di}\right\vert P_{i}\right] =0,
\end{eqnarray*}%
we have{\small 
\begin{eqnarray*}
E\left[ A_{dn}\left( p\right) \right] &=&0, \\
Var\left( A_{dn\left( 1\right) }\left( p\right) \right) &=&h_{3}\left( E 
\left[ w_{di}^{2}\left( p\right) V_{di}^{2}\right] +E\left[ \Lambda
_{di}^{2}\left( p\right) \left( D_{i}-P_{i}\right) ^{2}\right] \right) \\
&=&\kappa \delta _{d}\left( p\right) f_{P}\left( p\right) \sigma
_{V_{d}}^{2}\left( p\right) +\kappa _{22}^{\left( 1\right) }p\left(
1-p\right) \delta _{d}^{2}\left( p\right) f_{P}\left( p\right) \left(
g_{d}^{\left( 1\right) }\left( p\right) \right) ^{2}+o\left( 1\right) , \\
Var\left( A_{dn\left( 2\right) }\left( p\right) \right) &=&h_{3}^{3}\left( E 
\left[ \frac{\left( P_{i}-p\right) ^{2}}{h_{3}^{4}}w_{di}^{2}\left( p\right)
V_{di}^{2}\right] +E\left[ \frac{\left( P_{i}-p\right) ^{2}}{h_{3}^{4}}%
\left( \Lambda _{di}\left( p\right) +\Upsilon _{di}\left( p\right) \right)
^{2}\left( D_{i}-P_{i}\right) ^{2}\right] \right) \\
&=&\kappa _{22}\delta _{d}\left( p\right) f_{P}\left( p\right) \sigma
_{V_{d}}^{2}\left( p\right) +\left( \kappa _{24}^{\left( 1\right) }-2\kappa
_{22}\right) p\left( 1-p\right) \delta _{d}^{2}\left( p\right) f_{P}\left(
p\right) \left( g_{d}^{\left( 1\right) }\left( p\right) \right) ^{2}+o\left(
1\right) , \\
E\left[ A_{dn\left( 1\right) }\left( p\right) A_{dn\left( 2\right) }\left(
p\right) \right] &=&h_{3}^{2}\left( E\left[ \frac{P_{i}-p}{h_{3}^{2}}%
w_{di}^{2}\left( p\right) V_{di}^{2}\right] +E\left[ \frac{P_{i}-p}{h_{3}^{2}%
}\left( \Lambda _{di}^{2}\left( p\right) +\Lambda _{di}\left( p\right)
\Upsilon _{di}\left( p\right) \right) \left( D_{i}-P_{i}\right) ^{2}\right]
\right) \\
&=&O\left( h_{3}\right) =o\left( 1\right) ,
\end{eqnarray*}%
}where%
\begin{eqnarray*}
\kappa _{22}^{\left( 1\right) } &=&\int \left( k_{3}^{\left( 1\right)
}\left( u\right) \right) ^{2}u^{2}du, \\
\kappa _{24}^{\left( 1\right) } &=&\int \left( k_{3}^{\left( 1\right)
}\left( u\right) \right) ^{2}u^{4}du.
\end{eqnarray*}%
It then follows from Liapunov's central limit theorem that%
\begin{equation*}
A_{dn}\left( p\right) \rightarrow N\left( 0,\text{diag}\left( 
\begin{array}{c}
\kappa \delta _{d}\left( p\right) f_{P}\left( p\right) \sigma
_{V_{d}}^{2}\left( p\right) +\kappa _{22}^{\left( 1\right) }p\left(
1-p\right) \delta _{d}^{2}\left( p\right) f_{P}\left( p\right) \left(
g_{d}^{\left( 1\right) }\left( p\right) \right) ^{2} \\ 
\kappa _{22}\delta _{d}\left( p\right) f_{P}\left( p\right) \sigma
_{V_{d}}^{2}\left( p\right) +\left( \kappa _{24}^{\left( 1\right) }-2\kappa
_{22}\right) p\left( 1-p\right) \delta _{d}^{2}\left( p\right) f_{P}\left(
p\right) \left( g_{d}^{\left( 1\right) }\left( p\right) \right) ^{2}%
\end{array}%
\right) \right) ,
\end{equation*}%
and thus from (\ref{asymplinear}) that%
\begin{equation*}
J_{n}\left[ \left( 
\begin{array}{c}
\hat{g}_{d}\left( p\right) \\ 
\hat{g}_{d}^{\left( 1\right) }\left( p\right)%
\end{array}%
\right) -\left( 
\begin{array}{c}
g_{d}\left( p\right) \\ 
g_{d}^{\left( 1\right) }\left( p\right)%
\end{array}%
\right) -\left( 
\begin{array}{c}
\frac{\kappa _{2}}{2}g_{d}^{\left( 2\right) }\left( p\right) h_{3}^{2} \\ 
0%
\end{array}%
\right) \right] \rightarrow N\left( 0,\Sigma _{d}\left( p\right) \right) ,
\end{equation*}%
where%
\begin{equation*}
\Sigma _{d}\left( p\right) =\left( 
\begin{array}{cc}
\frac{\kappa \sigma _{V_{d}}^{2}\left( p\right) +\kappa _{22}^{\left(
1\right) }p\left( 1-p\right) \delta _{d}\left( p\right) \left( g_{d}^{\left(
1\right) }\left( p\right) \right) ^{2}}{\delta _{d}\left( p\right)
f_{P}\left( p\right) } & 0 \\ 
0 & \frac{\kappa _{22}\sigma _{V_{d}}^{2}\left( p\right) +\left( \kappa
_{24}^{\left( 1\right) }-2\kappa _{22}\right) p\left( 1-p\right) \delta
_{d}\left( p\right) \left( g_{d}^{\left( 1\right) }\left( p\right) \right)
^{2}}{\kappa _{2}^{2}\delta _{d}\left( p\right) f_{P}\left( p\right) }%
\end{array}%
\right) .
\end{equation*}

Moreover, we can deduce the joint asymptotic normality of $\left( \hat{g}%
_{1}\left( p\right) ,\hat{g}_{1}^{\left( 1\right) }\left( p\right) \right) $
and $\left( \hat{g}_{0}\left( p\right) ,\hat{g}_{0}^{\left( 1\right) }\left(
p\right) \right) $ from the asymptotically linear representation (\ref%
{asymplinear}). Due to%
\begin{eqnarray*}
E\left[ A_{1n\left( 1\right) }\left( p\right) A_{0n\left( 1\right) }\left(
p\right) \right] &=&h_{3}E\left[ \Lambda _{1i}\left( p\right) \Lambda
_{0i}\left( p\right) \left( D_{i}-P_{i}\right) ^{2}\right] =\kappa
_{22}^{\left( 1\right) }g_{1}^{\left( 1\right) }\left( p\right)
g_{0}^{\left( 1\right) }\left( p\right) p^{2}\left( 1-p\right)
^{2}f_{P}\left( p\right) +o\left( 1\right) , \\
E\left[ A_{1n\left( 1\right) }\left( p\right) A_{0n\left( 2\right) }\left(
p\right) \right] &=&h_{3}^{2}E\left[ \frac{P_{i}-p}{h_{3}^{2}}\Lambda
_{1i}\left( p\right) \left( \Lambda _{0i}\left( p\right) +\Upsilon
_{0i}\left( p\right) \right) \left( D_{i}-P_{i}\right) ^{2}\right] =O\left(
h_{3}\right) =o\left( 1\right) , \\
E\left[ A_{1n\left( 2\right) }\left( p\right) A_{0n\left( 1\right) }\left(
p\right) \right] &=&h_{3}^{2}E\left[ \frac{P_{i}-p}{h_{3}^{2}}\left( \Lambda
_{1i}\left( p\right) +\Upsilon _{1i}\left( p\right) \right) \Lambda
_{0i}\left( p\right) \left( D_{i}-P_{i}\right) ^{2}\right] =O\left(
h_{3}\right) =o\left( 1\right) , \\
E\left[ A_{1n\left( 2\right) }\left( p\right) A_{0n\left( 2\right) }\left(
p\right) \right] &=&h_{3}^{3}E\left[ \frac{\left( P_{i}-p\right) ^{2}}{%
h_{3}^{4}}\left( \Lambda _{1i}\left( p\right) +\Upsilon _{1i}\left( p\right)
\right) \left( \Lambda _{0i}\left( p\right) +\Upsilon _{0i}\left( p\right)
\right) \left( D_{i}-P_{i}\right) ^{2}\right] \\
&=&\left( \kappa _{24}^{\left( 1\right) }-2\kappa _{22}\right) g_{1}^{\left(
1\right) }\left( p\right) g_{0}^{\left( 1\right) }\left( p\right)
p^{2}\left( 1-p\right) ^{2}f_{P}\left( p\right) +o\left( 1\right) ,
\end{eqnarray*}%
it follows from Liapunov's central limit theorem that%
\begin{equation}
\left( 
\begin{array}{cc}
J_{n} & 0 \\ 
0 & J_{n}%
\end{array}%
\right) \left[ \left( 
\begin{array}{c}
\hat{g}_{1}\left( p\right) \\ 
\hat{g}_{1}^{\left( 1\right) }\left( p\right) \\ 
\hat{g}_{0}\left( p\right) \\ 
\hat{g}_{0}^{\left( 1\right) }\left( p\right)%
\end{array}%
\right) -\left( 
\begin{array}{c}
g_{1}\left( p\right) \\ 
g_{1}^{\left( 1\right) }\left( p\right) \\ 
g_{0}\left( p\right) \\ 
g_{0}^{\left( 1\right) }\left( p\right)%
\end{array}%
\right) -\left( 
\begin{array}{c}
\frac{\kappa _{2}}{2}g_{1}^{\left( 2\right) }\left( p\right) h_{3}^{2} \\ 
0 \\ 
\frac{\kappa _{2}}{2}g_{0}^{\left( 2\right) }\left( p\right) h_{3}^{2} \\ 
0%
\end{array}%
\right) \right] \rightarrow N\left( 0,\Sigma \left( p\right) \right) ,
\label{jointAN}
\end{equation}%
where%
\begin{equation*}
\Sigma \left( p\right) =\left( 
\begin{array}{cc}
\Sigma _{1}\left( p\right) & \Sigma _{10}\left( p\right) \\ 
\Sigma _{10}\left( p\right) & \Sigma _{0}\left( p\right)%
\end{array}%
\right) ,
\end{equation*}%
and%
\begin{equation*}
\Sigma _{10}\left( p\right) =\left( 
\begin{array}{cc}
\displaystyle\frac{\kappa _{22}^{\left( 1\right) }p\left( 1-p\right)
g_{1}^{\left( 1\right) }\left( p\right) g_{0}^{\left( 1\right) }\left(
p\right) }{f_{P}\left( p\right) } & 0 \\ 
0 & \displaystyle\frac{\left( \kappa _{24}^{\left( 1\right) }-2\kappa
_{22}\right) p\left( 1-p\right) g_{1}^{\left( 1\right) }\left( p\right)
g_{0}^{\left( 1\right) }\left( p\right) }{\kappa _{2}^{2}f_{P}\left(
p\right) }%
\end{array}%
\right) .
\end{equation*}

Now the asymptotic normality of the estimators for MTE and relevant causal
parameters can be readily deduced from (\ref{jointAN}). For the MTE
estimator, we have%
\begin{eqnarray*}
&&\sqrt{nh_{3}^{3}}\left[ \hat{\Delta}^{\text{MTE}}\left( x,v\right) -\Delta
^{\text{MTE}}\left( x,v\right) \right] \\
&=&x^{\prime }\left[ \sqrt{nh_{3}^{3}}\left( \hat{\beta}_{1}-\beta
_{1}\right) -\sqrt{nh_{3}^{3}}\left( \hat{\beta}_{0}-\beta _{0}\right) %
\right] \\
&&+\left[ \sqrt{nh_{3}^{3}}\left( \hat{g}_{1}\left( v\right) -g_{1}\left(
v\right) \right) -\sqrt{nh_{3}^{3}}\left( \hat{g}_{0}\left( v\right)
-g_{0}\left( v\right) \right) \right] \\
&&+v\sqrt{nh_{3}^{3}}\left( \hat{g}_{1}^{\left( 1\right) }\left( v\right)
-g_{1}^{\left( 1\right) }\left( v\right) \right) +\left( 1-v\right) \sqrt{%
nh_{3}^{3}}\left( \hat{g}_{0}^{\left( 1\right) }\left( v\right)
-g_{0}^{\left( 1\right) }\left( v\right) \right) \\
&=&O_{p}\left( \sqrt{h_{3}^{3}}\right) +O\left( \sqrt{nh_{3}^{7}}\right)
+O_{p}\left( h_{3}\right) \\
&&+v\sqrt{nh_{3}^{3}}\left( \hat{g}_{1}^{\left( 1\right) }\left( v\right)
-g_{1}^{\left( 1\right) }\left( v\right) \right) +\left( 1-v\right) \sqrt{%
nh_{3}^{3}}\left( \hat{g}_{0}^{\left( 1\right) }\left( v\right)
-g_{0}^{\left( 1\right) }\left( v\right) \right) \\
&\rightarrow &N\left( 0,\sigma _{\text{MTE}}^{2}\left( v\right) \right) ,
\end{eqnarray*}%
where%
\begin{equation}
\sigma _{\text{MTE}}^{2}\left( v\right) =\frac{\kappa _{22}\left[ v\sigma
_{V_{1}}^{2}\left( v\right) +\left( 1-v\right) \sigma _{V_{0}}^{2}\left(
v\right) \right] +\left( \kappa _{24}^{\left( 1\right) }-2\kappa
_{22}\right) v\left( 1-v\right) \left[ vg_{1}^{\left( 1\right) }\left(
v\right) +\left( 1-v\right) g_{0}^{\left( 1\right) }\left( v\right) \right]
^{2}}{\kappa _{2}^{2}f_{P}\left( v\right) }.  \label{sigmaMTE}
\end{equation}%
For the relevant causal parameter estimators, we have%
\begin{eqnarray*}
&&\sqrt{nh_{3}}\left[ \hat{\Delta}^{\text{ATE}}\left( x\right) -\Delta ^{%
\text{ATE}}\left( x\right) -\frac{\kappa _{2}}{2}\left( g_{1}^{\left(
2\right) }\left( 1\right) -g_{0}^{\left( 2\right) }\left( 0\right) \right)
h_{3}^{2}\right] \\
&=&x^{\prime }\left[ \sqrt{nh_{3}}\left( \hat{\beta}_{1}-\beta _{1}\right) -%
\sqrt{nh_{3}}\left( \hat{\beta}_{0}-\beta _{0}\right) \right] \\
&&+\sqrt{nh_{3}}\left[ \hat{g}_{1}\left( 1\right) -g_{1}\left( 1\right) -%
\frac{\kappa _{2}}{2}g_{1}^{\left( 2\right) }\left( 1\right) h_{3}^{2}\right]
-\sqrt{nh_{3}}\left[ \hat{g}_{0}\left( 0\right) -g_{0}\left( 0\right) -\frac{%
\kappa _{2}}{2}g_{0}^{\left( 2\right) }\left( 0\right) h_{3}^{2}\right] \\
&=&O_{p}\left( \sqrt{h_{3}}\right) +\sqrt{nh_{3}}\left[ \hat{g}_{1}\left(
1\right) -g_{1}\left( 1\right) -\frac{\kappa _{2}}{2}g_{1}^{\left( 2\right)
}\left( 1\right) h_{3}^{2}\right] -\sqrt{nh_{3}}\left[ \hat{g}_{0}\left(
0\right) -g_{0}\left( 0\right) -\frac{\kappa _{2}}{2}g_{0}^{\left( 2\right)
}\left( 0\right) h_{3}^{2}\right] \\
&\rightarrow &N\left( 0,\sigma _{\text{ATE}}^{2}\right) ,
\end{eqnarray*}%
\begin{eqnarray*}
&&\sqrt{nh_{3}}\left[ \hat{\Delta}^{\text{TT}}\left( x\right) -\Delta ^{%
\text{TT}}\left( x\right) -\frac{\kappa _{2}}{2}\left( g_{1}^{\left(
2\right) }\left( \pi \left( x\right) \right) +\frac{\left( 1-\pi \left(
x\right) \right) g_{0}^{\left( 2\right) }\left( \pi \left( x\right) \right)
-g_{0}^{\left( 2\right) }\left( 0\right) }{\pi \left( x\right) }\right)
h_{3}^{2}\right] \\
&=&x^{\prime }\left[ \sqrt{nh_{3}}\left( \hat{\beta}_{1}-\beta _{1}\right) -%
\sqrt{nh_{3}}\left( \hat{\beta}_{0}-\beta _{0}\right) \right] +\sqrt{nh_{3}}%
\left[ \hat{g}_{1}\left( \hat{\pi}\left( x\right) \right) -g_{1}\left( \pi
\left( x\right) \right) \right] \\
&&+\sqrt{nh_{3}}\left[ \frac{\left( 1-\hat{\pi}\left( x\right) \right) \hat{g%
}_{0}\left( \hat{\pi}\left( x\right) \right) -\hat{g}_{0}\left( 0\right) }{%
\hat{\pi}\left( x\right) }-\frac{\left( 1-\pi \left( x\right) \right)
g_{0}\left( \pi \left( x\right) \right) -g_{0}\left( 0\right) }{\pi \left(
x\right) }\right] \\
&&-\sqrt{nh_{3}}\frac{\kappa _{2}}{2}\left[ g_{1}^{\left( 2\right) }\left(
\pi \left( x\right) \right) +\frac{1-\pi \left( x\right) }{\pi \left(
x\right) }g_{0}^{\left( 2\right) }\left( \pi \left( x\right) \right) -\frac{1%
}{\pi \left( x\right) }g_{0}^{\left( 2\right) }\left( 0\right) \right]
h_{3}^{2} \\
&=&O_{p}\left( \sqrt{h_{3}}\right) +O_{p}\left( \sqrt{nh_{3}}\eta
_{1}\right) +\sqrt{nh_{3}}\left[ \hat{g}_{1}\left( \pi \left( x\right)
\right) -g_{1}\left( \pi \left( x\right) \right) -\frac{\kappa _{2}}{2}%
g_{1}^{\left( 2\right) }\left( \pi \left( x\right) \right) h_{3}^{2}\right]
\\
&&+O_{p}\left( \sqrt{nh_{3}}\eta _{1}\right) +\frac{1-\pi \left( x\right) }{%
\pi \left( x\right) }\sqrt{nh_{3}}\left[ \hat{g}_{0}\left( \pi \left(
x\right) \right) -g_{0}\left( \pi \left( x\right) \right) -\frac{\kappa _{2}%
}{2}g_{0}^{\left( 2\right) }\left( \pi \left( x\right) \right) h_{3}^{2}%
\right] \\
&&+O_{p}\left( \sqrt{nh_{3}}\eta _{1}\right) -\frac{1}{\pi \left( x\right) }%
\sqrt{nh_{3}}\left[ \hat{g}_{0}\left( 0\right) -g_{0}\left( 0\right) -\frac{%
\kappa _{2}}{2}g_{0}^{\left( 2\right) }\left( 0\right) h_{3}^{2}\right] \\
&\rightarrow &N\left( 0,\sigma _{\text{TT}}^{2}\left( \pi \left( x\right)
\right) \right) ,
\end{eqnarray*}%
\begin{eqnarray*}
&&\sqrt{nh_{3}}\left[ \hat{\Delta}^{\text{TUT}}\left( x\right) -\Delta ^{%
\text{TUT}}\left( x\right) -\frac{\kappa _{2}}{2}\left( \frac{g_{1}^{\left(
2\right) }\left( 1\right) -\pi \left( x\right) g_{1}^{\left( 2\right)
}\left( \pi \left( x\right) \right) }{1-\pi \left( x\right) }-g_{0}^{\left(
2\right) }\left( \pi \left( x\right) \right) \right) h_{3}^{2}\right] \\
&=&-\sqrt{nh_{3}}\left[ \hat{g}_{0}\left( \pi \left( x\right) \right)
-g_{0}\left( \pi \left( x\right) \right) -\frac{\kappa _{2}}{2}g_{0}^{\left(
2\right) }\left( \pi \left( x\right) \right) h_{3}^{2}\right] \\
&&-\frac{\pi \left( x\right) }{1-\pi \left( x\right) }\sqrt{nh_{3}}\left[ 
\hat{g}_{1}\left( \pi \left( x\right) \right) -g_{1}\left( \pi \left(
x\right) \right) -\frac{\kappa _{2}}{2}g_{1}^{\left( 2\right) }\left( \pi
\left( x\right) \right) h_{3}^{2}\right] \\
&&+\frac{1}{1-\pi \left( x\right) }\sqrt{nh_{3}}\left[ \hat{g}_{1}\left(
1\right) -g_{1}\left( 1\right) -\frac{\kappa _{2}}{2}g_{1}^{\left( 2\right)
}\left( 1\right) h_{3}^{2}\right] +o_{p}\left( 1\right) \\
&\rightarrow &N\left( 0,\sigma _{\text{TUT}}^{2}\left( \pi \left( x\right)
\right) \right) ,
\end{eqnarray*}%
\begin{equation*}
\sqrt{nh_{3}}\left[ 
\begin{array}{c}
\hat{\Delta}^{\text{LATE}}\left( x,v_{1},v_{2}\right) -\Delta ^{\text{LATE}%
}\left( x,v_{1},v_{2}\right) \\ 
-\frac{\kappa _{2}}{2}\left( \frac{v_{2}g_{1}^{\left( 2\right) }\left(
v_{2}\right) -v_{1}g_{1}^{\left( 2\right) }\left( v_{1}\right) +\left(
1-v_{2}\right) g_{0}^{\left( 2\right) }\left( v_{2}\right) -\left(
1-v_{1}\right) g_{0}^{\left( 2\right) }\left( v_{1}\right) }{v_{2}-v_{1}}%
\right) h_{3}^{2}%
\end{array}%
\right] \rightarrow N\left( 0,\sigma _{\text{LATE}}^{2}\left(
v_{1},v_{2}\right) \right) ,
\end{equation*}%
where%
\begin{equation}
\sigma _{\text{ATE}}^{2}=\frac{\kappa \sigma _{V_{1}}^{2}\left( 1\right) }{%
f_{P}\left( 1\right) }+\frac{\kappa \sigma _{V_{0}}^{2}\left( 0\right) }{%
f_{P}\left( 0\right) },  \label{sigmaATE}
\end{equation}%
\begin{eqnarray}
\sigma _{\text{TT}}^{2}\left( p\right) &=&\frac{\kappa \sigma
_{V_{1}}^{2}\left( p\right) +\kappa _{22}^{\left( 1\right) }p^{2}\left(
1-p\right) \left( g_{1}^{\left( 1\right) }\left( p\right) \right) ^{2}}{%
pf_{P}\left( p\right) }  \notag \\
&&+\left( \frac{1-p}{p}\right) ^{2}\frac{\kappa \sigma _{V_{0}}^{2}\left(
p\right) +\kappa _{22}^{\left( 1\right) }p\left( 1-p\right) ^{2}\left(
g_{0}^{\left( 1\right) }\left( p\right) \right) ^{2}}{\left( 1-p\right)
f_{P}\left( p\right) }  \notag \\
&&+\frac{2\kappa _{22}^{\left( 1\right) }\left( 1-p\right) ^{2}g_{1}^{\left(
1\right) }\left( p\right) g_{0}^{\left( 1\right) }\left( p\right) }{%
f_{P}\left( p\right) }+\frac{\kappa \sigma _{V_{0}}^{2}\left( 0\right) }{%
p^{2}f_{P}\left( 0\right) },  \label{sigmaTT}
\end{eqnarray}%
\begin{eqnarray}
\sigma _{\text{TUT}}^{2}\left( p\right) &=&\frac{\kappa \sigma
_{V_{0}}^{2}\left( p\right) +\kappa _{22}^{\left( 1\right) }p\left(
1-p\right) ^{2}\left( g_{0}^{\left( 1\right) }\left( p\right) \right) ^{2}}{%
\left( 1-p\right) f_{P}\left( p\right) }  \notag \\
&&+\left( \frac{p}{1-p}\right) ^{2}\frac{\kappa \sigma _{V_{1}}^{2}\left(
p\right) +\kappa _{22}^{\left( 1\right) }p^{2}\left( 1-p\right) \left(
g_{1}^{\left( 1\right) }\left( p\right) \right) ^{2}}{pf_{P}\left( p\right) }
\notag \\
&&+\frac{2\kappa _{22}^{\left( 1\right) }p^{2}g_{1}^{\left( 1\right) }\left(
p\right) g_{0}^{\left( 1\right) }\left( p\right) }{f_{P}\left( p\right) }+%
\frac{\kappa \sigma _{V_{1}}^{2}\left( 1\right) }{\left( 1-p\right)
^{2}f_{P}\left( 1\right) },  \label{sigmaTUT}
\end{eqnarray}%
\begin{eqnarray}
\sigma _{\text{LATE}}^{2}\left( v_{1},v_{2}\right) &=&\frac{\kappa }{\left(
v_{2}-v_{1}\right) ^{2}f_{P}\left( v_{1}\right) }\left[ v_{1}\sigma
_{V_{1}}^{2}\left( v_{1}\right) +\left( 1-v_{1}\right) \sigma
_{V_{0}}^{2}\left( v_{1}\right) \right]  \notag \\
&&+\frac{\kappa }{\left( v_{2}-v_{1}\right) ^{2}f_{P}\left( v_{2}\right) }%
\left[ v_{2}\sigma _{V_{1}}^{2}\left( v_{2}\right) +\left( 1-v_{2}\right)
\sigma _{V_{0}}^{2}\left( v_{2}\right) \right]  \notag \\
&&+\frac{\kappa _{22}^{\left( 1\right) }v_{1}\left( 1-v_{1}\right) }{\left(
v_{2}-v_{1}\right) ^{2}f_{P}\left( v_{1}\right) }\left[ v_{1}g_{1}^{\left(
1\right) }\left( v_{1}\right) +\left( 1-v_{1}\right) g_{0}^{\left( 1\right)
}\left( v_{1}\right) \right] ^{2}  \notag \\
&&+\frac{\kappa _{22}^{\left( 1\right) }v_{2}\left( 1-v_{2}\right) }{\left(
v_{2}-v_{1}\right) ^{2}f_{P}\left( v_{2}\right) }\left[ v_{2}g_{1}^{\left(
1\right) }\left( v_{2}\right) +\left( 1-v_{2}\right) g_{0}^{\left( 1\right)
}\left( v_{2}\right) \right] ^{2},  \label{sigmaLATE}
\end{eqnarray}%
which completes the proof.

\subsection{Auxiliary lemmas}

\begin{lemma}
\label{lemma:firststep} Under Assumption E.1, we have%
\begin{eqnarray*}
\sup_{x}\left\vert \hat{\pi}\left( x\right) -\pi \left( x\right) \right\vert
&=&O\left( \sum_{l=1}^{\dim \left( X^{C}\right) }h_{1l}^{s}+\sqrt{\frac{\ln n%
}{nh_{11}h_{12}\cdots h_{1,\dim \left( X^{C}\right) }}}\right) \text{ almost
surely,} \\
\sup_{x}\left\vert \hat{f}_{X}\left( x\right) -f_{X}\left( x\right)
\right\vert &=&O\left( \sum_{l=1}^{\dim \left( X^{C}\right) }h_{1l}^{s}+%
\sqrt{\frac{\ln n}{nh_{11}h_{12}\cdots h_{1,\dim \left( X^{C}\right) }}}%
\right) \text{ almost surely,}
\end{eqnarray*}%
where $\hat{\pi}\left( x\right) $ is defined in (\ref{PSfunction}) and $\hat{%
f}_{X}\left( x\right) $ is the normalized denominator of $\hat{\pi}\left(
x\right) $, namely,%
\begin{equation}
\hat{f}_{X}\left( x\right) =\frac{1}{nh_{11}h_{12}\cdots h_{1,\dim \left(
X^{C}\right) }}\sum_{i=1}^{n}\left[ \prod_{l=1}^{\dim \left( X^{C}\right)
}k_{1}\left( \frac{X_{il}^{C}-x_{l}^{C}}{h_{1l}}\right) \right] 1\left\{
X_{i}^{D}=x^{D}\right\} .  \label{fXhat}
\end{equation}
\end{lemma}

\begin{proof}
This proof follows straightforwardly from combining Subsection
1.11 and Theorems 1.4 and 2.6 of \cite{li2007nonparametric}, thus is omitted here.
\end{proof}

\begin{lemma}
\label{lemma:secondstep} Under Assumption E.2.(ii)-(iv), the infeasible
local linear estimators $\left( \tilde{g}_{d}\left( p\right) ,\tilde{g}%
_{d}^{\left( 1\right) }\left( p\right) \right) $, $d=0,1$, are
asymptotically normally distributed for any interior point $p$ of the
support of $P$:%
\begin{equation*}
\left( 
\begin{array}{cc}
\sqrt{nh_{3}} & 0 \\ 
0 & \sqrt{nh_{3}^{3}}%
\end{array}%
\right) \left[ \left( 
\begin{array}{c}
\tilde{g}_{d}\left( p\right) \\ 
\tilde{g}_{d}^{\left( 1\right) }\left( p\right)%
\end{array}
\right) -\left( 
\begin{array}{c}
g_{d}\left( p\right) \\ 
g_{d}^{\left( 1\right) }\left( p\right)%
\end{array}
\right)-\left( 
\begin{array}{c}
\frac{\kappa _{2}}{2}g_{d}^{\left( 2\right) }\left( p\right) h_{3}^{2} \\ 
0%
\end{array}
\right) \right] \rightarrow N\left( 0,\Xi _{d}\left( p\right) \right) ,
\end{equation*}%
where%
\begin{equation*}
\Xi _{d}\left( p\right) =\left( 
\begin{array}{cc}
\left. \kappa \sigma _{V_{d}}^{2}\left( p\right) \right/ \left[ \delta
_{d}\left( p\right) f_{P}\left( p\right) \right] & 0 \\ 
0 & \left. \kappa _{22}\sigma _{V_{d}}^{2}\left( p\right) \right/ \left[
\kappa _{2}^{2}\delta _{d}\left( p\right) f_{P}\left( p\right) \right]%
\end{array}%
\right) ,
\end{equation*}%
and%
\begin{eqnarray*}
\kappa &=&\int k_{3}^{2}\left( u\right) du, \\
\kappa _{2} &=&\int k_{3}\left( u\right) u^{2}du, \\
\kappa _{22} &=&\int k_{3}^{2}\left( u\right) u^{2}du.
\end{eqnarray*}
\end{lemma}

\begin{proof}
Denote%
\begin{equation*}
Q_{d}=\frac{1}{n}\sum_{i=1}^{n}w_{di}\left( p\right) \left( 
\begin{array}{c}
1 \\ 
P_{i}-p%
\end{array}%
\right) \left( 
\begin{array}{c}
1 \\ 
P_{i}-p%
\end{array}%
\right) ^{\prime }
\end{equation*}%
as the normalized denominator of $\left( \tilde{g}_{d}\left( p\right) ,%
\tilde{g}_{d}^{\left( 1\right) }\left( p\right) \right) $. By $Y-X^{\prime
}\beta _{d}=g_{d}\left( P\right) +V_{d}$, we have%
\begin{eqnarray*}
\left( 
\begin{array}{c}
\tilde{g}_{d}\left( p\right) \\ 
\tilde{g}_{d}^{\left( 1\right) }\left( p\right)%
\end{array}%
\right) &=&Q_{d}^{-1}\frac{1}{n}\sum_{i=1}^{n}w_{di}\left( p\right) \left( 
\begin{array}{c}
1 \\ 
P_{i}-p%
\end{array}%
\right) \left[ g_{d}\left( P_{i}\right) +V_{di}\right] \\
&=&\left( 
\begin{array}{c}
g_{d}\left( p\right) \\ 
g_{d}^{\left( 1\right) }\left( p\right)%
\end{array}%
\right) +Q_{d}^{-1}\frac{1}{n}\sum_{i=1}^{n}w_{di}\left( p\right) \left( 
\begin{array}{c}
1 \\ 
P_{i}-p%
\end{array}%
\right) \left[ \frac{1}{2}g_{d}^{\left( 2\right) }\left( p\right) \left(
P_{i}-p\right) ^{2}+R_{di}+V_{di}\right] ,
\end{eqnarray*}%
where the second equality follows from a Taylor expansion of $g_{d}\left(
P_{i}\right) $ around $P_{i}=p$, and $R_{di}=\frac{1}{6}g_{d}^{\left(
3\right) }\left( P_{i}^{\ast }\right) \left( P_{i}-p\right) ^{3}$ with $%
P_{i}^{\ast }$ an intermediate value. Note that the three times
differentiability of $g_{d}\left( \cdot \right) $ required in the expansion
is stronger than Assumption E.2.(ii) and is imposed only for notational
simplicity. Actually, the twice continuous differentiability of $g_{d}\left(
\cdot \right) $ imposed by Assumption E.2.(ii) suffices for the same
conclusion to hold.

By Lemma \ref{lemma:kernel},%
\begin{equation*}
Q_{d}=\left( 
\begin{array}{cc}
\frac{1}{n}\sum_{i=1}^{n}w_{di}\left( p\right) , & \frac{1}{n}%
\sum_{i=1}^{n}w_{di}\left( p\right) \left( P_{i}-p\right) \\ 
\frac{1}{n}\sum_{i=1}^{n}w_{di}\left( p\right) \left( P_{i}-p\right) , & 
\frac{1}{n}\sum_{i=1}^{n}w_{di}\left( p\right) \left( P_{i}-p\right) ^{2}%
\end{array}%
\right) =\left( 
\begin{array}{cc}
O_{p}\left( 1\right) & O_{p}\left( h_{3}^{2}\right) \\ 
O_{p}\left( h_{3}^{2}\right) & O_{p}\left( h_{3}^{2}\right)%
\end{array}%
\right) ,
\end{equation*}%
which is an asymptotically singular matrix. Hence, we further normalize $%
Q_{d}$ by a matrix,%
\begin{equation*}
G=\left( 
\begin{array}{cc}
1, & 0 \\ 
0, & h_{3}^{-2}%
\end{array}%
\right) ,
\end{equation*}%
such that%
\begin{eqnarray*}
GQ_{d} &=&\left( 
\begin{array}{cc}
\frac{1}{n}\sum_{i=1}^{n}w_{di}\left( p\right) , & \frac{1}{n}%
\sum_{i=1}^{n}w_{di}\left( p\right) \left( P_{i}-p\right) \\ 
\frac{1}{nh_{3}^{2}}\sum_{i=1}^{n}w_{di}\left( p\right) \left(
P_{i}-p\right) , & \frac{1}{nh_{3}^{2}}\sum_{i=1}^{n}w_{di}\left( p\right)
\left( P_{i}-p\right) ^{2}%
\end{array}%
\right) \\
&\overset{p}{\longrightarrow }&T_{d}\equiv \left( 
\begin{array}{cc}
\delta _{d}\left( p\right) f_{P}\left( p\right) , & 0 \\ 
\kappa _{2}\left[ \delta _{d}^{\left( 1\right) }\left( p\right) f_{P}\left(
p\right) +\delta _{d}\left( p\right) f_{P}^{\left( 1\right) }\left( p\right) %
\right] , & \kappa _{2}\delta _{d}\left( p\right) f_{P}\left( p\right)%
\end{array}%
\right) ,
\end{eqnarray*}%
Since $T_{d}$ is invertible, we have%
\begin{equation*}
\left( GQ_{d}\right) ^{-1}\overset{p}{\longrightarrow }T_{d}^{-1}=\left( 
\begin{array}{cc}
1, & 0 \\ 
-\left[ \left. \delta _{d}^{\left( 1\right) }\left( p\right) \right/ \delta
_{d}\left( p\right) +\left. f_{P}^{\left( 1\right) }\left( p\right) \right/
f_{P}\left( p\right) \right] , & 1\left/ \kappa _{2}\right.%
\end{array}%
\right) \frac{1}{\delta _{d}\left( p\right) f_{P}\left( p\right) }.
\end{equation*}%
Therefore,%
\begin{eqnarray*}
&&J_{n}\left[ \left(
\begin{array}{c}
\tilde{g}_{d}\left( p\right)  \\ 
\tilde{g}_{d}^{\left( 1\right) }\left( p\right)
\end{array}
\right) -\left(
\begin{array}{c}
g_{d}\left( p\right)  \\ 
g_{d}^{\left( 1\right) }\left( p\right)
\end{array}
\right) \right] \\
&=&J_{n}\left( GQ_{d}\right) ^{-1}G\frac{1}{n}\sum_{i=1}^{n}w_{di}\left(
p\right) \left( 
\begin{array}{c}
1 \\ 
P_{i}-p%
\end{array}%
\right) \left[ \frac{1}{2}g_{d}^{\left( 2\right) }\left( p\right) \left(
P_{i}-p\right) ^{2}+R_{di}+V_{di}\right] \\
&=&J_{n}T_{d}^{-1}\frac{1}{n}\sum_{i=1}^{n}w_{di}\left( p\right) \left( 
\begin{array}{c}
1 \\ 
\frac{P_{i}-p}{h_{3}^{2}}%
\end{array}%
\right) \left[ \frac{1}{2}g_{d}^{\left( 2\right) }\left( p\right) \left(
P_{i}-p\right) ^{2}+R_{di}+V_{di}\right] +o_{p}\left( 1\right) ,
\end{eqnarray*}%
where%
\begin{equation*}
J_{n}=\left( 
\begin{array}{cc}
\sqrt{nh_{3}} & 0 \\ 
0 & \sqrt{nh_{3}^{3}}%
\end{array}%
\right) .
\end{equation*}

Note that for%
\begin{equation*}
\Gamma _{d}=diag\left( T_{d}^{-1}\right) =\left( 
\begin{array}{cc}
1\left/ \left[ \delta _{d}\left( p\right) f_{P}\left( p\right) \right]
\right. , & 0 \\ 
0, & 1\left/ \left[ \kappa _{2}\delta _{d}\left( p\right) f_{P}\left(
p\right) \right] \right.%
\end{array}%
\right) ,
\end{equation*}%
we have%
\begin{eqnarray*}
&&J_{n}T_{d}^{-1}\frac{1}{n}\sum_{i=1}^{n}w_{di}\left( p\right) \left( 
\begin{array}{c}
1 \\ 
\frac{P_{i}-p}{h_{3}^{2}}%
\end{array}%
\right) \left[ \frac{1}{2}g_{d}^{\left( 2\right) }\left( p\right) \left(
P_{i}-p\right) ^{2}+R_{di}+V_{di}\right] \\
&=&J_{n}\Gamma _{d}\frac{1}{n}\sum_{i=1}^{n}w_{di}\left( p\right) \left( 
\begin{array}{c}
1 \\ 
\frac{P_{i}-p}{h_{3}^{2}}%
\end{array}%
\right) \left[ \frac{1}{2}g_{d}^{\left( 2\right) }\left( p\right) \left(
P_{i}-p\right) ^{2}+R_{di}+V_{di}\right] +o_{p}\left( 1\right) ,
\end{eqnarray*}%
because%
\begin{eqnarray*}
\sqrt{nh_{3}^{3}}T_{d\left( 2,1\right) }^{-1}\frac{1}{n}\sum_{i=1}^{n}w_{di}%
\left( p\right) \frac{1}{2}g_{d}^{\left( 2\right) }\left( p\right) \left(
P_{i}-p\right) ^{2} &=&O\left( \sqrt{nh_{3}^{7}}+h_{3}^{3}\right) =o\left(
1\right) , \\
\sqrt{nh_{3}^{3}}T_{d\left( 2,1\right) }^{-1}\frac{1}{n}\sum_{i=1}^{n}w_{di}%
\left( p\right) R_{di} &=&O\left( \sqrt{nh_{3}^{9}}+h_{3}^{4}\right)
=o\left( 1\right) , \\
\sqrt{nh_{3}^{3}}T_{d\left( 2,1\right) }^{-1}\frac{1}{n}\sum_{i=1}^{n}w_{di}%
\left( p\right) V_{di} &=&O\left( h_{3}\right) =o\left( 1\right) ,
\end{eqnarray*}%
where%
\begin{equation*}
T_{d\left( 2,1\right) }^{-1}=-\frac{\left. \delta _{d}^{\left( 1\right)
}\left( p\right) \right/ \delta _{d}\left( p\right) +\left. f_{P}^{\left(
1\right) }\left( p\right) \right/ f_{P}\left( p\right) }{\delta _{d}\left(
p\right) f_{P}\left( p\right) }.
\end{equation*}%
Therefore,%
\begin{eqnarray}
&&J_{n}\left[ \left(
\begin{array}{c}
\tilde{g}_{d}\left( p\right)  \\ 
\tilde{g}_{d}^{\left( 1\right) }\left( p\right)
\end{array}
\right) -\left(
\begin{array}{c}
g_{d}\left( p\right)  \\ 
g_{d}^{\left( 1\right) }\left( p\right)
\end{array}
\right) \right]   \notag \\
&=&J_{n}\Gamma _{d}\frac{1}{n}\sum_{i=1}^{n}w_{di}\left( p\right) \left( 
\begin{array}{c}
1 \\ 
\frac{P_{i}-p}{h_{3}^{2}}%
\end{array}%
\right) \left[ \frac{1}{2}g_{d}^{\left( 2\right) }\left( p\right) \left(
P_{i}-p\right) ^{2}+R_{di}+V_{di}\right] +o_{p}\left( 1\right)  \notag \\
&=&\Gamma _{d}\left[ A_{1d}\left( p\right) +A_{2d}\left( p\right)
+A_{3d}\left( p\right) \right] +o_{p}\left( 1\right) ,  \label{gdtilde}
\end{eqnarray}%
where%
\begin{eqnarray*}
A_{1d}\left( p\right) &=&\left(
\begin{array}{c}
A_{1d\left( 1\right) }\left(p\right)  \\ 
A_{1d\left( 2\right) }\left( p\right)
\end{array}
\right) \equiv J_{n}\frac{1%
}{n}\sum_{i=1}^{n}w_{di}\left( p\right) \left( 
\begin{array}{c}
1 \\ 
\frac{P_{i}-p}{h_{3}^{2}}%
\end{array}%
\right) \frac{1}{2}g_{d}^{\left( 2\right) }\left( p\right) \left(
P_{i}-p\right) ^{2}, \\
A_{2d}\left( p\right) &=&\left(
\begin{array}{c}
A_{2d\left( 1\right) }\left(p\right)  \\ 
A_{2d\left( 2\right) }\left( p\right)
\end{array}
\right) \equiv J_{n}\frac{1%
}{n}\sum_{i=1}^{n}w_{di}\left( p\right) \left( 
\begin{array}{c}
1 \\ 
\frac{P_{i}-p}{h_{3}^{2}}%
\end{array}%
\right) R_{di} \\
&=&J_{n}\frac{1}{6n}\sum_{i=1}^{n}w_{di}\left( p\right) \left( 
\begin{array}{c}
1 \\ 
\frac{P_{i}-p}{h_{3}^{2}}%
\end{array}%
\right) g_{d}^{\left( 3\right) }\left( P_{i}^{\ast }\right) \left(
P_{i}-p\right) ^{3}, \\
A_{3d}\left( p\right) &=&\left(
\begin{array}{c}
A_{3d\left( 1\right) }\left(p\right)  \\ 
A_{3d\left( 2\right) }\left( p\right)
\end{array}
\right) \equiv J_{n}\frac{1%
}{n}\sum_{i=1}^{n}w_{di}\left( p\right) \left( 
\begin{array}{c}
1 \\ 
\frac{P_{i}-p}{h_{3}^{2}}%
\end{array}%
\right) V_{di}.
\end{eqnarray*}

We first consider $A_{1d}\left( p\right) $. Since%
\begin{eqnarray*}
E\left[ A_{1d\left( 1\right) }\left( p\right) \right] &=&\frac{\sqrt{nh_{3}}%
}{2}g_{d}^{\left( 2\right) }\left( p\right) E\left[ w_{di}\left( p\right)
\left( P_{i}-p\right) ^{2}\right] =\frac{\kappa _{2}}{2}g_{d}^{\left(
2\right) }\left( p\right) \delta _{d}\left( p\right) f_{P}\left( p\right) 
\sqrt{nh_{3}^{5}}+o\left( 1\right) , \\
E\left[ A_{1d\left( 2\right) }\left( p\right) \right] &=&\frac{\sqrt{%
nh_{3}^{3}}}{2}g_{d}^{\left( 2\right) }\left( p\right) E\left[ w_{di}\left(
p\right) \frac{P_{i}-p}{h_{3}^{2}}\left( P_{i}-p\right) ^{2}\right] =O\left( 
\sqrt{nh_{3}^{7}}\right) =o\left( 1\right) ,
\end{eqnarray*}%
and%
\begin{eqnarray*}
Var\left( A_{1d\left( 1\right) }\left( p\right) \right) &\leq &\frac{h_{3}}{4%
}\left[ g_{d}^{\left( 2\right) }\left( p\right) \right] ^{2}E\left[
w_{di}^{2}\left( p\right) \left( P_{i}-p\right) ^{4}\right] =O\left(
h_{3}^{2}\right) =o\left( 1\right) , \\
Var\left( A_{1d\left( 2\right) }\left( p\right) \right) &\leq &\frac{%
h_{3}^{3}}{4}\left[ g_{d}^{\left( 2\right) }\left( p\right) \right] ^{2}E%
\left[ w_{di}^{2}\left( p\right) \frac{\left( P_{i}-p\right) ^{6}}{h_{3}^{4}}%
\right] =O\left( h_{3}^{4}\right) =o\left( 1\right) ,
\end{eqnarray*}%
it follows from Markov's inequality that%
\begin{eqnarray*}
A_{1d\left( 1\right) }\left( p\right) &=&\frac{\kappa _{2}}{2}g_{d}^{\left(
2\right) }\left( p\right) \delta _{d}\left( p\right) f_{P}\left( p\right) 
\sqrt{nh_{3}^{5}}+o_{p}\left( 1\right) , \\
A_{1d\left( 2\right) }\left( p\right) &=&o_{p}\left( 1\right) ,
\end{eqnarray*}%
namely,%
\begin{equation}
A_{1d}\left( p\right) =J_{n}\left(
\begin{array}{c}
\frac{\kappa _{2}}{2}g_{d}^{\left( 2\right) }\left( p\right) \delta _{d}\left( p\right)
f_{P}\left( p\right) h_{3}^{2}  \\ 
0
\end{array}
\right) +o_{p}\left( 1\right) .
\label{A1d}
\end{equation}%
Similarly, we can show that%
\begin{equation}
A_{2d}\left( p\right) =o_{p}\left( 1\right) .  \label{A2d}
\end{equation}

We next consider $A_{3d}\left( p\right) $. By the definition of $V_{di}$, we
have $E\left[ A_{3d}\left( p\right) \right] =0$. For the variance-covariance
term, we have%
\begin{gather*}
Var\left( A_{3d\left( 1\right) }\left( p\right) \right) =h_{3}E\left[
w_{di}^{2}\left( p\right) V_{di}^{2}\right] =\kappa \sigma
_{V_{d}}^{2}\left( p\right) \delta _{d}\left( p\right) f_{P}\left( p\right)
+O\left( h_{3}\right) , \\
Var\left( A_{3d\left( 2\right) }\left( p\right) \right) =h_{3}^{3}E\left[
w_{di}^{2}\left( p\right) \frac{\left( P_{i}-p\right) ^{2}}{h_{3}^{4}}%
V_{di}^{2}\right] =\kappa _{22}\sigma _{V_{d}}^{2}\left( p\right) \delta
_{d}\left( p\right) f_{P}\left( p\right) +O\left( h_{3}\right) , \\
Cov\left( A_{3d\left( 1\right) }\left( p\right) ,A_{3d\left( 2\right)
}\left( p\right) \right) =h_{3}^{2}E\left[ w_{di}^{2}\left( p\right) \frac{%
P_{i}-p}{h_{3}^{2}}V_{di}^{2}\right] =O\left( h_{3}\right) .
\end{gather*}%
It then follows from Liapunov's central limit theorem that%
\begin{equation}
A_{3d}\left( p\right) \rightarrow N\left( 0,\Omega _{d}\right) ,  \label{A3d}
\end{equation}%
where%
\begin{equation*}
\Omega _{d}=\left( 
\begin{array}{cc}
\kappa \sigma _{V_{d}}^{2}\left( p\right) \delta _{d}\left( p\right)
f_{P}\left( p\right) & 0 \\ 
0 & \kappa _{22}\sigma _{V_{d}}^{2}\left( p\right) \delta _{d}\left(
p\right) f_{P}\left( p\right)%
\end{array}%
\right) .
\end{equation*}

Substituting (\ref{A1d}) and (\ref{A2d}) into (\ref{gdtilde}) yields that%
\begin{equation}
J_{n}\left[ \left(
\begin{array}{c}
\tilde{g}_{d}\left( p\right)  \\ 
\tilde{g}_{d}^{\left( 1\right) }\left( p\right)
\end{array}
\right) -\left(
\begin{array}{c}
g_{d}\left( p\right)  \\ 
g_{d}^{\left( 1\right) }\left( p\right)
\end{array}
\right) \right] =J_{n}\left(
\begin{array}{c}
\frac{\kappa _{2}}{2}g_{d}^{\left( 2\right)
}\left( p\right) h_{3}^{2}  \\ 
0
\end{array}
\right) +\Gamma _{d}A_{3d}\left( p\right)
+o_{p}\left( 1\right) ,  \label{AsymptoticLinear}
\end{equation}%
and the conclusion follows from (\ref{A3d}).
\end{proof}

\begin{lemma}
\label{lemma:kernel} Under Assumption E.2, we have%
\begin{eqnarray*}
\frac{1}{n}\sum_{i=1}^{n}w_{di}\left( p\right) &=&\delta _{d}\left( p\right)
f_{P}\left( p\right) +o_{p}\left( h_{3}\right) , \\
\frac{1}{nh_{3}^{2}}\sum_{i=1}^{n}w_{di}\left( p\right) \left(
P_{i}-p\right) &=&\kappa _{2}\left[ \delta _{d}^{\left( 1\right) }\left(
p\right) f_{P}\left( p\right) +\delta _{d}\left( p\right) f_{P}^{\left(
1\right) }\left( p\right) \right] +O_{p}\left( \frac{1}{\sqrt{nh_{3}^{3}}}%
\right) , \\
\frac{1}{nh_{3}^{2}}\sum_{i=1}^{n}w_{di}\left( p\right) \left(
P_{i}-p\right) ^{2} &=&\kappa _{2}\delta _{d}\left( p\right) f_{P}\left(
p\right) +o_{p}\left( h_{3}\right) .
\end{eqnarray*}
\end{lemma}

\begin{proof}
Since the data are i.i.d., we have%
\begin{eqnarray*}
E\left[ \frac{1}{n}\sum_{i=1}^{n}w_{di}\left( p\right) \right] &=&E\left[
1\left\{ D_{i}=d\right\} \frac{1}{h_{3}}k_{3}\left( \frac{P_{i}-p}{h_{3}}%
\right) \right] \\
&=&E\left[ \delta _{d}\left( P_{i}\right) \frac{1}{h_{3}}k_{3}\left( \frac{%
P_{i}-p}{h_{3}}\right) \right] \\
&=&\int \delta _{d}\left( v\right) \frac{1}{h_{3}}k_{3}\left( \frac{v-p}{%
h_{3}}\right) f_{P}\left( v\right) dv \\
&=&\int \delta _{d}\left( p+uh_{3}\right) k_{3}\left( u\right) f_{P}\left(
p+uh_{3}\right) du \\
&=&\delta _{d}\left( p\right) f_{P}\left( p\right) +O\left( h_{3}^{2}\right)
,
\end{eqnarray*}%
where the second equality follows from the law of iterated expectation. On
the other hand,%
\begin{eqnarray*}
Var\left( \frac{1}{n}\sum_{i=1}^{n}w_{di}\left( p\right) \right) &=&\frac{1}{%
nh_{3}^{2}}Var\left( 1\left\{ D_{i}=d\right\} k_{3}\left( \frac{P_{i}-p}{%
h_{3}}\right) \right) \\
&\leq &\frac{1}{nh_{3}^{2}}E\left[ 1\left\{ D_{i}=d\right\} k_{3}^{2}\left( 
\frac{P_{i}-p}{h_{3}}\right) \right] \\
&=&O\left( \frac{1}{nh_{3}}\right) .
\end{eqnarray*}%
Therefore,%
\begin{equation*}
\left( E\left[ \left( \frac{1}{n}\sum_{i=1}^{n}w_{di}\left( p\right) -\delta
_{d}\left( p\right) f_{P}\left( p\right) \right) ^{2}\right] \right)
^{1/2}=O\left( h_{3}^{2}+\frac{1}{\sqrt{nh_{3}}}\right) =o\left(
h_{3}\right) ,
\end{equation*}%
where the last equality is by Assumption E.2.(iv). The first assertion of
the lemma then follows from Markov's inequality.

For the second assertion of the lemma, we have%
\begin{eqnarray*}
E\left[ \frac{1}{nh_{3}^{2}}\sum_{i=1}^{n}w_{di}\left( p\right) \left(
P_{i}-p\right) \right] &=&E\left[ \delta _{d}\left( P_{i}\right) \frac{1}{%
h_{3}^{3}}k_{3}\left( \frac{P_{i}-p}{h_{3}}\right) \left( P_{i}-p\right) %
\right] \\
&=&\int \delta _{d}\left( v\right) \frac{1}{h_{3}^{3}}k_{3}\left( \frac{v-p}{%
h_{3}}\right) \left( v-p\right) f_{P}\left( v\right) dv \\
&=&\frac{1}{h_{3}}\int \delta _{d}\left( p+uh_{3}\right) f_{P}\left(
p+uh_{3}\right) k_{3}\left( u\right) udu \\
&=&\kappa _{2}\left[ \delta _{d}^{\left( 1\right) }\left( p\right)
f_{P}\left( p\right) +\delta _{d}\left( p\right) f_{P}^{\left( 1\right)
}\left( p\right) \right] +O\left( h_{3}^{2}\right) ,
\end{eqnarray*}%
and%
\begin{eqnarray*}
Var\left( \frac{1}{nh_{3}^{2}}\sum_{i=1}^{n}w_{di}\left( p\right) \left(
P_{i}-p\right) \right) &\leq &\frac{1}{nh_{3}^{6}}E\left[ 1\left\{
D_{i}=d\right\} k_{3}^{2}\left( \frac{P_{i}-p}{h_{3}}\right) \left(
P_{i}-p\right) ^{2}\right] \\
&=&O\left( \frac{1}{nh_{3}^{3}}\right) .
\end{eqnarray*}%
Therefore,%
\begin{eqnarray*}
\left( E\left[ \left( \frac{1}{nh_{3}^{2}}\sum_{i=1}^{n}w_{di}\left(
p\right) \left( P_{i}-p\right) -\kappa _{2}\left[ \delta _{d}^{\left(
1\right) }\left( p\right) f_{P}\left( p\right) +\delta _{d}\left( p\right)
f_{P}^{\left( 1\right) }\left( p\right) \right] \right) ^{2}\right] \right)
^{1/2} &=&O\left( h_{3}^{2}+\frac{1}{\sqrt{nh_{3}^{3}}}\right) \\
&=&O\left( \frac{1}{\sqrt{nh_{3}^{3}}}\right) ,
\end{eqnarray*}%
and the second assertion of the lemma follows. Analogously, the third
assertion of the lemma can be readily shown.
\end{proof}

\begin{lemma}
\label{lemma:kernelV} Under Assumption E.2, we have%
\begin{eqnarray*}
\frac{1}{n}\sum_{i=1}^{n}w_{di}\left( p\right) V_{di}\left( p\right)
&=&o_{p}\left( h_{3}\right) , \\
\frac{1}{nh_{3}^{2}}\sum_{i=1}^{n}w_{di}\left( p\right) \left(
P_{i}-p\right) V_{di}\left( p\right) &=&\kappa _{2}\delta _{d}\left(
p\right) f_{P}\left( p\right) g_{d}^{\left( 1\right) }\left( p\right)
+O_{p}\left( \frac{1}{\sqrt{nh_{3}^{3}}}\right) ,
\end{eqnarray*}%
where $V_{di}\left( p\right) $ is defined in (\ref{Vdip}).
\end{lemma}

\begin{proof}
Note that $E\left[ \left. V_{di}\left( p\right) \right\vert
P_{i},D_{i}=d\right] =g_{d}\left( P_{i}\right) -g_{d}\left( p\right) $
because $E\left[ \left. V_{di}\right\vert P_{i},D_{i}=d\right] =0$ by
definition. Hence,%
\begin{eqnarray*}
E\left[ \frac{1}{n}\sum_{i=1}^{n}w_{di}\left( p\right) V_{di}\left( p\right) %
\right]  &=&E\left[ \delta _{d}\left( P_{i}\right) \left( g_{d}\left(
P_{i}\right) -g_{d}\left( p\right) \right) \frac{1}{h_{3}}k_{3}\left( \frac{%
P_{i}-p}{h_{3}}\right) \right]  \\
&=&O\left( h_{3}^{2}\right) , \\
E\left[ \frac{1}{nh_{3}^{2}}\sum_{i=1}^{n}w_{di}\left( p\right) \left(
P_{i}-p\right) V_{di}\left( p\right) \right]  &=&E\left[ \delta _{d}\left(
P_{i}\right) \left( g_{d}\left( P_{i}\right) -g_{d}\left( p\right) \right) 
\frac{1}{h_{3}^{3}}k_{3}\left( \frac{P_{i}-p}{h_{3}}\right) \left(
P_{i}-p\right) \right]  \\
&=&\kappa _{2}\delta _{d}\left( p\right) f_{P}\left( p\right) g_{d}^{\left(
1\right) }\left( p\right) +O\left( h_{3}^{2}\right) .
\end{eqnarray*}%
For the variances, the same arguments as in Lemma \ref{lemma:kernel} show
that%
\begin{eqnarray*}
Var\left( \frac{1}{n}\sum_{i=1}^{n}w_{di}\left( p\right) V_{di}\left(
p\right) \right)  &=&O\left( \frac{1}{nh_{3}}\right) , \\
Var\left( \frac{1}{nh_{3}^{2}}\sum_{i=1}^{n}w_{di}\left( p\right) \left(
P_{i}-p\right) V_{di}\left( p\right) \right)  &=&O\left( \frac{1}{nh_{3}^{3}}%
\right) .
\end{eqnarray*}%
The conclusion then follows from Markov's inequality.
\end{proof}

\begin{lemma}
\label{lemma:Brhat-Br} Under Assumptions E.1, E.2, and E.3, we have%
\begin{eqnarray*}
\frac{1}{n}\sum_{i=1}^{n}\hat{w}_{di}\left( p\right) -\frac{1}{n}%
\sum_{i=1}^{n}w_{di}\left( p\right) &=&o_{p}\left( \frac{1}{\sqrt{nh_{3}^{3}}%
}\right) , \\
\frac{1}{n}\sum_{i=1}^{n}\hat{w}_{di}\left( p\right) \left( \hat{P}%
_{i}-p\right) -\frac{1}{n}\sum_{i=1}^{n}w_{di}\left( p\right) \left(
P_{i}-p\right) &=&o_{p}\left( \frac{1}{\sqrt{nh_{3}}}\right) , \\
\frac{1}{n}\sum_{i=1}^{n}\hat{w}_{di}\left( p\right) \left( \hat{P}%
_{i}-p\right) ^{2}-\frac{1}{n}\sum_{i=1}^{n}w_{di}\left( p\right) \left(
P_{i}-p\right) ^{2} &=&o_{p}\left( \sqrt{\frac{h_{3}}{n}}\right) .
\end{eqnarray*}
\end{lemma}

\begin{proof}
By a Taylor expansion,%
\begin{eqnarray*}
&&\frac{1}{n}\sum_{i=1}^{n}\hat{w}_{di}\left( p\right) -\frac{1}{n}%
\sum_{i=1}^{n}w_{di}\left( p\right) \\
&=&\frac{1}{n}\sum_{i=1}^{n}1\left\{ D_{i}=d\right\} \frac{1}{h_{3}^{2}}%
k_{3}^{\left( 1\right) }\left( \frac{P_{i}-p}{h_{3}}\right) \left( \hat{P}%
_{i}-P_{i}\right) \\
&&+\frac{1}{2n}\sum_{i=1}^{n}1\left\{ D_{i}=d\right\} \frac{1}{h_{3}^{3}}%
k_{3}^{\left( 2\right) }\left( \frac{P_{i}-p}{h_{3}}\right) \left( \hat{P}%
_{i}-P_{i}\right) ^{2} \\
&&+\frac{1}{6n}\sum_{i=1}^{n}1\left\{ D_{i}=d\right\} \frac{1}{h_{3}^{4}}%
k_{3}^{\left( 3\right) }\left( \frac{\hat{P}_{i}^{\ast }-p}{h_{3}}\right)
\left( \hat{P}_{i}-P_{i}\right) ^{3},
\end{eqnarray*}%
where $\hat{P}_{i}^{\ast }$ is an intermediate value between $\hat{P}_{i}$
and $P_{i}$. For any i.i.d. nonnegative random variables $\left\{
Z_{ni}:1\leq i\leq n\right\} $ with $0<EZ_{ni}<\infty $, we have $\left.
\left( 1/n\right) \sum_{i=1}^{n}Z_{ni}\right/ EZ_{ni}=O_{p}\left( 1\right) $
by Markov's inequality. Thus, it follows from Lemma \ref{lemma:firststep}
and Assumption E.3.(i)-(ii) that%
\begin{eqnarray*}
\left\vert \frac{1}{n}\sum_{i=1}^{n}1\left\{ D_{i}=d\right\} \frac{1}{%
h_{3}^{2}}k_{3}^{\left( 1\right) }\left( \frac{P_{i}-p}{h_{3}}\right) \left( 
\hat{P}_{i}-P_{i}\right) \right\vert &\leq &\sup_{i}\left\vert \hat{P}%
_{i}-P_{i}\right\vert \frac{1}{n}\sum_{i=1}^{n}\frac{1}{h_{3}^{2}}\left\vert
k_{3}^{\left( 1\right) }\left( \frac{P_{i}-p}{h_{3}}\right) \right\vert \\
&=&O_{p}\left( \eta _{1}\right) O_{p}\left( \frac{1}{h_{3}^{2}}E\left\vert
k_{3}^{\left( 1\right) }\left( \frac{P_{i}-p}{h_{3}}\right) \right\vert
\right) \\
&=&O_{p}\left( \frac{\eta _{1}}{h_{3}}\right) =o_{p}\left( \frac{1}{\sqrt{%
nh_{3}^{3}}}\right) ,
\end{eqnarray*}%
\begin{eqnarray*}
\left\vert \frac{1}{n}\sum_{i=1}^{n}1\left\{ D_{i}=d\right\} \frac{1}{%
h_{3}^{3}}k_{3}^{\left( 2\right) }\left( \frac{P_{i}-p}{h_{3}}\right) \left( 
\hat{P}_{i}-P_{i}\right) ^{2}\right\vert &\leq &O_{p}\left( \eta
_{1}^{2}\right) O_{p}\left( \frac{1}{h_{3}^{3}}E\left\vert k_{3}^{\left(
2\right) }\left( \frac{P_{i}-p}{h_{3}}\right) \right\vert \right) \\
&=&O_{p}\left( \frac{\eta _{1}^{2}}{h_{3}^{2}}\right) =o_{p}\left( \frac{1}{%
nh_{3}^{3}}\right) =o_{p}\left( \frac{1}{\sqrt{nh_{3}^{3}}}\right) ,
\end{eqnarray*}%
and%
\begin{equation*}
\left\vert \frac{1}{n}\sum_{i=1}^{n}1\left\{ D_{i}=d\right\} \frac{1}{%
h_{3}^{4}}k_{3}^{\left( 3\right) }\left( \frac{\hat{P}_{i}^{\ast }-p}{h_{3}}%
\right) \left( \hat{P}_{i}-P_{i}\right) ^{3}\right\vert \leq O_{p}\left( 
\frac{\eta _{1}^{3}}{h_{3}^{4}}\right) =o_{p}\left( \frac{1}{h_{3}^{4}}\sqrt{%
\frac{h_{3}^{7}}{n}}\right) =o_{p}\left( \frac{1}{\sqrt{nh_{3}^{3}}}\right) .
\end{equation*}%
Therefore,%
\begin{equation*}
\frac{1}{n}\sum_{i=1}^{n}\hat{w}_{di}\left( p\right) -\frac{1}{n}%
\sum_{i=1}^{n}w_{di}\left( p\right) =o_{p}\left( \frac{1}{\sqrt{nh_{3}^{3}}}%
\right) .
\end{equation*}

For the second claim of the lemma, we decompose the left-hand side into
three terms:%
\begin{eqnarray}
&&\frac{1}{n}\sum_{i=1}^{n}\hat{w}_{di}\left( p\right) \left( \hat{P}%
_{i}-p\right) -\frac{1}{n}\sum_{i=1}^{n}w_{di}\left( p\right) \left(
P_{i}-p\right)  \notag \\
&=&\frac{1}{n}\sum_{i=1}^{n}w_{di}\left( p\right) \left( \hat{P}%
_{i}-P_{i}\right) +\frac{1}{n}\sum_{i=1}^{n}\left[ \hat{w}_{di}\left(
p\right) -w_{di}\left( p\right) \right] \left( P_{i}-p\right)  \notag \\
&&+\frac{1}{n}\sum_{i=1}^{n}\left[ \hat{w}_{di}\left( p\right) -w_{di}\left(
p\right) \right] \left( \hat{P}_{i}-P_{i}\right) .  \label{B1hat_decompose}
\end{eqnarray}%
For the first term, we have%
\begin{equation}
\left\vert \frac{1}{n}\sum_{i=1}^{n}w_{di}\left( p\right) \left( \hat{P}%
_{i}-P_{i}\right) \right\vert \leq O_{p}\left( \eta _{1}\right) O_{p}\left(
E\left\vert w_{di}\left( p\right) \right\vert \right) =O_{p}\left( \eta
_{1}\right) O_{p}\left( 1\right) =o_{p}\left( \frac{1}{\sqrt{nh_{3}}}\right)
.  \label{B1hat1}
\end{equation}%
For the second term, a Taylor expansion yields that%
\begin{eqnarray*}
&&\frac{1}{n}\sum_{i=1}^{n}\left[ \hat{w}_{di}\left( p\right) -w_{di}\left(
p\right) \right] \left( P_{i}-p\right) \\
&=&\frac{1}{n}\sum_{i=1}^{n}1\left\{ D_{i}=d\right\} \frac{1}{h_{3}^{2}}%
k_{3}^{\left( 1\right) }\left( \frac{P_{i}-p}{h_{3}}\right) \left(
P_{i}-p\right) \left( \hat{P}_{i}-P_{i}\right) \\
&&+\frac{1}{2n}\sum_{i=1}^{n}1\left\{ D_{i}=d\right\} \frac{1}{h_{3}^{3}}%
k_{3}^{\left( 2\right) }\left( \frac{P_{i}-p}{h_{3}}\right) \left(
P_{i}-p\right) \left( \hat{P}_{i}-P_{i}\right) ^{2} \\
&&+\frac{1}{6n}\sum_{i=1}^{n}1\left\{ D_{i}=d\right\} \frac{1}{h_{3}^{4}}%
k_{3}^{\left( 3\right) }\left( \frac{\hat{P}_{i}^{\ast }-p}{h_{3}}\right)
\left( P_{i}-p\right) \left( \hat{P}_{i}-P_{i}\right) ^{3}.
\end{eqnarray*}%
It follows that%
\begin{eqnarray*}
&&\left\vert \frac{1}{n}\sum_{i=1}^{n}1\left\{ D_{i}=d\right\} \frac{1}{%
h_{3}^{2}}k_{3}^{\left( 1\right) }\left( \frac{P_{i}-p}{h_{3}}\right) \left(
P_{i}-p\right) \left( \hat{P}_{i}-P_{i}\right) \right\vert \\
&\leq &\sup_{i}\left\vert \hat{P}_{i}-P_{i}\right\vert \frac{1}{n}%
\sum_{i=1}^{n}\frac{1}{h_{3}^{2}}\left\vert k_{3}^{\left( 1\right) }\left( 
\frac{P_{i}-p}{h_{3}}\right) \right\vert \left\vert P_{i}-p\right\vert \\
&=&O_{p}\left( \eta _{1}\right) O_{p}\left( \frac{1}{h_{3}}E\left[
\left\vert k_{3}^{\left( 1\right) }\left( \frac{P_{i}-p}{h_{3}}\right)
\right\vert \left\vert \frac{P_{i}-p}{h_{3}}\right\vert \right] \right) \\
&=&O_{p}\left( \eta _{1}\right) O_{p}\left( 1\right) =o_{p}\left( \frac{1}{%
\sqrt{nh_{3}}}\right) ,
\end{eqnarray*}%
\begin{eqnarray*}
&&\left\vert \frac{1}{n}\sum_{i=1}^{n}1\left\{ D_{i}=d\right\} \frac{1}{%
h_{3}^{3}}k_{3}^{\left( 2\right) }\left( \frac{P_{i}-p}{h_{3}}\right) \left(
P_{i}-p\right) \left( \hat{P}_{i}-P_{i}\right) ^{2}\right\vert \\
&\leq &O_{p}\left( \eta _{1}^{2}\right) O_{p}\left( \frac{1}{h_{3}^{2}}%
E\left\vert k_{3}^{\left( 2\right) }\left( \frac{P_{i}-p}{h_{3}}\right)
\right\vert \left\vert \frac{P_{i}-p}{h_{3}}\right\vert \right) \\
&=&O_{p}\left( \frac{\eta _{1}^{2}}{h_{3}}\right) =o_{p}\left( \frac{1}{%
nh_{3}^{2}}\right) =o_{p}\left( \frac{1}{\sqrt{nh_{3}}}\right) ,
\end{eqnarray*}%
and%
\begin{equation*}
\left\vert \frac{1}{n}\sum_{i=1}^{n}1\left\{ D_{i}=d\right\} \frac{1}{%
h_{3}^{4}}k_{3}^{\left( 3\right) }\left( \frac{\hat{P}_{i}^{\ast }-p}{h_{3}}%
\right) \left( P_{i}-p\right) \left( \hat{P}_{i}-P_{i}\right)
^{3}\right\vert \leq O_{p}\left( \frac{\eta _{1}^{3}}{h_{3}^{4}}\right)
=o_{p}\left( \frac{1}{h_{3}^{4}}\sqrt{\frac{h_{3}^{7}}{n}}\right)
=o_{p}\left( \frac{1}{\sqrt{nh_{3}}}\right) .
\end{equation*}%
Therefore,%
\begin{equation}
\frac{1}{n}\sum_{i=1}^{n}\left[ \hat{w}_{di}\left( p\right) -w_{di}\left(
p\right) \right] \left( P_{i}-p\right) =o_{p}\left( \frac{1}{\sqrt{nh_{3}}}%
\right) .  \label{B1hat2}
\end{equation}%
For the third term of (\ref{B1hat_decompose}), we have%
\begin{equation*}
\left\vert \frac{1}{n}\sum_{i=1}^{n}\left[ \hat{w}_{di}\left( p\right)
-w_{di}\left( p\right) \right] \left( \hat{P}_{i}-P_{i}\right) \right\vert
\leq \sup_{i}\left\vert \hat{P}_{i}-P_{i}\right\vert \frac{1}{n}%
\sum_{i=1}^{n}\left\vert \hat{w}_{di}\left( p\right) -w_{di}\left( p\right)
\right\vert .
\end{equation*}%
As for the first claim of the lemma, we can show that%
\begin{equation}
\frac{1}{n}\sum_{i=1}^{n}\left\vert \hat{w}_{di}\left( p\right)
-w_{di}\left( p\right) \right\vert =o_{p}\left( \frac{1}{\sqrt{nh_{3}^{3}}}%
\right) =o_{p}\left( 1\right) .  \label{absB0hat}
\end{equation}%
It then follows that%
\begin{equation}
\left\vert \frac{1}{n}\sum_{i=1}^{n}\left[ \hat{w}_{di}\left( p\right)
-w_{di}\left( p\right) \right] \left( \hat{P}_{i}-P_{i}\right) \right\vert
\leq O_{p}\left( \eta _{1}\right) o_{p}\left( 1\right) =o_{p}\left( \frac{1}{%
\sqrt{nh_{3}}}\right) .  \label{B1hat3}
\end{equation}%
Substituting (\ref{B1hat1}), (\ref{B1hat2}), and (\ref{B1hat3}) into (\ref%
{B1hat_decompose}), we obtain%
\begin{equation*}
\frac{1}{n}\sum_{i=1}^{n}\hat{w}_{di}\left( p\right) \left( \hat{P}%
_{i}-p\right) -\frac{1}{n}\sum_{i=1}^{n}w_{di}\left( p\right) \left(
P_{i}-p\right) =o_{p}\left( \frac{1}{\sqrt{nh_{3}}}\right) .
\end{equation*}

For the third claim of the lemma, we decompose the left-hand side into five
terms:%
\begin{eqnarray}
&&\frac{1}{n}\sum_{i=1}^{n}\hat{w}_{di}\left( p\right) \left( \hat{P}%
_{i}-p\right) ^{2}-\frac{1}{n}\sum_{i=1}^{n}w_{di}\left( p\right) \left(
P_{i}-p\right) ^{2}  \notag \\
&=&\frac{1}{n}\sum_{i=1}^{n}\left[ \hat{w}_{di}\left( p\right) -w_{di}\left(
p\right) \right] \left( \hat{P}_{i}-P_{i}\right) ^{2}+\frac{2}{n}%
\sum_{i=1}^{n}\left[ \hat{w}_{di}\left( p\right) -w_{di}\left( p\right) %
\right] \left( \hat{P}_{i}-P_{i}\right) \left( P_{i}-p\right)  \notag \\
&&+\frac{1}{n}\sum_{i=1}^{n}w_{di}\left( p\right) \left( \hat{P}%
_{i}-P_{i}\right) ^{2}+\frac{2}{n}\sum_{i=1}^{n}w_{di}\left( p\right) \left( 
\hat{P}_{i}-P_{i}\right) \left( P_{i}-p\right)  \notag \\
&&+\frac{1}{n}\sum_{i=1}^{n}\left[ \hat{w}_{di}\left( p\right) -w_{di}\left(
p\right) \right] \left( P_{i}-p\right) ^{2}.  \label{B2hat_decompose}
\end{eqnarray}%
For the first term, it follows from (\ref{absB0hat}) that%
\begin{eqnarray}
\left\vert \frac{1}{n}\sum_{i=1}^{n}\left[ \hat{w}_{di}\left( p\right)
-w_{di}\left( p\right) \right] \left( \hat{P}_{i}-P_{i}\right)
^{2}\right\vert &\leq &\sup_{i}\left\vert \hat{P}_{i}-P_{i}\right\vert ^{2}%
\frac{1}{n}\sum_{i=1}^{n}\left\vert \hat{w}_{di}\left( p\right)
-w_{di}\left( p\right) \right\vert  \notag \\
&=&O_{p}\left( \eta _{1}^{2}\right) o_{p}\left( 1\right) =o_{p}\left( \frac{1%
}{nh_{3}}\right)  \notag \\
&=&o_{p}\left( \frac{1}{\sqrt{nh_{3}^{3}}}\sqrt{\frac{h_{3}}{n}}\right)
=o_{p}\left( \sqrt{\frac{h_{3}}{n}}\right) .  \label{B2hat1}
\end{eqnarray}%
For the second term, a Taylor expansion yields that%
\begin{eqnarray*}
&&\frac{1}{n}\sum_{i=1}^{n}\left[ \hat{w}_{di}\left( p\right) -w_{di}\left(
p\right) \right] \left( \hat{P}_{i}-P_{i}\right) \left( P_{i}-p\right) \\
&=&\frac{1}{n}\sum_{i=1}^{n}1\left\{ D_{i}=d\right\} \frac{1}{h_{3}^{2}}%
k_{3}^{\left( 1\right) }\left( \frac{P_{i}-p}{h_{3}}\right) \left(
P_{i}-p\right) \left( \hat{P}_{i}-P_{i}\right) ^{2} \\
&&+\frac{1}{2n}\sum_{i=1}^{n}1\left\{ D_{i}=d\right\} \frac{1}{h_{3}^{3}}%
k_{3}^{\left( 2\right) }\left( \frac{\hat{P}_{i}^{\ast }-p}{h_{3}}\right)
\left( P_{i}-p\right) \left( \hat{P}_{i}-P_{i}\right) ^{3}.
\end{eqnarray*}%
Analogously, it follows that%
\begin{eqnarray*}
\left\vert \frac{1}{n}\sum_{i=1}^{n}1\left\{ D_{i}=d\right\} \frac{1}{%
h_{3}^{2}}k_{3}^{\left( 1\right) }\left( \frac{P_{i}-p}{h_{3}}\right) \left(
P_{i}-p\right) \left( \hat{P}_{i}-P_{i}\right) ^{2}\right\vert &\leq
&O_{p}\left( \eta _{1}^{2}\right) =o_{p}\left( \frac{1}{nh_{3}}\right)
=o_{p}\left( \sqrt{\frac{h_{3}}{n}}\right) , \\
\left\vert \frac{1}{n}\sum_{i=1}^{n}1\left\{ D_{i}=d\right\} \frac{1}{%
h_{3}^{3}}k_{3}^{\left( 2\right) }\left( \frac{\hat{P}_{i}^{\ast }-p}{h_{3}}%
\right) \left( P_{i}-p\right) \left( \hat{P}_{i}-P_{i}\right)
^{3}\right\vert &\leq &O_{p}\left( \frac{\eta _{1}^{3}}{h_{3}^{3}}\right)
=o_{p}\left( \frac{\sqrt{h_{3}^{7}}}{h_{3}^{3}\sqrt{n}}\right) =o_{p}\left( 
\sqrt{\frac{h_{3}}{n}}\right) .
\end{eqnarray*}%
Therefore,%
\begin{equation}
\frac{1}{n}\sum_{i=1}^{n}\left[ \hat{w}_{di}\left( p\right) -w_{di}\left(
p\right) \right] \left( \hat{P}_{i}-P_{i}\right) \left( P_{i}-p\right)
=o_{p}\left( \sqrt{\frac{h_{3}}{n}}\right) .  \label{B2hat2}
\end{equation}%
For the third and fourth terms of (\ref{B2hat_decompose}), we have%
\begin{eqnarray}
\left\vert \frac{1}{n}\sum_{i=1}^{n}w_{di}\left( p\right) \left( \hat{P}%
_{i}-P_{i}\right) ^{2}\right\vert &\leq &O_{p}\left( \eta _{1}^{2}\right)
O_{p}\left( E\left\vert w_{di}\left( p\right) \right\vert \right)  \notag \\
&=&O_{p}\left( \eta _{1}^{2}\right) O_{p}\left( 1\right) =o_{p}\left( \sqrt{%
\frac{h_{3}}{n}}\right) ,  \label{B2hat3} \\
\left\vert \frac{1}{n}\sum_{i=1}^{n}w_{di}\left( p\right) \left( \hat{P}%
_{i}-P_{i}\right) \left( P_{i}-p\right) \right\vert &\leq &O_{p}\left( \eta
_{1}\right) O_{p}\left( E\left\vert w_{di}\left( p\right) \left(
P_{i}-p\right) \right\vert \right)  \notag \\
&=&O_{p}\left( \eta _{1}h_{3}\right) =o_{p}\left( \sqrt{\frac{h_{3}}{n}}%
\right) .  \label{B2hat4}
\end{eqnarray}%
For the last term of (\ref{B2hat_decompose}), we can show as well by a
Taylor expansion that%
\begin{equation}
\frac{1}{n}\sum_{i=1}^{n}\left[ \hat{w}_{di}\left( p\right) -w_{di}\left(
p\right) \right] \left( P_{i}-p\right) ^{2}=o_{p}\left( \sqrt{\frac{h_{3}}{n}%
}\right) .  \label{B2hat5}
\end{equation}%
The conclusion follows from combining (\ref{B2hat_decompose})-(\ref{B2hat5}).
\end{proof}

\begin{lemma}
\label{lemma:Crhat-Cr} Under Assumptions E.1, E.2, E.3, and E.4, we have%
\begin{eqnarray*}
&&\frac{1}{n}\sum_{i}^{n}\hat{w}_{di}\left( p\right) V_{di}\left( p\right) -%
\frac{1}{n}\sum_{i}^{n}w_{di}\left( p\right) V_{di}\left( p\right) \\
&=&\frac{g_{d}^{\left( 1\right) }\left( p\right) }{n}\sum_{i=1}^{n}\delta
_{d}\left( P_{i}\right) \frac{1}{h_{3}}k_{3}^{\left( 1\right) }\left( \frac{%
P_{i}-p}{h_{3}}\right) \frac{P_{i}-p}{h_{3}}\left( D_{i}-P_{i}\right)
+o_{p}\left( \frac{1}{\sqrt{nh_{3}}}\right) \\
&=&O_{p}\left( \frac{1}{\sqrt{nh_{3}}}\right) ,
\end{eqnarray*}%
and%
\begin{eqnarray*}
&&\frac{1}{n}\sum_{i}^{n}\hat{w}_{di}\left( p\right) \left( \hat{P}%
_{i}-p\right) V_{di}\left( p\right) -\frac{1}{n}\sum_{i}^{n}w_{di}\left(
p\right) \left( P_{i}-p\right) V_{di}\left( p\right) \\
&=&\frac{g_{d}^{\left( 1\right) }\left( p\right) }{n}\sum_{i=1}^{n}\delta
_{d}\left( P_{i}\right) \left[ k_{3}\left( \frac{P_{i}-p}{h_{3}}\right)
+k_{3}^{\left( 1\right) }\left( \frac{P_{i}-p}{h_{3}}\right) \frac{P_{i}-p}{%
h_{3}}\right] \frac{P_{i}-p}{h_{3}}\left( D_{i}-P_{i}\right) +o_{p}\left( 
\sqrt{\frac{h_{3}}{n}}\right) \\
&=&O_{p}\left( \sqrt{\frac{h_{3}}{n}}\right) ,
\end{eqnarray*}%
where $V_{di}\left( p\right) $ is defined in (\ref{Vdip}).
\end{lemma}

\begin{proof}
By a Taylor expansion,%
\begin{eqnarray}
&&\frac{1}{n}\sum_{i=1}^{n}\hat{w}_{di}\left( p\right) V_{di}\left( p\right)
-\frac{1}{n}\sum_{i=1}^{n}w_{di}\left( p\right) V_{di}\left( p\right)  \notag
\\
&=&\frac{1}{n}\sum_{i=1}^{n}1\left\{ D_{i}=d\right\} \frac{1}{h_{3}^{2}}%
k_{3}^{\left( 1\right) }\left( \frac{P_{i}-p}{h_{3}}\right) \left( \hat{P}%
_{i}-P_{i}\right) V_{di}\left( p\right)  \notag \\
&&+\frac{1}{2n}\sum_{i=1}^{n}1\left\{ D_{i}=d\right\} \frac{1}{h_{3}^{3}}%
k_{3}^{\left( 2\right) }\left( \frac{P_{i}-p}{h_{3}}\right) \left( \hat{P}%
_{i}-P_{i}\right) ^{2}V_{di}\left( p\right)  \notag \\
&&+\frac{1}{6n}\sum_{i=1}^{n}1\left\{ D_{i}=d\right\} \frac{1}{h_{3}^{4}}%
k_{3}^{\left( 3\right) }\left( \frac{\hat{P}_{i}^{\ast }-p}{h_{3}}\right)
\left( \hat{P}_{i}-P_{i}\right) ^{3}V_{di}\left( p\right)  \notag \\
&\triangleq &\left( \text{I}\right) +\left( \text{II}\right) +\left( \text{%
III}\right) ,  \label{C0hat_decompose}
\end{eqnarray}%
where $\hat{P}_{i}^{\ast }$ is an intermediate value between $\hat{P}_{i}$
and $P_{i}$. Note that $\left( \text{I}\right) $ and $\left( \text{II}%
\right) $ can be represented as U-statistics of order two and three,
respectively, plus higher order terms. Specifically, by denoting $%
W_{i}=\left( Y_{i},D_{i},X_{i}\right) $ as the $i$-th observation, and%
\begin{eqnarray*}
K_{1h}\left( X_{j},X_{i}\right) &=&\prod_{l=1}^{\dim \left( X^{C}\right) }%
\frac{1}{\left\vert h_{1}\right\vert }k_{1}\left( \frac{X_{jl}^{C}-X_{il}^{C}%
}{h_{1l}}\right) 1\left\{ X_{j}^{D}=X_{i}^{D}\right\} , \\
m_{1}\left( W_{i},W_{j}\right) &=&1\left\{ D_{i}=d\right\} \frac{1}{h_{3}^{2}%
}k_{3}^{\left( 1\right) }\left( \frac{P_{i}-p}{h_{3}}\right) V_{di}\left(
p\right) \left( D_{j}-P_{i}\right) K_{1h}\left( X_{j},X_{i}\right) , \\
m_{2}\left( W_{i},W_{j},W_{l}\right) &=&1\left\{ D_{i}=d\right\} \frac{1}{%
h_{3}^{3}}k_{3}^{\left( 2\right) }\left( \frac{P_{i}-p}{h_{3}}\right)
V_{di}\left( p\right) \left( D_{j}-P_{i}\right) K_{1h}\left(
X_{j},X_{i}\right) \left( D_{l}-P_{i}\right) K_{1h}\left( X_{l},X_{i}\right)
, \\
M_{1} &=&\frac{1}{n\left( n-1\right) }\sum_{i}\sum_{j\neq i}\frac{%
m_{1}\left( W_{i},W_{j}\right) }{f_{X}\left( X_{i}\right) }, \\
M_{2} &=&\frac{1}{n\left( n-1\right) \left( n-2\right) }\sum_{i}\sum_{j\neq
i}\sum_{l\neq j,i}\frac{m_{2}\left( W_{i},W_{j},W_{l}\right) }{%
f_{X}^{2}\left( X_{i}\right) },
\end{eqnarray*}%
we have%
\begin{eqnarray}
\left( \text{I}\right) &=&\frac{1}{n\left( n-1\right) }\sum_{i}\sum_{j\neq i}%
\frac{m_{1}\left( W_{i},W_{j}\right) }{\hat{f}_{X}\left( X_{i}\right) }%
=M_{1}\cdot \left( 1+O_{p}\left( \eta _{1}\right) \right) ,  \label{(I)} \\
\left( \text{II}\right) &=&\frac{1}{n\left( n-1\right) ^{2}}%
\sum_{i}\sum_{j\neq i}\sum_{l\neq i}\frac{m_{2}\left(
W_{i},W_{j},W_{l}\right) }{\hat{f}_{X}^{2}\left( X_{i}\right) }=M_{2}\cdot
\left( 1+O_{p}\left( \eta _{1}\right) \right) ,  \label{(II)}
\end{eqnarray}%
where $\hat{f}_{X}\left( x\right) $ is defined in (\ref{fXhat}) and the last
equalities follow from Lemma \ref{lemma:firststep}.

To analyze the U-statistics $M_{1}$ and $M_{2}$, we need to calculate the
conditional expectations of their kernels:%
\begin{eqnarray*}
E\left[ \left. \frac{m_{1}\left( W_{i},W_{j}\right) }{f_{X}\left(
X_{i}\right) }\right\vert W_{i}\right]  &=&O\left( tr\left( h_{1}^{s}\right)
\right) \cdot 1\left\{ D_{i}=d\right\} \frac{1}{h_{3}^{2}}k_{3}^{\left(
1\right) }\left( \frac{P_{i}-p}{h_{3}}\right) V_{di}\left( p\right) , \\
E\left[ \left. \frac{m_{1}\left( W_{i},W_{j}\right) }{f_{X}\left(
X_{i}\right) }\right\vert W_{j}\right]  &=&\delta _{d}\left( P_{j}\right) 
\frac{g_{d}\left( P_{j}\right) -g_{d}\left( p\right) }{h_{3}^{2}}%
k_{3}^{\left( 1\right) }\left( \frac{P_{j}-p}{h_{3}}\right) \left(
D_{j}-P_{j}\right) \left[ 1+O\left( tr\left( h_{1}^{s}\right) \right) \right]
, \\
E\left[ \left. \frac{m_{2}\left( W_{i},W_{j},W_{l}\right) }{f_{X}^{2}\left(
X_{i}\right) }\right\vert W_{i}\right]  &=&O\left( tr\left(
h_{1}^{2s}\right) \right) \cdot 1\left\{ D_{i}=d\right\} \frac{1}{h_{3}^{3}}%
k_{3}^{\left( 2\right) }\left( \frac{P_{i}-p}{h_{3}}\right) V_{di}\left(
p\right) , \\
E\left[ \left. \frac{m_{2}\left( W_{i},W_{j},W_{l}\right) }{f_{X}^{2}\left(
X_{i}\right) }\right\vert W_{j}\right]  &=&O\left( tr\left( h_{1}^{s}\right)
\right) \cdot \delta _{d}\left( P_{j}\right) \frac{g_{d}\left( P_{j}\right)
-g_{d}\left( p\right) }{h_{3}^{2}}k_{3}^{\left( 1\right) }\left( \frac{%
P_{j}-p}{h_{3}}\right) \left( D_{j}-P_{j}\right)  \\
&&\cdot \left[ 1+O\left( tr\left( h_{1}^{s}\right) \right) \right] , \\
E\left[ \left. \frac{m_{2}\left( W_{i},W_{j},W_{l}\right) }{f_{X}^{2}\left(
X_{i}\right) }\right\vert W_{l}\right]  &=&O\left( tr\left( h_{1}^{s}\right)
\right) \cdot \delta _{d}\left( P_{l}\right) \frac{g_{d}\left( P_{l}\right)
-g_{d}\left( p\right) }{h_{3}^{2}}k_{3}^{\left( 1\right) }\left( \frac{%
P_{l}-p}{h_{3}}\right) \left( D_{l}-P_{l}\right)  \\
&&\cdot \left[ 1+O\left( tr\left( h_{1}^{s}\right) \right) \right] ,
\end{eqnarray*}%
where we use $E\left[ \left. V_{di}\left( p\right) \right\vert X_{i},D_{i}=d%
\right] =g_{d}\left( P_{i}\right) -g_{d}\left( p\right) $, and hence,%
\begin{eqnarray*}
E\left[ M_{1}\right]  &=&E\left[ E\left[ \left. \frac{m_{1}\left(
W_{i},W_{j}\right) }{f_{X}\left( X_{i}\right) }\right\vert W_{j}\right] %
\right] =0, \\
E\left[ M_{2}\right]  &=&E\left[ E\left[ \left. \frac{m_{2}\left(
W_{i},W_{j},W_{l}\right) }{f_{X}^{2}\left( X_{i}\right) }\right\vert W_{j}%
\right] \right] =0.
\end{eqnarray*}%
Since by Assumptions E.1.(ii) and E.3.(iii)-(iv),%
\begin{eqnarray*}
E\left[ \frac{m_{1}^{2}\left( W_{i},W_{j}\right) }{f_{X}^{2}\left(
X_{i}\right) }\right]  &=&O\left( \frac{1}{h_{3}^{3}\left\vert
h_{1}\right\vert }\right) =o\left( n\right) , \\
E\left[ \frac{m_{2}^{2}\left( W_{i},W_{j},W_{l}\right) }{f_{X}^{4}\left(
X_{i}\right) }\right]  &=&O\left( \frac{1}{h_{3}^{5}\left\vert
h_{1}\right\vert ^{2}}\right) =o\left( n^{3/2}\right) ,
\end{eqnarray*}%
it follows from the projection method for U-statistics 
\citep[e.g.,][Lemma
3.1]{powell1989semiparametric} that%
\begin{equation*}
M_{1}=\frac{1}{n}\sum_{i=1}^{n}E\left[ \left. \frac{m_{1}\left(
W_{i},W_{j}\right) }{f_{X}\left( X_{i}\right) }\right\vert W_{i}\right] +%
\frac{1}{n}\sum_{j=1}^{n}E\left[ \left. \frac{m_{1}\left( W_{i},W_{j}\right) 
}{f_{X}\left( X_{i}\right) }\right\vert W_{j}\right] +o_{p}\left( \frac{1}{%
\sqrt{n}}\right) ,
\end{equation*}%
and from the variances of U-statistics \citep[e.g.,][Corollary 3.2]{shao2003}
that%
\begin{eqnarray*}
Var\left( M_{2}\right)  &\leq &\frac{3}{n}\left[ Var\left( E\left[ \left. 
\frac{m_{2}\left( W_{i},W_{j},W_{l}\right) }{f_{X}^{2}\left( X_{i}\right) }%
\right\vert W_{i}\right] \right) +2Var\left( E\left[ \left. \frac{%
m_{2}\left( W_{i},W_{j},W_{l}\right) }{f_{X}^{2}\left( X_{i}\right) }%
\right\vert W_{j}\right] \right) \right] +O\left( \frac{1}{n^{2}}\right)  \\
&=&O\left( \frac{tr\left( h_{1}^{4s}\right) }{nh_{3}^{5}}+\frac{tr\left(
h_{1}^{2s}\right) }{nh_{3}}+\frac{1}{n^{2}}\right) =O\left( \frac{1}{%
n^{3}h_{3}^{7}}+\frac{1}{n^{2}h_{3}^{2}}+\frac{1}{n^{2}}\right) =O\left( 
\frac{1}{n^{3}h_{3}^{7}}+\frac{1}{n^{2}h_{3}^{2}}\right) ,
\end{eqnarray*}%
where we use $tr\left( h_{1}^{s}\right) =O\left( \eta _{1}\right) =o\left(
1\left/ \sqrt{nh_{3}}\right. \right) $ by Assumption E.3.(i). Since%
\begin{equation*}
\frac{1}{n}\sum_{i=1}^{n}E\left[ \left. \frac{m_{1}\left( W_{i},W_{j}\right) 
}{f_{X}\left( X_{i}\right) }\right\vert W_{i}\right] =O\left( tr\left(
h_{1}^{s}\right) \right) \cdot \frac{1}{n}\sum_{i=1}^{n}1\left\{
D_{i}=d\right\} \frac{1}{h_{3}^{2}}k_{3}^{\left( 1\right) }\left( \frac{%
P_{i}-p}{h_{3}}\right) V_{di}\left( p\right) ,
\end{equation*}%
and%
\begin{eqnarray*}
E\left[ \frac{1}{n}\sum_{i=1}^{n}1\left\{ D_{i}=d\right\} \frac{1}{h_{3}^{2}}%
k_{3}^{\left( 1\right) }\left( \frac{P_{i}-p}{h_{3}}\right) V_{di}\left(
p\right) \right]  &=&E\left[ \delta _{d}\left( P_{i}\right) \frac{%
g_{d}\left( P_{i}\right) -g_{d}\left( p\right) }{h_{3}^{2}}k_{3}^{\left(
1\right) }\left( \frac{P_{i}-p}{h_{3}}\right) \right]  \\
&=&O\left( 1\right) , \\
Var\left( \frac{1}{n}\sum_{i=1}^{n}1\left\{ D_{i}=d\right\} \frac{1}{%
h_{3}^{2}}k_{3}^{\left( 1\right) }\left( \frac{P_{i}-p}{h_{3}}\right)
V_{di}\left( p\right) \right)  &\leq &\frac{1}{n}E\left[ \frac{1}{h_{3}^{4}}%
\left\vert k_{3}^{\left( 1\right) }\left( \frac{P_{i}-p}{h_{3}}\right)
\right\vert ^{2}V_{di}^{2}\left( p\right) \right]  \\
&=&O\left( \frac{1}{nh_{3}^{3}}\right) =o\left( 1\right) ,
\end{eqnarray*}%
it follows from Markov's inequality that%
\begin{equation*}
\frac{1}{n}\sum_{i=1}^{n}E\left[ \left. \frac{m_{1}\left( W_{i},W_{j}\right) 
}{f_{X}\left( X_{i}\right) }\right\vert W_{i}\right] =O\left( tr\left(
h_{1}^{s}\right) \right) \cdot O_{p}\left( 1\right) =O_{p}\left( tr\left(
h_{1}^{s}\right) \right) =o_{p}\left( \frac{1}{\sqrt{nh_{3}}}\right) ,
\end{equation*}%
Hence,%
\begin{eqnarray}
M_{1} &=&\frac{1}{n}\sum_{j=1}^{n}\delta _{d}\left( P_{j}\right) \frac{%
g_{d}\left( P_{j}\right) -g_{d}\left( p\right) }{h_{3}^{2}}k_{3}^{\left(
1\right) }\left( \frac{P_{j}-p}{h_{3}}\right) \left( D_{j}-P_{j}\right)
+o_{p}\left( \frac{1}{\sqrt{nh_{3}}}\right)   \notag \\
&=&\frac{g_{d}^{\left( 1\right) }\left( p\right) }{n}\sum_{j=1}^{n}\delta
_{d}\left( P_{j}\right) \frac{P_{j}-p}{h_{3}^{2}}k_{3}^{\left( 1\right)
}\left( \frac{P_{j}-p}{h_{3}}\right) \left( D_{j}-P_{j}\right) +o_{p}\left( 
\frac{1}{\sqrt{nh_{3}}}\right) .  \label{M1}
\end{eqnarray}%
On the other hand,%
\begin{equation*}
E\left[ M_{2}^{2}\right] =Var\left( M_{2}\right) =O\left( \frac{1}{%
n^{3}h_{3}^{7}}+\frac{1}{n^{2}h_{3}^{2}}\right) =o\left( \frac{1}{nh_{3}}%
\right) ,
\end{equation*}%
and thus by Markov's inequality,%
\begin{equation}
M_{2}=o_{p}\left( \frac{1}{\sqrt{nh_{3}}}\right) .  \label{M2}
\end{equation}%
Substituting (\ref{M1}) and (\ref{M2}) into (\ref{(I)}) and (\ref{(II)}),
respectively, we obtain%
\begin{equation}
\left( \text{I}\right) +\left( \text{II}\right) =\frac{g_{d}^{\left(
1\right) }\left( p\right) }{n}\sum_{i=1}^{n}\delta _{d}\left( P_{i}\right) 
\frac{1}{h_{3}}k_{3}^{\left( 1\right) }\left( \frac{P_{i}-p}{h_{3}}\right) 
\frac{P_{i}-p}{h_{3}}\left( D_{i}-P_{i}\right) +o_{p}\left( \frac{1}{\sqrt{%
nh_{3}}}\right) .  \label{(I)+(II)}
\end{equation}%
For the term $\left( \text{III}\right) $, it follows from Lemma \ref%
{lemma:firststep} that%
\begin{equation}
\left( \text{III}\right) \leq \frac{1}{6n}\sum_{i=1}^{n}\frac{1}{h_{3}^{4}}%
\left\vert k_{3}^{\left( 3\right) }\left( \frac{\hat{P}_{i}^{\ast }-p}{h_{3}}%
\right) \right\vert \left\vert \hat{P}_{i}-P_{i}\right\vert ^{3}=O_{p}\left( 
\frac{\eta _{1}^{3}}{h_{3}^{4}}\right) =o_{p}\left( \frac{1}{\sqrt{nh_{3}}}%
\right) ,  \label{(III)}
\end{equation}%
where the last equality follows from Assumption E.3.(ii). The first claim of
the lemma follows from substituting (\ref{(I)+(II)}) and (\ref{(III)}) into (%
\ref{C0hat_decompose}).

For the second claim, we decompose the left-hand side into three terms:%
\begin{eqnarray}
&&\frac{1}{n}\sum_{i=1}^{n}\hat{w}_{di}\left( p\right) \left( \hat{P}%
_{i}-p\right) V_{di}\left( p\right) -\frac{1}{n}\sum_{i=1}^{n}w_{di}\left(
p\right) \left( P_{i}-p\right) V_{di}\left( p\right)   \notag \\
&=&\frac{1}{n}\sum_{i=1}^{n}w_{di}\left( p\right) \left( \hat{P}%
_{i}-P_{i}\right) V_{di}\left( p\right) +\frac{1}{n}\sum_{i=1}^{n}\left[ 
\hat{w}_{di}\left( p\right) -w_{di}\left( p\right) \right] \left(
P_{i}-p\right) V_{di}\left( p\right)   \notag \\
&&+\frac{1}{n}\sum_{i=1}^{n}\left[ \hat{w}_{di}\left( p\right) -w_{di}\left(
p\right) \right] \left( \hat{P}_{i}-P_{i}\right) V_{di}\left( p\right) .
\label{C1hat_decompose}
\end{eqnarray}%
Denote%
\begin{eqnarray*}
m_{3}\left( W_{i},W_{j}\right)  &=&\frac{1}{h_{3}}w_{di}\left( p\right)
V_{di}\left( p\right) \left( D_{j}-P_{i}\right) K_{1h}\left(
X_{j},X_{i}\right) , \\
M_{3} &=&\frac{1}{n\left( n-1\right) }\sum_{i}\sum_{j\neq i}\frac{%
m_{3}\left( W_{i},W_{j}\right) }{f_{X}\left( X_{i}\right) },
\end{eqnarray*}%
then it follows from Lemma \ref{lemma:firststep} that%
\begin{equation*}
\frac{1}{n}\sum_{i=1}^{n}w_{di}\left( p\right) \left( \hat{P}%
_{i}-P_{i}\right) V_{di}\left( p\right) =\frac{h_{3}}{n\left( n-1\right) }%
\sum_{i}\sum_{j\neq i}\frac{m_{3}\left( W_{i},W_{j}\right) }{\hat{f}%
_{X}\left( X_{i}\right) }=h_{3}M_{3}\cdot \left( 1+O_{p}\left( \eta
_{1}\right) \right) .
\end{equation*}%
Since%
\begin{eqnarray*}
E\left[ \frac{m_{3}^{2}\left( W_{i},W_{j}\right) }{f_{X}^{2}\left(
X_{i}\right) }\right]  &=&O\left( \frac{1}{h_{3}^{3}\left\vert
h_{1}\right\vert }\right) =o\left( n\right) , \\
E\left[ \left. \frac{m_{3}\left( W_{i},W_{j}\right) }{f_{X}\left(
X_{i}\right) }\right\vert W_{i}\right]  &=&O\left( tr\left( h_{1}^{s}\right)
\right) \cdot \frac{1}{h_{3}}w_{di}\left( p\right) V_{di}\left( p\right) , \\
E\left[ \left. \frac{m_{3}\left( W_{i},W_{j}\right) }{f_{X}\left(
X_{i}\right) }\right\vert W_{j}\right]  &=&\delta _{d}\left( P_{j}\right) %
\left[ g_{d}\left( P_{j}\right) -g_{d}\left( p\right) \right] \frac{1}{%
h_{3}^{2}}k_{3}\left( \frac{P_{j}-p}{h_{3}}\right) \left( D_{j}-P_{j}\right) %
\left[ 1+O\left( tr\left( h_{1}^{s}\right) \right) \right] , \\
E\left[ \frac{m_{3}\left( W_{i},W_{j}\right) }{f_{X}\left( X_{i}\right) }%
\right]  &=&0,
\end{eqnarray*}%
it follows from the projection of U-statistics that%
\begin{eqnarray*}
M_{3} &=&\frac{1}{n}\sum_{i=1}^{n}E\left[ \left. \frac{m_{3}\left(
W_{i},W_{j}\right) }{f_{X}\left( X_{i}\right) }\right\vert W_{i}\right] +%
\frac{1}{n}\sum_{j=1}^{n}E\left[ \left. \frac{m_{3}\left( W_{i},W_{j}\right) 
}{f_{X}\left( X_{i}\right) }\right\vert W_{j}\right] +o_{p}\left( \frac{1}{%
\sqrt{n}}\right)  \\
&=&\frac{1}{n}\sum_{j=1}^{n}\delta _{d}\left( P_{j}\right) \frac{g_{d}\left(
P_{j}\right) -g_{d}\left( p\right) }{h_{3}^{2}}k_{3}\left( \frac{P_{j}-p}{%
h_{3}}\right) \left( D_{j}-P_{j}\right) +o_{p}\left( \frac{1}{\sqrt{nh_{3}}}%
\right)  \\
&=&\frac{g_{d}^{\left( 1\right) }\left( p\right) }{n}\sum_{j=1}^{n}\delta
_{d}\left( P_{j}\right) \frac{P_{j}-p}{h_{3}^{2}}k_{3}\left( \frac{P_{j}-p}{%
h_{3}}\right) \left( D_{j}-P_{j}\right) +o_{p}\left( \frac{1}{\sqrt{nh_{3}}}%
\right) .
\end{eqnarray*}%
Therefore,%
\begin{eqnarray}
\frac{1}{n}\sum_{i=1}^{n}w_{di}\left( p\right) \left( \hat{P}%
_{i}-P_{i}\right) V_{di}\left( p\right)  &=&\frac{g_{d}^{\left( 1\right)
}\left( p\right) }{n}\sum_{i=1}^{n}\delta _{d}\left( P_{i}\right)
k_{3}\left( \frac{P_{i}-p}{h_{3}}\right) \frac{P_{i}-p}{h_{3}}\left(
D_{i}-P_{i}\right)   \notag \\
&&+o_{p}\left( \sqrt{\frac{h_{3}}{n}}\right) .  \label{decompose1}
\end{eqnarray}

For the second term in the decomposition (\ref{B1hat_decompose}), a Taylor
expansion gives%
\begin{eqnarray*}
&&\frac{1}{n}\sum_{i}\left[ \hat{w}_{di}\left( p\right) -w_{di}\left(
p\right) \right] \left( P_{i}-p\right) V_{di}\left( p\right)  \\
&=&\frac{1}{n}\sum_{i=1}^{n}1\left\{ D_{i}=d\right\} V_{di}\left( p\right) 
\frac{1}{h_{3}^{2}}k_{3}^{\left( 1\right) }\left( \frac{P_{i}-p}{h_{3}}%
\right) \left( P_{i}-p\right) \left( \hat{P}_{i}-P_{i}\right)  \\
&&+\frac{1}{2n}\sum_{i=1}^{n}1\left\{ D_{i}=d\right\} V_{di}\left( p\right) 
\frac{1}{h_{3}^{3}}k_{3}^{\left( 2\right) }\left( \frac{P_{i}-p}{h_{3}}%
\right) \left( P_{i}-p\right) \left( \hat{P}_{i}-P_{i}\right) ^{2} \\
&&+\frac{1}{6n}\sum_{i=1}^{n}1\left\{ D_{i}=d\right\} V_{di}\left( p\right) 
\frac{1}{h_{3}^{4}}k_{3}^{\left( 3\right) }\left( \frac{\hat{P}_{i}^{\ast }-p%
}{h_{3}}\right) \left( P_{i}-p\right) \left( \hat{P}_{i}-P_{i}\right) ^{3} \\
&\triangleq &\left( \text{IV}\right) +\left( \text{V}\right) +\left( \text{VI%
}\right) .
\end{eqnarray*}%
Denote%
\begin{eqnarray*}
m_{4}\left( W_{i},W_{j}\right)  &=&1\left\{ D_{i}=d\right\} V_{di}\left(
p\right) \frac{1}{h_{3}^{3}}k_{3}^{\left( 1\right) }\left( \frac{P_{i}-p}{%
h_{3}}\right) \left( P_{i}-p\right) \left( D_{j}-P_{i}\right) K_{1h}\left(
X_{j},X_{i}\right) , \\
M_{4} &=&\frac{1}{n\left( n-1\right) }\sum_{i}\sum_{j\neq i}\frac{%
m_{4}\left( W_{i},W_{j}\right) }{f_{X}\left( X_{i}\right) },
\end{eqnarray*}%
then we have%
\begin{equation*}
\left( \text{IV}\right) =\frac{h_{3}}{n\left( n-1\right) }%
\sum_{i}\sum_{j\neq i}\frac{m_{4}\left( W_{i},W_{j}\right) }{\hat{f}%
_{X}\left( X_{i}\right) }=h_{3}M_{4}\cdot \left( 1+O_{p}\left( \eta
_{1}\right) \right) .
\end{equation*}%
Analogously, it follows from the projection of U-statistics that%
\begin{eqnarray*}
M_{4} &=&\frac{1}{n}\sum_{i=1}^{n}\delta _{d}\left( P_{i}\right) \frac{%
g_{d}\left( P_{i}\right) -g_{d}\left( p\right) }{h_{3}^{3}}k_{3}^{\left(
1\right) }\left( \frac{P_{i}-p}{h_{3}}\right) \left( P_{i}-p\right) \left(
D_{i}-P_{i}\right) +o_{p}\left( \frac{1}{\sqrt{nh_{3}}}\right)  \\
&=&\frac{g_{d}^{\left( 1\right) }\left( p\right) }{n}\sum_{i=1}^{n}\delta
_{d}\left( P_{i}\right) \frac{1}{h_{3}}k_{3}^{\left( 1\right) }\left( \frac{%
P_{i}-p}{h_{3}}\right) \left( \frac{P_{i}-p}{h_{3}}\right) ^{2}\left(
D_{i}-P_{i}\right) +o_{p}\left( \frac{1}{\sqrt{nh_{3}}}\right) ,
\end{eqnarray*}%
and hence%
\begin{equation*}
\left( \text{IV}\right) =\frac{g_{d}^{\left( 1\right) }\left( p\right) }{n}%
\sum_{i=1}^{n}\delta _{d}\left( P_{i}\right) k_{3}^{\left( 1\right) }\left( 
\frac{P_{i}-p}{h_{3}}\right) \left( \frac{P_{i}-p}{h_{3}}\right) ^{2}\left(
D_{i}-P_{i}\right) +o_{p}\left( \sqrt{\frac{h_{3}}{n}}\right) .
\end{equation*}%
Moreover, it follows from Lemma \ref{lemma:firststep} that%
\begin{eqnarray*}
\left\vert \left( \text{V}\right) \right\vert  &\leq &O_{p}\left( \eta
_{1}^{2}\right) \cdot O_{p}\left( E\left\vert V_{di}\left( p\right) \frac{1}{%
h_{3}^{3}}k_{3}^{\left( 2\right) }\left( \frac{P_{i}-p}{h_{3}}\right) \left(
P_{i}-p\right) \right\vert \right) =O_{p}\left( \eta _{1}^{2}\right)
=o_{p}\left( \sqrt{\frac{h_{3}}{n}}\right) , \\
\left\vert \left( \text{VI}\right) \right\vert  &\leq &O_{p}\left( \eta
_{1}^{3}\right) \cdot O_{p}\left( E\left\vert V_{di}\left( p\right) \frac{1}{%
h_{3}^{4}}k_{3}^{\left( 3\right) }\left( \frac{\hat{P}_{i}^{\ast }-p}{h_{3}}%
\right) \left( P_{i}-p\right) \right\vert \right) =O_{p}\left( \frac{\eta
_{1}^{3}}{h_{3}^{3}}\right) =o_{p}\left( \sqrt{\frac{h_{3}}{n}}\right) .
\end{eqnarray*}%
Therefore, we have%
\begin{eqnarray}
&&\frac{1}{n}\sum_{i}\left[ \hat{w}_{di}\left( p\right) -w_{di}\left(
p\right) \right] \left( P_{i}-p\right) V_{di}\left( p\right)   \notag \\
&=&\frac{g_{d}^{\left( 1\right) }\left( p\right) }{n}\sum_{i=1}^{n}\delta
_{d}\left( P_{i}\right) k_{3}^{\left( 1\right) }\left( \frac{P_{i}-p}{h_{3}}%
\right) \left( \frac{P_{i}-p}{h_{3}}\right) ^{2}\left( D_{i}-P_{i}\right)
+o_{p}\left( \frac{1}{\sqrt{nh_{3}}}\right) .  \label{decompose2}
\end{eqnarray}

For the third term in the decomposition (\ref{B1hat_decompose}), a Taylor
expansion gives%
\begin{eqnarray}
&&\frac{1}{n}\sum_{i}\left[ \hat{w}_{di}\left( p\right) -w_{di}\left(
p\right) \right] \left( \hat{P}_{i}-P_{i}\right) V_{di}\left( p\right)  
\notag \\
&=&\frac{1}{n}\sum_{i=1}^{n}1\left\{ D_{i}=d\right\} V_{di}\left( p\right) 
\frac{1}{h_{3}^{2}}k_{3}^{\left( 1\right) }\left( \frac{P_{i}-p}{h_{3}}%
\right) \left( \hat{P}_{i}-P_{i}\right) ^{2}  \notag \\
&&+\frac{1}{2n}\sum_{i=1}^{n}1\left\{ D_{i}=d\right\} V_{di}\left( p\right) 
\frac{1}{h_{3}^{3}}k_{3}^{\left( 2\right) }\left( \frac{\hat{P}_{i}^{\ast }-p%
}{h_{3}}\right) \left( \hat{P}_{i}-P_{i}\right) ^{3}  \notag \\
&=&O_{p}\left( \eta _{1}^{2}\right) +O_{p}\left( \frac{\eta _{1}^{3}}{%
h_{3}^{3}}\right) =o_{p}\left( \sqrt{\frac{h_{3}}{n}}\right) .
\label{decompose3}
\end{eqnarray}%
The second claim of the lemma then follows from inserting (\ref{decompose1}%
), (\ref{decompose2}), and (\ref{decompose3}) into the decomposition (\ref%
{C1hat_decompose}).
\end{proof}

\begin{lemma}
\label{lemma:Grhat} Under Assumptions E.1, E.2, E.3, and E.4, we have%
\begin{eqnarray*}
\frac{1}{n}\sum_{i}^{n}\hat{w}_{di}\left( p\right) X_{i} &=&O_{p}\left(
1\right) , \\
\frac{1}{n}\sum_{i}^{n}\hat{w}_{di}\left( p\right) \left( \hat{P}%
_{i}-p\right) X_{i} &=&O_{p}\left( h_{3}\right) .
\end{eqnarray*}
\end{lemma}

\begin{proof}
Note that%
\begin{equation}
\frac{1}{n}\sum_{i}^{n}\hat{w}_{di}\left( p\right) X_{i}=\frac{1}{n}%
\sum_{i}^{n}\left[ \hat{w}_{di}\left( p\right) -w_{di}\left( p\right) \right]
X_{i}+\frac{1}{n}\sum_{i}^{n}w_{di}\left( p\right) X_{i}.  \label{00000}
\end{equation}%
Similar to the first claim of Lemma \ref{lemma:Brhat-Br}, we can show that%
\begin{equation}
\frac{1}{n}\sum_{i}^{n}\left[ \hat{w}_{di}\left( p\right) -w_{di}\left(
p\right) \right] X_{i}=O_{p}\left( \frac{1}{\sqrt{nh_{3}^{3}}}\right)
=o_{p}\left( 1\right) .  \label{11111}
\end{equation}%
For the second term, we have%
\begin{equation*}
E\left[ \frac{1}{n}\sum_{i}^{n}w_{di}\left( p\right) X_{i}\right] =O\left(
1\right)
\end{equation*}%
and%
\begin{equation*}
Var\left( \frac{1}{n}\sum_{i}^{n}w_{di}\left( p\right) X_{i}\right) =O\left( 
\frac{1}{nh_{3}}\right) =o\left( 1\right) ,
\end{equation*}%
hence,%
\begin{equation}
\frac{1}{n}\sum_{i}^{n}w_{di}\left( p\right) X_{i}=O_{p}\left( 1\right) .
\label{22222}
\end{equation}%
The first claim of the lemma follows from substituting (\ref{11111}) and (%
\ref{22222}) into (\ref{00000}).

For the second claim, it follows from the compactness of the support of $%
k_{3}\left( \cdot \right) $ imposed in Assumption E.2.(iii) that%
\begin{equation*}
\left\Vert \frac{1}{n}\sum_{i}^{n}\hat{w}_{di}\left( p\right) \left( \hat{P}%
_{i}-p\right) X_{i}\right\Vert \leq \frac{1}{n}\sum_{i}^{n}\hat{w}%
_{di}\left( p\right) \left\vert \hat{P}_{i}-p\right\vert \left\Vert
X_{i}\right\Vert =O\left( h_{3}\right) \cdot \frac{1}{n}\sum_{i}^{n}\hat{w}%
_{di}\left( p\right) \left\Vert X_{i}\right\Vert .
\end{equation*}%
As above, we can show that%
\begin{eqnarray*}
\frac{1}{n}\sum_{i}^{n}\hat{w}_{di}\left( p\right) \left\Vert
X_{i}\right\Vert &=&\frac{1}{n}\sum_{i}^{n}\left[ \hat{w}_{di}\left(
p\right) -w_{di}\left( p\right) \right] \left\Vert X_{i}\right\Vert +\frac{1%
}{n}\sum_{i}^{n}w_{di}\left( p\right) \left\Vert X_{i}\right\Vert \\
&=&O_{p}\left( \frac{1}{\sqrt{nh_{3}^{3}}}\right) +O_{p}\left( 1\right)
=O_{p}\left( 1\right) .
\end{eqnarray*}%
Therefore,%
\begin{equation*}
\frac{1}{n}\sum_{i}^{n}\hat{w}_{di}\left( p\right) \left( \hat{P}%
_{i}-p\right) X_{i}=O_{p}\left( h_{3}\right) .
\end{equation*}
\end{proof}

\section{Implementation guidance}

\label{appendix:implement}

Table \ref{table:implement} summarizes our estimation procedure and
preliminary tests for the assumptions. In general, we recommend that the
data cell we choose for testing should have at least 50 observations. For
example, there are 40 among all 60 data cells in Table \ref{table:support}
satisfying this criterion. If we examine the supports of the 40 data cells
one by one using the verification method as in our application, we will find
that Assumption S holds for all of them. Moreover, we plot the conditional
propensity score functions for all the 40 data cells in Figure \ref{Fig:R1},
which verifies that Assumption NL1 (nonmonotonicity) holds for all of them
as well. In other words, we can verify the validity of Assumptions S and NL1
in our application regardless of which data cell we choose. This shows that
Assumptions S and NL1 are quite weak and hardly limit the applicability of
our identification strategy in practice. On the other hand, if the number of
discrete covariates is not small and/or all data cells have less than 50
observations, we should consider to use a semiparametric or parametric
approach that imposes a linear index structure on the treatment function. In
this case, we should remember to include nonlinear terms of continuous
covariates in the linear index to ensure the identification.

\renewcommand{\theequation}{G.\arabic{equation}} \setcounter{equation}{0} %
\renewcommand*{\theHequation}{\theequation}

\section{Simulation}

\label{appendix:simulate}

We examine the finite sample property of the MTE estimation based on our
non-IV identification strategy by a Monte Carlo simulation. The treatment
variable data are generated according to the structural treatment equation (%
\ref{treatment}):%
\begin{equation*}
D_{i}=1\left\{ \mu \left( X_{i}\right) \geq U_{i}\right\} ,\text{ }%
i=1,2,\cdots ,n,
\end{equation*}%
where the function $\mu \left( \cdot \right) $ and the distribution of error 
$U_{i}$ will vary across different designs. The vector $X_{i}$ consists of
one continuous covariate and five discrete covariates, namely, $X_{i}=\left(
X_{i}^{C},X_{1i}^{D},X_{2i}^{D},X_{3i}^{D},X_{4i}^{D},X_{5i}^{D}\right)
^{\prime }$, where $X_{i}^{C}$ follows a standard normal distribution and $%
X_{ji}^{D}$ ($j=1,\cdots ,5$) are independent $\left\{ 0,1\right\} $-valued
binary random variables with identical distribution that $\Pr \left(
X_{ji}^{D}=1\right) =0.5$. The potential and observed outcomes are generated
by%
\begin{eqnarray*}
Y_{1i} &=&\beta _{1}^{C}X_{i}^{C}+\sum_{j=1}^{5}\left( X_{ji}^{D}-0.5\right)
+U_{1i}, \\
Y_{0i} &=&\beta _{0}^{C}X_{i}^{C}+\sum_{j=1}^{5}\frac{1}{6-j}\left(
X_{ji}^{D}-0.5\right) +U_{0i}, \\
Y_{i} &=&D_{i}Y_{1i}+\left( 1-D_{i}\right) Y_{0i},\text{ }i=1,2,\cdots ,n,
\end{eqnarray*}%
where $U_{1i}$ is generated by $U_{1i}=-0.75U_{i}+\sqrt{1-0.75^{2}}\varrho
_{i}$ with a standard normal random variable $\varrho _{i}$ that is
independent of $U_{i}$ and $X_{i}$, and $U_{0i}$ follows an independent
standard normal distribution as well. The sample size $n$ is set 4000.

We consider 16 different designs. In the benchmark design (Design 1), the
structural treatment error $U_{i}$ follows a standard normal distribution
that is independent of $X_{i}$, and the treatment function $\mu \left( \cdot
\right) $ is quadratic in the continuous covairate and linear in the
discrete covariates. Specifically, we set%
\begin{equation}
\mu \left( X_{i}\right) =X_{i}^{C}+\frac{1}{2}\left[ \left( X_{i}^{C}\right)
^{2}-1\right] +\sum_{j=1}^{5}\frac{1}{j}\left( X_{ji}^{D}-0.5\right) .
\label{treat_fn}
\end{equation}%
Note that $E\left[ \mu \left( X_{i}\right) \right] =0$ and the probability
of treatment is about $0.5$. In addition, we impose the exclusion
restriction $\beta _{1}^{C}=\beta _{0}^{C}=0$ in Design 1. Other designs
include cases of no exclusion restriction, linear or cubic treatment
function $\mu \left( \cdot \right) $, nonnormal treatment error $U_{i}$,
nonzero correlation between $U_{i}$ and $X_{i}^{C}$ as in Example \ref%
{example:depend}, heteroskedastic $U_{i}$ or $U_{di}$, nonnormal forms of
MTE, and nonlinear outcome equations. Table \ref{table:designs} collects
detailed descriptions of the designs. For each design, the simulation
replicates 1000 times.

In every replication $b=1,\cdots ,1000$, we compute different estimators for
the MTE evaluated at $x=\bar{X}$, the sample mean of $X_{i}$, and $%
v=0.01,0.02,\cdots ,0.99$. The first estimator is fully parametric, assuming
jointly normally distributed errors and quadratic $\mu \left( \cdot \right) $%
: $\mu \left( X_{i}\right) =\gamma _{0}+\gamma _{1}X_{i}^{C}+\gamma
_{2}\left( X_{i}^{C}\right) ^{2}+\sum_{j=1}^{5}\gamma _{2+j}X_{ji}^{D}$. The
normal assumption leads to the widely-used Heckman's two-step estimation.
Nevertheless, the parametric estimator has a high risk of misspecification.
More flexible estimators comprise of a nonparametric first step (\ref%
{PSfunction}) and a parametric second step. In the first step, we use the
second-order ($s=2$) Gaussian kernel as $k_{1}\left( \cdot \right) $ because
higher-order kernels are likely to result in unstable estimates in practice %
\citep{pan2022semiparametric}, and use the rule-of-thumb bandwidth as given
in Table \ref{table:implement}. In the second step, we consider three
parametric specifications on the unobservable heterogeneity of MTE, namely,
a normal specification that $E\left[ \left. U_{di}\right\vert V_{i}=v\right]
=\rho _{d0}+\rho _{dV}\Phi ^{-1}\left( v\right) $, a linear specification
that $E\left[ \left. U_{di}\right\vert V_{i}=v\right] =\theta _{d0}+\theta
_{d1}v$, and a quadratic specification that $E\left[ \left.
U_{di}\right\vert V_{i}=v\right] =\theta _{d0}+\theta _{d1}v+\theta
_{d2}v^{2}$, where $V_{i}=F_{\left. U\right\vert X}\left( \left.
U_{i}\right\vert X_{i}\right) $ is the reduced-form treatment error. The
most flexible estimator under Assumptions L and NL is the semiparametric one
that comprises of a nonparametric first step and a partially linear second
step, leading to the kernel-weighted pairwise difference estimator (\ref%
{PDLS})-(\ref{MTEhat}). We use the Gaussian kernel and rule-of-thumb
bandwidths as well. Theorem \ref{theorem:AN1} has established the asymptotic
normality of this semiparametric MTE estimator. Recall that all the above
estimators impose no exclusion restriction and base the identification on
different functional forms between the treatment and outcome equations. As a
comparison, we also consider the conventional IV-based MTE estimator under
the semiparametric specification, whose implementation differs from the
semiparametric non-IV MTE estimator only in that it assumes the exclusion
restriction $\beta _{1}^{C}=\beta _{0}^{C}=0$. For each MTE estimator, we
calculate the average over replications, that is, $\left. \sum_{b=1}^{1000}%
\hat{\Delta}_{b}^{\text{MTE}}\left( \bar{X},v\right) \right/ 1000$. And we
depict the averaged MTE curves, along with the true MTE curve, for each
design in Figures \ref{Fig:Design1-4}-\ref{Fig:Design13-16}.

In the benchmark design, the parametric and normal MTE estimators are
correctly specified. Therefore, we can see from the upper-left panel of
Figure \ref{Fig:Design1-4} that the parametric MTE performs perfectly and
the normal MTE comes in second, and both of them are superior to the
semiparametric one which exploits the least information of the model. Since
the exclusion restriction holds in this design, the IV-based MTE is also
consistent and performs well. However, by comparing the two semiparametric
estimators, we find little improvement of the IV-based MTE over the non-IV
MTE, implying that the exclusion restriction provides not much extra
identifying power when the identification based on functional forms holds.
The misspecified linear and quadratic MTE estimators perform satisfactorily
as well, mainly because the true MTE curve is close to linear. When the
exclusion restriction is violated as in Design 2, the upper-right panel of
Figure \ref{Fig:Design1-4} shows that the IV-based MTE estimator
deteriorates drastically due to misspecification. In Design 3, Assumption NL
is violated and the identification based on functional forms fails to hold.
The bottom-left panel of Figure \ref{Fig:Design1-4} confirms that in this
case, the normal, linear, quadratic, and semiparametric non-IV MTE
estimators are underidentified, while the IV-based MTE provides much
improvement over its non-IV counterpart. Somewhat surprisingly, however, the
parametric MTE still performs perfectly, suggesting a potential success of
the classical parametric identification based on functional forms, even when
the semiparametric identification fails. The bottom-right panel of Figure %
\ref{Fig:Design1-4} shows that estimation based on functional forms is more
robust than that based on IV to the violation of identifying conditions.

Besides the curves, we also summarize the simulation results of the MTE
estimators by the bias $\left. \sum_{b=1}^{1000}\left[ \hat{\Delta}_{b}^{%
\text{MTE}}\left( \bar{X},v\right) -\Delta ^{\text{MTE}}\left( \bar{X}%
,v\right) \right] \right/ 1000$, the root mean squared error (RMSE) $\sqrt{%
\left. \sum_{b=1}^{1000}\left[ \hat{\Delta}_{b}^{\text{MTE}}\left( \bar{X}%
,v\right) -\Delta ^{\text{MTE}}\left( \bar{X},v\right) \right] ^{2}\right/
1000}$, and the coverage probability of 95\% confidence interval $\left.
\sum_{b=1}^{1000}1\left\{ \left\vert \hat{\Delta}_{b}^{\text{MTE}}\left( 
\bar{X},v\right) -\Delta ^{\text{MTE}}\left( \bar{X},v\right) \right\vert
\leq 1.96\cdot SE\left( \hat{\Delta}^{\text{MTE}}\left( \bar{X},v\right)
\right) \right\} \right/ 1000$, where $SE\left( \hat{\Delta}^{\text{MTE}%
}\left( \bar{X},v\right) \right) $ is approximated by the simulated standard
deviation of the estimator. Table \ref{table:Design1-4} reports the summary
statistics of $\hat{\Delta}_{b}^{\text{MTE}}\left( \bar{X},0.5\right) $ for
Designs 1-4, along with those of the ATE estimator and of $\hat{\beta}%
_{1}^{C}-\hat{\beta}_{0}^{C}$ that is the estimated partial effect of the
continuous covariate on treatment effects. In Design 1, the coverage
probabilities of all the estimators are close to the true size 95\%, partly
demonstrating the validity of a simple $t$-test when the identification is
attained. In Design 2, the IV-based estimator is heavily biased in both
estimation and inference, while in Design 3, the non-IV estimators are
biased except the fully parametric one, which is consistent with the finding
of Figure \ref{Fig:Design1-4}. The summary statistics in Design 4 show that
estimation based on functional forms is much more robust than that based on
IV in terms of inference as well.

In the remaining designs, we relax the exclusion restriction and no longer
report the results for the IV-based estimator. Design 5 considers a case
that allows for nonzero correlation between the continuous covariate and
structural treatment error but Assumption CMI holds, as in Example \ref%
{example:depend}. The upper-left panel of Figure \ref{Fig:Design5-8} shows
that the parametric MTE curve is still almost completely overlapped with the
true MTE curve, while the other four MTE curves are more biased than in the
uncorrelated case (Designs 1-2). Although biased, the semiparametric MTE
estimator is fairly robust in terms of inference as shown by Table \ref%
{table:Design5-8}. Design 6 is inspired by Example \ref{example:hetero},
where nonlinearity comes from the multiplicative heteroscedasticity in the
treatment error. In this design, the quadratic treatment functional form of
the parametric MTE is wrongly specified. Although the misspecification does
not lead to large bias in estimation, it does lead to apparent invalidation
of inference revealed by the severely biased coverage probability of the
parametric estimator. In comparison, the normal estimator that allows for a
nonparametric treatment equation performs satisfactorily in terms of the
coverage probability. Designs 7 and 8 specify normal polynomial functional
forms on the unobservable heterogeneity of MTE, under which the considered
parametric second steps are all misspecified. Therefore, the semiparametric
estimator that allows for nonparametric functional form of the unobservable
heterogeneity of MTE has the best overall performance in these two designs,
as illustrated by the lower panels of Figure \ref{Fig:Design5-8} and the
right half part of Table \ref{table:Design5-8}.

Figures \ref{Fig:Design9-12} and \ref{Fig:Design13-16} depict the MTE curves
in additional designs. Designs 9 and 10 consider nonnormally distributed
treatment error $U_{i}$, under which the parametric MTE estimator is
misspecified and expected to be inconsistent. However, the upper-left panel
of Figure \ref{Fig:Design9-12} shows that the parametric MTE still works in
Design 9. This is mainly because the MTE estimator depends only on the
propensity score (namely, the conditional expectation of treatment) but not
on regression coefficients of the treatment equation. The estimation of the
propensity score is actually a prediction problem. Some forms of
misspecification may have a less serious impact on the prediction than on
the regression. In Design 9, the $t\left( 3\right) $ distribution of the
treatment error is symmetric about zero and resembles the normal
distribution, thus exerting only a small influence on the parametric MTE. In
comparison, the $t\left( 2\right) $ distributed treatment error in Design 10
is more fat-tailed and therefore leads to more biased parametric MTE. The
nonnormally distributed treatment error also results in nonnormal
unobservable heterogeneity of MTE. As in Designs 7 and 8, the semiparametric
MTE performs the best overall in such case. Design 11 considers a cubic
treament function $\mu \left( \cdot \right) $, under which the parametric
MTE is notably biased due to severe misspecification. The last panel of
Figure \ref{Fig:Design9-12} for Design 12 shows that the heteroskedasticity
in outcome errors makes little difference to the performance of the MTE
estimators by comparing with the benchmark design.

Figure \ref{Fig:Design13-16} investigates the impact of nonlinearity in the
outcome equation, and the results are also as expected. That is, if we
correctly include the nonlinear terms during implementation, the MTE
estimator still performs well as shown by the dotted lines that stand for
the correctly specified semiparametric estimator. However, if we omit the
nonlinear terms in the outcome equation, the MTE estimator may be biased,
and the magnitude of bias depends on the specific forms of nonlinearity.
When the nonlinear term is an interaction of two discrete covariates as in
Design 13, the omission of it has quite limited influence on the MTE
estimators. Nevertheless, the omitted variable problem will become obvious
if the interaction term pertains to the continuous covariate as in Design
14. The lower panels of Figure \ref{Fig:Design13-16} further show that
omitting a quadratic term or an exponential term in the outcome equation
will induce more substantial bias in the estimation of MTE.

In summary, the fully parametric estimator and the estimators that specify
nonparametric first step and parametric second step are the first choice
when we are confident that the specification is correct or at least nearly
correct. Otherwise, the semiparametric estimator is perferred because its
performance is the most robust across different designs.

\renewcommand{\thetable}{F.\arabic{table}} \setcounter{table}{0}
\renewcommand*{\theHtable}{\thetable}
\renewcommand{\thefigure}{F.\arabic{figure}} \setcounter{figure}{0}
\renewcommand*{\theHfigure}{\thefigure}

\clearpage

\begin{sidewaystable}[htb]
\caption{An overview of the testing and estimation procedures}
\label{table:implement}
\medskip \renewcommand\arraystretch{1.2}
\resizebox{\linewidth}{!}{
\begin{tabular}{lrL{27cm}}
\hline\hline
\multicolumn{3}{l}{\textbf{Step 0}: Test Assumptions S and NL} \\ \cline{2-3}
& \textbf{Step 0.1}: & List the data cells for all possible values of
discrete covariates and the corresponding conditional supports of the
propensity score, as in Table \ref{table:support}. \\ 
& \textbf{Step 0.2}: & Find a data cell that satisfies Assumption S and has
at least 50 observations. \\ 
&  & \textit{Note}: A method of quick validation of Assumption S is to check
whether the present data cell has full support or whether its support is
overlapped with all other data cells' supports. \\ 
& \textbf{Step 0.3}: & Test Assumption NL for the data cell selected by Step
0.2. \\ 
&  & \textit{If there is only one continuous covariate}, then plot the
propensity score function for the selected data cell as in Figure \ref%
{Fig:AssumptionNL1}, and visually inspect whether the function is
nonmonotonic. \\ 
&  & \textit{If there are at least two continuous covariates}, then test the
nonlinearity of the propensity score function by methods from the literature
on single-index models \citep[e.g.,][]{fan1996consistent,
stute2005nonparametric, xia2009model, maistre2019nonparametric}. \\ 
& \textbf{Step 0.4}: & If Assumption NL holds, go to Step 1. Otherwise,
return to Step 0.2-0.3 until finding one data cell for which Assumptions S
and NL both hold. \\ \hline
\multicolumn{3}{l}{\textbf{Step 1}: Estimate the treatment equation $%
D=1\left\{ \mu \left( X\right) \geq U\right\} =1\left\{ \pi \left( X\right)
\geq V\right\} $} \\ \cline{2-3}
& \textbf{Case 1.1}: & Nonparametric (kernel) estimation. Suppose that both
the functional form of $\mu \left( \cdot \right) $ and the distribution of
the structural treatment error $U$ are nonparametrically specified. Then,
estimate the propensity score $\pi \left( X\right) $ by the kernel estimator
(\ref{PSfunction}). \\ 
&  & \textit{Note}: We recommend to use the Gaussian kernel (i.e., standard
normal density) as $k_{1}\left( \cdot \right) $ and the rule-of-thumb
bandwidth $h_{1}=sd\left( X^{C}\right) \cdot n^{-1\left/ \left( 4+\dim
\left( X^{C}\right) \right) \right. }$, where $sd\left( X^{C}\right) $ is
the sample standard deviation of the continuous covariate. \\ 
& \textbf{Case 1.2}: & Semiparametric (single-index) estimation. Suppose
that $\mu \left( \cdot \right) $ is parametrically specified as a linear
index and the distribution of $U$ is nonparametrically specified. Then,
estimate $\pi \left( X\right) $ by methods from the literature on
single-index models \citep[e.g.,][]{powell1989semiparametric,
ichimura1993semiparametric, klein1993efficient, lewbel2000semiparametric}.
\\ 
& \textbf{Case 1.3}: & Parametric (Probit/Logit) estimation. Suppose that $%
\mu \left( \cdot \right) $ is parametrically specified and $U$ is assumed to
follow a normal/logistic distribution. Then, perform a Probit/Logit
regression on the treatment equation and predict $\pi \left( X\right) $. \\ 
\hline
\multicolumn{3}{l}{\textbf{Step 2}: Estimate the outcome equation (\ref{md})
and MTE (\ref{MTEiden})} \\ \cline{2-3}
& \textbf{Case 2.1}: & Semiparametric (partially linear) estimation. Suppose
that the functional form of $g_{d}\left( \cdot \right) $ is
nonparametrically specified.\ Then, estimate $\beta _{d}$ by the
kernel-weighted pairwise difference estimator (\ref{PDLS}) and $g_{d}\left(
\cdot \right) $ by the local linear estimator. \\ 
&  & \textit{Note}: We recommend to use the Gaussian kernel and
rule-of-thumb bandwidths $h_{2}=h_{3}=sd\left( \hat{P}\right) \cdot
n^{-1\left/ 5\right. }$ when estimating $\beta _{d}$ and $g_{d}\left( \cdot
\right) $. \\ 
& \textbf{Case 2.2}: & Parametric estimation. Suppose that $g_{d}\left(
\cdot \right) $ is parametrically specified as $g_{d}\left( \cdot ;\theta
_{d}\right) $ for a finite-dimensional parameter $\theta _{d}$. We consider
three commonly used specifications. \\ 
&  & \textbf{Case 2.2.1}: Polynomial specification. Suppose that $%
g_{d}\left( \cdot \right) $ is a polynomial function. Then, regress $Y$ on $%
X $, $\hat{P}$, and the powers of $\hat{P}$ for the treated and untreated
groups to estimate $\beta _{d}$ and $g_{d}\left( \cdot \right) $. \\ 
&  & \textbf{Case 2.2.2}: Normal specification. Suppose that $\left(
U,U_{d}\right) $ follows a bivariate normal distribution, which implies $%
g_{d}\left( p\right) =\theta _{1d}\left. \phi \left( \Phi ^{-1}\left(
p\right) \right) \right/ \left( p-1+d\right) $. Therefore, regress $Y$ on $X$
and $\left. \phi \left( \Phi ^{-1}\left( \hat{P}\right) \right) \right/
\left( \hat{P}-1+d\right) $ for each group $d=0,1$ to estimate $\beta _{d}$
and $g_{d}\left( \cdot \right) $. \\ 
&  & \textbf{Case 2.2.3}: Normal polynomial specification, which includes
the normal specification as a special case when the order of polynomial is
one. If the order is set two, then add a term $\Phi ^{-1}\left( \hat{P}%
\right) \left. \phi \left( \Phi ^{-1}\left( \hat{P}\right) \right) \right/
\left( \hat{P}-1+d\right) $ to the above regression. If the order is set
three, further add $\left( \Phi ^{-2}\left( \hat{P}\right) +2\right) \left.
\phi \left( \Phi ^{-1}\left( \hat{P}\right) \right) \right/ \left( \hat{P}%
-1+d\right) $. If the order is set four, further add $\left( \Phi
^{-3}\left( \hat{P}\right) +3\Phi ^{-1}\left( \hat{P}\right) \right) \left.
\phi \left( \Phi ^{-1}\left( \hat{P}\right) \right) \right/ \left( \hat{P}%
-1+d\right) $, and so on. \\ 
& \textbf{Step 2.3}: & Estimate the MTE. After obtaining the estimates of $%
\beta _{d}$ and $g_{d}\left( \cdot \right) $, plug them into the equation (%
\ref{MTEhat}) to construct the estimator for the MTE. \\ \hline\hline
\end{tabular}}
\end{sidewaystable}

\clearpage

\begin{table}[htb]
\caption{Support of the nonparametrically estimated propensity score in each
data cell}
\label{table:support}
\medskip \renewcommand\arraystretch{1.1} {\small 
\begin{tabular}{ccccccc}
\hline\hline
Data cell & Age & Gender & Race & Parental education & Observations & 
Support of $\hat{\pi} _{0}\left( X^{C}\right) $ \\ \hline
1 & 30 & Male & Hispanic & High school or lower & 70 & [0, 0.208] \\ 
2 & 30 & Male & Hispanic & College or higher & 10 & [0, 0.288] \\ 
3 & 30 & Male & Black & High school or lower & 93 & [0, 0.588] \\ 
4 & 30 & Male & Black & College or higher & 21 & [0, 0.813] \\ 
5 & 30 & Male & White & High school or lower & 147 & [0, 0.177] \\ 
6 & 30 & Male & White & College or higher & 78 & [0, 0.208] \\ 
7 & 30 & Female & Hispanic & High school or lower & 65 & [0, 0.434] \\ 
8 & 30 & Female & Hispanic & College or higher & 11 & 0 \\ 
9 & 30 & Female & Black & High school or lower & 100 & [0, 0.671] \\ 
10 & 30 & Female & Black & College or higher & 20 & [0, 0.641] \\ 
11 & 30 & Female & White & High school or lower & 125 & [0, 0.759] \\ 
12 & 30 & Female & White & College or higher & 73 & [0, 0.179] \\ 
13 & 31 & Male & Hispanic & High school or lower & 87 & [0, 0.374] \\ 
14 & 31 & Male & Hispanic & College or higher & 19 & [0, 0.442] \\ 
15 & 31 & Male & Black & High school or lower & 112 & [0.030, 0.548] \\ 
16 & 31 & Male & Black & College or higher & 24 & [0, 0.531] \\ 
17 & 31 & Male & White & High school or lower & 148 & [0, 0.242] \\ 
18 & 31 & Male & White & College or higher & 92 & [0, 0.125] \\ 
19 & 31 & Female & Hispanic & High school or lower & 97 & [0, 0.737] \\ 
20 & 31 & Female & Hispanic & College or higher & 14 & [0, 0.788] \\ 
21 & 31 & Female & Black & High school or lower & 121 & [0.249, 0.817] \\ 
22 & 31 & Female & Black & College or higher & 24 & [0, 0.639] \\ 
23 & 31 & Female & White & High school or lower & 149 & [0, 0.178] \\ 
24 & 31 & Female & White & College or higher & 75 & [0, 0.221] \\ 
25 & 32 & Male & Hispanic & High school or lower & 65 & [0, 0.771] \\ 
26 & 32 & Male & Hispanic & College or higher & 13 & 0 \\ 
27 & 32 & Male & Black & High school or lower & 133 & [0, 0.623] \\ 
28 & 32 & Male & Black & College or higher & 23 & [0, 0.949] \\ 
29 & 32 & Male & White & High school or lower & 175 & [0, 0.194] \\ 
30 & 32 & Male & White & College or higher & 90 & [0, 0.061] \\ \hline\hline
\end{tabular}
}
\end{table}

\renewcommand{\thetable}{F.\arabic{table} (cont.)} \setcounter{table}{1}

\begin{table}[htb]
\caption{Support of the nonparametrically estimated propensity score in each
data cell}
\medskip \renewcommand\arraystretch{1.1} {\small 
\begin{tabular}{ccccccc}
\hline\hline
Data cell & Age & Gender & Race & Parental education & Observations & 
Support of $\hat{\pi} _{0}\left( X^{C}\right) $ \\ \hline
31 & 32 & Female & Hispanic & High school or lower & 94 & [0, 0.642] \\ 
32 & 32 & Female & Hispanic & College or higher & 17 & [0, 0.351] \\ 
33 & 32 & Female & Black & High school or lower & 130 & [0.021, 0.999] \\ 
34 & 32 & Female & Black & College or higher & 34 & [0, 1] \\ 
35 & 32 & Female & White & High school or lower & 155 & [0, 0.526] \\ 
36 & 32 & Female & White & College or higher & 81 & [0, 0.118] \\ 
37 & 33 & Male & Hispanic & High school or lower & 72 & [0, 0.567] \\ 
38 & 33 & Male & Hispanic & College or higher & 15 & [0, 0.743] \\ 
39 & 33 & Male & Black & High school or lower & 123 & [0, 0.818] \\ 
40 & 33 & Male & Black & College or higher & 25 & [0, 0.596] \\ 
41 & 33 & Male & White & High school or lower & 141 & [0, 0.154] \\ 
42 & 33 & Male & White & College or higher & 95 & [0, 0.133] \\ 
43 & 33 & Female & Hispanic & High school or lower & 83 & [0, 0.481] \\ 
44 & 33 & Female & Hispanic & College or higher & 10 & [0, 0.447] \\ 
45 & 33 & Female & Black & High school or lower & 125 & [0.258, 0.628] \\ 
46 & 33 & Female & Black & College or higher & 27 & [0, 0.925] \\ 
47 & 33 & Female & White & High school or lower & 152 & [0, 0.140] \\ 
48 & 33 & Female & White & College or higher & 93 & [0, 0.350] \\ 
49 & 34 & Male & Hispanic & High school or lower & 85 & [0, 0.864] \\ 
50 & 34 & Male & Hispanic & College or higher & 16 & [0, 0.673] \\ 
51 & 34 & Male & Black & High school or lower & 101 & [0.176, 1] \\ 
52 & 34 & Male & Black & College or higher & 27 & [0, 0.790] \\ 
53 & 34 & Male & White & High school or lower & 162 & [0, 0.204] \\ 
54 & 34 & Male & White & College or higher & 85 & [0, 0.057] \\ 
55 & 34 & Female & Hispanic & High school or lower & 73 & [0, 0.443] \\ 
56 & 34 & Female & Hispanic & College or higher & 11 & 0 \\ 
57 & 34 & Female & Black & High school or lower & 109 & [0.271, 0.586] \\ 
58 & 34 & Female & Black & College or higher & 24 & [0, 0.856] \\ 
59 & 34 & Female & White & High school or lower & 143 & [0, 0.5] \\ 
60 & 34 & Female & White & College or higher & 76 & [0, 0.055] \\ 
\hline\hline
\end{tabular}
}
\end{table}

\begin{figure}[tbh]
\caption{Test of Assumption NL1 for all data cells with size larger than 50}
\label{Fig:R1}
\bigskip 
\begin{minipage}[t]{0.245\linewidth}
  \centerline{\includegraphics[width=3.813cm,height=2.773cm]{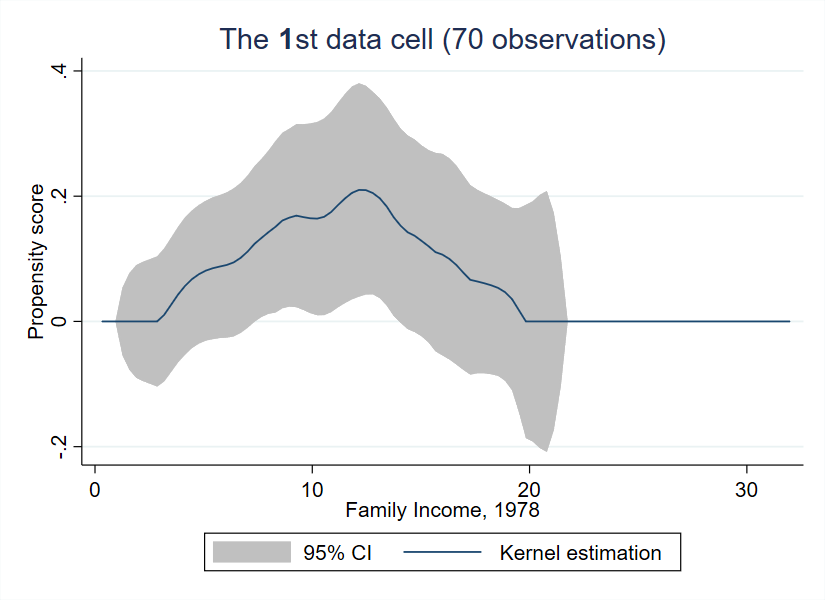}}
\end{minipage}
\begin{minipage}[t]{0.245\linewidth}
  \centerline{\includegraphics[width=3.813cm,height=2.773cm]{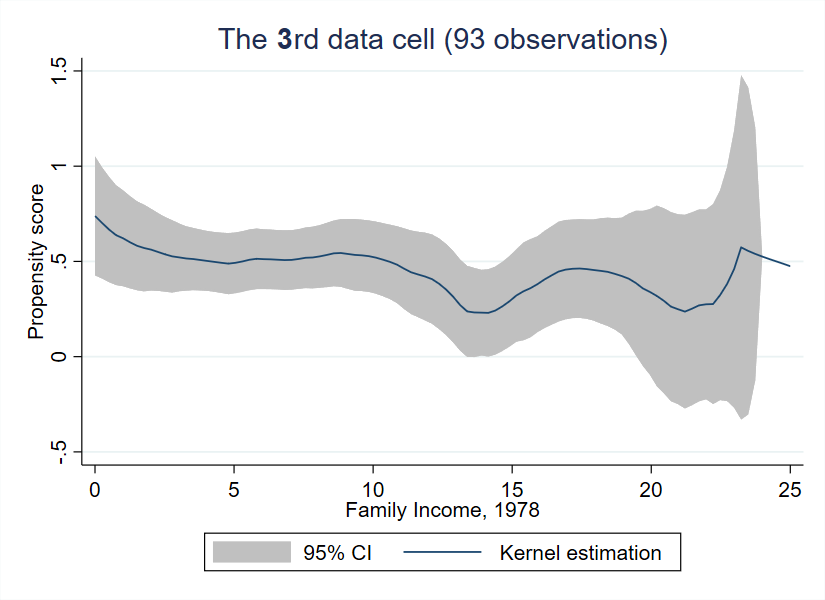}}
\end{minipage}
\begin{minipage}[t]{0.245\linewidth}
  \centerline{\includegraphics[width=3.813cm,height=2.773cm]{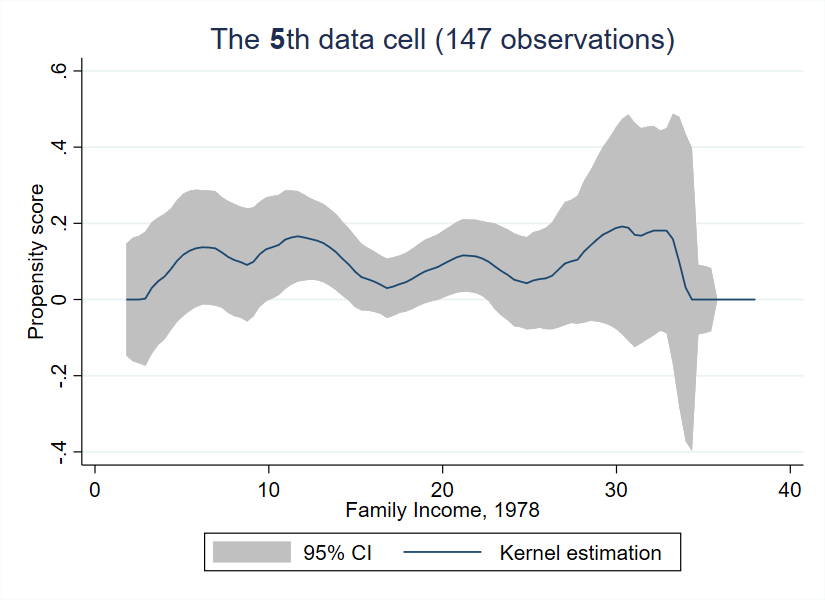}}
\end{minipage}
\begin{minipage}[t]{0.245\linewidth}
  \centerline{\includegraphics[width=3.813cm,height=2.773cm]{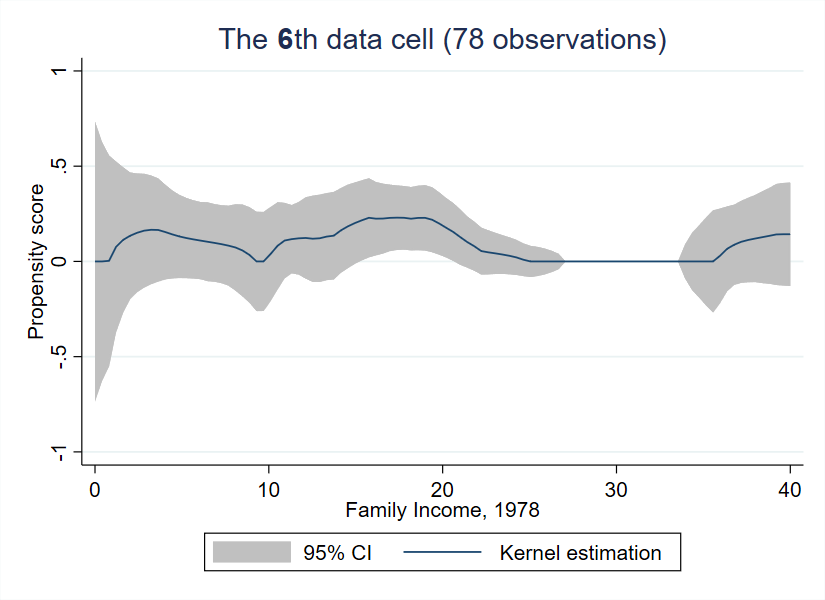}}
\end{minipage}
\begin{minipage}[t]{0.245\linewidth}
  \centerline{\includegraphics[width=3.813cm,height=2.773cm]{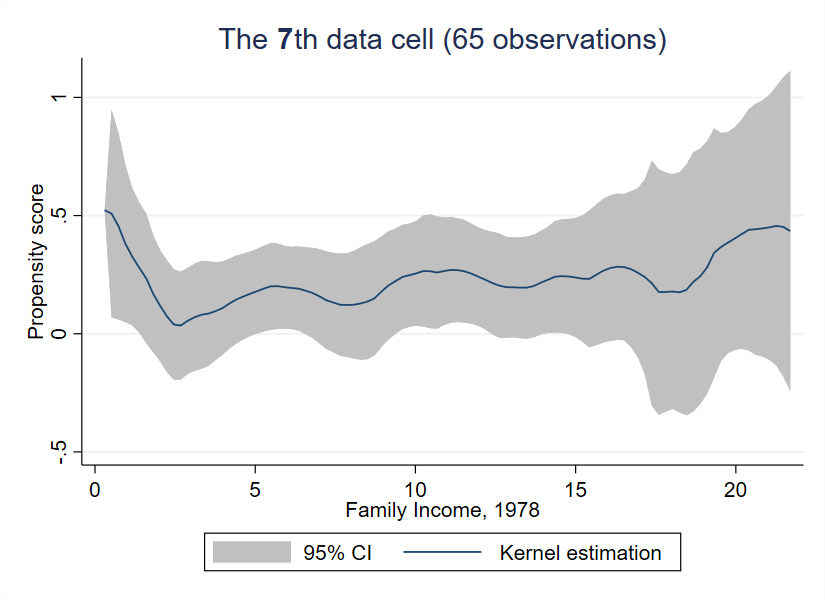}}
\end{minipage}
\begin{minipage}[t]{0.245\linewidth}
  \centerline{\includegraphics[width=3.813cm,height=2.773cm]{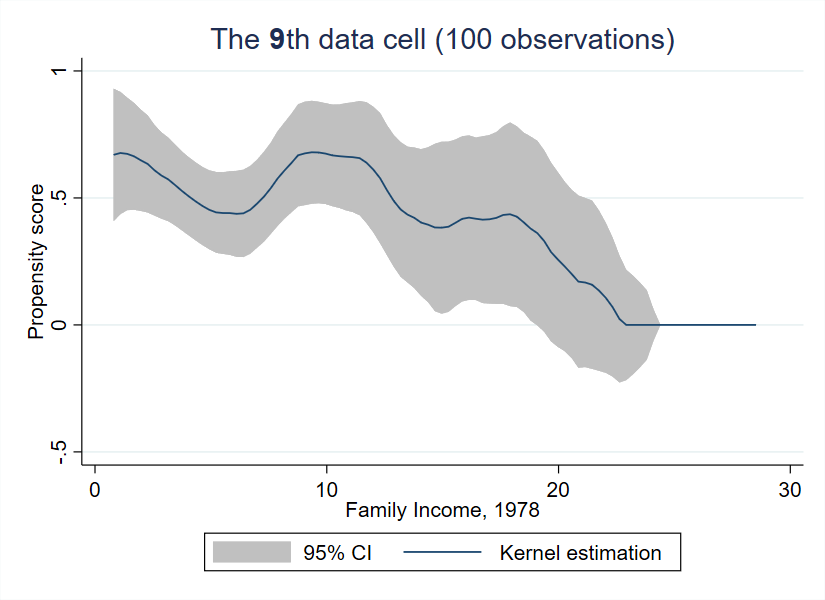}}
\end{minipage}
\begin{minipage}[t]{0.245\linewidth}
  \centerline{\includegraphics[width=3.813cm,height=2.773cm]{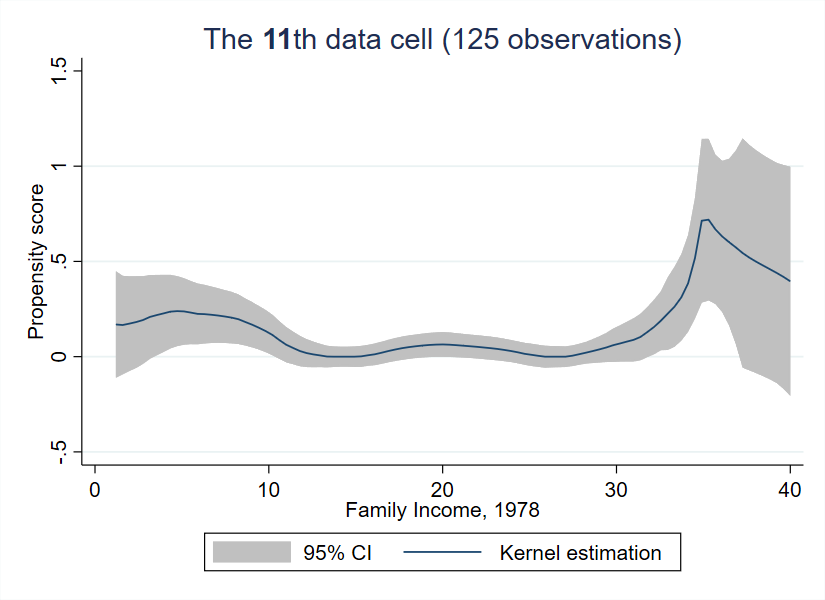}}
\end{minipage}
\begin{minipage}[t]{0.245\linewidth}
  \centerline{\includegraphics[width=3.813cm,height=2.773cm]{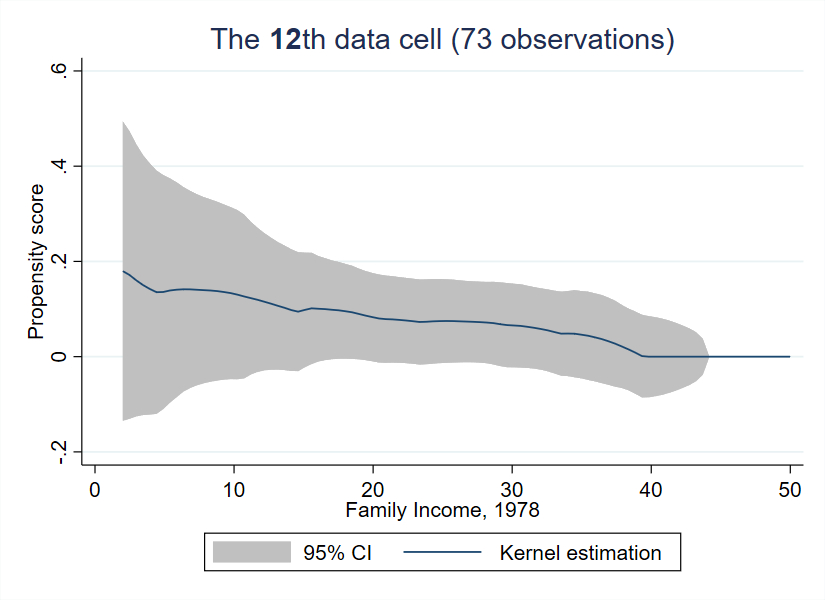}}
\end{minipage}
\begin{minipage}[t]{0.245\linewidth}
  \centerline{\includegraphics[width=3.813cm,height=2.773cm]{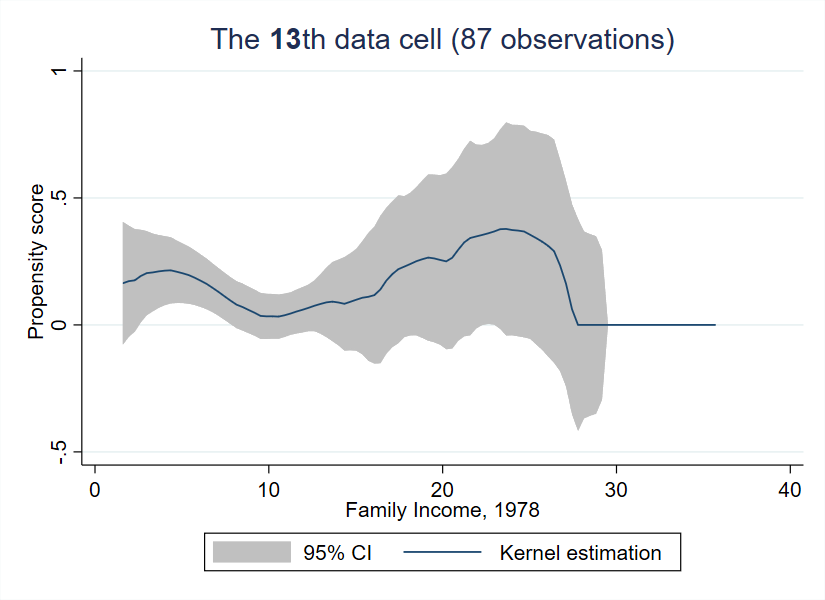}}
\end{minipage}
\begin{minipage}[t]{0.245\linewidth}
  \centerline{\includegraphics[width=3.813cm,height=2.773cm]{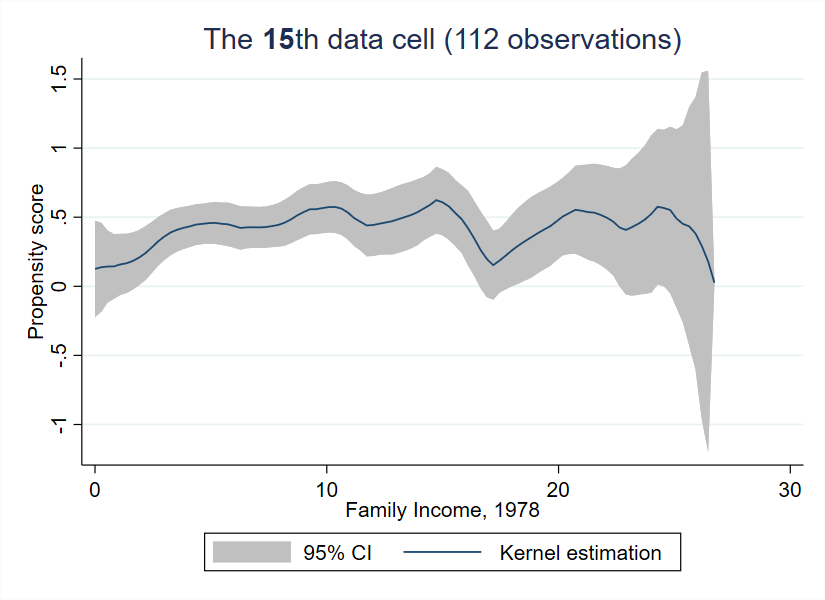}}
\end{minipage}
\begin{minipage}[t]{0.245\linewidth}
  \centerline{\includegraphics[width=3.813cm,height=2.773cm]{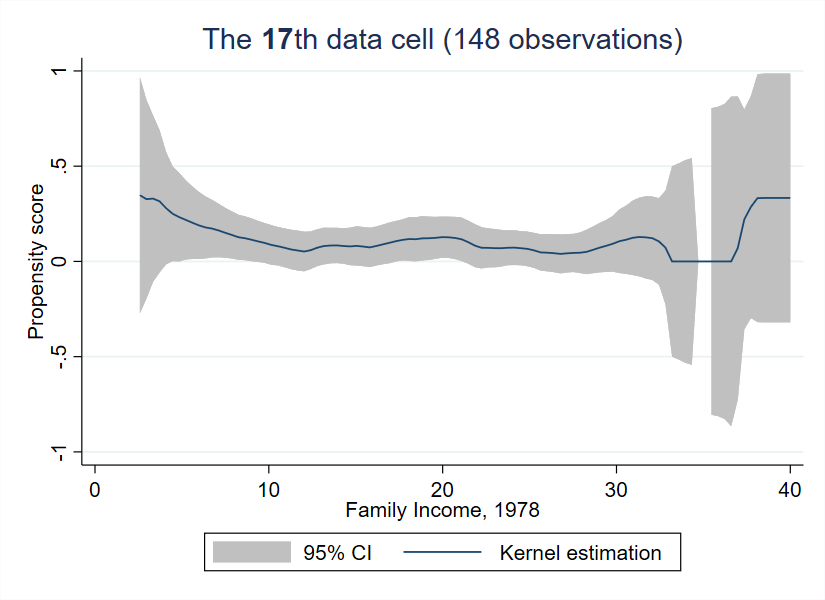}}
\end{minipage}
\begin{minipage}[t]{0.245\linewidth}
  \centerline{\includegraphics[width=3.813cm,height=2.773cm]{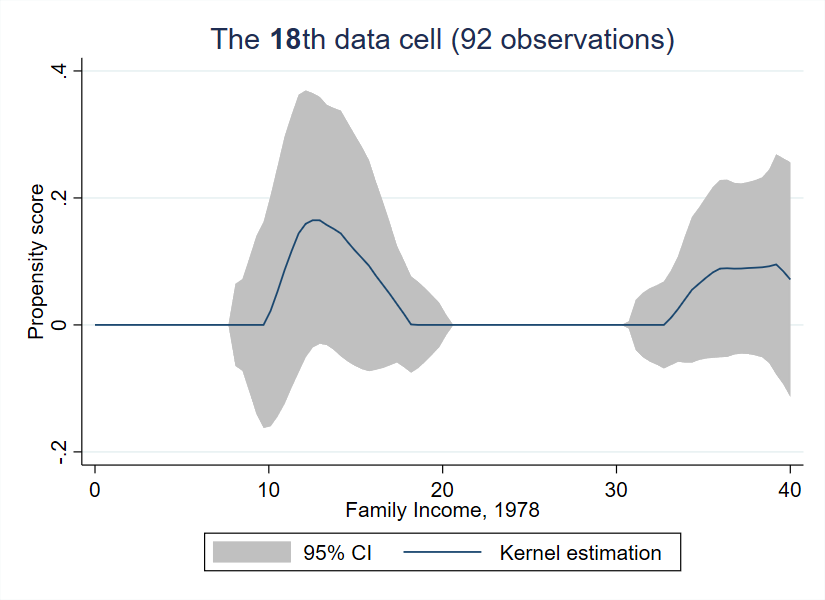}}
\end{minipage}
\begin{minipage}[t]{0.245\linewidth}
  \centerline{\includegraphics[width=3.813cm,height=2.773cm]{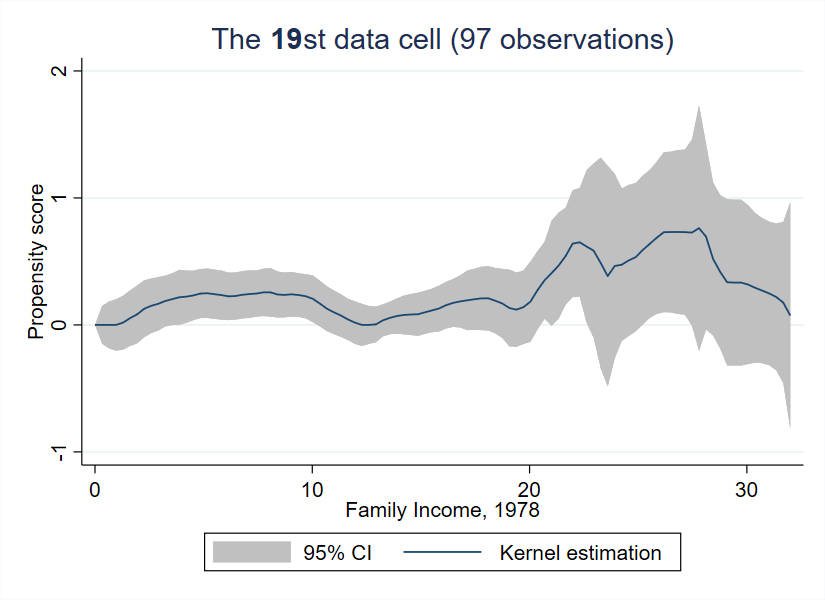}}
\end{minipage}
\begin{minipage}[t]{0.245\linewidth}
  \centerline{\includegraphics[width=3.813cm,height=2.773cm]{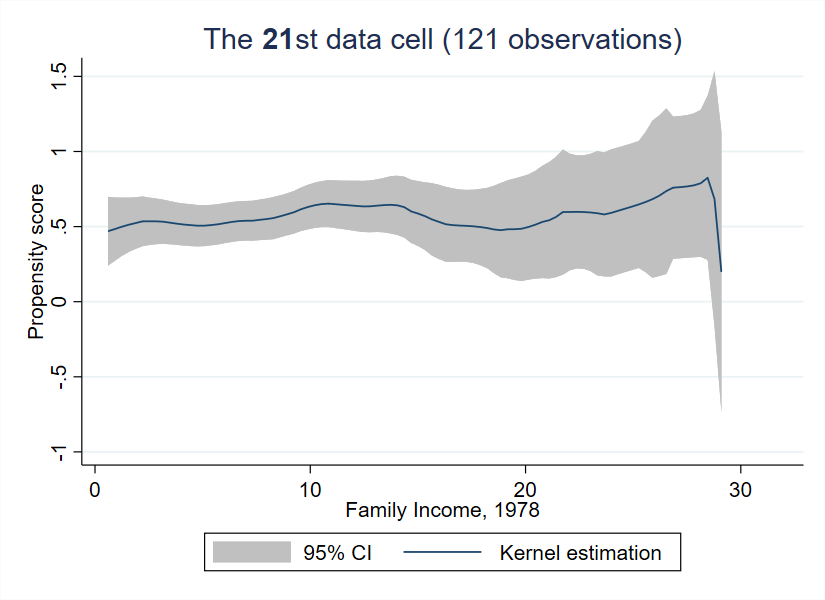}}
\end{minipage}
\begin{minipage}[t]{0.245\linewidth}
  \centerline{\includegraphics[width=3.813cm,height=2.773cm]{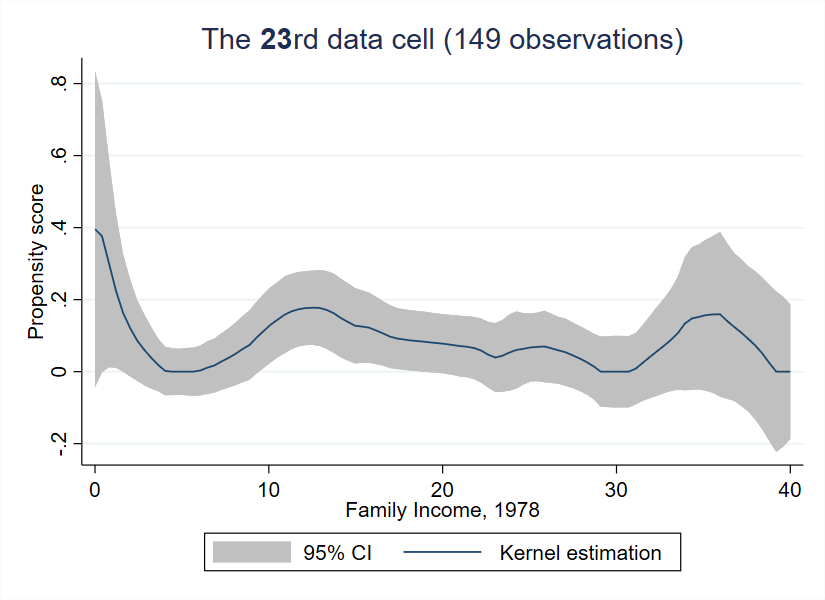}}
\end{minipage}
\begin{minipage}[t]{0.245\linewidth}
  \centerline{\includegraphics[width=3.813cm,height=2.773cm]{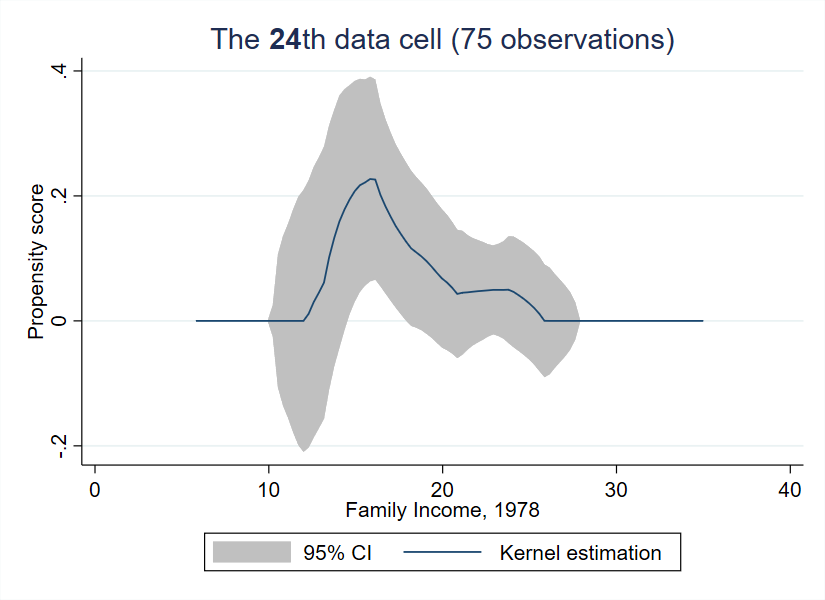}}
\end{minipage}
\begin{minipage}[t]{0.245\linewidth}
  \centerline{\includegraphics[width=3.813cm,height=2.773cm]{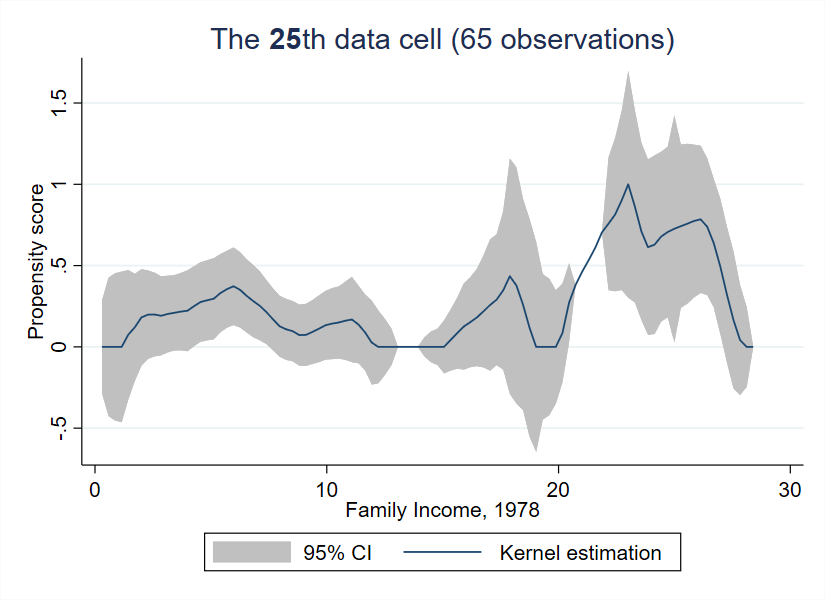}}
\end{minipage}
\begin{minipage}[t]{0.245\linewidth}
  \centerline{\includegraphics[width=3.813cm,height=2.773cm]{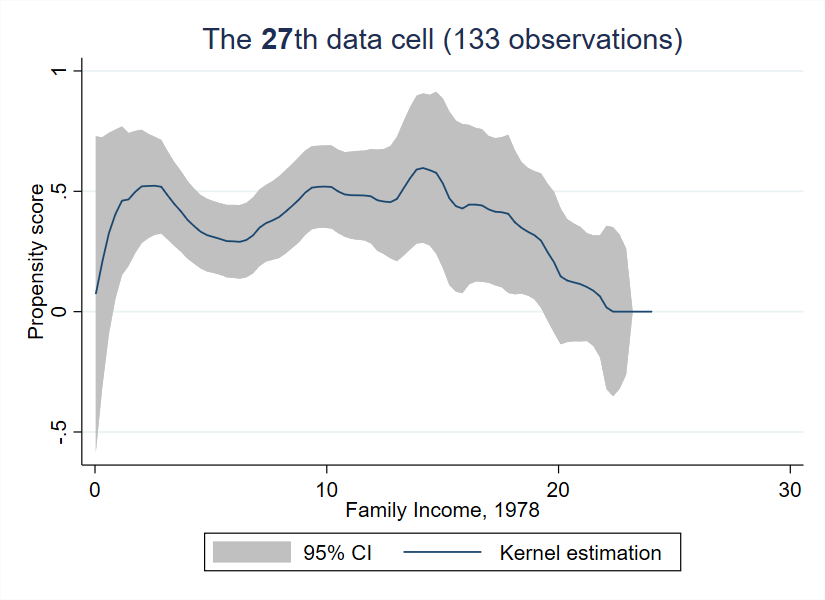}}
\end{minipage}
\begin{minipage}[t]{0.245\linewidth}
  \centerline{\includegraphics[width=3.813cm,height=2.773cm]{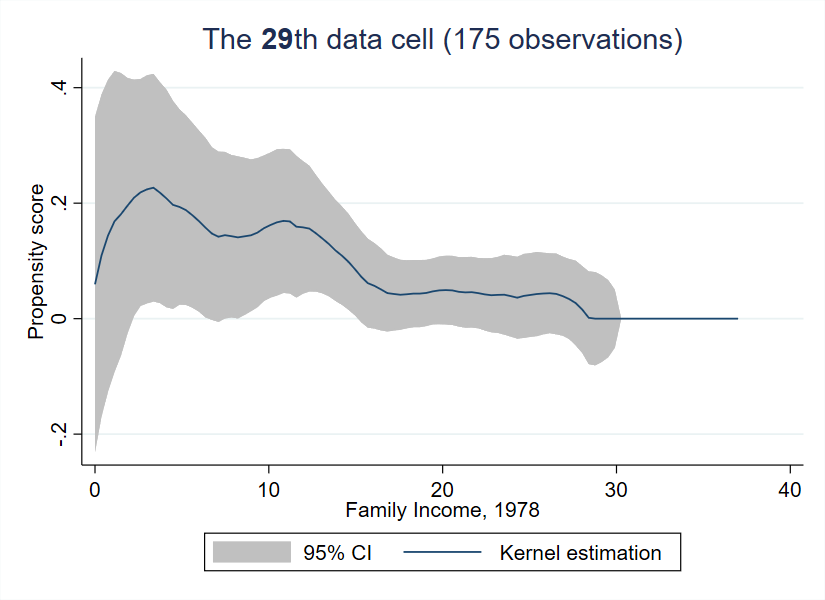}}
\end{minipage}
\begin{minipage}[t]{0.245\linewidth}
  \centerline{\includegraphics[width=3.813cm,height=2.773cm]{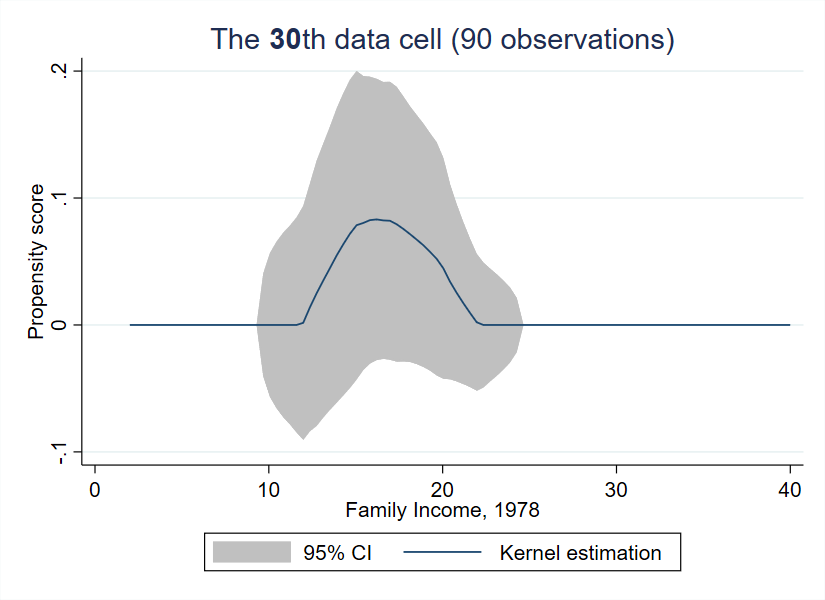}}
\end{minipage}
\begin{minipage}[t]{0.245\linewidth}
  \centerline{\includegraphics[width=3.813cm,height=2.773cm]{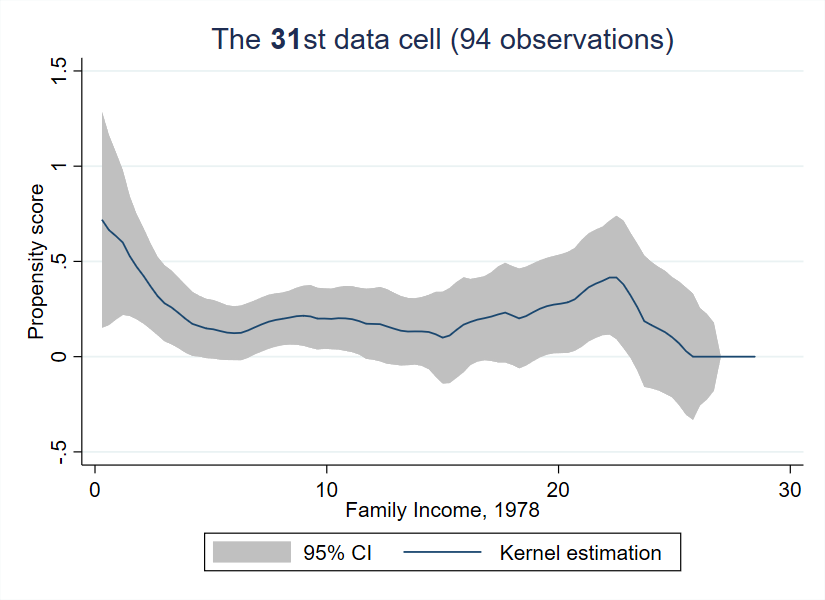}}
\end{minipage}
\begin{minipage}[t]{0.245\linewidth}
  \centerline{\includegraphics[width=3.813cm,height=2.773cm]{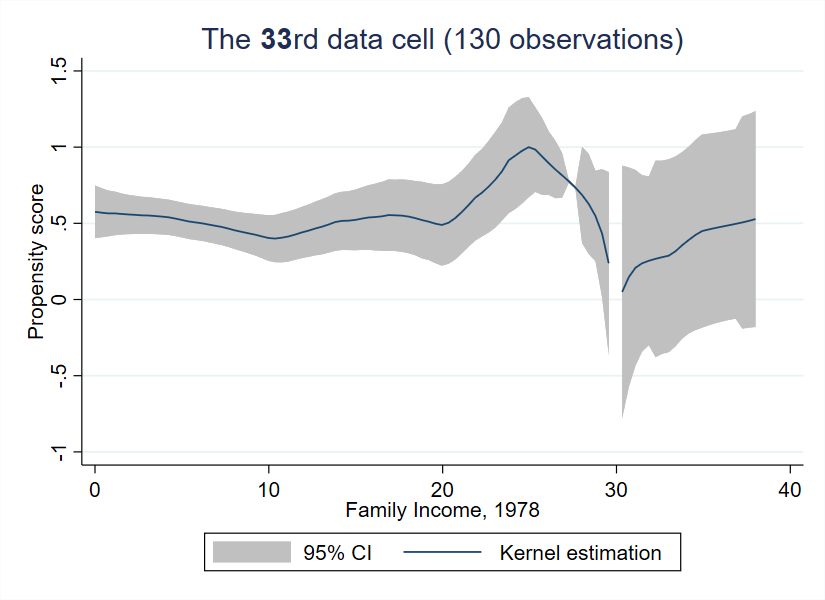}}
\end{minipage}
\begin{minipage}[t]{0.245\linewidth}
  \centerline{\includegraphics[width=3.813cm,height=2.773cm]{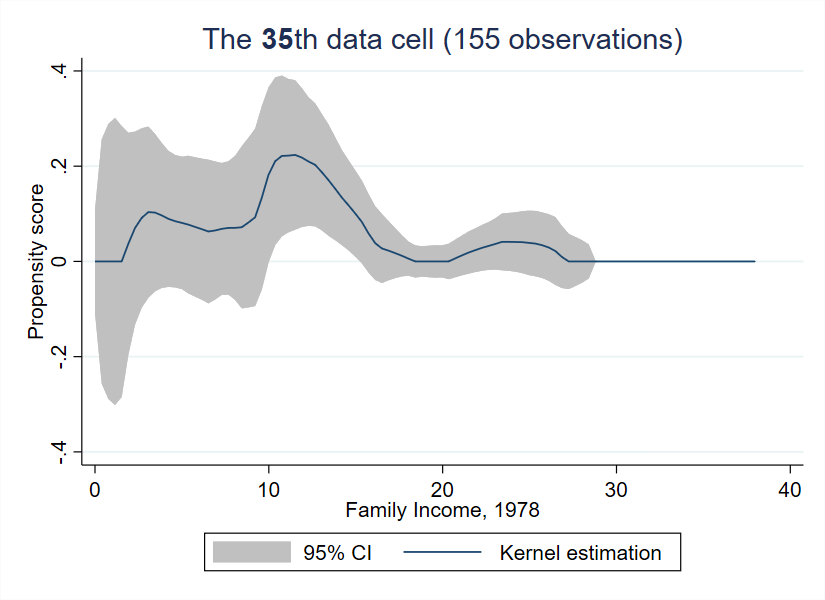}}
\end{minipage}
\begin{minipage}[t]{0.245\linewidth}
  \centerline{\includegraphics[width=3.813cm,height=2.773cm]{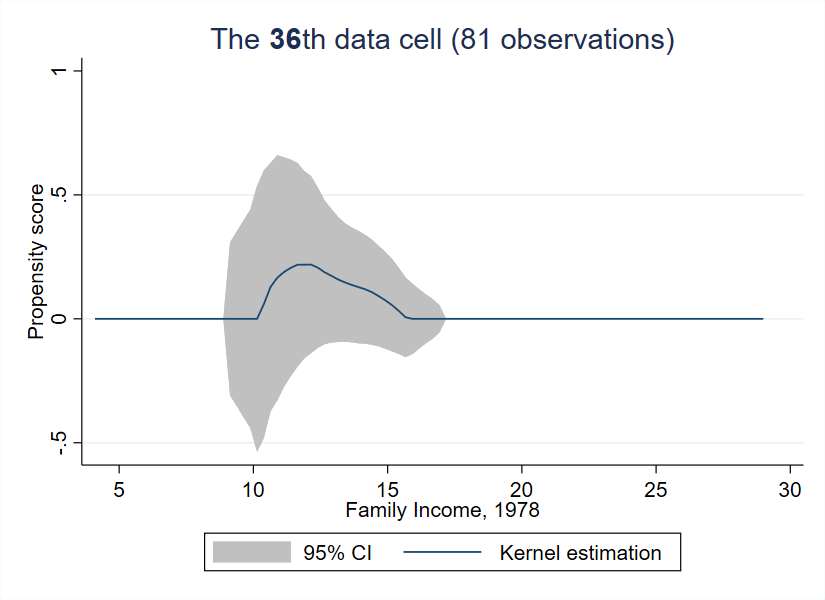}}
\end{minipage}
\begin{minipage}[t]{0.245\linewidth}
  \centerline{\includegraphics[width=3.813cm,height=2.773cm]{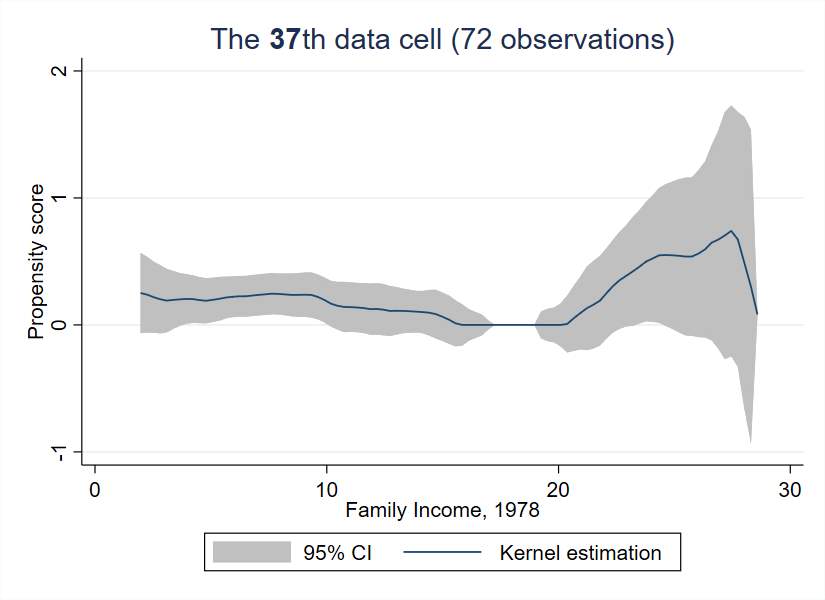}}
\end{minipage}
\begin{minipage}[t]{0.245\linewidth}
  \centerline{\includegraphics[width=3.813cm,height=2.773cm]{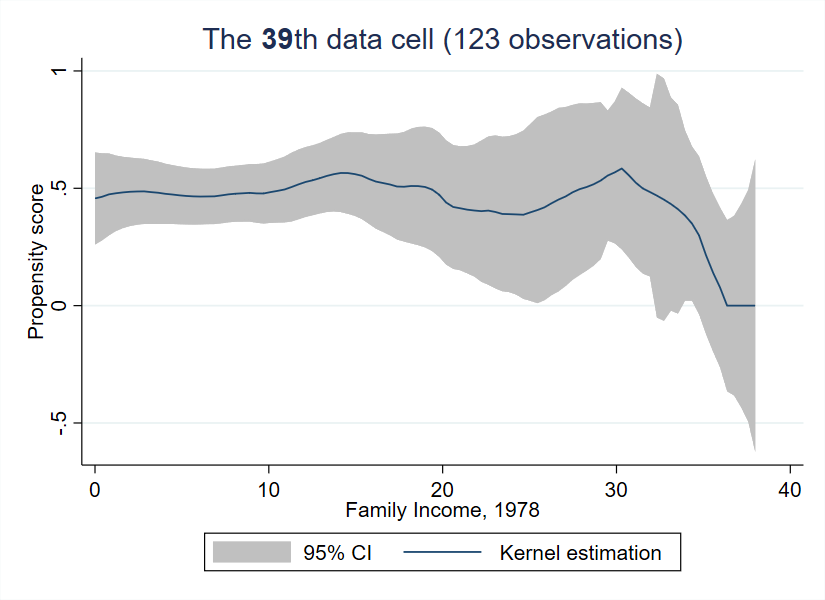}}
\end{minipage}
\begin{minipage}[t]{0.245\linewidth}
  \centerline{\includegraphics[width=3.813cm,height=2.773cm]{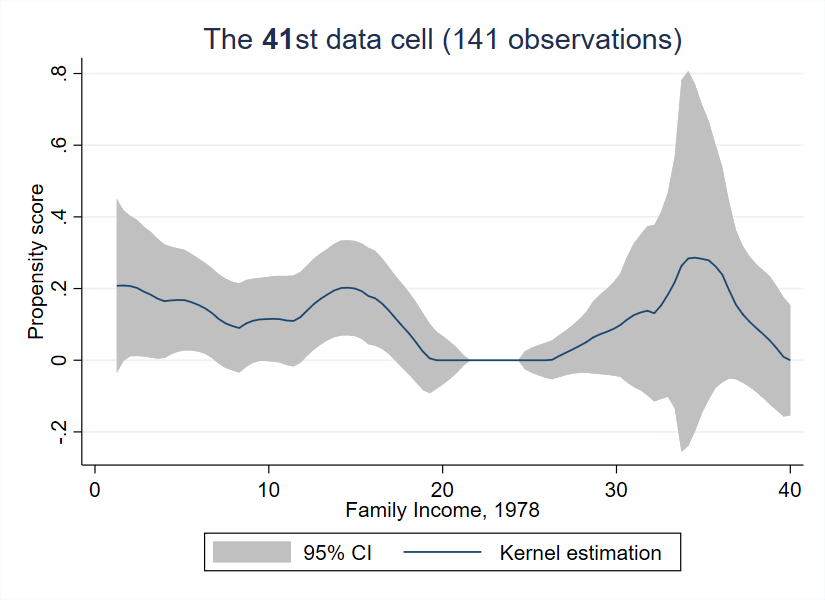}}
\end{minipage}
\begin{minipage}[t]{0.245\linewidth}
  \centerline{\includegraphics[width=3.813cm,height=2.773cm]{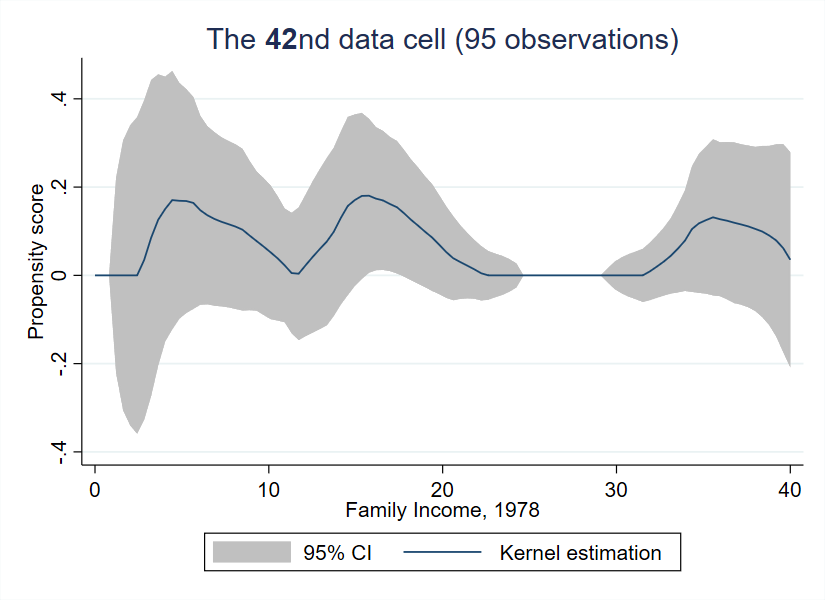}}
\end{minipage}
\end{figure}

\renewcommand{\thefigure}{F.\arabic{figure} (cont.)} \setcounter{figure}{0} 
\renewcommand*{\theHfigure}{\thefigure}

\begin{figure}[tbh]
\caption{Test of Assumption NL1 for all data cells with size larger than 50}
\bigskip 
\begin{minipage}[t]{0.245\linewidth}
  \centerline{\includegraphics[width=3.813cm,height=2.773cm]{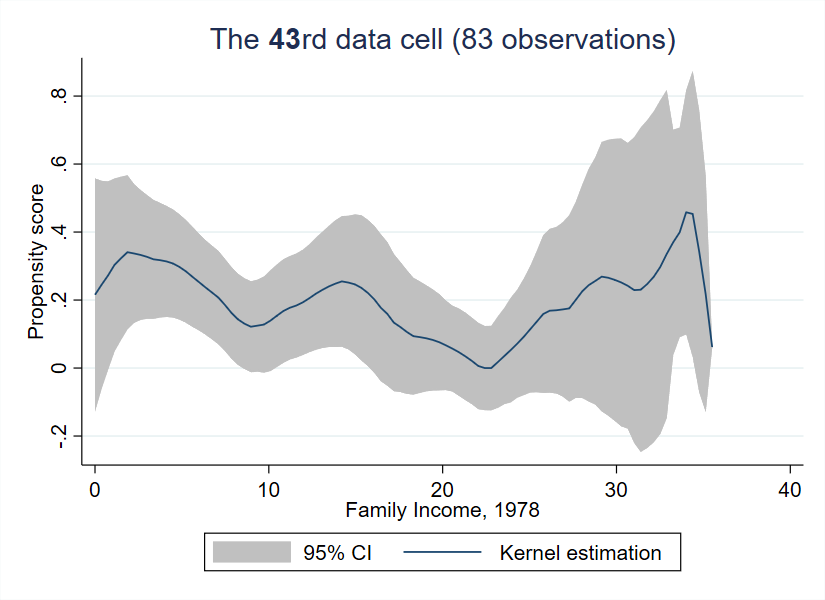}}
\end{minipage}
\begin{minipage}[t]{0.245\linewidth}
  \centerline{\includegraphics[width=3.813cm,height=2.773cm]{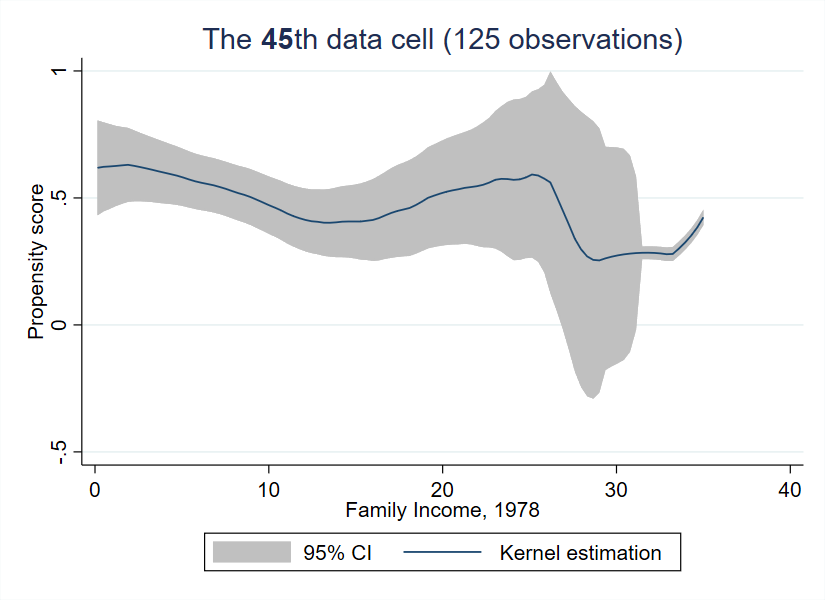}}
\end{minipage}
\begin{minipage}[t]{0.245\linewidth}
  \centerline{\includegraphics[width=3.813cm,height=2.773cm]{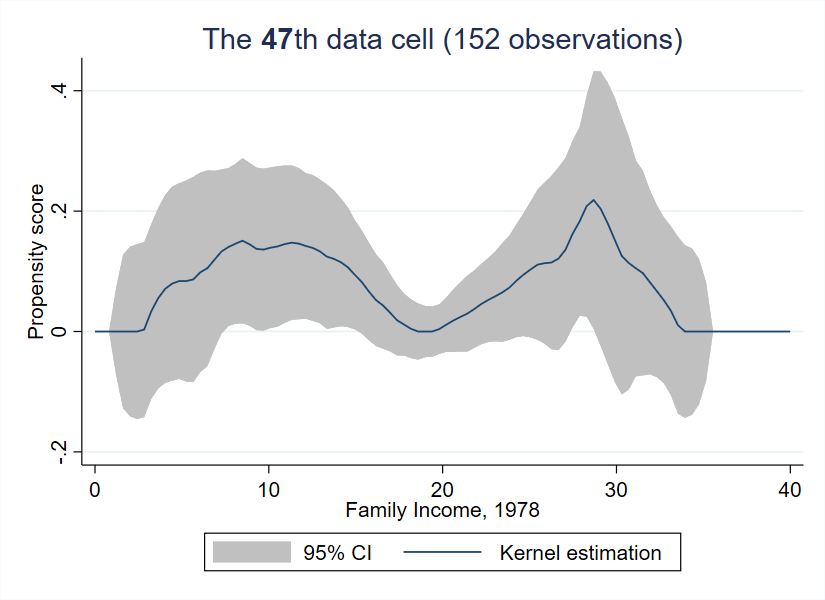}}
\end{minipage}
\begin{minipage}[t]{0.245\linewidth}
  \centerline{\includegraphics[width=3.813cm,height=2.773cm]{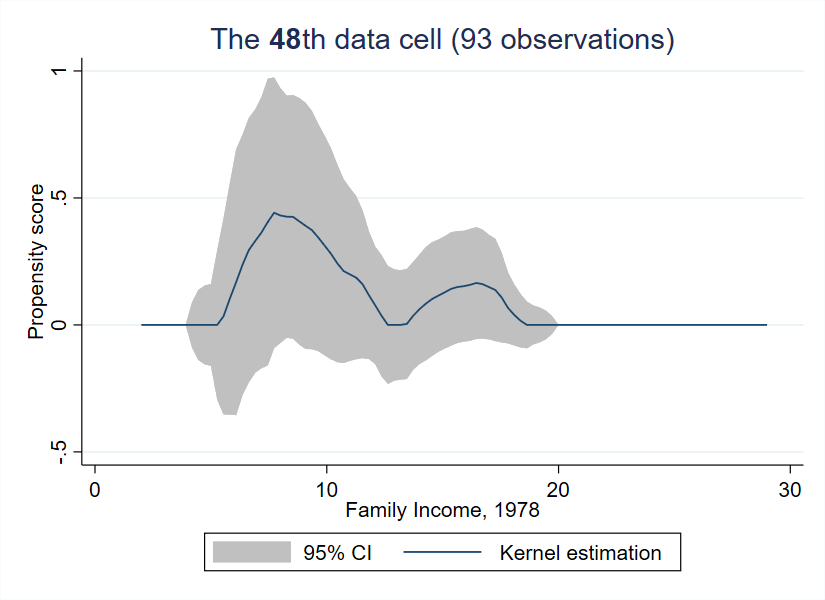}}
\end{minipage}
\begin{minipage}[t]{0.245\linewidth}
  \centerline{\includegraphics[width=3.813cm,height=2.773cm]{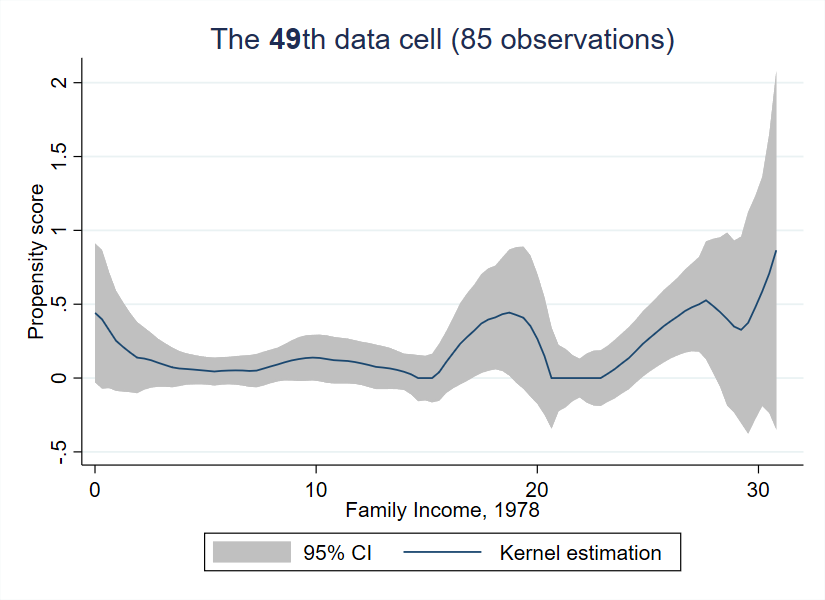}}
\end{minipage}
\begin{minipage}[t]{0.245\linewidth}
  \centerline{\includegraphics[width=3.813cm,height=2.773cm]{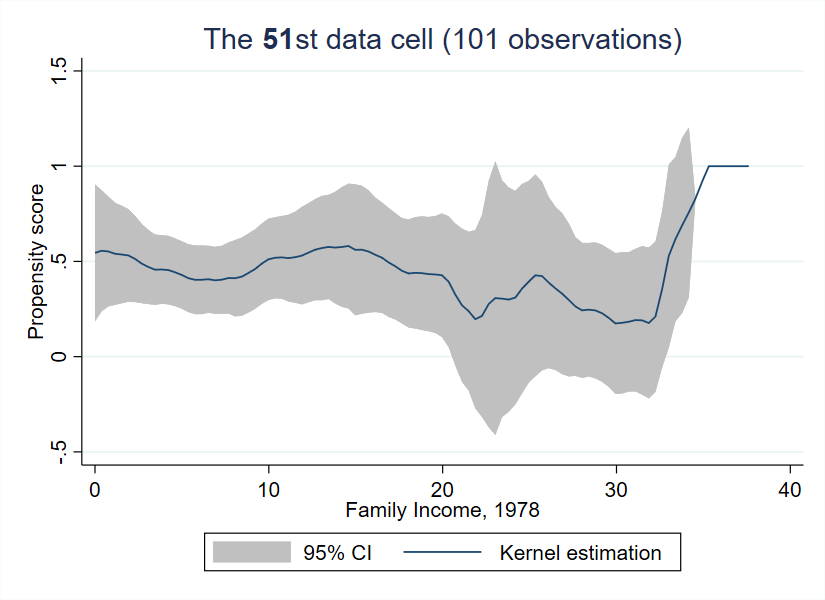}}
\end{minipage}
\begin{minipage}[t]{0.245\linewidth}
  \centerline{\includegraphics[width=3.813cm,height=2.773cm]{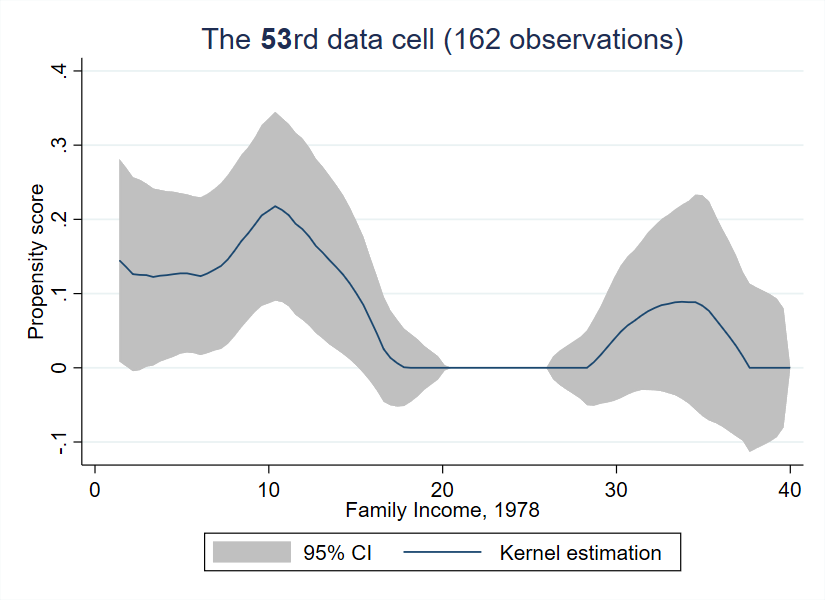}}
\end{minipage}
\begin{minipage}[t]{0.245\linewidth}
  \centerline{\includegraphics[width=3.813cm,height=2.773cm]{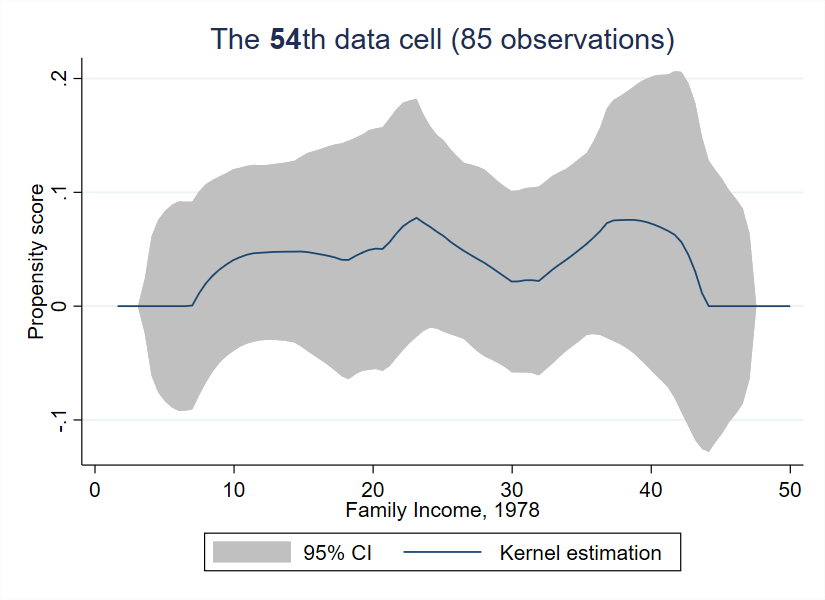}}
\end{minipage}
\begin{minipage}[t]{0.245\linewidth}
  \centerline{\includegraphics[width=3.813cm,height=2.773cm]{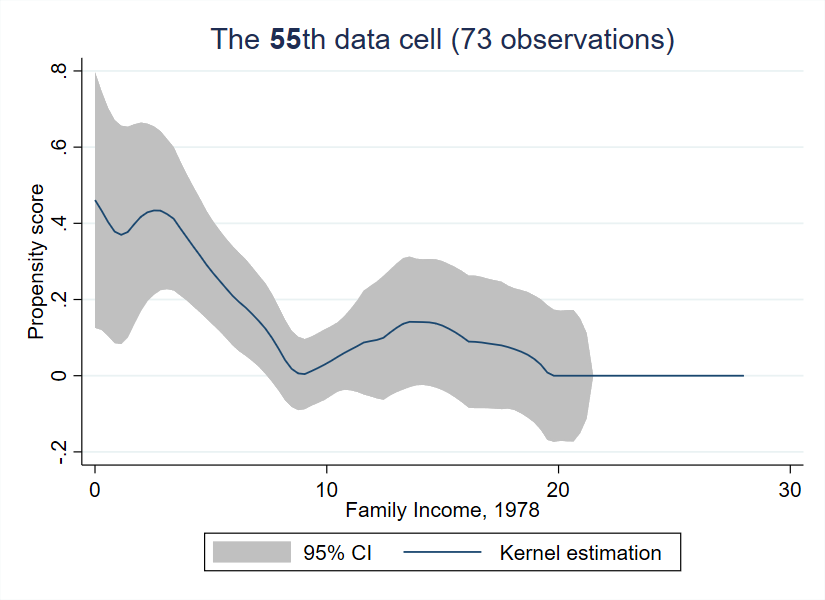}}
\end{minipage}
\begin{minipage}[t]{0.245\linewidth}
  \centerline{\includegraphics[width=3.813cm,height=2.773cm]{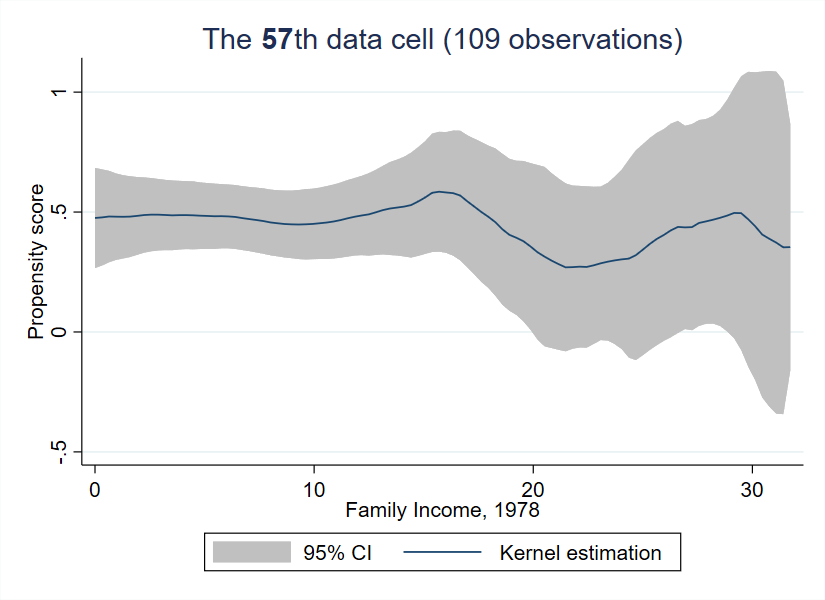}}
\end{minipage}
\begin{minipage}[t]{0.245\linewidth}
  \centerline{\includegraphics[width=3.813cm,height=2.773cm]{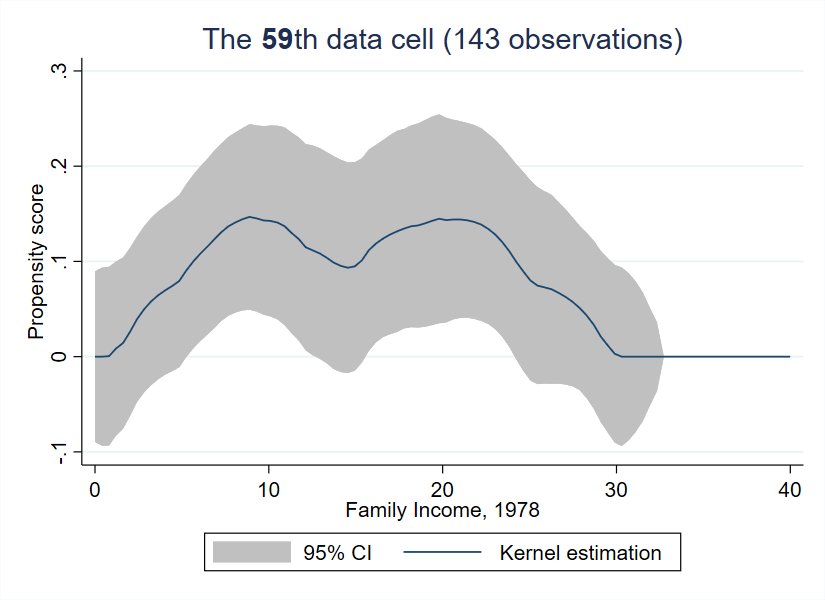}}
\end{minipage}
\begin{minipage}[t]{0.245\linewidth}
  \centerline{\includegraphics[width=3.813cm,height=2.773cm]{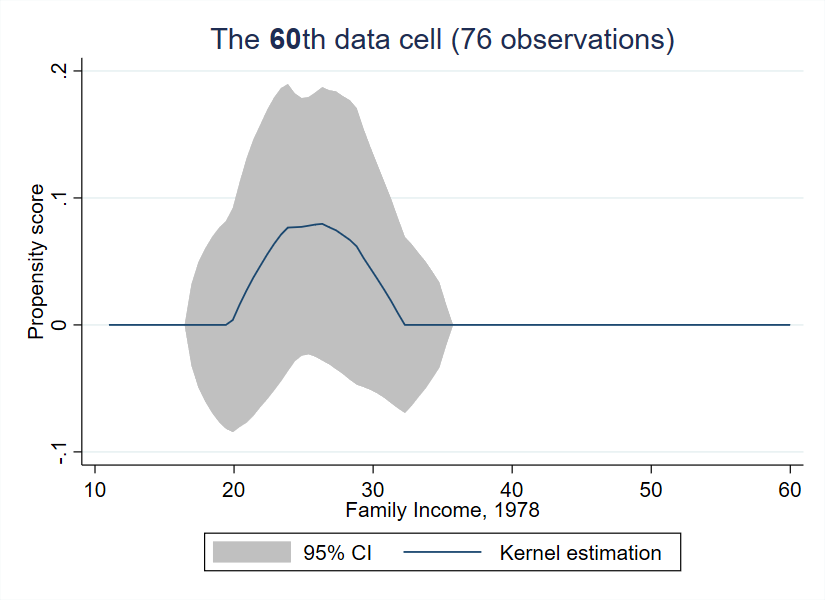}}
\end{minipage}
\end{figure}

\renewcommand{\thetable}{G.\arabic{table}} \setcounter{table}{0}
\renewcommand*{\theHtable}{\thetable} 
\renewcommand{\thefigure}{G.\arabic{figure}} \setcounter{figure}{0} 
\renewcommand*{\theHfigure}{\thefigure}

\clearpage

\begin{sidewaystable}[htb]
\caption{Simulation designs}
\label{table:designs}
\medskip \renewcommand\arraystretch{1.4}
\resizebox{\linewidth}{!}{
\begin{tabular}{clC{11.8cm}cc}
\hline\hline
Design & Exclusion restriction & Other difference from Design 1 & Consistent
estimators & Inconsistent estimators \\ \hline
1 & Yes: $\beta _{1}^{C}=\beta _{0}^{C}=0$ & - & P, N, S, SIV & L, Q \\ 
2 & No: $\beta _{1}^{C}=2,\beta _{0}^{C}=1$ & - & P, N, S & L, Q, SIV \\ 
3 & Yes: $\beta _{1}^{C}=\beta _{0}^{C}=0$ & Linear treatment function: $\mu
\left( X_{i}\right) =X_{i}^{C}+\sum_{j=1}^{5}\frac{1}{j}\left(
X_{ji}^{D}-0.5\right) $. & SIV & P, N, L, Q, S \\ 
4 & No: $\beta _{1}^{C}=2,\beta _{0}^{C}=1$ & Linear treatment function: $%
\mu \left( X_{i}\right) =X_{i}^{C}+\sum_{j=1}^{5}\frac{1}{j}\left(
X_{ji}^{D}-0.5\right) $. & - & P, N, L, Q, S, SIV \\ 
5 & No: $\beta _{1}^{C}=2,\beta _{0}^{C}=1$ & $X_{i}^{C}=\sqrt{1-0.75^{2}}%
U_{i}+0.75\varrho _{i}$ s.t. $Cov\left( U_{i},X_{i}^{C}\right) \neq 0$. & P,
N, S & L, Q \\ 
6 & No: $\beta _{1}^{C}=2,\beta _{0}^{C}=1$ & (i) Linear treatment function: 
$\mu \left( X_{i}\right) =X_{i}^{C}+\sum_{j=1}^{5}\frac{1}{j}\left(
X_{ji}^{D}-0.5\right) $, and (ii) Heteroskedastic $U_{i}$: $U_{i}=\exp
\left( X_{i}^{C}+\frac{1}{2}\left[ \left( X_{i}^{C}\right) ^{2}-1\right]
\right) \epsilon _{i}$, where $\epsilon _{i}\sim N\left( 0,1\right) $ and is
independent of $X_{i}^{C}$. & N, S & P, L, Q \\ 
7 & No: $\beta _{1}^{C}=2,\beta _{0}^{C}=1$ & Normal quadratic MTE: $%
U_{1i}=-\left( \frac{3}{8}U_{i}^{2}+\frac{3}{4}U_{i}-\frac{3}{8}\right) +%
\sqrt{1-\frac{27}{32}}\varrho _{i}$ s.t. $E\left[ \left. U_{1i}\right\vert
V_{i}=v\right] =\frac{3}{8}-\frac{3}{4}\Phi ^{-1}\left( v\right) -\frac{3}{8}%
\Phi ^{-2}\left( v\right) $, where $V_{i}=\Phi \left( U_{i}\right) $. & S & 
P, N, L, Q \\ 
8 & No: $\beta _{1}^{C}=2,\beta _{0}^{C}=1$ & Normal cubic MTE: $U_{1i}=-%
\frac{1}{4}U_{i}^{3}+\frac{1}{4}\varrho _{i}$ s.t. $E\left[ \left.
U_{1i}\right\vert V_{i}=v\right] =-\frac{1}{4}\Phi ^{-3}\left( v\right) $. & 
S & P, N, L, Q \\ 
9 & No: $\beta _{1}^{C}=2,\beta _{0}^{C}=1$ & $U_{i}\sim t\left( 3\right) $
and is standardized to have unit variance. & S & P, N, L, Q \\ 
10 & No: $\beta _{1}^{C}=2,\beta _{0}^{C}=1$ & $U_{i}\sim t\left( 2\right) $.
& S & P, N, L, Q \\ 
11 & No: $\beta _{1}^{C}=2,\beta _{0}^{C}=1$ & $\mu \left( X_{i}\right)
=X_{i}^{C}+\frac{1}{2}\left[ \left( X_{i}^{C}\right) ^{2}-1\right] +\frac{1}{%
3}\left( X_{i}^{C}\right) ^{3}+\sum_{j=1}^{5}\frac{1}{j}\left(
X_{ji}^{D}-0.5\right) $. & N, S & P, L, Q \\ 
12 & No: $\beta _{1}^{C}=2,\beta _{0}^{C}=1$ & Heteroskedastic $U_{di}$: $%
U_{1i}=-0.75U_{i}+\sqrt{1-0.75^{2}}\varrho _{i}$ with $\varrho _{i}\sim
N\left( 0,\exp \left( 2X_{i}^{C}-2\right) \right) $, and $U_{0i}\sim N\left(
0,\left( X_{i}^{C}\right) ^{2}\right) $. & P, N, S & L, Q \\ 
13 & No: $\beta _{1}^{C}=2,\beta _{0}^{C}=1$ & $%
Y_{1i}=2X_{i}^{C}+\sum_{j=1}^{5}\left( X_{ji}^{D}-0.5\right) +\left(
X_{1i}^{D}-0.5\right) \left( X_{2i}^{D}-0.5\right) +U_{1i}$. & Correctly
specified S & P, N, L, Q, S \\ 
14 & No: $\beta _{1}^{C}=2,\beta _{0}^{C}=1$ & $%
Y_{1i}=2X_{i}^{C}+\sum_{j=1}^{5}\left( X_{ji}^{D}-0.5\right)
+X_{i}^{C}\left( X_{1i}^{D}-0.5\right) +U_{1i}$. & Correctly specified S & 
P, N, L, Q, S \\ 
15 & No: $\beta _{1}^{C}=2,\beta _{0}^{C}=1$ & $%
Y_{1i}=2X_{i}^{C}+\sum_{j=1}^{5}\left( X_{ji}^{D}-0.5\right) +\left[ \left(
X_{i}^{C}\right) ^{2}-1\right] +U_{1i}$. & Correctly specified S & P, N, L,
Q, S \\ 
16 & No: $\beta _{1}^{C}=2,\beta _{0}^{C}=1$ & $%
Y_{1i}=2X_{i}^{C}+\sum_{j=1}^{5}\left( X_{ji}^{D}-0.5\right) +0.5\left[ \exp
\left( X_{i}^{C}\right) -\exp \left( 0.5\right) \right] +U_{1i}$. & 
Correctly specified S & P, N, L, Q, S \\ \hline\hline
\end{tabular}
}\medskip \newline
{\footnotesize Notes: P, N, L, Q, S, and SIV stand for parametric, normal, linear,
quadratic, semiparametric, and semiparametric IV estimators for the MTE,
respectively. Please refer to Table \ref{table:estimators} for details of
the estimators. In Design 1, $U_{i}$ follows a standard normal distribution that is
independent of $X_{i}$, and $\mu \left( \cdot \right) $ is quadratic as in (%
\ref{treat_fn}). Design 2 differs from Design 1 only in relaxing the
exclusion restriction.}
\end{sidewaystable}

\begin{sidewaystable}[htb]
\caption{MTE estimators considered in the simulation}
\label{table:estimators}
\bigskip \renewcommand\arraystretch{1.8}
\resizebox{\linewidth}{!}{
\begin{tabular}{C{4cm}cL{4.8cm}cL{9.6cm}}
\hline\hline
Estimator & Step 1 & & Step 2 & \\ \hline
Parametric MTE & \textit{Case 1.3}: & Probit regression of $D_{i}$ on $%
X_{i}^{C}$, $\left( X_{i}^{C}\right) ^{2}$, and $X_{ji}^{D}$. & \textit{Case
2.2.2}: & Linear regression of $Y_{i}$ on $X_{i}$ and $\left. \phi \left(
\Phi ^{-1}\left( \hat{P}_{i}\right) \right) \right/ \left( \hat{P}%
_{i}-1+d\right) $ for each group $d=0,1$. \\ 
Normal MTE & \textit{Case 1.1}: & Nonparametric regression of $D_{i}$ on $%
X_{i}^{C}$ and $X_{ji}^{D}$. & \textit{Case 2.2.2}: & Linear regression of $%
Y_{i}$ on $X_{i}$ and $\left. \phi \left( \Phi ^{-1}\left( \hat{P}%
_{i}\right) \right) \right/ \left( \hat{P}_{i}-1+d\right) $ for each group $%
d=0,1$. \\ 
Linear MTE & \textit{Case 1.1}: & Nonparametric regression of $D_{i}$ on $%
X_{i}^{C}$ and $X_{ji}^{D}$. & \textit{Case 2.2.1}: & Linear regression of $%
Y_{i}$ on $X_{i}$ and $\hat{P}_{i}$ for each group. \\ 
Quadratic MTE & \textit{Case 1.1}: & Nonparametric regression of $D_{i}$ on $%
X_{i}^{C}$ and $X_{ji}^{D}$. & \textit{Case 2.2.1}: & Linear regression of $%
Y_{i}$ on $X_{i}$, $\hat{P}_{i}$, and $\hat{P}_{i}^{2}$ for each group. \\ 
Semiparametric MTE & \textit{Case 1.1}: & Nonparametric regression of $D_{i}$
on $X_{i}^{C}$ and $X_{ji}^{D}$. & \textit{Case 2.1}: & Partially linear
regression of $Y_{i}$ on $X_{i}$ and $g_{d}\left( \hat{P}_{i}\right) $ by (%
\ref{PDLS}). \\ 
Semiparametric IV-MTE & \textit{Case 1.1}: & Nonparametric regression of $%
D_{i}$ on $X_{i}^{C}$ and $X_{ji}^{D}$. & \textit{Case 2.1}: & Partially
linear regression of $Y_{i}$ on $X_{ji}^{D}$ and $g_{d}\left( \hat{P}%
_{i}\right) $ by (\ref{PDLS}) with $X_{i}^{C}$ excluded. \\ 
Correctly specified semiparametric MTE & \textit{Case 1.1}: & Nonparametric
regression of $D_{i}$ on $X_{i}^{C}$ and $X_{ji}^{D}$. & \textit{Case 2.1}: & 
Partially linear regression of $Y_{i}$ on $X_{i}$, $g_{d}\left( \hat{P}%
_{i}\right) $, and the true nonlinear term in Designs 13-16. \\ \hline\hline
\end{tabular}
}\medskip \newline
{\small Notes: The italics refer to the cases presented in Table \ref{table:implement}%
. The semiparametric IV-MTE estimator is considered only in Designs 1-4,
while the correctly specified semiparametric MTE estimator is considered
only in Designs 13-16.}
\end{sidewaystable}

\clearpage

\begin{figure}[tbh]
\caption{Simulated MTE curves for Designs 1-4}
\label{Fig:Design1-4}
\bigskip \setlength{\abovecaptionskip}{-0.1cm} 
\begin{minipage}[t]{0.48\linewidth}
  \centerline{\includegraphics[width=7.626cm,height=5.55cm]{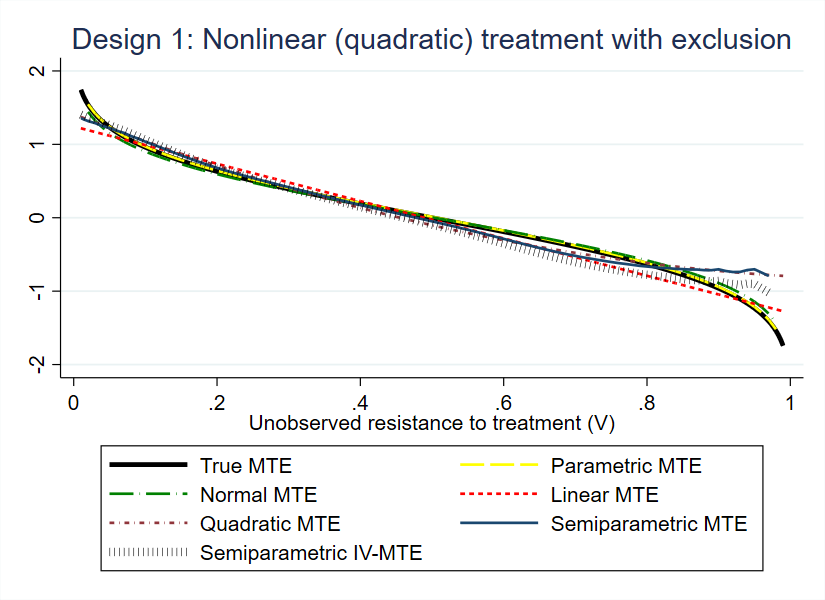}}
\end{minipage}
\begin{minipage}[t]{0.48\linewidth}
  \centerline{\includegraphics[width=7.626cm,height=5.55cm]{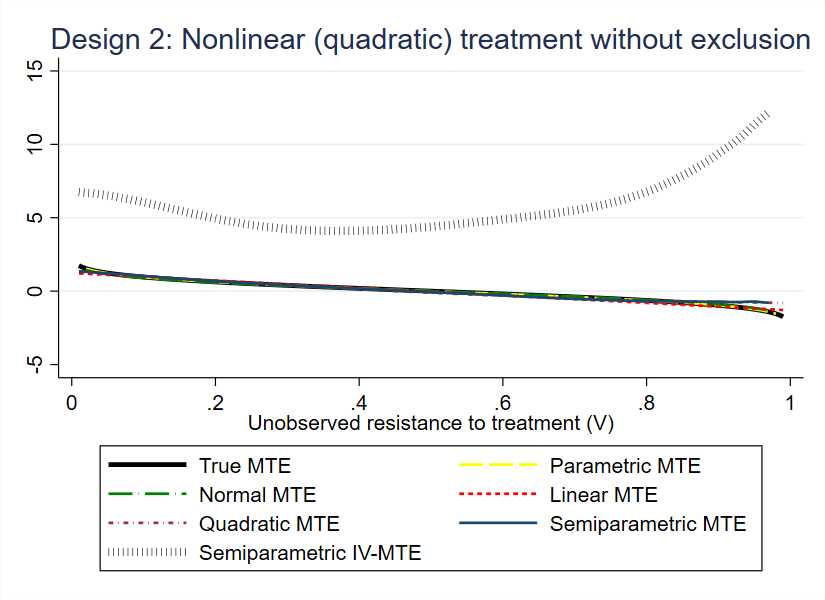}}
\end{minipage}
\begin{minipage}[t]{0.48\linewidth}
  \centerline{\includegraphics[width=7.626cm,height=5.55cm]{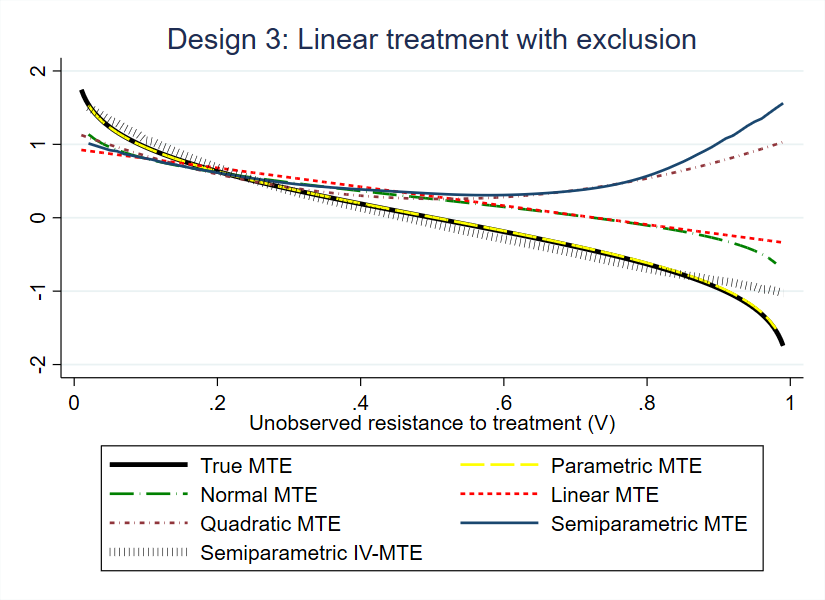}}
\end{minipage}
\begin{minipage}[t]{0.48\linewidth}
  \centerline{\includegraphics[width=7.626cm,height=5.55cm]{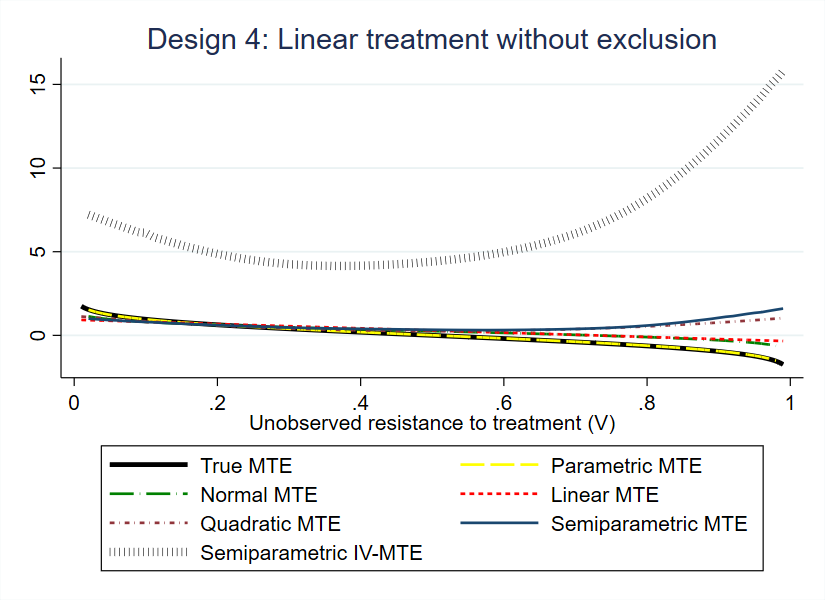}}
\end{minipage}
{\small Notes: The parametric MTE estimator specifies both parametric first
step and parametric second step. The normal, linear, quadratic MTE
estimators specify nonparametric first step and parametric second step. The
semiparametric MTE estimator specifies nonparametric first step and
partially linear (semiparametric) second step. The above five MTE estimators
impose no exclusion restriction and base the identification on different
functional forms between the treatment and outcome equations, while the
semiparametric IV-MTE estimator imposes the exclusion restriction $\beta
_{1}^{C}=\beta _{0}^{C}=0$ but otherwise is the same as the semiparametric
MTE. Please refer to Table \ref{table:estimators} for more details about the
estimators}
\end{figure}

\clearpage

\begin{sidewaystable}[htb]
\caption{Simulation results for Designs 1-4}
\label{table:Design1-4}
\medskip \renewcommand\arraystretch{1.1}
\resizebox{\linewidth}{!}{
\begin{tabular}{ccccccccccccccccc}
\hline\hline
&  & \multicolumn{3}{c}{Design 1} &  & \multicolumn{3}{c}{Design 2} &  & 
\multicolumn{3}{c}{Design 3} &  & \multicolumn{3}{c}{Design 4} \\ 
\cline{3-5}\cline{7-9}\cline{11-13}\cline{15-17}
&  & Bias & RMSE & Coverage &  & Bias & RMSE & Coverage &  & Bias & RMSE & 
Coverage &  & Bias & RMSE & Coverage \\ \hline
\multicolumn{17}{l}{\underline{Parametric estimator}} \\ 
MTE$\left( v=0.5\right) $ &  & 0.014 & 0.101 & 0.962 &  & -0.004 & 0.101 & 
0.958 &  & 0.008 & 0.214 & 0.945 &  & 0.008 & 0.214 & 0.946 \\ 
ATE &  & 0.051 & 0.107 & 0.927 &  & 0.033 & 0.101 & 0.942 &  & 0.045 & 0.206
& 0.948 &  & 0.045 & 0.207 & 0.944 \\ 
$\hat{\beta}_{1}^{C}-\hat{\beta}_{0}^{C}$ &  & -0.001 & 0.054 & 0.948 &  & -0.001 & 
0.054 & 0.948 &  & -0.003 & 0.125 & 0.949 &  & -0.003 & 0.125 & 0.949 \\ 
\multicolumn{17}{l}{\underline{Normal estimator}} \\ 
MTE$\left( v=0.5\right) $ &  & 0.006 & 0.118 & 0.954 &  & -0.003 & 0.119 & 
0.951 &  & 0.257 & 0.309 & 0.674 &  & 0.256 & 0.309 & 0.677 \\ 
ATE &  & 0.041 & 0.119 & 0.931 &  & 0.032 & 0.117 & 0.939 &  & 0.278 & 0.323
& 0.601 &  & 0.278 & 0.323 & 0.601 \\ 
$\hat{\beta}_{1}^{C}-\hat{\beta}_{0}^{C}$ &  & -0.035 & 0.067 & 0.909 &  & -0.035 & 
0.067 & 0.909 &  & -0.168 & 0.193 & 0.574 &  & -0.168 & 0.193 & 0.574 \\ 
\multicolumn{17}{l}{\underline{Linear estimator}} \\ 
MTE$\left( v=0.5\right) $ &  & -0.027 & 0.123 & 0.951 &  & -0.037 & 0.126 & 
0.949 &  & 0.293 & 0.351 & 0.669 &  & 0.293 & 0.351 & 0.665 \\ 
ATE &  & -0.015 & 0.118 & 0.951 &  & -0.024 & 0.121 & 0.954 &  & 0.300 & 
0.355 & 0.652 &  & 0.300 & 0.355 & 0.651 \\ 
$\hat{\beta}_{1}^{C}-\hat{\beta}_{0}^{C}$ &  & -0.024 & 0.064 & 0.924 &  & -0.024 & 
0.064 & 0.924 &  & -0.188 & 0.215 & 0.576 &  & -0.188 & 0.215 & 0.576 \\ 
\multicolumn{17}{l}{\underline{Quadratic estimator}} \\ 
MTE$\left( v=0.5\right) $ &  & -0.102 & 0.176 & 0.899 &  & -0.111 & 0.182 & 
0.891 &  & 0.257 & 0.323 & 0.734 &  & 0.257 & 0.323 & 0.738 \\ 
ATE &  & 0.043 & 0.142 & 0.934 &  & 0.034 & 0.141 & 0.937 &  & 0.538 & 0.584
& 0.350 &  & 0.537 & 0.584 & 0.349 \\ 
$\hat{\beta}_{1}^{C}-\hat{\beta}_{0}^{C}$ &  & -0.028 & 0.067 & 0.923 &  & -0.028 & 
0.067 & 0.923 &  & -0.261 & 0.284 & 0.363 &  & -0.261 & 0.284 & 0.363 \\ 
\multicolumn{17}{l}{\underline{Semiparametric estimator}} \\ 
MTE$\left( v=0.5\right) $ &  & -0.049 & 0.309 & 0.947 &  & -0.056 & 0.312 & 
0.947 &  & 0.329 & 0.450 & 0.809 &  & 0.336 & 0.455 & 0.807 \\ 
ATE &  & 0.041 & 0.168 & 0.940 &  & 0.044 & 0.194 & 0.947 &  & 0.621 & 0.684
& 0.441 &  & 0.629 & 0.693 & 0.440 \\ 
$\hat{\beta}_{1}^{C}-\hat{\beta}_{0}^{C}$ &  & -0.029 & 0.068 & 0.917 &  & -0.029 & 
0.068 & 0.922 &  & -0.282 & 0.307 & 0.358 &  & -0.287 & 0.311 & 0.345 \\ 
\multicolumn{17}{l}{\underline{Semiparametric IV estimator}} \\ 
MTE$\left( v=0.5\right) $ &  & -0.094 & 0.307 & 0.939 &  & 4.394 & 4.427 & 0
&  & -0.082 & 0.264 & 0.940 &  & 4.410 & 4.424 & 0 \\ 
ATE &  & -0.016 & 0.119 & 0.947 &  & 6.324 & 6.330 & 0 &  & 0.005 & 0.108 & 
0.955 &  & 6.868 & 6.872 & 0 \\ \hline\hline
\end{tabular}
}\medskip \newline
{\small Notes: RMSE stands for the root mean squared error, and Coverage stands for
the coverage probability of asymptotic 95\% confidence interval. For the semiparametric IV
estimator, $\beta _{1}^{C}$ and $\beta _{0}^{C}$ are assumed to be zero
and thus not estimated.}
\end{sidewaystable}

\clearpage

\begin{figure}[tbh]
\caption{Simulated MTE curves for Designs 5-8}
\label{Fig:Design5-8}
\bigskip \setlength{\abovecaptionskip}{-0.1cm} 
\begin{minipage}[t]{0.48\linewidth}
  \centerline{\includegraphics[width=7.626cm,height=5.55cm]{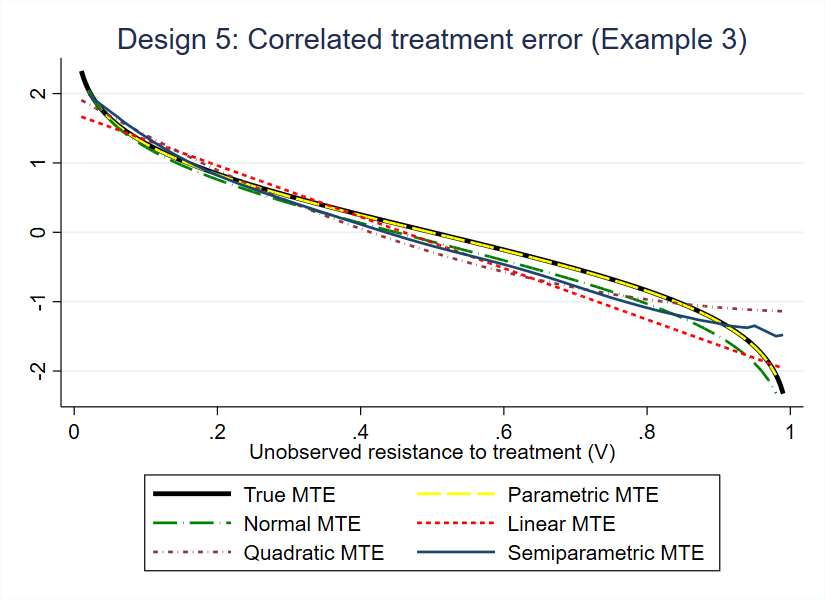}}
\end{minipage}
\begin{minipage}[t]{0.48\linewidth}
  \centerline{\includegraphics[width=7.626cm,height=5.55cm]{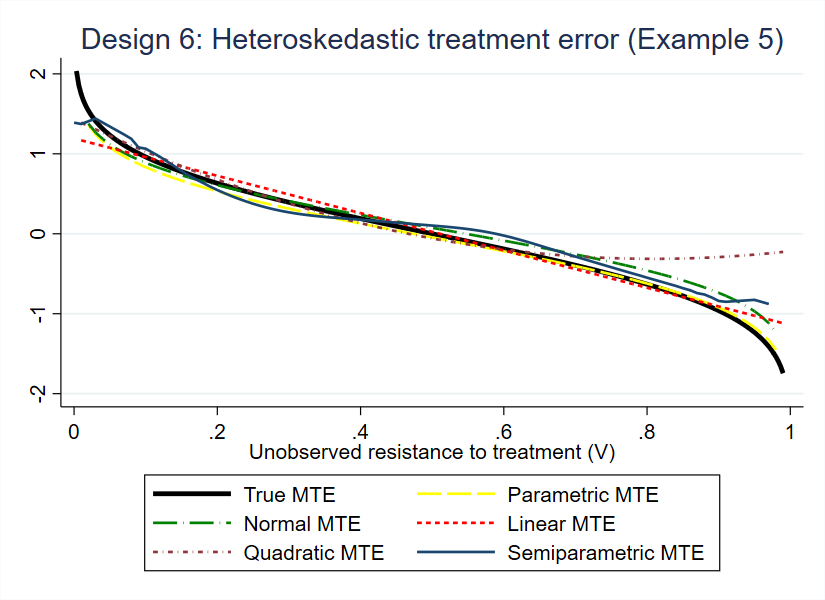}}
\end{minipage}
\begin{minipage}[t]{0.48\linewidth}
  \centerline{\includegraphics[width=7.626cm,height=5.55cm]{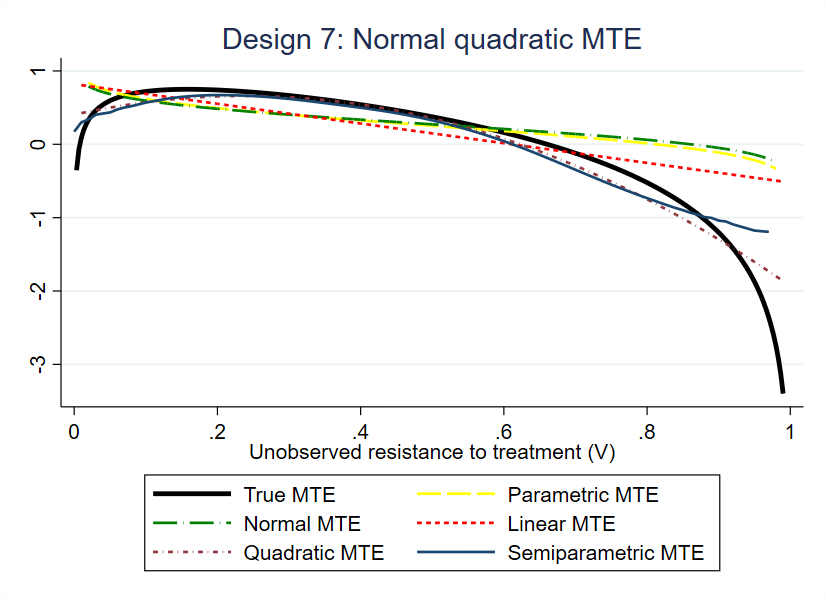}}
\end{minipage}
\begin{minipage}[t]{0.48\linewidth}
  \centerline{\includegraphics[width=7.626cm,height=5.55cm]{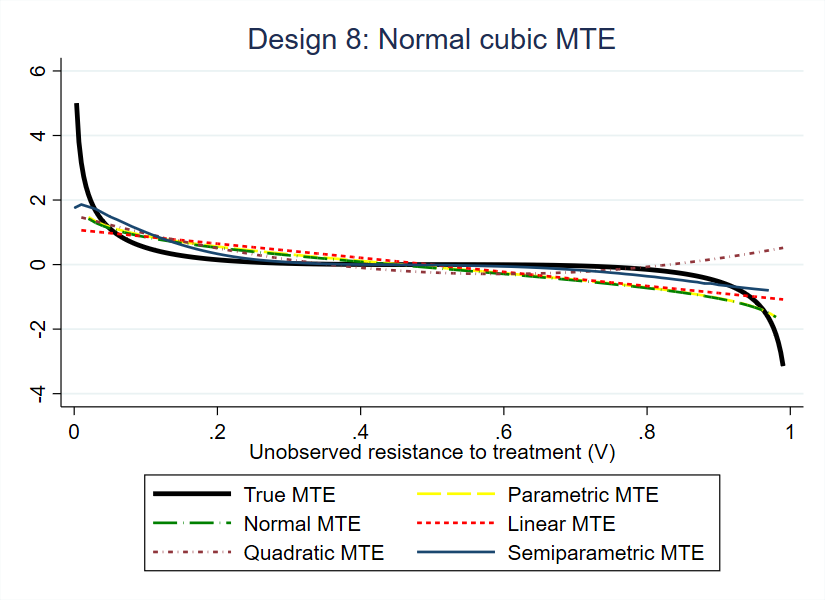}}
\end{minipage}
{\small Notes: The parametric MTE estimator specifies both parametric first
step and parametric second step. The normal, linear, quadratic MTE
estimators specify nonparametric first step and parametric second step. And
the semiparametric MTE estimator specifies nonparametric first step and
partially linear (semiparametric) second step.}
\end{figure}

\clearpage

\begin{sidewaystable}[htb]
\caption{Simulation results for Designs 5-8}
\label{table:Design5-8}
\medskip \renewcommand\arraystretch{1.2}
\resizebox{\linewidth}{!}{
\begin{tabular}{ccccccccccccccccc}
\hline\hline
&  & \multicolumn{3}{c}{Design 5} &  & \multicolumn{3}{c}{Design 6} &  & 
\multicolumn{3}{c}{Design 7} &  & \multicolumn{3}{c}{Design 8} \\ 
\cline{3-5}\cline{7-9}\cline{11-13}\cline{15-17}
&  & Bias & RMSE & Coverage &  & Bias & RMSE & Coverage &  & Bias & RMSE & 
Coverage &  & Bias & RMSE & Coverage \\ \hline
\multicolumn{17}{l}{\underline{Parametric estimator}} \\ 
MTE$\left( v=0.5\right) $ &  & -0.007 & 0.081 & 0.947 &  & -0.045 & 0.198 & 
0.347 &  & -0.122 & 0.152 & 0.725 &  & -0.076 & 0.133 & 0.891 \\ 
ATE &  & 0.042 & 0.087 & 0.911 &  & -0.011 & 0.183 & 0.347 &  & 0.267 & 0.280
& 0.124 &  & -0.039 & 0.110 & 0.932 \\ 
$\hat{\beta}_{1}^{C}-\hat{\beta}_{0}^{C}$ &  & -0.002 & 0.038 & 0.949 &  & 0.064 & 
0.098 & 0.315 &  & -0.062 & 0.078 & 0.724 &  & 0.038 & 0.070 & 0.899 \\ 
\multicolumn{17}{l}{\underline{Normal estimator}} \\ 
MTE$\left( v=0.5\right) $ &  & -0.135 & 0.164 & 0.701 &  & 0.075 & 0.137 & 
0.912 &  & -0.102 & 0.146 & 0.833 &  & -0.102 & 0.164 & 0.876 \\ 
ATE &  & -0.083 & 0.121 & 0.846 &  & 0.106 & 0.152 & 0.844 &  & 0.285 & 0.301
& 0.176 &  & -0.066 & 0.138 & 0.916 \\ 
$\hat{\beta}_{1}^{C}-\hat{\beta}_{0}^{C}$ &  & 0.005 & 0.038 & 0.954 &  & -0.013 & 
0.054 & 0.949 &  & -0.083 & 0.097 & 0.628 &  & 0.018 & 0.065 & 0.943 \\ 
\multicolumn{17}{l}{\underline{Linear estimator}} \\ 
MTE$\left( v=0.5\right) $ &  & -0.147 & 0.176 & 0.671 &  & 0.025 & 0.117 & 
0.949 &  & -0.226 & 0.249 & 0.421 &  & -0.008 & 0.130 & 0.952 \\ 
ATE &  & -0.128 & 0.159 & 0.735 &  & 0.037 & 0.118 & 0.943 &  & 0.156 & 0.187
& 0.670 &  & 0.002 & 0.127 & 0.957 \\ 
$\hat{\beta}_{1}^{C}-\hat{\beta}_{0}^{C}$ &  & 0.007 & 0.039 & 0.950 &  & -0.013 & 
0.054 & 0.956 &  & -0.033 & 0.061 & 0.890 &  & -0.024 & 0.066 & 0.932 \\ 
\multicolumn{17}{l}{\underline{Quadratic estimator}} \\ 
MTE$\left( v=0.5\right) $ &  & -0.288 & 0.311 & 0.321 &  & -0.058 & 0.149 & 
0.920 &  & -0.021 & 0.126 & 0.952 &  & -0.245 & 0.289 & 0.632 \\ 
ATE &  & -0.044 & 0.121 & 0.936 &  & 0.168 & 0.222 & 0.795 &  & 0.000 & 0.124
& 0.959 &  & 0.181 & 0.234 & 0.792 \\ 
$\hat{\beta}_{1}^{C}-\hat{\beta}_{0}^{C}$ &  & 0.003 & 0.039 & 0.950 &  & -0.005 & 
0.055 & 0.956 &  & -0.020 & 0.055 & 0.933 &  & -0.041 & 0.074 & 0.892 \\ 
\multicolumn{17}{l}{\underline{Semiparametric estimator}} \\ 
MTE$\left( v=0.5\right) $ &  & -0.197 & 0.334 & 0.890 &  & 0.103 & 0.321 & 
0.922 &  & -0.049 & 0.266 & 0.944 &  & -0.014 & 0.313 & 0.958 \\ 
ATE &  & -0.076 & 0.163 & 0.917 &  & 0.130 & 0.209 & 0.887 &  & 0.035 & 0.155
& 0.943 &  & 0.042 & 0.179 & 0.951 \\ 
$\hat{\beta}_{1}^{C}-\hat{\beta}_{0}^{C}$ &  & 0.006 & 0.039 & 0.946 &  & -0.013 & 
0.059 & 0.941 &  & -0.027 & 0.058 & 0.914 &  & -0.018 & 0.066 & 0.947 \\ 
\hline\hline
\end{tabular}
}\medskip \newline
{\small Notes: RMSE stands for the root mean squared error, and Coverage stands for
the coverage probability of asymptotic 95\% confidence interval.}
\end{sidewaystable}

\clearpage

\begin{figure}[tbh]
\caption{Simulated MTE curves for Designs 9-12}
\label{Fig:Design9-12}
\bigskip \setlength{\abovecaptionskip}{-0.1cm} 
\begin{minipage}[t]{0.48\linewidth}
  \centerline{\includegraphics[width=7.626cm,height=5.55cm]{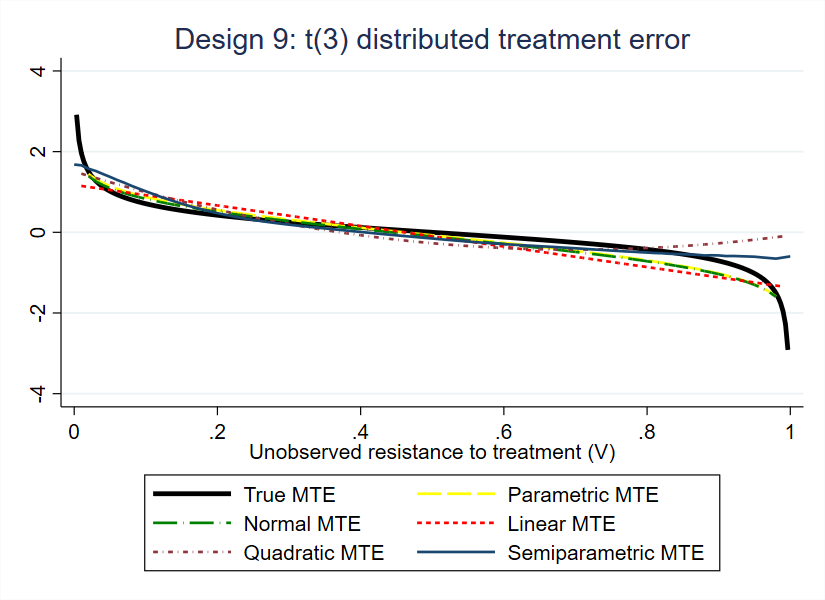}}
\end{minipage}
\begin{minipage}[t]{0.48\linewidth}
  \centerline{\includegraphics[width=7.626cm,height=5.55cm]{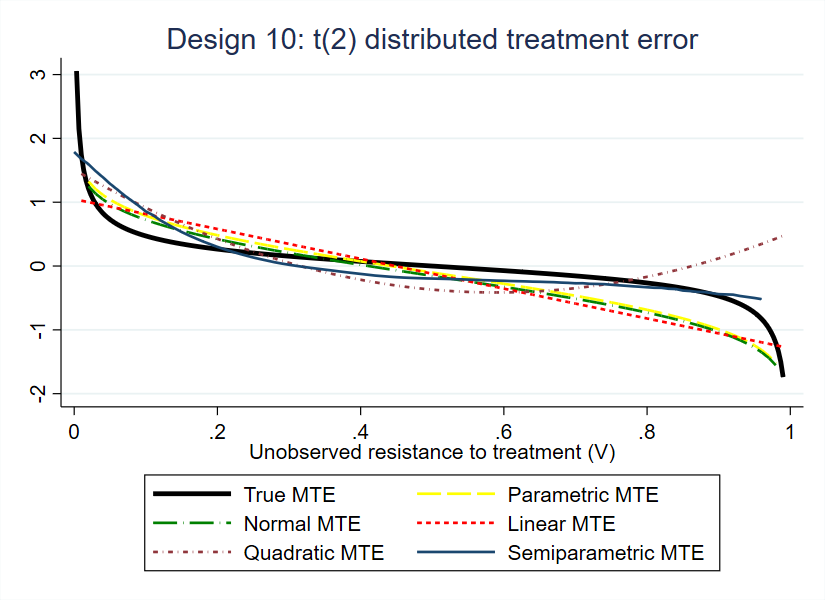}}
\end{minipage}
\begin{minipage}[t]{0.48\linewidth}
  \centerline{\includegraphics[width=7.626cm,height=5.55cm]{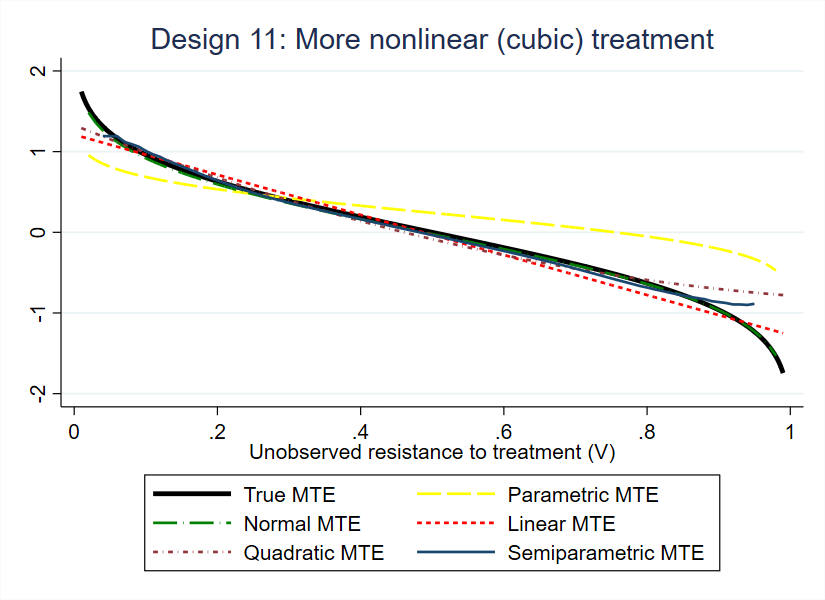}}
\end{minipage}
\begin{minipage}[t]{0.48\linewidth}
  \centerline{\includegraphics[width=7.626cm,height=5.55cm]{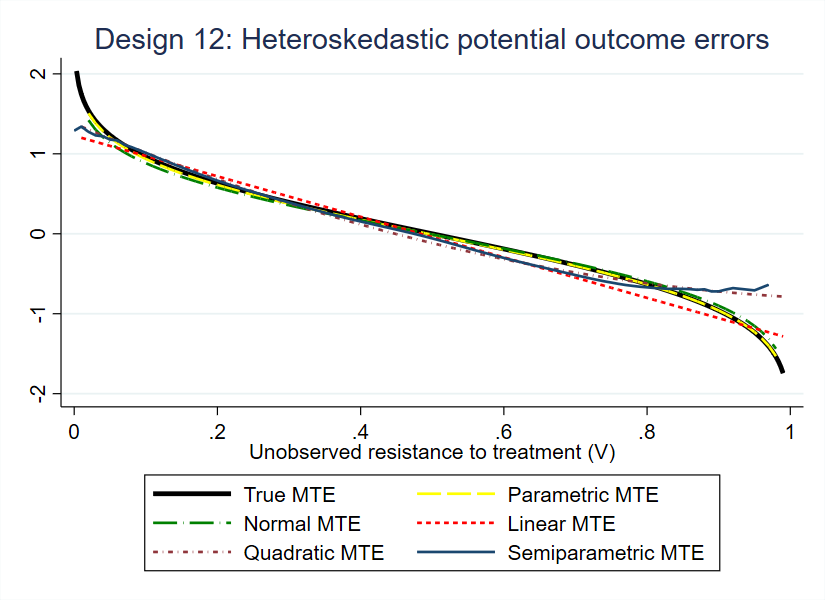}}
\end{minipage}
{\small Notes: The parametric MTE estimator specifies both parametric first
step and parametric second step. The normal, linear, quadratic MTE
estimators specify nonparametric first step and parametric second step. And
the semiparametric MTE estimator specifies nonparametric first step and
partially linear (semiparametric) second step.}
\end{figure}

\newpage

\begin{figure}[tbh]
\caption{Simulated MTE curves for Designs 13-16}
\label{Fig:Design13-16}
\bigskip \setlength{\abovecaptionskip}{-0.1cm} 
\begin{minipage}[t]{0.48\linewidth}
  \centerline{\includegraphics[width=7.626cm,height=5.55cm]{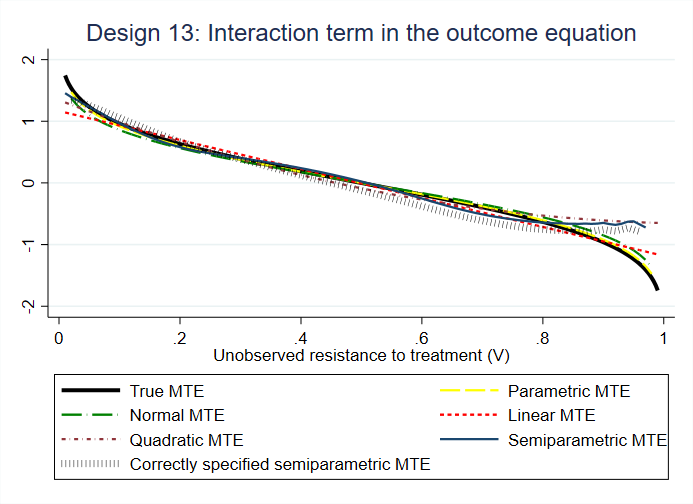}}
\end{minipage}
\begin{minipage}[t]{0.48\linewidth}
  \centerline{\includegraphics[width=7.626cm,height=5.55cm]{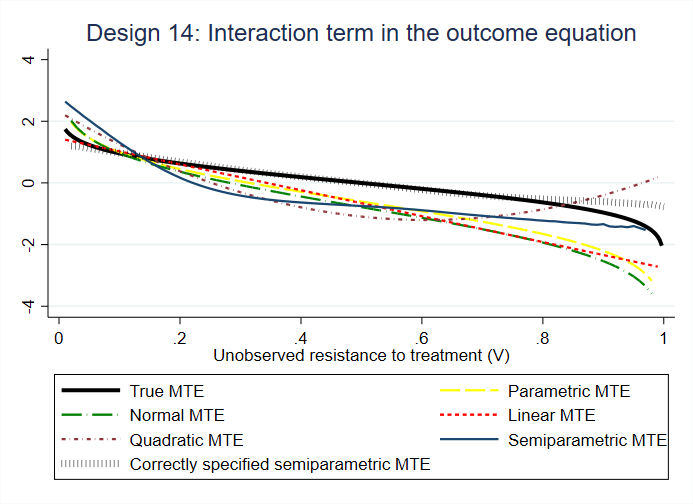}}
\end{minipage}
\begin{minipage}[t]{0.48\linewidth}
  \centerline{\includegraphics[width=7.626cm,height=5.55cm]{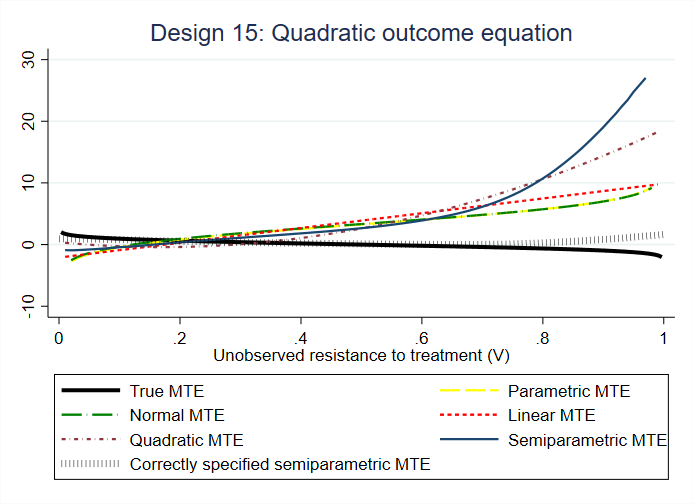}}
\end{minipage}
\begin{minipage}[t]{0.48\linewidth}
  \centerline{\includegraphics[width=7.626cm,height=5.55cm]{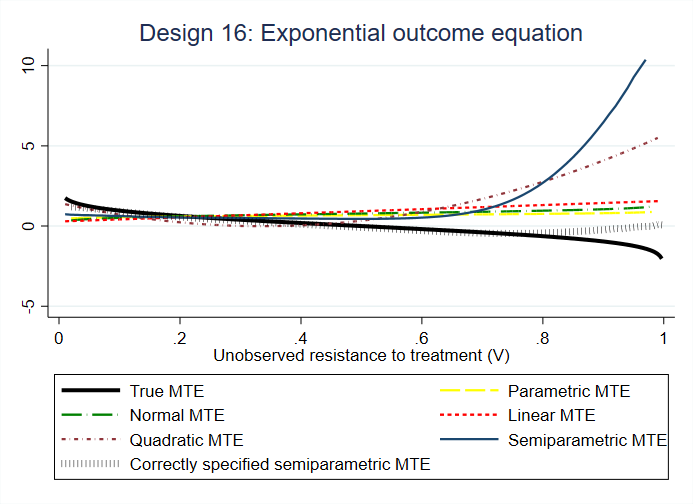}}
\end{minipage}
{\small Notes: The parametric, normal, linear, quadratic, and semiparametric
MTE estimators are defined as before without including any nonlinear term in
the outcome equation, while the correctly specified semiparametric MTE
estimator includes the true nonlinear term in the outcome equation.}
\end{figure}

\end{appendix}

\end{document}